 \theoremstyle{plain}
 \newtheorem{fact}{Fact}
 \theoremstyle{plain}
 \newtheorem{lem}{Lemma}
 \theoremstyle{plain}
 \newtheorem{thm}{Theorem}
 \theoremstyle{plain}
   \theoremstyle{plain}
 \newtheorem{prop}{Proposition}
 \theoremstyle{plain}
 \newtheorem{corr}{Corollary}
 \theoremstyle{plain}
 \theoremstyle{remark}
 \newtheorem*{rem*}{Remark}
	\theoremstyle{remark}
	 \newtheorem{rem}{Remark}
  \theoremstyle{plain}
 \newtheorem{Definition}{Definition}	 
   \theoremstyle{plain}
 \newtheorem{Conjecture}{Conjecture}
    \theoremstyle{plain}
 \newtheorem{res}{Result}
\newcommand{\ep}{\epsilon}
\newcommand{\e}{\mathrm{e}}
\newcommand{\ot}{\otimes}
\newcommand{\ii}{\mathrm{i}} 
\renewcommand{\exp}{\mathrm{exp}}
\DeclareMathOperator{\tr}{tr}
\renewcommand{\H}{\mathcal{H}} 
\newcommand{\Lmax}{\mathcal{L}^{\mathrm{max}}} 
\newcommand{\Lgen}{\mathcal{L}^{\mathrm{gen}}}
\newcommand{\Hfer}{\mathcal{H}_f}
\newcommand{\Hfree}{\mathcal{H}^{+}_{\mathrm{Fock}}}
\newcommand{\Hfock}{\mathcal{H}_\mathrm{Fock}}
\newcommand{\Hpas}{\mathcal{H}_\mathrm{pas}} 
\newcommand{\Hact}{\mathcal{H}_\mathrm{act}} 
\newcommand{\tHpas}{\tilde{\mathcal{H}}_\mathrm{pas}} 
\newcommand{\tHact}{\tilde{\mathcal{H}}_\mathrm{act}} 
\newcommand{\U}{\mathrm{U}} 
\newcommand{\SU}{\mathrm{SU}} 
\newcommand{\SO}{\mathrm{SO}} 
\renewcommand{\u}{\mathfrak{u}} 
\newcommand{\so}{\mathfrak{so}} 
\newcommand{\g}{\mathfrak{g}} 
\newcommand{\C}{\mathbb{C}} 
\newcommand{\R}{\mathbb{R}} 
\newcommand{\supp}{\mathrm{supp}} 
\newcommand{\I}{\mathbb{I}} 
\renewcommand{\P}{\mathbb{P}} 
\newcommand{\E}{\mathbb{E}} 
\renewcommand{\O}{\mathcal{O}} 
\newcommand{\N}{\mathcal{N}} 
\newcommand{\Q}{\mathcal{Q}} 
\newcommand{\PP}{\mathcal{P}} 
\newcommand{\sharP}{\#\mathrm{P}} 
\renewcommand{\ket}[1]{\left| #1 \right>} 
\renewcommand{\bra}[1]{\left< #1 \right|} 
\renewcommand{\braket}[2]{\langle #1 | #2 \rangle} 
\newcommand{\ketbra}[2]{\left| #1 \rangle\langle #2 \right|} 
\newcommand{\n}{\mathbf{n}} 
\newcommand{\x}{\mathbf{x}} 
\newcommand{\y}{\mathbf{y}} 
\newcommand{\z}{\mathbf{z}} 
\renewcommand{\v}{\mathbf{v}} 
\newcommand{\w}{\mathbf{w}} 
\newcommand{\muu}{\bm{\mu}} 
\renewcommand{\SS}{\mathbb{S}} 
\newcommand{\D}{\mathcal{D}} 
\renewcommand{\L}{\mathcal{L}} 
\newcommand{\Pfer}{\mathbb{P}_{\mathrm{pas}}} 
\newcommand{\Pflo}{\mathbb{P}_{\mathrm{act}}} 
\newcommand{\inrho}{\rho_{\mathrm{in}}} 
\newcommand{\inpsi}{\Psi_{\mathrm{in}}} 
\newcommand{\psiQUAD}{\ket{\Psi_4}} 
\newcommand{\X}{\mathcal{X}} 
\newcommand{\Y}{\mathcal{Y}} 
\newcommand{\Z}{\mathcal{Z}} 
\newcommand{\A}{\mathcal{A}} 
\newcommand{\B}{\mathcal{B}} 
\renewcommand{\S}{\mathcal{S}} 
\newcommand{\lnull}{l_{\mathrm{N}}} 
\newcommand{\lbin}{l_{\mathrm{B}}} 
\newcommand{\lfix}{l_{\mathrm{F}}} 
\newcommand{\loct}{l_{\mathrm{oct}}} 
\newcommand{\lempty}{l_{\mathrm{empty}}} 
\newcommand{\lquad}{l_{\mathrm{quad}}} 
\newcommand{\Cstate}{\mathcal{C}_\mathrm{in}} 
\newcommand{\Ca}{\mathcal{C}_{\mathrm{A}8}}
\newcommand{\Span}{\mathrm{span}} 
\newcommand{\Pipas}{\Pi_\mathrm{pas} }
\newcommand{\Piact}{\Pi_\mathrm{act} }
\newcommand{\Gpas}{\mathcal{G}_\mathrm{pas}} 
\newcommand{\Gact}{\mathcal{G}_\mathrm{act}} 
\newcommand{\nupas}{\nu_\mathrm{pas}} 
\newcommand{\nuact}{\nu_\mathrm{act}} 
\global\long\global\long\global\long\def\bra#1{\mbox{\ensuremath{\langle#1|}}}
\global\long\global\long\global\long\def\ket#1{\mbox{\ensuremath{|#1\rangle}}}
\global\long\global\long\global\long\def\kb#1#2{\mbox{\ensuremath{\ensuremath{\ensuremath{|#1\rangle\!\langle#2|}}}}}
\global\long\global\long\global\long\def\SET#1#2{\mbox{\ensuremath{\ensuremath{\left\lbrace\left. #1\ \right| #2\right\rbrace }}}}
\newcommand\dgg{^{\dagger}}
\newcommand{\id}{\mathbb{I}}
\newcommand\abs[1]{\left|#1\right|}
\newcommand{\norm}[1]{\left\lVert#1\right\rVert}
\newcommand\av[1]{\left\langle #1 \right\rangle}
\newcommand\poly[1]{\mathrm{poly}(#1)}
\newcommand{\bPh}{\boldsymbol{\phi}}
\newcommand{\bVar}{\boldsymbol{\varphi}}
\newcommand{\oX}{\tilde{X}}
\newcommand{\Cpas}{C_\mathrm{pas}}
\newcommand{\Cact}{C_\mathrm{act}}
\newcommand{\Cpasval}{5.7}
\newcommand{\Cactval}{16.2}
\begin{document} 	
	\title{Fermion Sampling: a robust quantum computational advantage \\ scheme  using fermionic linear optics and magic input states}

\author{Micha\l\ Oszmaniec}
\affiliation{ 
Center for Theoretical Physics, Polish Academy of Sciences,\\ Al. Lotnik\'ow 32/46, 02-668
Warszawa, Poland}	
\email{oszmaniec@cft.edu.pl}
\author{Ninnat Dangniam}
\affiliation{ 
Center for Theoretical Physics, Polish Academy of Sciences,\\ Al. Lotnik\'ow 32/46, 02-668
Warszawa, Poland}
\author{Mauro E.S. Morales}
\affiliation{Centre for Quantum Computation and Communication Technology, \\ Centre for Quantum Software and Information, University of Technology Sydney, NSW 2007, Australia}
\author{Zolt\'an Zimbor\'as}
\affiliation{Wigner Research Centre for Physics, H-1121, Budapest, Hungary}
\email{zimboras.zoltan@wigner.hu}
\affiliation{MTA-BME Lend\"ulet Quantum Information Theory Research Group Budapest, Hungary} 
\affiliation{Mathematical Institute, Budapest University of Technology and Economics,\\ H-1111, Budapest, Hungary}
 	
\begin{abstract}

Fermionic Linear Optics (FLO) is a restricted model of quantum computation which in its original form is known to be efficiently classically simulable. We show that, when initialized with suitable input states, FLO circuits can be used to demonstrate quantum computational advantage  with  strong hardness guarantees.
Based on this, we propose a  quantum advantage scheme which is a fermionic analogue of Boson Sampling: Fermion Sampling with magic input states.

We consider in parallel two classes of circuits: particle-number conserving (passive)  FLO and active FLO that preserves only fermionic parity and is closely related to Matchgate circuits introduced by Valiant. Mathematically, these classes of circuits can be understood as fermionic representations of the  Lie groups $\U(d)$ and $\SO(2d)$. This observation allows us to prove our main technical results. We first show anticoncentration for probabilities in random FLO circuits of both kind. Moreover, we prove robust average-case hardness of computation of probabilities.  To achieve this, we adapt the worst-to-average-case reduction based on Cayley transform, introduced recently by Movassagh \cite{movassagh_quantum_2020}, to  representations of low-dimensional Lie groups. 
Taken together, these findings provide hardness  guarantees comparable to the paradigm of Random Circuit Sampling.

Importantly, our scheme has also a potential for experimental realization. Both passive and active FLO circuits are relevant for quantum chemistry and many-body physics and have been already implemented in proof-of-principle experiments  with superconducting qubit architectures. Preparation of the desired quantum input states can be obtained by a simple quantum circuit acting independently on disjoint blocks of four qubits and using 3 entangling gates per block. We also argue that due to the structured nature of FLO circuits, they can be efficiently certified using resources scaling polynomially with the system size.

\end{abstract}

\maketitle	
\vspace*{-3mm}
\section{Introduction}

Universal fault-tolerant quantum computers are expected to exceed capabilities of classical computers in many applications including optimization problems, simulation of many-body quantum systems, machine learning and code-breaking. However, practical requirements for implementations of quantum algorithms   generally require the noise level to be below a certain stringent threshold and an encoding of logical qubits into a large number of physical qubits \cite{gidney2019factor}.
Despite the impressive progress made along the road to realize a large-scale fault tolerant quantum computer as shown in the proof-of-principle demonstrations of error correction \cite{error-correction-demo2016} and fault tolerance  \cite{fault-tolerant-demo2020}, what we have at present and in the near future are NISQ devices \cite{Preskill2018NISQ}: noisy, intermediate-scale quantum processors having of the order of tens or hundreds of qubits. 



The paradigm of quantum computational advantage \cite{Lund2017,SuprRev2017} (also known as supremacy) aims to develop schemes showing computational advantage of restricted-purpose quantum machines under minimal  theoretical assumptions while minimising hardware requirements. Importantly, given the current status of complexity theory, a rigorous separation of the power of quantum and classical computers cannot be made without plausible assumptions such as the non-collapse of the polynomial hierarchy (a weaker version of $\mathrm{P \neq NP}$). Current quantum advantage schemes are usually based on the problem of sampling, i.e., the task of generating samples of a distribution generated by a given quantum circuit or a specifically tuned devices (see, however, \cite{Bravyi2018advantage} for an alternative proposal involving \emph{relation problems} that challenges classical computers in the playground of shallow circuits). The first candidate  for demonstration of quantum computational supremacy was Boson Sampling \cite{Aaronson2013} that proposed to sample from photonic networks that were initialized in single-photon photon states of several modes. Subsequent sampling proposals include the IQP sampling \cite{Bremner2011,Bremner2016}, the Random Circuit Sampling \cite{Boixo2016,bouland_complexity_2019,movassagh_quantum_2020}, the quantum Fourier sampling \cite{fefferman2016fourier} and many other schemes \cite{Morimae2017,bermejo-vega_architectures_2018} that usually operate on multiqubit systems that undergo evolutions under restricted gate sets (there exist, however,  other proposals that use Gaussian states \cite{hamilton2017gaussian, lund2014boson} or atomic systems \cite{bermejo-vega_architectures_2018,haferkamp_closing_2019}). 

\begin{figure*}[t!]
    \centering
    \includegraphics[scale=0.25]{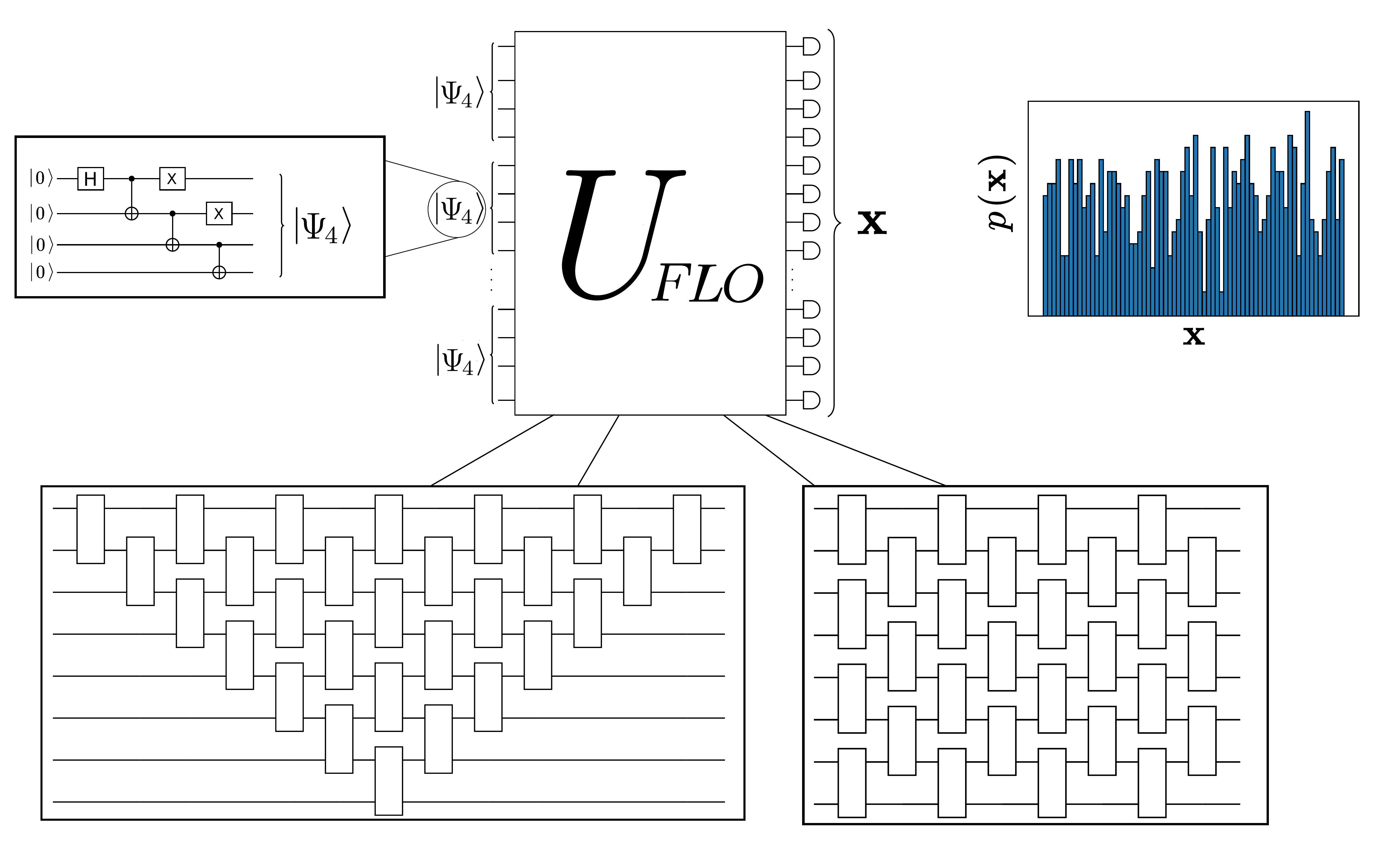}
    \caption{The setup considered in our work. We run an FLO circuit $U_{FLO}$ (passive or active) with input state $\ket{\inpsi} =\psiQUAD^{\ot N}$ and sample bitstrings $\x$ with the probability distribution $p(\x)$ induced by the circuit. Using Jordan-Wigner transformation that encodes fermions in qubits, the state $\psiQUAD$ can be easily prepared as shown in the inset to the left.   The decomposition of the circuits into elementary gate set can be realized by the fermionic analogues of existing layouts for linear optical networks \cite{ReckUniform2016,Reck1994} as discussed in Appendix~\ref{app:decomp} .
    \label{fig:magic-input-and-decomp}}
\end{figure*}

Random circuit sampling (RCS) is the task of sampling from the output distribution of a randomly selected quantum circuit. RCS has been recently experimentally demonstrated in a system of  53 superconducting  qubits arranged in the planar layout, and using random two-qubit-local circuits of depth 20  \cite{Suprem2019}. Beside its experimental feasibility in current NISQ architectures, this quantum advantage proposal also enjoys strong hardness guarantees based on two technical results available for random quantum circuits: a worst-to-average-case reduction of the hardness of computing the output probabilities \cite{bouland_complexity_2019,movassagh_quantum_2020} and anticoncentration \cite{hangleiter_anticoncentration_2018,Harrow2018}.

Independently of RCS, the original proposal of Boson Sampling received a lot of interest  due to the experimental progress in the field of integrated photonics \cite{brod2019photonic}. {Currently, the state-of-the-art experiments involve 14 indistinguishable photons in 20 modes \cite{wang2019boson}, while a recent work \cite{GaussBSExperiment2020} reported demonstration of Gaussian Boson Sampling (i.e., a variant of Boson Sampling with Gaussian input states) using 50 squeezed states at the input of a 100 mode photonic network.} While worst-to-average-case reduction (at least for the exact computation of the probabilities) was available for the Boson Sampling \cite{Aaronson2013}, the anticoncentration property remains unproven for this scheme.
Interestingly, neither of the theoretical guarantees are currently in place for Gaussian Boson Sampling.

In this work, we propose a quantum advantage scheme based on a fermionic analogue of Boson Sampling: Fermion Sampling with magic input states. In our scheme a suitable input state  $\ket{\inpsi}$ in $d=4N$  fermionic modes is transformed via Fermionic Linear Optical (FLO) transformation  $V$, and is measured using particle-number resolving detectors (see Fig.  \ref{fig:magic-input-and-decomp}). We consider in parallel two classes of circuits: particle-number conserving (passive)  FLO and active FLO that preserves only the fermionic parity \cite{terhal_classical_2002,knill_fermionic_2001} and is closely related  to Matchgate circuits introduced by Valiant \cite{valiant_quantum_2002}. Mathematically, these classes of circuits can be understood as fermionic representations of the  Lie groups $\U(d)$ and $\SO(2d)$. This observation allows us to prove our main technical results. We first show anticoncentration for probabilities in random FLO circuits of both kind. Moreover, we prove robust average-case hardness of computation of probabilities. To achieve this we adapt the worst-to-average-case reduction based on Cayley transform \cite{movassagh_quantum_2020} to our scenario, when instead of the defining representation of the unitary group one considers higher dimensional representations of low-dimensional Lie groups. Taken together, these findings  give hardness  guarantees matching that of the paradigm of RCS \cite{bouland_complexity_2019,movassagh_quantum_2020}. We also argue that, due to the structural properties of FLO gates, one can efficiently certify them with resource scaling polynomially with the system size.

  We argue that our scheme is feasible to realize experimentally. While experimental realization of linear-optical transformation in systems of \emph{real} fermionic systems is usually hard because of Coulomb interaction (see however \cite{bocquillon2014electron}), we make use of the fact that this class of operations is  relevant for performing  quantum chemistry and many-body simulations on a quantum computer  \cite{chemistryGOOGLE2020, fermPASSlayout2018, fermACTpassLAYOUT2018, fermGAUSSlayout2019}. Specifically, after a standard Jordan-Wigner encoding of qubits into fermions, our sampling proposal becomes readily implementable by restricted set of gates and layouts native to superconducting qubit architecture used in simulations of quantum chemistry \cite{ContGatesets2020}. Compared to the RCS implementation of arbitrary FLO transformation requires depth scaling proportional to $N$. In this encoding magic input states can be prepared using 3 entangling gates per each and every disjoint block of four qubits and particle-number measurements are realized via standard computational basis measurement (see Figure~\ref{fig:magic-input-and-decomp}).  {While using circuits of linear depth might seem challenging at first sight, this requirement follows solely form our proof techniques that guarantee anticoncentration when sampling from uniform distribution on passive and active FLO circuits. However, we give numerical evidence outcome probabilities corresponding to active FLO circuits anticoncentrate in much smaller depth (see Figure \ref{fig:anticon-sim}). In fact, it is plausible that active FLO circuits anticoncentrate in \emph{logarithmic} depth, just like random quantum circuits formed from universal gates, as proved by the recent work by Dalzell \emph{et. al.} \cite{logANTICONCENTRATION}  }

\vspace*{-2mm}

\subsection*{Significance of results and relation to prior work} 

\vspace*{-1mm}

\subsubsection*{Relevance of the technical results for hardness of Fermion Sampling}

A first step in establishing hardness of any quantum advantage proposal is  showing hardness of sampling up to relative error. Sampling in relative error refers to the task in which a classical computer, given a classical description of the quantum process of interest (e.g., input states, arrangement of gates in the device, etc.),  is challenged to efficiently sample from the probability distribution $\{q_\x\}$ that for every output $\x$ satisfies $|p_\x -q_\x|\leq \alpha p_\x$ , where $\{p_\x\}$ is the true probability distribution produced by the device and $\alpha>0$ is a constant. To establish hardness of sampling up to some relative error, it suffices to show that certain probabilities  produced by the device are $\sharP$-hard to compute\footnote{Informally, $\sharP$ is the complexity class of counting solutions to problems that can be efficiently verified. For more details, see \cite{aroraComplexity}.}, assuming that polynomial hierarchy does not collapse. This hardness of quantum probabilities  for specific circuits  and outcomes (i.e., in the worst case) was proven long ago for circuit built from universal gates \cite{terhal2004adptive,fenner1999determining}. For non universal models of quantum computation a standard technique for establishing  $\sharP$-hardness of computation of probabilities is based on showing that a particular non-universal model becomes universal when postselection is allowed \cite{Aaronson2013,Bremner2011,farhi_QAOA_2016,brod_complexity_2015,bouland_conjugated_2018,gao_ising_2017,bermejo-vega_architectures_2018}. Relative error approximation is however too strong to be a reasonable notion of approximation from the  physical perspective. This is because even a very small amount of experimental noise can render very large relative error.

A more realistic notion of approximate sampling is based on additive error \cite{Aaronson2013,pashayan_estimation_2020} in which classical computer is supposed to efficiently produce samples from probability distribution $\{q_\x\}$ satisfying $\sum_{\x} |p_\x -q_\x|\leq \ep$, where $\ep$ is the error parameter. Establishing hardness for additive error approximate sampling is however much more challenging then in the case of relative error. Assuming non-collapse of the polynomial hierarchy, the currently existing techniques \cite{Aaronson2013,Bremner2016,pashayan_estimation_2020} establish this hardness using Stockmayer approximate counting algorithm \cite{Stockmeyer1985} and by relying on two technical properties of a given quantum advantage proposal:  (i) anticoncentration of outcome probabilities, and (ii)  $\sharP$-hardness of relative error approximate computation of outcome probabilities  on average. Anti-concentration  refers to property that probability amplitudes $p_\x (V)$ are typically not too small, compered to their average value, for random circuits $V$ defining a given quantum advantage proposal. Anticoncentration property has been shown in several schemes \cite{Bremner2016,Morimae2017,bouland_conjugated_2018,Harrow2018,haferkamp_closing_2019}, while for others, including  Fourier Sampling \cite{fefferman2016fourier} and Boson Sampling \cite{Aaronson2013} it remains unproven. On the other hand average-case $\sharP$- hardness of relative error approximate computation of $p_\x (V)$ has not been proven to date for the existing quantum advantage proposals. There however exist intermediate results that support it in the form of average-case $\sharP$-hardness of \emph{exact} computation of $p_\x (V)$ for Boson Sampling \cite{Aaronson2013}, RCS \cite{bouland_complexity_2019} and related schemes \cite{haferkamp_closing_2019}. These works adopt the polynomial interpolation technique from \cite{Aaronson2013} and to prove worst-to-average-case hardness reduction. This reduction has been recently improved by Movassagh \cite{movassagh_quantum_2020} for RCS who showed that it is average-case $\sharP$-hard to approximate $p_\x (V)$ in additive error $\exp(-\Theta(N^{4.5}))$, where $N$ is the number of qubits.

In this work, in order to justify computational hardness of the proposed Fermion Sampling scheme we prove the following results

\begin{itemize}
    \item[i] Anticoncentration of probabilities $p_\x(V)$ in the output of the scheme for both passive and active FLO circuits initialized in magic states. 
     \item[ii] Robust worst-to-average-case hardness reduction for computation of probabilities for passive and active FLO circuits initialized in magic states up to error $\exp(-\Theta(N^{6}))$.
\end{itemize}

\begin{table*} 
\begin{center}
\includegraphics[width=\linewidth]{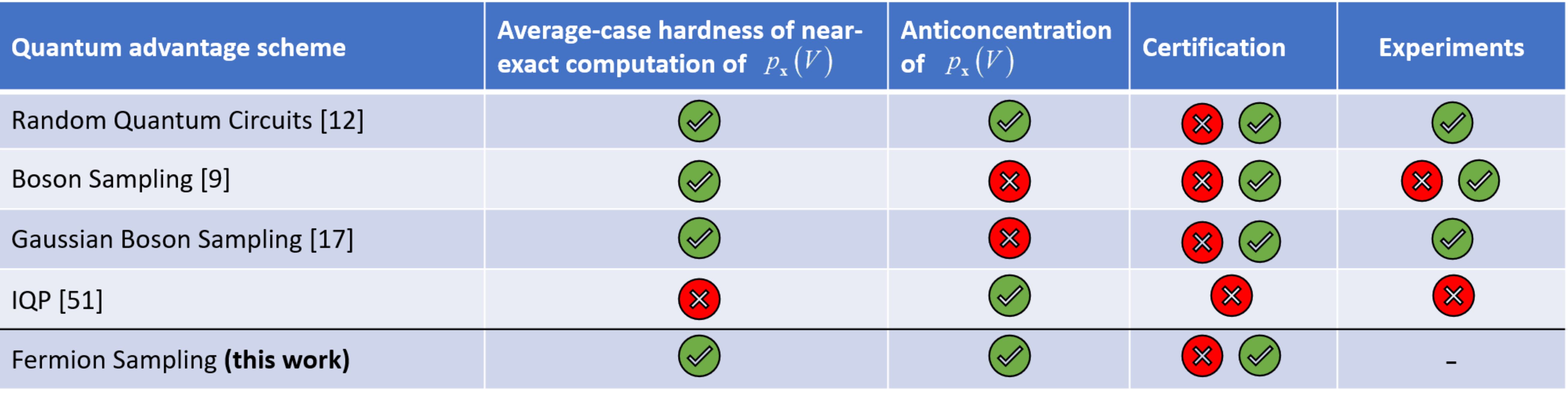}
\caption{ {Summary of results concerning some of the sampling schemes proposed for quantum supremacy. The first column lists the different sampling schemes: Random Circuit Sampling (RCS) \cite{Boixo2016,Hangleiter2018,bouland_complexity_2019,Harrow2009design,Suprem2019,Wu2021}, Boson Sampling (BS) \cite{Aaronson2013,Bouland2021,wang2019boson,brod2019photonic}, Gaussian Boson Sampling (GBS) \cite{hamilton2017gaussian,GaussBSExperiment2020,GaussBSExperiment2021,deshpande2021GBS}, Instantaneous Quantum Polynomial (IQP) \cite{Shepherd2009,Bremner2016} and Fermion Sampling. The second column indicate whether  average-case hardness for near-exact computation of output probabilities has been proven. The third column indicates if the circuits involved in the scheme fulfill the anticoncentration property, the fourth column shows if the scheme has some certification procedure for the circuits involved and the final column indicates if there are experiments which realize the sampling scheme. Entries with both check and crosses mean that partial results exist.}  \label{tab:sampling_schemes} }
\end{center}
\end{table*}
    
Instrumental to our proofs is the fact that active and passive FLO circuits are representations of the low-dimensional (of dimensions scaling polynomially with the number of fermionic modes $d$ ) Lie groups $\U(d)$ and $\SO(2d)$ respectively.   For the anticoncentration property, we do not use the 2-design property (which is not satisfied for FLO unitaries), but instead prove it relying on specific group-theoretic properties of FLO circuits. For the worst-to-average case reduction, we follow the state-of-the-art technique by Movassagh \cite{movassagh_quantum_2020}, which utilizes Cayley path to construct a low-degree rational interpolation between the worst-case and average-case circuits, while generalizing it in two significant directions. First, while the interpolation in \cite{movassagh_quantum_2020} is performed directly using physical circuits, ours is performed at the level of group elements which are then represented as circuits (see Fig.~\ref{fig:cayley-path}) 
Secondly, while \cite{movassagh_quantum_2020} applies the interpolation to local one- and two-qubit gates that constitute the circuit, we directly apply it to a global circuit while maintaining the low-degree nature of the rational functions, which is required for the robust reduction.

These results put Fermion Sampling at the comparable level as RCS \cite{bouland_complexity_2019,movassagh_quantum_2020} in terms of state-of-the-art hardness guarantees, surpassing that of Boson Sampling. The advantage of our scheme compared to RCS is that FLO circuits can be efficiently certified due to their low-dimensional structural properties.  The apparent disadvantage is the size of the required circuits -  RCS can be implemented in depth $\sqrt{N}$  \cite{Boixo2016,Harrow2018} while our scheme requires depth of the structured circuit scaling like $N$. {A more general comparison of our Fermionic sampling scheme with other previous schemes in the literature is given in Table \ref{tab:sampling_schemes}. We compare the schemes in terms of available results in the literature regarding proofs of average-case hardness, anticoncentration and experimental results implementing the schemes or similar results.}

\subsubsection*{Comparison with Boson Sampling}

 Boson Sampling \cite{Aaronson2013}, the first quantum advantage proposal based on sampling, relies on the fact that the probability amplitudes of indistinguishable bosons initially prepared in a Fock state and passing through linear-optical network, can be expressed via matrix permanents. Computation of permanent is know to be $\sharP$- hard in the worst case \cite{aroraComplexity}. In contrast, the analogous amplitudes for fermions are given by the determinant, which can be computed efficiently. Physically, this difference in complexity can be attributed to the fact that bosonic Focks states are \emph{non-Gaussian} bosonic states, while their fermionic counterparts are in fact fermionic Gaussian states \cite{bravyi_fermionic_2002}. Thus, to  make a closer the analogy with Boson Sampling, we define our Fermion Sampling using \emph{non-Gaussian} input states $\ket{\inpsi}=\psiQUAD^{\ot N}$, where $\psiQUAD=\frac{1}{\sqrt{2}}(\ket{0011}+\ket{1100})$.  This state can be prepared easily on a quantum computer but at the same time can be expressed as an exponential sum of orthogonal Fock states. This is sufficient to guarantee hardness of the corresponding probability amplitudes. It was shown by Ivanov and Gurvits \cite{Ivanov2017,Ivanov2020}  that if $\ket{\inpsi}$ is transformed via particle-number preserving (passive) FLO transformation, the probability amplitudes are related to \emph{mixed discriminants} of matrices, which is known to be $\sharP$-hard, because they can be efficiently reduced to permanent. In the context of active FLO transformations, auxiliary states  $\psiQUAD$ are known to promote this class of transformations to universality \cite{bravyi_universal_2006} (see also \cite{hebenstreit2019}), which can be used to show $\sharP$-hardness of probabilities arising form active FLO circuits initialized with such non-Gaussian states.   We conclude the comparison with boson sampling by clarifying the role of the measurements used. Our proposal uses fermionic particle-number measurements which are themselves fermionic Gaussian. This differentiates Fermion Sampling from Boson Sampling schemes. This includes Gaussian Boson Sampling \cite{hamilton2017gaussian} in which  bosonic squeezed states (that are bosonic Gaussian) are transformed using linear optics, and finally measured using (non-Gaussian) particle-number detectors. In that proposal non-Gaussian character of the particle-number measurement is crucial for hardness \cite{Rahimi-Keshari2016}.

\begin{figure*}[ht]
    \centering
    \includegraphics[scale=0.7]{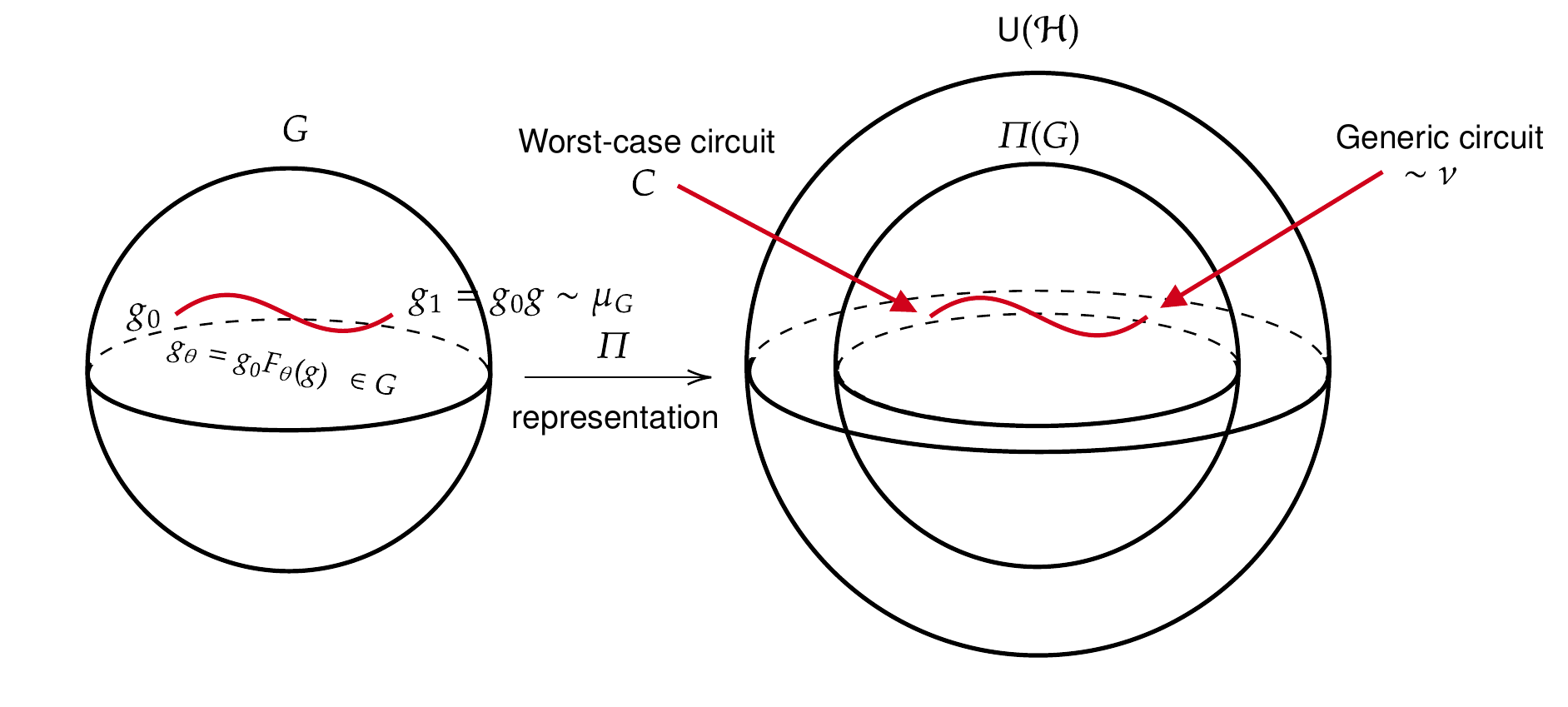}
    \caption{
    $\Pi$ is a group representation from $G$ to (a subgroup $\Pi(G)$ of) the group $\U(\H)$ of all quantum circuits on Hilbert space $\H$ (typically of exponential dimension).
    The Cayley path $g_\theta = g_o F_\theta(g)\in G$ gives a rational interpolation between a fixed element $g_0$ and $g_1 = g_0 g$. This gives rise to a rational interpolation between circuits $C\coloneqq \Pi(g_0)$ and  $\Pi(g_0)\Pi(g)=\Pi(g_0 g)$. To carry out worst-to-average-case reduction we would consider $g_0$ to be group element corresponding to the worst-case circuit $C$ while $g$ will be chosen to be a \emph{generic} element of the Lie group $G$.
    \label{fig:cayley-path} 
    }
\end{figure*}
\vspace*{-4mm}

\subsubsection*{Comparison to Existing Benchmarking Protocols}

{A convincing demonstration of quantum supremacy requires a means to build confidence that the output $q(\x)$ of the quantum device is close to the ideal distribution $p(\x)$. Such verification can be done with differing levels of efficiency (or inefficiency) depending on how much one can assume or trust the correct functioning of the device.
Of the highest standard (requiring minimal assumptions) of such verification is to certify whether $q(\x)=p(\x)$ or $\sum_{\x} |p_\x -q_\x| > \epsilon $ for some small $\epsilon>0$---that is, we are ruling out \emph{all} adversarial distributions that are $\epsilon$-away from $p(\x)$---using only the classical output statistics of the sampling device. Building on a classical result on identity testing of probability distributions, Ref.~\cite{hangleiter2019sample} shows that such form of stringent, device-independent certification is infeasible for most prominent quantum supremacy distributions, requiring exponentially many samples.}

{In reality, however, the experimenters do have prior knowledge about the functioning of various components of their device and the model of the physical noise that degrades the computation, which prompts one to move away from the minimal assumptions. If one insists on only making use of the classical output statistics, benchmarking protocols of the cross-entropy type \cite{Boixo2016,bouland_complexity_2019,Suprem2019}, allows one to rule out distributions that are otherwise ideal but are corrupted by depolarizing noise with few samples. 
(Ref.~\cite{bouland_complexity_2019} derives an entropic condition on distributions that are ruled out by the cross entropy difference proposed in \cite{Boixo2016}.)
Thus, some amount of effort in the analysis of the Google's experiment \cite{Suprem2019} is dedicated to benchmarking the error model and validating their assumptions.
The downside is that this type of measures requires one to actually compute the ideal quantum supremacy probabilities, which are presumed to be extremely hard to compute for when the number of qubits is large. 
For Boson Sampling, a weaker form of certification based on state discrimination \cite{aaronson2014uniform} can be made fully efficient (that is, efficient in both the number of samples and computation time), but can only certify against a fixed adversarial distribution (for example the uniform distribution).}

{Assuming the ability to change input states or measurement settings (trusted preparations and measurements), direct certifications for several quantum advantage architectures can be devised that are fully efficient. One such line of works 
\cite{Hangleiter2018,haferkamp_closing_2019,chabaud2020efficient} is based on the idea of fidelity witness of output states. }

{In our work, we offer a way to characterize FLO circuits assuming the ability to prepare certain product input states and perform trusted Pauli measurements.
The certification is indirect as it only certifies a component of the experiment (the circuit) but not the quantum supremacy distribution itself. However, the protocol suggests that a direct certification of Fermion Sampling can be developed similar to that of Boson Sampling as our scheme is analogous to the characterization of linear optical networks \cite{Rahimi-Keshari2016} whereas
no such scheme exists in the present for Random Circuit Sampling.}


\subsubsection*{Relation to Fermionic Quantum Computation} 

Quantum computing with (active) FLO circuits has received significant attention over the years. While FLO circuits with unentangled input states and measurements
are efficiently simulable classically, they constitute a ``maximally classical" subset of quantum circuits in the sense that an addition of any non-FLO unitary allows one to reach any unitary on the relevant Hilbert space \cite{OZ2017}.
Thus, similar to Clifford circuits, FLO circuits with additional resources constitute an interesting model of universal quantum computing \cite{bravyi_fermionic_2002,bravyi_universal_2006}.
Here we review the most important results about computational power of FLO circuits and their extensions. 
Common notions of simulation in the literature fall into two classes of \emph{strong} and \emph{weak} simulations:
strong simulation refers to the ability to compute the the marginal probability of any chosen outcome, whereas weak simulation refers to the ability to sample from the output probability distribution. 
Strong classical simulability of FLO circuits can be traced back to the work of Valiant \cite{valiant_quantum_2002} in which he introduced so-called matchgates for the purpose of studying algorithms for graphs. Assuming computational-basis input states and measurements, circuits of nearest neighbor (n.n) matchgates in 1D layout 
can be strongly simulated on a classical computer in polynomial time, even with adaptive measurement in the computational basis \cite{terhal_classical_2002}.
Soon after the introduction of matchgates, their classical simulability was connected to exact solvability in physics as n.n matchgate circuits can be mapped to evolutions of non-interacting fermions via the Jordan-Wigner transformation \cite{terhal_classical_2002,knill_fermionic_2001}
and extended to classical simulability of dissipative FLO and non-unitary matchgates \cite{bravyi_lagrangian_2004,DissipativeFLO}. The geometric locality restriction is non-trivial as matchgate computation becomes quantum universal when the n.n condition is lifted \cite{Jozsa2008} or the linear chain of qubits is replaced by more general graphs \cite{brod_geometries_2012}.

When one considers arbitrary product input states, adaptive computation using FLO circuits with such inputs can be simulated classically \cite{brod_efficient_2016}. This is in striking contrast to the case of Clifford circuits in which supplying single-qubit magic states and adaptive measurement in the computational basis suffices for universal quantum computation \cite{bravyi2005magic}. Since every fermionic state (or qubit state with a fixed parity) of fewer than 4 qubits is Gaussian  \cite{bravyi2005capacity,melo_power_2013}, FLO circuits with computational-basis measurement must be supplied with at least four-qubit magic input state to attain universality. The first example of such state is $\ket{a_8} = \frac{1}{\sqrt{2}}(\ket{0000}+\ket{1111})$ (which can be converted to $\psiQUAD$ used in our scheme, see the proof of Lemma \ref{lem:actPROJfinal}) was introduced in \cite{bravyi_universal_2006} along with the corresponding state-injection scheme for universal quantum computation using Ising anyons. Much more general result was established in \cite{hebenstreit2019} where it was showed that all non-Gaussian states, when supplied in multiple copies,  allow one to perform universal quantum computation. Finally, weak classical simulability of FLO circuits with noisy magic input states was studied in \cite{melo_power_2013,oszmaniec_classical_2014},

Alternative to magic input states, adding an arbitrary non-FLO gate \cite{bravyi_fermionic_2002,brod_extending_2011,OZ2017} (see also \cite{zimboras2014}), or entangled measurements such as non-destructive parity measurement \cite{bravyi_fermionic_2002,beenakker2004charge}, also allows one to perform universal quantum computation.
When the final measurement is restricted to only one qubit line and no adaptive measurement is allowed during the computation, the circuits are classically simulable in the strong sense even with magic input states. This result was first proven for an arbitrary product input state in \cite{Jozsa2008} and observed to generalized to any product of $O(\log m)$-qubit states in \cite{hebenstreit_computational_2020}.
Recently comprehensive investigation of the complexity landscape of FLO circuits with auxiliary resources are given in \cite{hebenstreit_computational_2020}, which investigated the hardness  of FLO circuits depending on: (i)  whether the input is a product state or copies of entangled magic states, (ii) whether adaptive measurements are allowed,  (iii) whether the final measurement is performed only on a single qubit or on all qubits. It was established there that strong simulation of FLO circuits is $\sharP$-hard in all cases considered except ones that are already known to be classically simulable \cite{terhal_classical_2002,Jozsa2008,brod_efficient_2016}. Using the standard postselection argument \cite{Bremner2011}, it is possible to show that  weak simulation FLO circuits with magic input states and no adaptive measurement implies collapse of the polynomial hierarchy (see Appendix \ref{sec:sharP_shallow}).  This scenario coincides with one of the settings considered in this work. However, in this work we are concerned with establishing hardness of Fermion Sampling up to additive error which, as explained earlier, is a property much harder to establish.

\emph{Organization of the paper.}--  First, in Section \ref{sec:notations} we lay out basic notations and concepts, focusing mostly on the fermionic context. Then in Section \ref{sec:results_significance}  we formally define our quantum advantage proposal and give a high-level overview of our results and their significance. We also present there arguments in favour of experimental feasibility of our scheme. In Section \ref{sec:concl} we discuss possible applications of our work and present future research directions.  In the subsequent Section \ref{sec:anticoncentration} we prove that output probabilities  of FLO circuits initialized in suitable magic states anticoncentrate for generic active and passive  FLO circuits. These results, together with known \cite{Ivanov2017} worst-case $\sharP$ hardness of probability distributions,  is then used in Section \ref{sec:hardness-sampling} to prove hardness of approximate Fermion Sampling.   Section \ref{sec:cayley} is devoted to the quantitative analysis of the Cayley path transformation for unitary and orthogonal groups. 
In Section \ref{sec:worst-to-avg-reduction} we use technical results form the two preceding parts to prove worst-to-average-case reduction for hardness of computing probabilities in our quantum advantage scheme. \ In the final Section \ref{sec:cert} we show that in the Jordan-Wigner encoding an unknown FLO unitary can be efficiently certified using resources scaling polynomially with the number of fermionic modes. {The Appendix consists of seven parts and contains auxiliary technical results. In Appendix~\ref{app:decomp}, we describe in detail the decomposition of an FLO circuits into elementary one- and two-qubit gates. In Appendix \ref{app:hardnessOFsampling} we provide a formal proof of hardness of approximate sampling from FLO circuits, utilizng the anticoncentration property from Section \ref{sec:anticoncentration}. In subsequent Appendix \ref{app:TVbound} we prove bounds on TV distance between Haar measure and its Cayley-path deformation for classical groups $\U(d)$ and $\SO(2d)$. In part \ref{sec:degree} we upper bound the degrees of polynomials associated to FLO circuits.}  In subsequent part \ref{sec:Projectors}, we give details of the computations needed in the proof of anticoncentration of our results.
In Appendix~\ref{app:effTOM}, we prove a lemma concerning the stability of the FLO representations (an analogue of the stability result proved standard Boson Sampling \cite{arkhipov2015bosonsampling}), which is used in the tomography scheme of FLO unitaries.  Finally, in  Appendix~\ref{sec:sharP_shallow} we prove $\sharP$-Hardness of probabilities in shallow depth active FLO circuits.

\section{Notation and basic concepts} \label{sec:notations}

In this section we describe main concepts and notation needed in the paper. Specifically, we will introduce the language of second quantization, vital for describing fermionic systems. We will define passive and active fermionic linear optical circuits. Finally, we survey Jordan-Wigner transformation which allows to implement fermionic systems and associated unitaries acting on them in terms of spin systems and standard quantum circuits. All these ingredients allow us to formally define our scheme for attaining quantum computational  advantage with FLO circuits.

Let $\H$ be a finite-dimensional Hilbert space. Normalized vectors in this space will be denoted by $\ket{\Psi},\ket{\Phi}$ etc. Such normalized vectors give rise to  pure states, i.e., rank 1 nonnegative operators on $\H$. For the sake of brevity we will use the notation  $\Psi=\ketbra{\Psi}{\Psi},\Phi=\ketbra{\Phi}{\Phi}$ etc. We will use the symbol $\D(\H)$ for the set of all (possibly mixed) quantum states on  $\H$. Finally, by $\U(\H)$ we denote group of unitary operators on $\H$. We will be consider a system of fermions with single-particle Hilbert space being $\C^d$. The Hilbert space associated to this system is a $d$ mode Fock space
\begin{equation}
  \Hfock(\C^d)=\bigoplus_{n=0}^d \bigwedge^n (\C^d)\ , \label{eq:Fock_space}
\end{equation}
where $\bigwedge^n(\C^d)$, is the totally anti-symmetric subspace of $(\C^d)^{\otimes n}$  describing states consisting of exactly $n$ fermions, and $\bigwedge^0 (\C^d) =\Span_\C(\ket{0_F})$, where $\ket{0_F}$ is the Fock vacuum. Any basis $\lbrace \ket{1},\ket{2},\ldots,\ket{d}\rbrace$ of single-particle Hilbert space  defines a family of creation and annihilation operators acting on $  \Hfock(\C^d)$:  $f_j^\dagger$ and $f_j$, respectively, where $j=1,2,\ldots,d$. These operators satisfy canonical anti-commutation relations $\{f_j,f_k^\dagger\}\equiv f_j  f_k^\dagger+ f_k^\dagger f_j = \delta_{j,k}$ and $\{f_j,f_k\}=\{f_j^\dagger,f_k^\dagger\}=0$, with $\delta_{j,k}$ being the Kronecker symbol.

Given this set of creation and annihilation operators, it is natural to introduce the so-called Fock basis states, which forms a basis of $\Hfock(\C^d)$, as  
\begin{equation}
| \x \rangle:= (f_1^{\dagger})^{x_1}(f_2^{\dagger})^{x_2}\cdots  (f_d^{\dagger})^{x_d}| 0_F\rangle
\end{equation}
for any $\x \in \{0,1 \}^{d}$, where we used the notation $(f_1^{\dagger})^0 = \id$.
Throughout the paper, we will denote the set $\{1,\ldots, d \}$ as $[d]$.
Given an arbitrary subset $\X \subset [d]$, it will also be useful to introduce the notation $ \ket{\X} $
for the Fock basis state $\ket{\x}$ with $x_j=1 $ if $j \in \X$ and $x_j =0$ otherwise. We will also use $\binom{\X}{k}$ to denote the  collection of subsets of finite set $\X$ of size  $k$ (we shall assume the convention $\binom{\X}{k}=\emptyset $ if $|\X|<k$.

Considering the direct sum decomposition  of the Fock space into fixed particle number subspaces in Eq.~\eqref{eq:Fock_space}, a specific  Fock basis state $| \X \rangle$
(with $|\X| =n $)
is an element of the $n$-particle subspace $\bigwedge^n(\C^d) $.
Note that since   $\bigwedge^n(\C^d) $ can be regarded as a subspace of $ (\C^d)^{\otimes n}  $, it is natural to consider such a Fock basis state $| \X \rangle $ (where $\X = \{a_1, \ldots a_n \}$ with $a_i < a_j$ if $i < j$) as an element in  $(\C^d)^{\otimes n} $ which is given by the formula
\begin{equation}
\begin{split}
  | \X \rangle &= |a_1 \rangle \wedge |a_2 \rangle \wedge \ldots \wedge | a_n \rangle  \\
  &= \frac{1}{\sqrt{n!}} \sum_{i_1, \ldots i_n=1}^n \epsilon_{i_1, i_2, \ldots, i_n} |a_{i_1}\rangle \otimes | a_{i_2} \rangle \otimes \cdots | a_{i_n} \rangle.
\end{split}
\end{equation}
Here and throughout the paper we will use the generalized Levi-Civita symbol, i.e., for any string of positive integers $(k_1, k_2,\ldots  , k_n )$ with $k_i \ne k_j$ if $i \ne j$ we define $\epsilon_{k_{1}, k_{2}, \ldots, k_{n}}=(-1)^{p}$, where $p$ is the parity of the permutation $\pi$ for which $k_{\pi(i)} < k_{\pi{(j)}}$ if $i <j$ and $p$ is its parity, while $\epsilon_{k_{1}, k_{2}, \ldots, k_{n}}=0$ if some of the entries in  $(k_1, k_2,\ldots  , k_n )$ coincide. 


A {\it passive fermionic linear optical transformation} on the $n$-particle subspace $\bigwedge^n(\C^d) \subset (\C^d)^{\otimes n}$ is given as a transformation $U^{\otimes n}$ restricted from being  $(\C^d)^{\otimes n} \to (\C^d)^{\otimes n} $ function to being a  $\bigwedge^n(\C^d) \to \bigwedge^n(\C^d) $  map. 
Passive FLO can be understood abstractly as the irreducible representation of the low-dimensional  symmetry group  $\U(d)$ in the Hilbert space $\bigwedge^n(\C^d)$
\begin{align}\label{eq:pasREP}
    \Pipas: \U(d) &\longrightarrow \U \left( \bigwedge^n(\C^d) \right), \\
    U &\longmapsto  \left. U^{\ot n}\right|_{\bigwedge^n(\C^d)}.
\end{align}
That is, we get a representation of $\U(d)$ on a fixed particle fermionic subspace. A useful equivalent definition is that for any $U=e^{iK}\in \SU(d)$, $\Pipas$ is the restriction of the Fock state unitary $e^{i/2 \sum_{n,m} K_{nm}f^\dagger_{n}f^{\phantom{dagger}}_m}$ to the subspace $\bigwedge^n(\C^d)$.

An important concept when discussing passive FLO transformations are Slater determinant states.  These are states of the form $| \Psi \rangle = |\xi_1\rangle |\wedge \xi_2 \rangle \wedge \ldots \wedge |\xi_n \rangle$, where 
$\{|\xi_i \rangle \}_{i=1}^n \subset \C^d$
is a set of orthonormal vectors  of the one-particle Hilbert space $\C^d$.
By definition, Fock basis states are special cases of Slater determinant states.
And passive FLO transformations act transitively on the set of Slater determinant states.
The overlap between any two Slater determinant states, $|\Psi \rangle = |\xi_1\rangle |\wedge |\xi_2 \rangle \wedge \ldots \wedge |\xi_n \rangle$ and $|\Phi \rangle = |\phi_1\rangle \wedge |\phi_2 \rangle \wedge \ldots \wedge |\phi_n \rangle$, can be expressed by the simple determinant formula
\begin{equation}
    \langle \Psi | \Phi \rangle = \det C \, ,  \; \; 
    C_{i,j}=\langle \xi_i| \phi_j\rangle. \label{eq:det-inner}
\end{equation}

A standard way to measure fermionic systems is to perform particle number measurement i.e  a projective measurement in the Fock basis basis $\ket{\x}$ defined previously. Upon obtaining measurement result labelled by  $\X$, numbers  $x_i$ have the interpretation of number of particles in mode $i$.

Let us next introduce the self-adjoint Majorana mode operators
\begin{equation}
\label{def:majorana_mode_ops}
m_{2j-1} = f_j+f_j^\dagger\, ,\qquad 
m_{2j}  = -i\, (f_j-f_j^\dagger),
\end{equation}
with anti-commutation relations $\{m_j,m_k\}=2\,\id \delta_{j,k}$. These operators define  parity operator $Q=i^d\prod_{i=1}^{2d} m_i$ in $\Hfock(C^d)$. The subspace of $\Hfock(C^d)$ that corresponds to eigenvalue of $+1$ of $Q$ is spanned by Fock states $\ket{\n}$ having even number of particles. In what follows we shall refer it to this vector space as positive parity subspace and denote it by $\Hfock^+(\C^d)$. Majorana operators allow also to active fermionic linear optical transformations. We say that a fermionic unitary $U$ is \emph{free}, \emph{Gaussian}, or \emph{linear-optical}, 
if it can be written as an exponential of a quadratic Hamiltonian, i.e., $U=\mathrm{e}^{i H}$,  where 
\begin{equation}
\label{eq:H}
H=\frac{i}{4}\sum_{j,k=1}^{2d} A_{j,k} \, m_j\,  m_k,
\end{equation}
and $A=-A^T\in\mathbb{R}^{2d\times 2d}$. Active FLO transformation form a group which can be conveniently understood in terms of (projective)  representation of the $\SO(2d)$ group\footnote{The FLO operators themselves form a group isomorphic to the universal cover of $\SO(2d)$, called  Spin$(2d)$.}: 
\begin{align}\label{eq:actREP}
    \Piact: \SO(2d) &\longrightarrow \U(  \Hfock(\C^d) )\, \\
   O  & \longmapsto\  \exp\left( \frac{1}{4} \sum_{i,j=1}^{2d} \left[\log(O)\right]_{ij} m_i m_j\right),
\end{align}
Often the restriction of $\Piact$ to the positive parity subspace $\Hfock^+(\C^d)$ is considered, we will not use a new symbol for this restricted representation rather simply refer to it by writing $\Piact: \SO(2d) \to \U(  \Hfock^+(\C^d) )$. Pure positive-parity Gaussian states are defined as pure states  the form $\Psi=\Pi(
O)\ketbra{0_F}{O_F} \Pi(O)^\dagger$, for $O\in SO(2d)$. In other words pure positive-parity Gaussian fermionic states are states that can be generated from the vacuum by active FLO transformations. Similarly, it is possible to define negative-parity pure fermionic Gaussian states as states generated by active FLO from, say, a Fock state with a single excitation.

 If we look at the action on the operators, we get an actual (i.e., non-projective) representation. In particular, a single Majorana operator evolves under an active FLO transformation as follows
\begin{align}
\label{eq:mode_transform}
 U^\dagger \, m_j\,U =\sum_{k=1}^{2d} O_{jk}\, m_k ,
\end{align}
where $U = \e^{-i Ht}$ with $H=\frac{i}{4}\sum_{j,k=1}^{2d} A_{j,k}m_j m_k$ and and $O = e^{-A}\in \SO(2d)$.

We will also use the notation $\Gpas$ and $\Gact$ to denote respectively passive and active fermionic linear optical gates. The name comes from the fact that these gates transform single creation/majorana operators to linear combination of creation/majorana operators, respectively.  

An important ingredient when discussing how to implement FLO transformations on qubit systems is the Jordan-Wigner transformation, that provides an equivalence between fermion and qubit systems through the
unitary mapping
$\mathcal{V}_{\textrm{JW}}:  \Hfock(\C^d) \to (\C^{2})^{\otimes d}$
given as the following mapping between 
\begin{align}
     &\mathcal{V}_{\textrm{JW}} \left( (f_1^{\dagger})^{x_1}(f_2^{\dagger})^{x_2}\cdots  (f_d^{\dagger})^{x_d}| 0_F\rangle \right)
     = \bigotimes_{p=1}^d \ket{x_p}
\end{align}
for all $\x=(x_1,x_2, \ldots, x_d) \in \{0,1 \}^{d}$.
which in turn induces an isomorphic mapping between majorana and spin operators
\begin{align}
    m_{2p-1} \mapsto \; \mathcal{V}_{\textrm{JW}}\,  m_{2p-1} \,  \mathcal{V}^{\dagger}_{\textrm{JW}}  &= Z_1 \cdots Z_{p-1} X_{p}, 
    \\
     m_{2p} \mapsto
      \; \mathcal{V}_{\textrm{JW}}\,  m_{2p} \,  \mathcal{V}^{\dagger}_{\textrm{JW}} 
      &= Z_1 \cdots Z_{p-1} Y_{p}, 
\end{align}
where $p\in[d]$.
To make the connection between fermions and qubit systems even more transparent, one often introduces the {\it occupation number notation} for vectors in $  \Hfock(\C^d) $ as $| \x \rangle:= (f_1^{\dagger})^{x_1}(f_2^{\dagger})^{x_2}\cdots  f_d^{\dagger})^{x_d}| 0_F\rangle$ for any $\x \in \{0,1 \}^{d}$. As the $|\x \rangle$ vectors are mapped via the Jordan-Wigner transformation to the computational basis states of , they are also called the {\it fermionic computational basis states} in  $\Hfock(\C^d)$.

Since groups  $\U(d)$ and $\SO(2d)$ are compact groups (for comprehensive introduction to the theory of Lie groups and their representations, see \cite{WalachBook,HallGroups}), each  possesses a unique normalized integration measure invariant under any group translation called Haar measure. We will donate this measure by  $\mu_G$ for the $G$ one of the symmetry groups above. Invariance of $\mu_G$ means that any measurable subset $A \subset G$ and any $h\in G$, we have that
\begin{align}
    \mu(hA) = \mu(Ah) = \mu(A)\ .
\end{align}
The above condition to the level of expectation values (averages) reads
\begin{align}
    \int_G d\mu (g) f(gh) = \int_G d\mu(g) f(hg) = \int_G d\mu(g) f(g)\ .
\end{align}
where $f$ is any integrable function on $G$ and $h\in G$.  We will denote by $\nupas$ the distribution of the unitaries $V=\Pipas(U)$, where $\U\sim \mu_{\U(d)}$ and by $\nuact$ distribution of the unitaries $\Piact(O)$, where $O\sim\mu_{\SO(2d)}$. In order to keep the notation compact we will suppress the dependance of these measures on $d$ and $n$ (values of these parameters will be implied from the context).

Finally, we use the following notation to denote growth of functions: Let $f$ and $g$ be two positive-valued functions. We write $f=O(g)$ iff $\lim_{x\to\infty} f(x)/g(x) < \infty$ and $f=o(g)$ iff $\lim_{x\to\infty} f(x)/g(x) = 0$.


\section{Main results}\label{sec:results_significance}

In this part we formally define our scheme for demonstration of quantum computational advantage and present informally main results of this work. In the end we comment on the practical feasibility of our quantum advantage scheme.

Having reviewed the basic concepts needed, we are now ready to formally introduce our quantum advantage proposal, which is illustrated in Fig.~\ref{fig:magic-input-and-decomp}. 
We have a system of $d=4N$ fermionic modes. The input state of the scheme is an $N$-fold tensor product  the  non-Gaussian magic state  $\psiQUAD=(\ket{1100}+\ket{0011})/\sqrt{2}$, i.e.,
\begin{equation}\label{eq:InputState}
    \ket{\inpsi}= \psiQUAD^{\otimes N} \, .
\end{equation}
Note that states equivalent to this has been used in FLO computation schemes in \cite{bravyi_universal_2006, Ivanov2017, hebenstreit2019}. After the initialization, a generic FLO operation is applied either respecting the particle-number conservation (passive scheme) or not (active scheme). Any FLO unitary can be decomposed into two-qubit FLO gates of linear depth either in diamond, triangle or brickwall layouts, see Figs.~\ref{fig:magic-input-and-decomp} and \ref{fig:brick_triangle}. The choice of the FLO operation $V$ is  done via the probability distributions $\nupas$ and $\nuact$  induced from the Haar measures on $\U(d)$ and $\SO(2d)$, respectively (see Section.~\ref{sec:notations}). 

For a type of particular type of FLO circuit the computational task we address is ability is to  sample from  the output distribution
\begin{equation}\label{eq:sample}
    p_{\x}(V,\inpsi) = \abs{\av{\x|V|\inpsi}}^2\ ,
\end{equation}
where output bitstring satisfy $|\x|=2N$ and  $|\x|$-even for passive FLO and active FLO respectively.  This computational task will be referred to as \emph{Fermion Sampling}. We prove four main technical results that underpin the hardness of Fermion Sampling.

The first result is anticoncentration for FLO circuits. Informally speaking it states that for the considered familly of circuits and fixed output $\x$ values  $\abs{\av{\x|V|\inpsi}}^2$ are typically not much smaller compared to they average average value.

\begin{res}[Anticoncentration for generic FLO circuits]\label{res:anticoncentration}
Let $\nu=\nupas$ or $\nu=\nuact$ be uniform distribution over passive and respectively active FLO circuit acting on $4N$ fermion modes. Let $\inpsi$ be the input state to our quantum advantage proposal. Then, there exist a constant $C \ge 1$ such that for every outcome $\x$ and for every $\alpha\in[0,1]$ 
\begin{equation}\label{eq:anticoncentrationPGEN}
    \Pr_{V\sim \nu} \left( p_{\x}(V,\inpsi) > \frac{\alpha}{|\H|} \right) > \frac{(1-\alpha)^2}{C} \ ,
\end{equation}
where $\H=\bigwedge^{2N}(\C^{4N})$ for passive FLO and $\H=\Hfock^+(\C^{4N})$ for active FLO.  
\end{res}
The formal version of this result is given in Theorem \ref{thm:anticoncentrationFLO}. It is important to emphasize that in the course of the proof of the this results we do not use the property of gate sets of interest forming an (approximate) $2-$design \cite{hangleiter_anticoncentration_2018}. In fact, it can be proved that measures $\nupas,\nuact$ do not form a projective $2-$design. We perform the proof of anticoncentration by heavily using group-theoretical techniques and particular properties of fermionic representations of symmetry groups $\U(d)$ and $\SO(2d)$.

In line with standard methodology based on Stockmeyer's algorithm \cite{Stockmeyer1985}), anticoncentration, and hiding property   we reduce approximate sampling from $\{ p_{\x}(V,\inpsi)\}$ to approximate computation of particular probability $p_{\x_0}(V,\inpsi)$ (see Theorem \ref{thm:sample-computing}). This allows us to prove hardness of approximate Fermion Sampling in Theorem \ref{th:SAMPLhar}  by conjecturing non collapse of the polynomial hierarchy (cf. Conjecture \ref{conj:PHnoncollapse}) and average-case hardness of computation approximate computation  of $p_{\x_0}(V,\inpsi)$  in relative error (cf. Conjecture \ref{conj:AVERhar}).

\begin{rem}
It is important to stress that in the passive FLO case our anticoncentration results \emph{do not} follow from anticoncentration results for the determinant proved in \cite{Aaronson2013}. The reason is that our probability amplitudes can be only expressed via determinants of submatrices of $U\in\U(d)$ (cf. Section \ref{sec:degree}). Also, these submatrices cannot be approximated via Gaussian matrices (we work in the regime in which number of modes  $d$ is comparable to the total number of particles $n$.

\begin{figure*}
    \centering
    \includegraphics[scale=0.55]{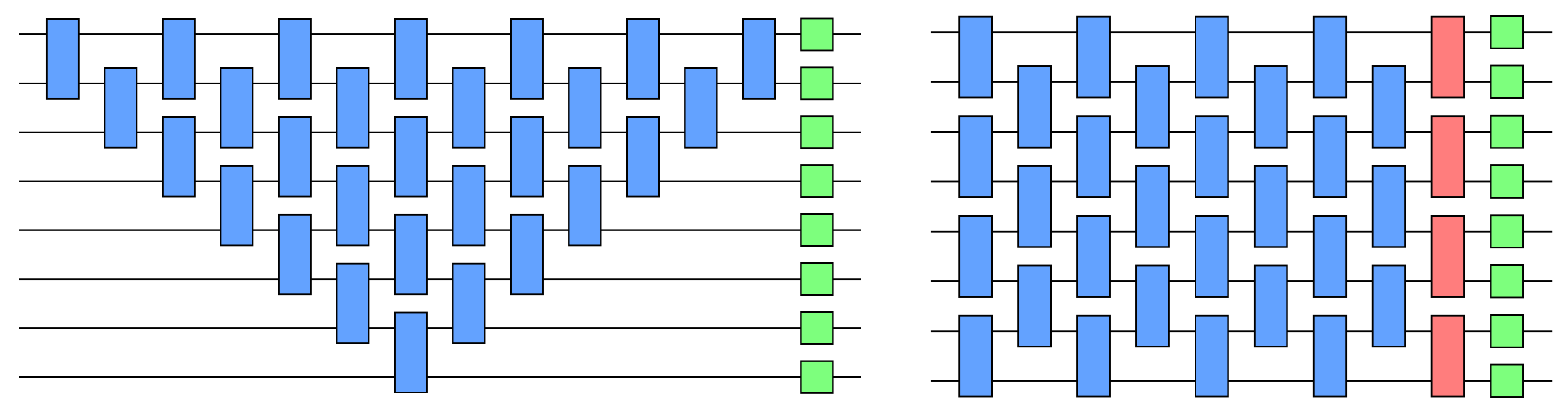}
    \caption{Circuit layouts implementing arbitrary passive and active FLO transformations. These layouts are based on the decomposition of arbitrary elements of the $\U(d)$ and $\SO(2d)$ groups into a sequence of nearest-neighbor Givens rotations and a diagonal matrix. The depicted two-qubit gates in the passive FLO case are of the type $D_{\mathrm{pas}}(\alpha_1, \alpha_2)$ (see Eq.~\eqref{eq:D_pas}), while the single-qubit gates are $Z$-rotations. The  two-qubit gates in the active FLO case are of the type $D_{\mathrm{act}}(\{ \beta_i \})$ (see Eq.~\eqref{eq:D_act}) and the single-qubit gates are Pauli unitaries. The extra layer of red colored two-qubit gates are only needed in the active case. The decomposition of the two-qubit gates $D_{\mathrm{pas}}(\alpha_1, \alpha_2)$ and $D_{\mathrm{act}}(\{ \beta_i \})$ into native gates of superconducting qubit architectures are provided in Fig.~\ref{fig:theta}.}
    \label{fig:brick_triangle}
\end{figure*}

\end{rem}

To give evidence for Conjecture \ref{conj:AVERhar} we prove two  worst-to-average-case reductions that allow us to prove weaker versions of approximate hardness result

\begin{res}[Worst-case to average reduction for exact computation of probabilities]\label{res:exactRED} Let $\nu=\nupas$ or $\nu=\nuact$ be uniform distribution over passive and respectively active FLO circuit acting on $4N$ fermion modes. Let $\inpsi$ be the input state to our quantum advantage proposal.
Let $V_0$ be a FLO gate (either active or passive) such that $p_{\x_0}(V_0),\inpsi)$ is $\sharP$-hard to compute (see Remark \ref{rem:exact_sharPhard}). It is then $\sharP$-hard to compute values of $p_{\x_0}(V,\inpsi)$ with probability greater than $\frac{3}{4}+\frac{1}{\poly{N}}$ over the choice of $V\sim\nu$.
\end{res}

\begin{res}[Worst-case to average reduction for approximate computation of probabilities]
Under the notation given in Result \ref{res:exactRED} we have that it is $\sharP$-hard to approximate probability $p_{\x_0}(V,\inpsi)$ to within accuracy $\ep=\exp(-\Theta(N^6))$ with probability greater than $1-o(N^{-2})$ over the choice of $V\sim\nu$.
\end{res}
{Note that above results are incomparable in the sense that Result 2 cannot be decuced form 3 and vice versa. This is because they make statements about hardness of outcome probabilities for different fractions of circuits.}
Formal versions of the above are given in  Theorems \ref{thm:average-hardness-exact-prob} and \ref{thm:average-hardness-approx-prob}. To obtain the above result we generalize the method developed recently by Movassagh  \cite{movassagh_quantum_2020} in the context of random quantum circuits. The key technical ingredient a Cayley path, which gives rational interpolation between quantum circuits. We realize that, for the purpose of the two reductions given above, it is possible to apply it directly on one the level of the Lie group underlying a particular class of FLO transformations ($\U(d)$ and $\SO(2d)$ for passive and active FLO respectively). We then use the fact that fermionic representations can be realized low degree of polynomials in entries of matrices of appropriate symmetry groups. This observation allows us to adapt the the reduction method of Movassagh with relative ease.

{
\begin{rem}
In the course of the proof of the above result we have realised a technical issue in Movassagh \cite{movassagh_quantum_2020}. Correction of the proof gives worse then claimed tolerance for error $\ep=\exp(-\theta(N^{4.5}))$ (for the Google layout), which is still better then the one claimed here. On the other hand, application of our reduction in conjunction with recent improvements over the Paturi lemma that appeared in the paper by Bouland \emph{et. al} \cite{Bouland2021,Kondo2021_robustness} (published after the completion of this work) boost error tolerance of our scheme to $\ep=\exp(-\theta(N^{2}\log(N)))$.
\end{rem}}



Finally, the experimental feasibility of our proposal is further increased by the fact that due to the structure of FLO circuits, they can be efficiently certified using resources scaling polynomially with the system size. 

\begin{res}[Efficient tomography of FLO circuits]
Let $V$ be an unknown active FLO circuit on a system of $d$ qubits that encodes $d$ fermionic modes. Assume we have access to computational basis measurements  and single qubit gates. Then $V$ can be estimated  up to accuracy $\ep$  in the diamond norm by repeating $r\approx \frac{d^3}{\ep^2}$ rounds of experiments, each involving $O(d^2)$ independent single qubit state preparations and single Pauli measurements at the end of the circuit.
\end{res}

The rigorous formulation of the above, together with the explicit protocol for carrying out tomography, is provided in Section \ref{sec:cert}. Importantly, our method avoids exponential scaling inherent to the general multiqubit tomography protocols. Moreover, it can can be also viewed as a fermionic analogue of the certification methods developed previously in the context of photonics and bosonic linear optics \cite{Lobino2008,OptQPT2011,LinOptTom2013}.

{\bf Implementation of the scheme}

\begin{figure*}
      \centering
    \includegraphics[scale=0.65]{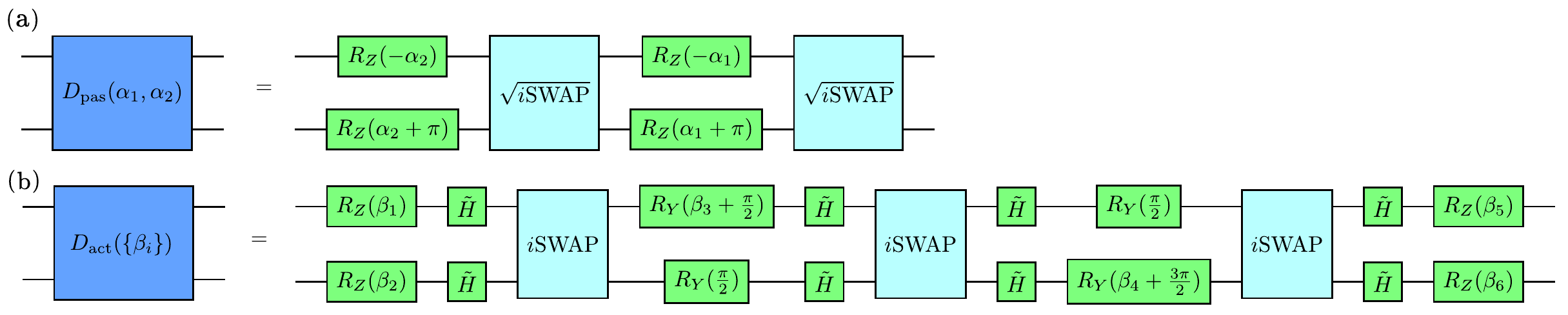}
    \caption{Decomposition of (a) the Givens rotation in the passive FLO setting, $D_{\textrm{pas}}(\alpha_1, \alpha_2)$,  in terms of $\sqrt{i\textsc{swap}}$ gates and (b) the merged Givens rotations in the active FLO setting, $D_{\textrm{act}}(\{ \beta_i \})$, in terms of $i\textsc{swap}$ gates. $R_{W}( \alpha)$ (with $W \in \{X, Y, Z\}$) denotes the one-qubit rotation gate $\e^{i \alpha W}$. The gates $\tilde{H}$ are defined by the relations $\tilde{H}Z\tilde{H}=Y$ and $\tilde{H}Y\tilde{H}=Z$.
    }
    \label{fig:theta}
\end{figure*}

It is important to stress that our proposal has a strong potential for experimental realization, e.g., on quantum processors with superconducting  qubit architectures. The actual implementation should be feasible already on near-term quantum devices, as the construction of parametric programmable passive linear optical circuits, due to their relevance in Quantum Chemistry, has been already experimentally demonstrated on Google's Sycamore quantum processor \cite{chemistryGOOGLE2020}.

The preparation of the input fermionic magic state $\psiQUAD^{\otimes N}$, vital to our proposal, can be performed by applying on the  computational basis state $\ket{0}^{\otimes 4N}$ a  simple constant depth circuit consisting of 3 CNOTs and 3 one-qubit gates per quadruple blocks of qubits as shown in Fig.~\ref{fig:magic-input-and-decomp}. 
One can implement an arbitrary passive FLO (or {\it basis rotation} in the Quantum Chemistry lingo) in linear depth  using only nearest neighbor gates and assuming a  minimal  linearly connected architecture \cite{fermPASSlayout2018, fermACTpassLAYOUT2018}. 
Two such layouts are depicted in Fig.~\ref{fig:brick_triangle}. In terms of two-qubit gates, the triangle layout has a depth of $d{-}1$, while the depth of the brickwall layout is only $d/2$.
These circuits are analogous to the layouts of Boson Sampling circuits \cite{Reck1994, ReckUniform2016} and are based on decomposing a unitary $U \in \U(d)$ into  individual Givens rotations, which we describe in Appendix~\ref{app:decomp}. In the passive FLO case the two qubit gates have the form:
\begin{equation} \label{eq:D_pas}
    D_{\mathrm{pas}}(\alpha_1, \alpha_2) = (\e^{-i \alpha_1 Z_1/2}\e^{i \alpha_1 Z_2/2}) \; \e^{i \alpha_2 (X_1X_2 + Y_1Y_2)/2} ,
\end{equation}
and the final one qubit gates are $Z$ rotations. The triangle and the brickwall layout can also be used to decompose an arbitrary active FLO operation \cite{fermACTpassLAYOUT2018, fermGAUSSlayout2019}, however, in this case the two-qubit gates will have a more complicated structure (as they arise from merging several Givens rotations): 
\begin{equation} \label{eq:D_act}
\begin{split}
    D_{\mathrm{act}}(\{ \beta_i \}) &= 
    (\e^{i \beta_5  Z_1/2}\e^{i \beta_6  Z_2/2}) \,
      \e^{i (\beta_3 X_1X_2 + \beta_4 Y_1Y_2)/2} \,  \\
      &\times(\e^{i \beta_1  Z_1/2}  \e^{i \beta_2  Z_2/2}) 
    ,
\end{split}
\end{equation}
and the one-qubit unitaries at the end of the circuit are either  Pauli matrices or identities. The derivation of these statements are given in Appendix~\ref{app:decomp},
they are based on the decomposition of arbitrary elements of the $\U(d)$ and $\SO(2d)$ groups into a sequence of nearest-neighbor Givens rotations and a diagonal matrix. The passive FLO representation of the Givens rotations and of the diagonal matrix are then translated to two-qubit gates of the type $D_{\mathrm{pas}}(\alpha_1, \alpha_2)$ and a series single-qubit $Z$ rotations in a layout the depicted Fig.~\ref{fig:brick_triangle}.  In the active FLO case several represented Givens rotations are merged into two-qubit gates of type $D_{\mathrm{act}}(\{\beta_i\})$ and the  single-qubit unitaries at the end of the circuit are Pauli gates.

In the experimental demonstration of  programmable passive FLO transformations by the Google team \cite{chemistryGOOGLE2020}, the  native gates of the Sycamore processors, the $\sqrt{i\textsc{swap}}$ gates and single-qubit $Z$ rotations, were used. The $i\textsc{swap}$ and $\sqrt{i\textsc{swap}}$ gates, defined as
\begin{equation}
     i\textsc{swap}= \begin{pmatrix}
1 & 0 & 0 & 0 \\
0 & 0 & -i & 0 \\
0 & -i & 0 &  0\\
0 & 0 & 0 & 1
\end{pmatrix} \ , 
\sqrt{i\textsc{swap}}=\begin{pmatrix}
1 & 0 & 0 & 0 \\
0 & \frac{1}{\sqrt{2}} & \frac{-i}{\sqrt{2}} & 0 \\
0 & \frac{-i}{\sqrt{2}} & \frac{1}{\sqrt{2}} &  0\\
0 & 0 & 0 & 1 
\end{pmatrix},
\end{equation}
were exactly introduced in quantum computing as standard gates because they are native in superconducting qubit architectures \cite{schuch2003natural}.
It is important to note that these gates are actually FLO gates. This lucky coincidence  supports the feasibility of our proposal, since the Givens rotations for passive FLO, i.e., the two-qubit gates $D_{\mathrm{pas}}(\alpha_1, \alpha_2)$ in the layouts of Figs.~\ref{fig:brick_triangle}, can be decomposed into two $\sqrt{i\textsc{swap}}$ gates and four single qubit $Z$-rotations as depicted in part (a) of Fig.~\ref{fig:theta}.  The two-qubit gates used in the active FLO setup, $D_{\mathrm{act}}(\{ \beta_i\})$, can be decomposed into 3 $i\textsc{swap}$ gates shown in part (b) of Fig.~\ref{fig:theta}. 

{
In Result \ref{res:anticoncentration} the anticoncentration of passive and active FLO circuits is presented. This result gives for active FLO random circuits an estimation of the constant $\Cact$ which is computed by bounding the expectation values from the Paley-Zygmund inequality $\frac{(\E X)^2}{\E X^2}$, with $X=|\bra{\x}V\ket{\inpsi}|^2$ where $V$ is the random circuit. As shown in Figure \ref{fig:brick_triangle}, we have assumed that the random circuits have linear depth with respect to the number of modes, nonetheless it is still possible that the anticoncentration property is obtained for lower depths. Note that at shallow depth the so called property of data hiding is also present, this property states that given $|\bra{\x}V\ket{\inpsi}|^2$ and a fixed state $\ket{\x_0}$ there is a FLO circuit $V_{\x}$ such that $\abs{\bra{\x} V\ket{\inpsi}}^2 = \abs{\bra{\x_0} V_{\x}\ket{\inpsi}}^2$. For more detail of the data hiding property see Lemma \ref{lem:hidingPROP}. To test that anticoncentration is obtained at shallow depth, we have simulated noiseless Fermion sampling experiments with up to 28 qubits using the IBM Qiskit simulator \cite{Qiskit}. As in Figure~\ref{fig:magic-input-and-decomp}, we have initialized the input with a fixed number of quadruples $\psiQUAD$. We performed simulations for $1,2,\dots, 7$ quadruples, for each of these, we computed the output probability of a fixed output string for an average of $26000$ circuits for each number of quadruples. The circuits used have the brickwall architecture from Figure \ref{fig:brick_triangle}. From each random circuit the output probability of a fixed output string was computed. With the obtained output probabilities, the expectation values in the Paley-Zygmund inequality $\frac{(\E X)^2}{\E X^2}$ were computed. In Figure \ref{fig:anticon-sim} we plot for each fixed number of quadruples the minimum depth required for the ratio of expectation values to surpass a certain threshold and compare to the case where the depth is linear with respect to the number of modes. The simulations suggest that the depths required to obtain anticoncentration are shallow as the number of modes is increased, wherein the case for $7$ initial quadruples and a threshold of $0.4$ the number of layers required is $7$. Moreover, in the random circuit sampling set up it has been found that anticoncentration is obtained in logarithmic depth, our simulations show that this could also be the case in our setup.}

\begin{figure}
    \centering
    \includegraphics[scale=0.6]{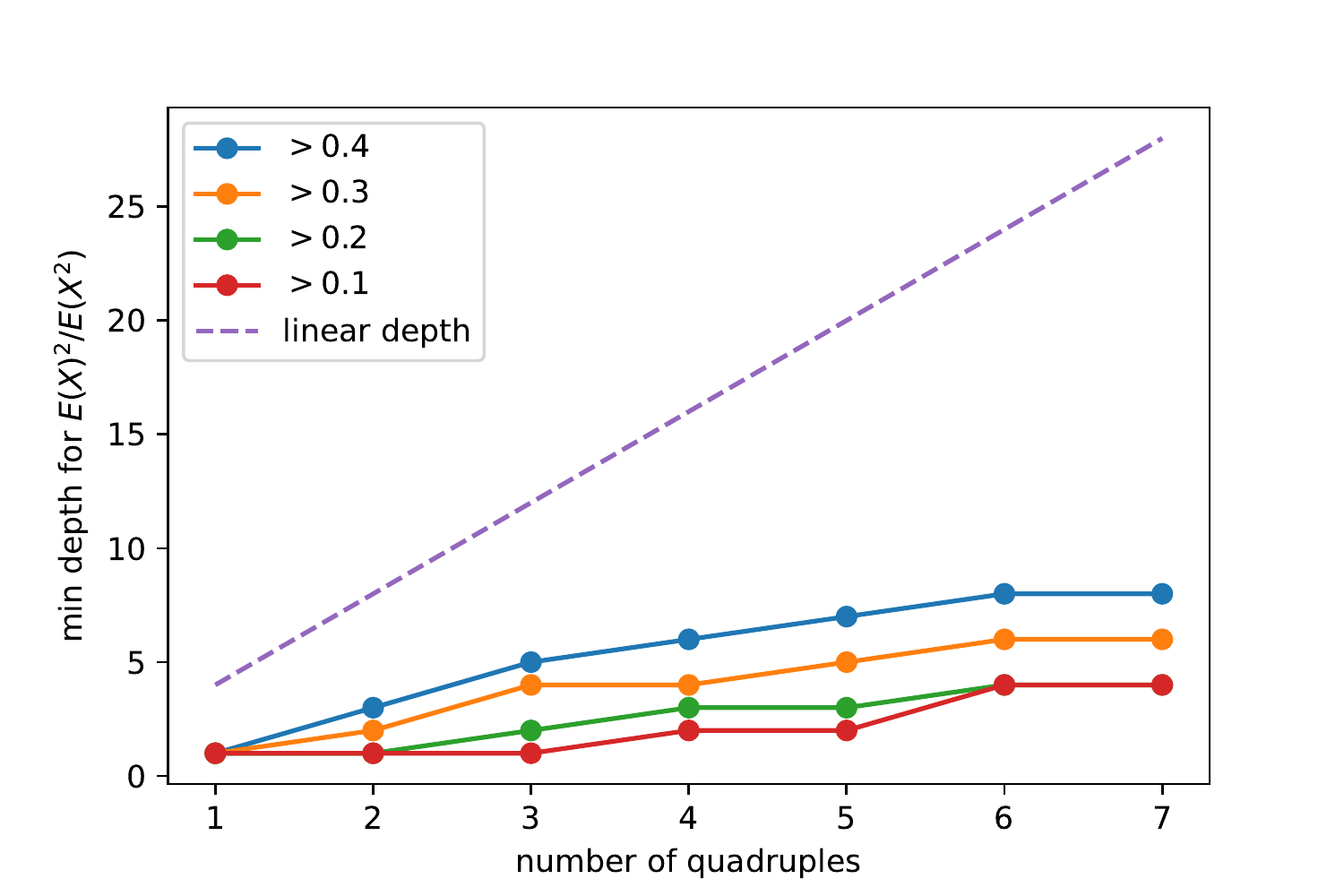}
    \caption{{Plots for the minimum depth required in circuits of a fixed number of quadruples for the ratio $\frac{(\E X)^2}{\E X^2}$, with $X=|\bra{\x}V\ket{\inpsi}|^2$ ($V$ is the random circuit and $\ket{\inpsi}$ is the quadruple input) to surpass a threshold defined in the legend. For comparison the linear depth in the number of modes is shown. The data suggests that the depths required to obtain the anticoncentration property scale sublinearly.}} 
    \label{fig:anticon-sim}
\end{figure}




\section{Discussion and Open problems}\label{sec:concl}

We believe that our results and techniques used to establish them will be of relevance also for problems not directly related to Fermion Sampling.  
The first group of potential applications is related to the structure of our quantum advantage scheme. As discussed in the introduction, quantum advantage proposals are typically not constructed because of their practical usefulness. However, recently several proposals for  applications of  Boson Sampling \cite{Aaronson2013} and Gaussian Boson Sampling \cite{hamilton2017gaussian, lund2014boson} have been suggested. These include combinatorial optimization problems \cite{arrazola2018using, arrazola2018quantum}, calculation of Franck–Condon profiles for vibronic spectra \cite{huh2015boson, huh2017vibronic},    molecular docking \cite{banchi2020molecular} and machine learning using graph kernels \cite{schuld2020measuring}.        { Admittedly a feasible scalability is unlikely for those applications that should produce as an answer  one specific bitstring (or a few specific bitstrings), as the output probabilities are generically exponentially suppressed. However, different type of applications than those, like dense graph sampling or using graph samples for graph-kernel methods, might turn out to be useful.}  It should be mentioned that all of those application are based on the fact that this type of sampling, unlike the generic Random  Circuit Sampling, have a very structured nature, as
they are based that one can sample with probabilities proportional to the permanents and Hafnians of certain matrices constructed from the circuit description. In particular, most of the mentioned applications use the fact that one can sample with probabilities proportional to the permanents and Hafnians of certain matrices constructed from the circuit description. These polynomial functions of matrix entries encode interesting properties, e.g., for adjacency matrices of graphs they provide the number of cycle covers and perfect matchings of the graph, respectively. In our proposal  similar polynomials appear when describing the sampling probabilities: the mixed discriminant \cite{Ivanov2020} and their generalizations (e.g., mixed Pfaffian \cite{ikai2011theory}), which also encode important graph properties. Thus, we have good reasons to believe that our proposal, beside providing a robust computational advantage setup, also can be used for other algorithms with interesting applications.

 All of those application are based on the fact that this type of sampling, unlike the generic Random  Circuit Sampling, have a very structured nature, as
they are based that one can sample with probabilities proportional to the permanents and Hafnians of certain matrices constructed from the circuit description. In particular, most of the mentioned applications use the fact that one can sample with probabilities proportional to the permanents and Hafnians of certain matrices constructed from the circuit description. These polynomial functions of matrix entries encode interesting properties, e.g., for adjacency matrices of graphs they provide the number of cycle covers and perfect matchings of the graph, respectively. In our proposal  similar polynomials appear when describing the sampling probabilities: the mixed discriminant \cite{Ivanov2020} and their generalizations (e.g., mixed Pfaffian \cite{ikai2011theory}), which also encode important graph properties. Thus, we have good reasons to believe that our proposal, beside providing a robust computational advantage setup, also can be used for other algorithms with interesting applications.

We also want to emphasize the universality of our techniques for establishing anticoncentration and  worst-to-average-case reduction for structured random circuits.  Our anticoncentration result exploits group-theoretic properties of fermionic circuits and can be likely generalized to other scenarios where low-dimensional group structure appears. Moreover, our generalization of the worst-to-average-case reduction of Movassagh \cite{movassagh_quantum_2020} can be applied to any sampling problem provided outcome probabilities can  be interpreted as polynomials on low-dimensional groups (Cayley transformation that underlines the reduction can be defined on arbitrary Lie groups).  For example, one can view Boson Sampling in the first quantization picture where the group of linear optical networks acts on the totally symmetric subspace of $N$ qudits, where $d$ is the number of modes (or other representation when bosons are partially distinguishable \cite{moylett2018quantum}), as opposed to the totally antisymmetric representation for fermions. 
   
We conclude this part with stating a number of interesting problems that require further study.

 \emph{Role of non-Gaussianity for hardness and anticoncentration}.
    In practice magic states are not perfect and a high level of noise can bring these states into the convex hull of Gaussian states \cite{bravyi_universal_2006,melo_power_2013,oszmaniec_classical_2014}, which we know are efficiently simulable classically under FLO evolutions and Gaussian measurements. How much noise does our hardness result tolerate? The same question applies to anticoncentration of output probabilities, in which case we do not yet have a proof that FLO circuits with Gaussian inputs do not anticoncentrate, but we have numerical evidence that proofs based on the Paley-Zygmund inequality do not work (see Remark~\ref{rem:gaussian-anti-open-problem}). 
    
 \emph{Fermion Sampling with less magic}:
    In our scheme, all the input qubit lines are injected with magic states.
    Do the hardness result and anticoncentration hold if we use only, say, $O(\log m)$ magic states? 
    
\emph{Algorithms for classical simulation.}
    Devising algorithms to approximately simulate FLO circuits with magic input states on average would not only lead to useful applications (for example in the context of quantum chemistry), but is also vital to understand the complexity landscape of random FLO circuits.
    For RCS scheme employed in Google's experiment with qubits placed on a 2D grid, advances in classical simulation techniques imposed a limit on the robustness of the average-case hardness that can be achieved with the current worst-to-average-case reduction that is agnostic to the circuit architecture and depth \cite{Napp2020}.
    
 \emph{Tomography and certification of FLO circuits and Fermion Samping. }
    In this work, we only gave an efficient method to estimate an unknown FLO circuit (A related benchmarking of FLO circuits was recently proposed in \cite{helsen2020matchgateRB}.). It is interesting to extend our scheme beyond unitary circuits and to devise a method for which sample complexity and number of experimental settings exhibit an optimal scaling with the system size.  
    Additionally, with further assumption on how the quantum device operates (e.g. the noise model), is there a simple diagnostic tool for Fermion Sampling similar to cross-entropy benchmarking for RCS \cite{Boixo2016}?

\section{Anticoncentration of FLO circuits}\label{sec:anticoncentration}

In this section we prove that outcome probabilities in  fermionic circuits initialized in the state $\inpsi$ anticoncentrate for Haar random fermionic linear optical circuits. We prove anticoncentration for both passive and active fermionic linear optics. Our proof is based on interpretation of these circuits in terms of representation of group $\U(d)$ and $\SO(2d)$, where $d$ is the number of fermionic modes used. 

Let $\H$ be a Hilbert space and let $\lbrace\ket{\x}\rbrace$ be a fixed (computational) basis of $\H$. For $V\in \U(\H)$ and a pure state $\ket{\Psi}$.  In what follows we will denote by $p_{\x}(V,\Psi)$ the probability of obtaining outcome $\x$ on some input state $\ket{\Psi}$ on which unitary $V$ was applied. Born rule implies:
\begin{equation}\label{eq:bornRULE}
    p_{\x}(V,\Psi) =|\bra{\x}V\ket{\Psi}|^2\ .
\end{equation}
In what follows we restrict our attention to $\H=\bigwedge^{2N}(\C^{4N})$ (for passive FLO) and $\H=\Hfock^+(\C^{4N})$ (for active FLO. Moreover for $\x\in\lbrace0,1\rbrace^{4N}$ vectors $\ket{\x}$  will denote standard Fock states (cf. Section \ref{sec:notations}). In both of the cases considered the set of allowed $\x$ is different (See Theorem~\ref{thm:anticoncentrationFLO} for more details).

\begin{Definition}[Anticoncentration of ensemble of unitary matrices]\label{def:anticoncentration}
Let $\nu$ be an ensemble (probability distribution) of unitary matrices $\U(\H)$. We say that that $\nu$ exhibits anticoncentration on input state $\ket{\Psi}$ if and only if for every outcome $\x$ of computational basis measurement 
\begin{equation}\label{eq:anticoncentrationDEF}
    \Pr_{V\sim \nu} \left(p_{\x}(V,\Psi) > \frac{\alpha}{|\H|} \right) > \beta\ ,
\end{equation}
where $\alpha,\beta$ are positive constants.
\end{Definition}
\begin{rem}
In this work  we will be concerned with families of probability distributions that are defined on  Hilbert spaces of increasing dimension, parametrized by the total number of fermionic modes $d$. In this context, motivated by structure of the proof of hardness of sampling (see Theorem \ref{th:SAMPLhar})  we will be interested in cases when $\alpha,\beta=\Theta(1)$ i.e are independent on $|\H|$. 
\end{rem}

Below we state our main result regarding anticoncentration of fermionic linear circuits initialized in the tensor product of Fermionic magic states    
\begin{equation}
     \ket{\inpsi}= \psiQUAD^{\ot N},
\end{equation}
where $\psiQUAD=\frac{1}{\sqrt{2}}(\ket{0011} + \ket{1100})$. 
Note that $\ket{\inpsi}\in\bigwedge^{2N}(\C^{4N})$.

\begin{thm}[Anticoncentration for fermionic linear optical circuits initialized in product of magic states]\label{thm:anticoncentrationFLO}
Let $\Hpas=\bigwedge^{2N}(\C^{4N})$ and let $\Hact=\Hfock^+(\C^{4N})$ be Hilbert spaces describing $2N$ Fermions in $4N$ modes and positive parity Fermions in $4N$ modes. Let $\Gpas$ and $\Gact$ be respectively passive and active FLO transformations acting on the respective Hilbert spaces and distributed according to the uniform measures  $\nupas$ and $\nuact$ (see Section \ref{sec:notations}). Let $\ket{\inpsi}$ be the initial state to which both families of circuits are applied.  Then, for every $\x$ of Hamming weight $|\x|=2N$ we have
\begin{equation}\label{eq:anticoncentrationPASSIVE}
    \Pr_{V\sim \nupas} \left( p_{\x}(V,\inpsi) > \frac{\alpha}{|\Hpas|} \right) > \frac{(1-\alpha)^2}{C_\mathrm{pas}}\ ,
\end{equation}
where $\Cpas = \Cpasval$ and $|\Hpas|=\binom{4N}{2N}$. Moreover, for every $\x$ with even Hamming weight we have
\begin{equation}\label{eq:anticoncentrationACTIVE}
    \Pr_{V\sim \nuact} \left( p_{\x}(V,\inpsi) > \frac{\alpha}{|\Hact|} \right) > \frac{(1-\alpha)^2}{C_\mathrm{act}}\ ,
\end{equation}
where $\Cact=\Cactval$ and $|\Hact|=2^{4N}/2$. 
\end{thm}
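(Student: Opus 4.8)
The standard tool for anticoncentration is the Paley--Zygmund inequality, so the plan is to control the first and second moments of the random variable $X = p_{\x}(V,\inpsi) = |\bra{\x}V\ket{\inpsi}|^2$ over $V\sim\nu$ (for $\nu=\nupas$ or $\nu=\nuact$) and then invoke Paley--Zygmund in the form
\begin{equation}
\Pr_{V\sim\nu}\!\left(X > \alpha\,\E_V X\right) \ge (1-\alpha)^2\,\frac{(\E_V X)^2}{\E_V X^2}\ .
\end{equation}
If I can show $\E_V X = 1/|\H|$ exactly and $\E_V X^2 \le C/|\H|^2$ for an absolute constant $C$, the theorem follows immediately with $\Cpas$ or $\Cact$ equal to that constant. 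So the whole content is the two moment computations, carried out separately for the passive ($\U(d)$ acting on $\bigwedge^{2N}(\C^{4N})$) and active ($\SO(2d)$ acting on $\Hfock^+$) cases.

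**First moment.** For the mean I would use the fact that $\Pipas$ (resp.\ $\Piact$) is an irreducible unitary representation of a compact group, so by Schur's lemma $\int V\,\ketbra{\inpsi}{\inpsi}\,V^\dagger\,d\nu(V) = \I_\H/|\H|$. Pairing with $\ketbra{\x}{\x}$ gives $\E_V X = 1/|\H|$ with $|\Hpas| = \binom{4N}{2N}$ and $|\Hact| = 2^{4N}/2$. This step is routine given the group-theoretic setup already established in Section~\ref{sec:notations}.

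**Second moment --- the hard part.** The main obstacle is $\E_V X^2$, which requires understanding the second moment operator $M_2 = \int (V\ketbra{\inpsi}{\inpsi}V^\dagger)^{\ot 2}\,d\nu(V)$ acting on $\H^{\ot 2}$. Because $\nupas,\nuact$ are \emph{not} $2$-designs, $M_2$ is not simply a combination of the identity and swap on $\H^{\ot 2}$; instead it is the projector (up to normalization) onto the commutant of $\Pipas^{\ot 2}$ (resp.\ $\Piact^{\ot 2}$) intersected appropriately, weighted by the overlaps of $\inpsi^{\ot 2}$ with the isotypic components. Concretely, I would decompose $\bigwedge^{2N}(\C^{4N})^{\ot 2}$ (resp.\ $(\Hfock^+)^{\ot 2}$) into irreducibles of $\U(d)$ (resp.\ $\SO(2d)$) --- for $\U(d)$ these are governed by Littlewood--Richardson/plethysm of exterior powers, for $\SO(2d)$ by the corresponding branching --- and then compute $\E_V X^2 = \sum_\lambda \frac{1}{d_\lambda}\,\|P_\lambda\,\x^{\ot 2}\|^2\,\|P_\lambda\,\inpsi^{\ot 2}\|^2$ where $P_\lambda$ projects onto the $\lambda$-isotypic block and $d_\lambda$ its multiplicity-adjusted dimension. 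The tensor-power structure of $\inpsi = \psiQUAD^{\ot N}$ should make $\|P_\lambda\,\inpsi^{\ot 2}\|^2$ tractable (it factorizes over the $N$ blocks of four modes), and the $\x^{\ot 2}$ overlaps are combinatorial. I expect this bookkeeping --- identifying the relevant isotypic components and bounding the sum by an absolute constant uniformly in $N$ and $\x$ --- to be the technical heart of the proof, and I would relegate the explicit computations to the appendix (Appendix~\ref{sec:Projectors}), keeping in the main text only the reduction to moments plus the final constants $\Cpas=\Cpasval$, $\Cact=\Cactval$.
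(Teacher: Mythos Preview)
Your high-level strategy---Paley--Zygmund plus first moment via irreducibility/Schur---matches the paper exactly, and you are right that the content lies entirely in the second moment. But you overlook the structural fact that makes the second moment tractable. In your isotypic expansion only \emph{one} term survives: Fock states $\ket{\x}$ are precisely the generalized coherent states of the relevant representations (Slater determinants for $\U(d)$, pure Gaussian states for $\SO(2d)$), and a defining property of such states is that their tensor squares $\ket{\x}^{\ot 2}$ lie entirely in a single irreducible component $\tilde{\H}\subset\H\ot\H$. The paper exploits this by averaging $\ketbra{\x}{\x}^{\ot 2}$ (rather than $\inpsi^{\ot 2}$) over the group, so that Schur's lemma gives directly
\[
\int_G \Pi(g)^{\ot 2}\ketbra{\x}{\x}^{\ot 2}\Pi(g)^{\dagger\,\ot 2}\,d\mu(g)=\frac{1}{|\tilde{\H}|}\,\tilde{\P}\ ,
\]
with $\tilde{\P}$ the projector onto $\tilde{\H}$. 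The second moment thus collapses to the single overlap $\E_V X^2 = |\tilde{\H}|^{-1}\tr(\tilde{\P}\,\inpsi\ot\inpsi)$. Without this observation you would be summing over all isotypic components only to discover that all but one vanish.

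Your claim that $\|P_\lambda\inpsi^{\ot 2}\|^2$ ``factorizes over the $N$ blocks of four modes'' is not how the paper proceeds and is not correct as stated: the projector $\tilde{\P}$ is defined by the \emph{global} $\U(4N)$ or $\SO(8N)$ action and does not respect the block tensor structure. Instead the paper gives explicit closed forms for $\tilde{\P}$: for passive FLO, $\tr(\Pfer\,\rho\ot\rho)$ is expressed as a weighted sum of purities of the $k$-particle reductions of $\rho$ (Lemma~\ref{lem:PROJpassFERMpurities}); for active FLO, $\Pflo$ has an explicit Majorana-monomial expansion (Lemma~\ref{lem:PROJactive}). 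Bounding these on $\inpsi=\psiQUAD^{\ot N}$ by $\Cpas/N$ and $\Cact/\sqrt{\pi N}$ respectively requires nontrivial combinatorics specific to the quadruple structure of $\inpsi$ (Lemmas~\ref{lem:passPROJfinal}, \ref{lem:actPROJfinal} and Appendix~\ref{app:comp}); the $1/N$ and $1/\sqrt{N}$ scalings then cancel against the dimension ratio $|\tilde{\H}|/|\H|^2$, leaving the absolute constants $\Cpas$, $\Cact$.
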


\begin{proof}
In order to prove Eq. \eqref{eq:anticoncentrationPASSIVE} and Eq. \eqref{eq:anticoncentrationPASSIVE} we start with a standard tool used when proving anticoncentration - the  Paley-Zygmund inequality. It states that for arbitrary nonnegative bounded random variable $X$ and for $0<\alpha<1$, we have
\begin{equation}
	\Pr_X (X > \alpha \E X)  \ge (1-\alpha)^2 \frac{(\E X)^2}{\E X^2}\  .
\end{equation}

We use this bound for $X=|\bra{\x}V\ket{\inpsi}|^2$, where $V\sim \nupas$ or $V\sim \nupas$. Recall that, as explained in the Section \ref{sec:notations}, linear circuits $\Gpas$ and $\Gact$  can be understood in terms of representations of symmetry groups  $\U(d)$ and $\SO(2d)$. Haar measures on these symmetry groups induce uniform distributions on the $\Gpas$ and $\Gact $. Therefore for $k=1,2$ we have
\begin{equation}\label{eq:momentHAAR}
  \underset{V\sim\nu}{\E} \left[ p_{\x}(V,\inpsi)^k \right] {=} \int_{G}d\mu(g) \left[ \tr(\ketbra{\x}{\x} \Pi(g)\inpsi\Pi(g)^\dag) \right]^k ,
\end{equation}
where $\mu$ is the Haar measure on a Lie group $G$, and $\Pi$ is a unitary representation of $G$ in a suitable Hilbert space $\H$. The case of passive FLO corresponds to $G=\U(4N)$ and $\Pi=\Pipas$ while for active FLO we have $G=\SO(8N)$ and  $\Pi=\Piact$ (c.f.  Eq.\eqref{eq:pasREP} and Eq. \eqref{eq:actREP}). Both groups are irreducibly represented in Hilbert spaces $\Hact$ and $\Hpas$ by virtue of Schur lemma unitaries $\Pi(g)$ forming a $1$-design. Consequently
\begin{equation}\label{eq:firstMOMENT}
\begin{split}
     \underset{V\sim\nu}{\E} &\left[ p_{\x}(V,\inpsi) \right] \\  &= \int_{G}d\mu(g) \left[ \tr(\ketbra{\x}{\x} \Pi(g)\inpsi\Pi(g)^\dag) \right]  \\
     &= \tr(\ketbra{\x}{\x} \int_{G}d\mu(g)\left[ \Pi(g)\inpsi\Pi(g)^\dag \right] ) 
     =  \frac{1}{|\H|}\ ,
\end{split}
\end{equation}
where in the last equality we used the $1$-design property and the fact that $\ket{\x}\in \H$ is a normalized vector. Computation of the second moment can be greatly simplified by the usage of group theory. Let us first rewrite $\underset{V\sim\nu}{\E} \left[ p_{\x}(V,\inpsi)^2 \right]$ in the form convenient for computation:
\begin{equation}\label{eq:secondMoment}
\begin{split}
    \underset{V\sim\nu}{\E} &\left[ p_{\x}(V,\inpsi)^2 \right] \\
    &= \int_{G}d\mu(g) \left[ \tr(\ketbra{\x}{\x}^{\ot 2}  \Pi(g)^{\ot 2}  \inpsi^{\ot 2} (\Pi(g)^\dag)^{\ot 2}  \right] \\
    &= \tr( A_{\Pi,G}\inpsi \ot \inpsi)\ ,
\end{split}
\end{equation}
where, due to unitarity of representation $\Pi$, and invariance of Haar measure under transformation $g\mapsto g^{-1}$  
\begin{equation}\label{eq:almostPROJECTOR}
    A_{\Pi,G}= \int_{G}d\mu(g) \left[\Pi(g)^{\ot 2} \ketbra{\x}{\x}^{\ot 2} (\Pi(g)^\dag)^{\ot 2}  \right]\ .
\end{equation}
Operator $ A_{\Pi,G}$ acts on two copies of the original Hilbert space, $\H\ot \H$ and is a manifestly $G$ invariant in the sense that for all $g$ we have $[A_{G,\Pi},\Pi(g)^{\ot 2}]=0$. The integration in \eqref{eq:almostPROJECTOR} can be carried out explicitly because objects in question have very specific properties that are rooted in the fact that Fock states constitute generalized coherent states of the considered representations of $\U(4N)$ and $\SO(8N)$ (cf. Remark \ref{rem:coherentSTATESsemisimple} for more details). Let $\ket{\Psi}$ be a fixed pure state in $\H$ and let $\ket{\x}$ be a fixed fermionic Fock state belonging to appropriate Hilbert space $\H$
\begin{equation}\label{eq:CoherentSTATErep}
\exists g\in G\ \text{s.t. } \Psi=\Pi(g) \ketbra{\x}{\x} \Pi(g)^\dag \iff \ket{\Psi}^{\ot 2} \in \tilde{\H}\ ,
\end{equation}
where $\tilde{\H}\subset \H\ot\H$ is the carrier space of certain unique irreducible representation of $G$. In other words, it appears as one of the irreducible representations in the decomposition of the space $\H\ot \H$, where $G$ is represented via $g\mapsto\Pi(g)^{\ot 2}$. Let $\tilde{\P}$ be the orthonormal projector onto $\tilde{\H}\subset\H\ot \H$. Due to property \eqref{eq:CoherentSTATErep} we get that $\supp(A_{\Pi,G})\subset\tilde{\H}$. Combining this with $G$-invariance of $A_{\Pi,G}$ we get, using Schur lemma, that $A_{\Pi,G}$ must be proportional to $\tilde{\P}$. The proportionality constant follows easily from normalization of $A_{\Pi,G}$. Putting this all together we obtain
\begin{equation}\label{eq:projectorAVER}
    A_{\Pi,G} =\frac{1}{|\tilde{\H}|}\tilde{\P}\ .
\end{equation}
Inserting the expressions for the first and second moments to Payley-Zygmund inequality we get
\begin{equation} \label{eq:anticoncABS}
	\Pr_{V\sim \nu}\left( p_{\x}(V,\inpsi)  > \frac{\alpha}{|\H|}\right)  \ge (1-\alpha)^2 \frac{|\tilde{\H}|}{|\H|^2 }\frac{1}{\tr(\tilde{\P} \inpsi\ot \inpsi )}\  .
\end{equation}
From the above expression it is clear that anticoncentration is controlled by: (i) the ratio of $\frac{|\tilde{\H}|}{|\H|^2 }$ and (ii) expectation value $\tr(\tilde{\P} \inpsi\ot \inpsi)$. We will give explicit forms of the projectors $\tilde{\P}$, as well-as the dimensions $|\tilde{\H}|$ for both passive and active FLO in Lemmas \ref{lem:PROJpassFERMpurities} and \ref{lem:PROJactive} in the Appendix \ref{app:comp}. From the expressions given there we obtain \footnote{To simplify expressions for active FLO, we use the bound $\binom{4N}{2N} \geq \frac{2^{8N}}{\sqrt{\pi 4N}}$. } 
\begin{equation}
    \frac{|\tHpas|}{|\Hpas|^2} \geq \frac{1}{N}\ ,\  \frac{|\tHact |}{|\Hact|^2} \geq  \frac{1}{\sqrt{\pi N}}\ .
\end{equation}
For passive FLO this gives (recall that $|\x|=2N$ and $\Hpas=\bigwedge^{2N}(\C^{4N})$)  
\begin{equation}\label{eq:semiAntiPASSIVE}
\begin{split}
    	\Pr_{V\sim \nupas}&\left( p_{\x}(V,\inpsi) 
    	> \frac{\alpha}{\binom{4N}{2N}}\right) \\ 
    	&\ge (1-\alpha)^2 \frac{1}{N}\frac{1}{\tr(\Pfer \inpsi\ot \inpsi )}\ .
\end{split}
\end{equation}
Similarly, for active FLO we obtain (recall that $|\x|$ is even and $\Hact=\Hfock^+(\C^{4N})$) we have
\begin{equation}\label{eq:semiAntiactive}
\begin{split}
    	\Pr_{V\sim \nuact}&\left( p_{\x}(V,\inpsi)  > \frac{\alpha}{2^{4N-1}}\right) \\  &\ge (1-\alpha)^2 \frac{1}{\sqrt{\pi N}}\frac{1}{\tr(\Pflo \inpsi\ot \inpsi )}\ .
\end{split}
\end{equation}
In order to complete the proof we need the following inequalities 
\begin{equation}\label{eq:EPupperbound}
 \tr(\Pfer \inpsi\ot \inpsi ) \leq \frac{\Cpas}{N}\ \ , \ \    \tr(\Pflo \inpsi\ot \inpsi ) \leq \frac{\Cact}{\sqrt{\pi N}}\ .
\end{equation}
Proof of the above relies on the explicit form of the projectors as well as some combinatorial  considerations. The details are given in the appendix (see specifically Lemma \ref{lem:passPROJfinal} for  of passive FLO and Lemma \ref{lem:actPROJfinal} for active FLO ). 
\end{proof}

\begin{rem}
In the course of proving \eqref{eq:EPupperbound} in Appendices we arrive at upper bounds on $ \tr(\Pfer \inpsi\ot \inpsi )$ and $ \tr(\Pflo \inpsi\ot \inpsi )$  that are efficiently computable as a function of $N$.  The numerics shown in Figure \ref{fig:pasFINbound} and Figure \ref{fig:actFINbound} strongly suggest that the bounds provided by the values $\Cpas=\Cpasval$ and $\Cact=\Cactval$ given here are not tight and can be improved by better proof techniques, specifically up to $\Cpas \leq 2.4$ and $\Cact \leq 2.7$.
\end{rem}

\begin{rem}\label{rem:coherentSTATESsemisimple}
The existence of projector $\tilde{\P}$ such that equivalence in Eq. \eqref{eq:CoherentSTATErep} holds follows form the the group-theoretical characterizations of of Slater determinants as well as pure fermionic Gaussian states with positive parity. Namely, these classes of states constitute examples the so-called \emph{generalized coherent states} of simple, compact and connected Lie groups ($\SU(d)$ and $\mathrm{Spin}(2d)$) that are irreducibly represented in the appropriate Hilbert space ($\bigwedge^n(\C^d)$ and $  \Hfock^+(\C^d)$ respectively). The fact that such classes of states can be characterized via the quadratic condition $A\ket{\Psi}^{\ot 2}=0$ is as known result in algebraic geometry \cite{WLichtenstein1982}. This was translated to the quantum information language in \cite{KusBengtsson2009}, later rephrased as \eqref{eq:CoherentSTATErep}  and used to characterize correlations in systems consisting of indistinguishable particles \cite{MOuniversalFRAME,TypicalityMO2014}  (see also Chapter 3 of \cite{OszmaniecPhd}). An equivalent characterization of pure fermionic gaussian states was also independently  discovered by Bravyi in the context of fermionic quantum information \cite{bravyi_lagrangian_2004} (see also \cite{melo_power_2013}). 
\end{rem}

\begin{rem}\label{rem:gaussian-anti-open-problem}
A curious reader may wonder whether anticoncentration holds also if FLO circuits that are initialized in free Gaussian states $\Psi_{gauss}$ (with fixed number of particles for passive FLO).  For such states $\tr(\tilde{\P} \, \Psi_{\mathrm{gauss}}\ot \Psi_{\mathrm{gauss}}) =1$ and therefore for such we cannot get strong anticoncentration inequalities using Eq. \eqref{eq:anticoncABS}. We have also tried to use higher moments in conjugation with Payley-Zygmund inequality but this did not. We leave the question whether FLO circuits anticoncentrate when acting on Gaussian states as an open problem. 
\end{rem}

\section{Hardness of sampling}\label{sec:hardness-sampling}

In this part we use anticoncentratio of FLO circuits and standard complexity-theoretic conjectures to prove classical hardness for sampling from FLO circuits initialized by magic states. We adopt to the fermionic setting standard techniques \cite{Bremner2016,Morimae2017,bermejo-vega_architectures_2018,pashayan_estimation_2020} that use the anticoncentration property to prove hardness of sampling based on conjectures about hardness of approximation of probability amplitudes $p_\x(V,\inpsi)$ to within relative error.

We start with a formal  definition of a sampling problem defined by FLO circuits initialized in magic input states.

\begin{Definition}[Fermion Sampling task]\label{def:sampling-task}
Let $\Hpas=\bigwedge^{2N}(\C^{4N})$ and let $\Hact=\Hfock^+(\C^{4N})$ be Hilbert spaces describing $2N$ Fermions in $4N$ modes and positive parity Fermions in $4N$ modes. Let $\Gpas$ and $\Gact$ be passive and active FLO transformation. Let $V$ be an  FLO circuit on the Hilbert space $\H_{\mathrm{pas}}$ or $\H_{\mathrm{act}}$ and let $p(V)$ denote probability distribution $p_\x(V,\inpsi)$. Given a description of $V$, sample from a probability distribution $q(V)$ that is $\epsilon$-close to $p(V,\Psi_{\mathrm{in}})$ in $l_1$-norm (twice the total variation distance)
\begin{align}
    \norm{p(V)-q(V)}_{1} = \sum_{\x} \abs{p_{\x}(V)-q_{\x}(V)} \le \epsilon,
\end{align}
in time $\poly{N}$. 
\end{Definition}

\begin{rem}
It is more convenient to use $l_1$-norm in place of the TVD as it appears more directly in the proof of Theorem \ref{thm:sample-computing}.
\end{rem}

It was realized in \cite{Aaronson2013,Bremner2016} that, by virtue of Stockmeyer's theorem,  the hardness of classically sampling from $p_{\x}(V,\Psi)$ up to an additive error  is connected to the hardness of computing $p_{\x}(V,\Psi)$ for most instances of $\x$ and $U$. 
In particular, the existence of a classical machine that performs the sampling task implies average-case approximation in a low level of the complexity class called the polynomial hierarchy.  To prove this fact,
we start by defining the notion of approximating in the average-case. 

\begin{Definition}
An algorithm $\O$ is said to give an $(\eta,\delta)$-multiplicative approximate of $q_{\z}$ on average over the probability distribution $\mathcal{P}$ of inputs $\z$ iff $\O$ outputs $\O_{\z}$ such that 
\begin{align}
    \Pr_{\z \sim \mathcal{P}} \left[\abs{\O_{\z} - q_{\z}} \le \eta q_{\z} \right] \ge 1-\delta \ .
\end{align}
\end{Definition}

\begin{rem}\label{rem:circuit-outcome-joint-prob}
For applications to hardness of sampling, $\z$ will generally be a tuple of inputs $(V,\x)$, an FLO circuit and a measurement outcome. Correspondingly, $\mathcal{P}$ will be the joint probability distribution $V \sim \nupas$ and $\x \sim \mathrm{unif}(\Hpas)$ in the case of passive FLO (resp. $V \sim \nuact$ and $\x \sim \mathrm{unif}(\Hact)$ in the case of active FLO), where $\x \sim \mathrm{unif}(\H)$ is the uniform distribution of outcomes restricted to the Hilbert space $\H$.
\end{rem}

We now prove the hiding property \cite{Aaronson2013,Bremner2016,bouland_complexity_2019}, of FLO circuits. This will would allow us to focus on hardness of particular outcome probability. 

\begin{lem}[Hiding property for FLO]\label{lem:hidingPROP}
Consider a fixed state $\ket{\x_0}\in \Hpas$ ($\Hact$ resp.) then for any $V$ passive FLO (active FLO resp.) and $\ket{\x}\in\Hpas$ ($\Hact$ resp.) there is a passive (active) FLO $V_\x$ such that $\abs{\bra{\x} V\ket{\inpsi}}^2 = \abs{\bra{\x_0} V_{\x}\ket{\inpsi}}^2$.
\end{lem}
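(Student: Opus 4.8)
The goal is to show that any outcome probability $|\langle\x|V|\inpsi\rangle|^2$ can be "relabeled" to a fixed reference outcome $\x_0$ by absorbing a suitable FLO unitary into $V$. The plan is to exploit the fact that the input state $\inpsi = \psiQUAD^{\ot N}$ has a lot of symmetry with respect to FLO transformations, while the Fock basis states $\ket{\x}$ of a fixed Hamming weight (resp. fixed parity) are all connected by FLO transformations. Concretely, first I would observe that for \emph{passive} FLO, the permutation of modes $\sigma$ sending the support of $\x$ to the support of $\x_0$ (which exists precisely because $|\x| = |\x_0| = 2N$) is realized by a permutation matrix $P_\sigma \in \U(4N)$, and $\Pipas(P_\sigma)$ is a passive FLO unitary that maps $\ket{\x}$ to $\pm\ket{\x_0}$ (the sign being an irrelevant phase from reordering creation operators). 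For \emph{active} FLO the analogous statement holds with $\x,\x_0$ of the same parity: any two Fock states of equal parity are connected by a product of mode swaps together with, if the Hamming weights differ by an even number, single-mode "particle-hole"-type Bogoliubov rotations $f_j \leftrightarrow f_j^\dagger$, all of which are generated by quadratic Majorana Hamiltonians and hence lie in $\Gact$; call the resulting active FLO unitary $W$, so that $W\ket{\x} = \e^{i\phi}\ket{\x_0}$.

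Next I would use the invariance of the input state. The key point is that it is not $\ket{\x}$ that needs to be moved but the whole expression, so I would instead move things on the \emph{bra} side: writing $p_{\x}(V,\inpsi) = |\langle\x| V |\inpsi\rangle|^2 = |\langle\x_0| (W V) |\inpsi\rangle|^2$ where $W$ is the FLO unitary above with $W^\dagger\ket{\x_0} = \e^{-i\phi}\ket{\x}$, i.e. $\langle\x| = \e^{i\phi}\langle\x_0| W$. Setting $V_\x \coloneqq W V$, which is again a passive (resp. active) FLO transformation since $\Gpas$ and $\Gact$ are groups, we get $|\langle\x| V |\inpsi\rangle|^2 = |\langle\x_0| V_\x |\inpsi\rangle|^2$ exactly, with no need to touch $\inpsi$ at all.

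The only subtlety — and the step I would be most careful about — is the \emph{active} FLO case when $|\x| \neq |\x_0|$: one must check that a Fock state of Hamming weight $k$ can be mapped to one of Hamming weight $k'$ by an element of $\SO(8N)$ (equivalently, by $\Piact$ of such an element) exactly when $k \equiv k' \pmod 2$, and that this map preserves the positive-parity subspace $\Hfock^+(\C^{4N})$ on which the $\x$'s of even weight live. This follows because the single-mode transformation $m_{2j-1}\mapsto m_{2j-1}$, $m_{2j}\mapsto -m_{2j}$ (i.e. $f_j \leftrightarrow f_j^\dagger$) is an orthogonal transformation of the Majorana operators with determinant... one must ensure it sits in $\SO$ and not merely $\mathrm{O}$; composing two such single-mode flips (which changes Hamming weight by $0$ or $2$ while staying in $\SO(8N)$) handles the parity-consistent cases, and a mode swap handles rearranging the support. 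Since $\x$ and $\x_0$ have the same parity by hypothesis, an even number of occupation-flips suffices, so we stay in $\SO(8N)$, and we stay within $\Hfock^+$. Once this group-theoretic bookkeeping is in place, the lemma follows immediately.
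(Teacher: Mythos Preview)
Your proposal is correct and follows essentially the same approach as the paper's proof: reduce to finding an FLO unitary $W$ taking $\ket{\x}$ to $\ket{\x_0}$ up to phase, then set $V_\x = WV$. For the passive case your permutation matrices are exactly the paper's fermionic swaps; for the active case your pair of single-mode particle-hole flips (to stay in $\SO(8N)$) is a close cousin of the paper's operator $-\ii m_{2i}m_{2i+1}$, which flips two adjacent Majoranas and likewise changes occupation number by $\pm 2$ while remaining in $\SO(8N)$.
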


\begin{proof}
It is enough to show that given $\x$ there is $V_\x$ passive (active) FLO s.t. $V_\x\ket{\x_0}=\ket{\x}$ up to a global phase. In the passive case this is achieved with gates implementing fermionic swaps $U^{[i,j]}$ such that $U^{[i,j]} f_i^\dag U^{[i,j]\dag} = f_j^\dag$ and $U^{[i,j]} f_j^\dag U^{[i,j]\dag} = f_i^\dag$, the order in which they are applied is defined by $\ket{\x}$. The same can be accomplished in active FLO case with operators $-\ii m_{2i}m_{2i+1}$ changing the number of fermions (but not parity) and quasi braiding operators  $U^{(p,q)}$ to exchange the majorana operators to the corresponding places. The quasi braidings act on majorana operators as $ U^{(p,q)} m_p (U^{(p,q)})^\dagger =  m_q$, $ U^{(p,q)} m_q (U^{(p,q)})^\dagger =  m_p$ and $ U^{(p,q)} m_x (U^{(p,q)})^\dagger =  m_x$ when $x\neq p,q$.

\end{proof}


An additional ingredient required for a quantum sampling advantage is anti-concentration which states that most output probabilities of a random circuit are sufficiently big so that the approximation error to computing the probabilities is small relative to the probabilities being computed. Both average-case hardness and anti-concentration provide robustness of the sampling task to noise.

\begin{thm}[From approximate sampling to approximately computing probabilities]\label{thm:sample-computing} 
   
   Let $\Hpas=\bigwedge^{2N}(\C^{4N})$ and let $\Hact=\Hfock^+(\C^{4N})$ be Hilbert spaces describing $2N$ Fermions in $4N$ modes and positive parity Fermions in $4N$ modes.  Consider in parallel passive FLO circuits and active FLO circuits acting on the input state $\ket{\inpsi}$. If there is a classical algorithm $\mathcal{C}$ that performs Fermion Sampling as described in Definition \ref{def:sampling-task} with the $l_1$-error $1/(64C)$, where $C$ is the constant $\Cpas=\Cpasval$ (resp. $\Cact=\Cactval$) appearing in the anticoncentration condition for passive FLO circuits (resp. active FLO circuits) in Theorem \ref{thm:anticoncentrationFLO}.

    Then there is an algorithm in $\mathrm{BPP^{NP}}$ that approximates the probability $p_{\x_0}(V,\inpsi)$ for an arbitrary but fixed fiducial outcome $\x_0$ up to multiplicative error $1/4 + o(1)$ on $1/(8C)$ fraction of FLO circuits drawn from the distribution $\nu = \nupas$ for passive FLO circuits (resp. $\nuact$ for active FLO circuits.) 
\end{thm}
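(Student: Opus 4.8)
The plan is to run the Stockmeyer-counting argument that converts a hypothetical approximate sampler into a $\mathrm{BPP^{NP}}$ algorithm for a single output probability, using the hiding property (Lemma~\ref{lem:hidingPROP}) to move the sampler's $l_1$-error onto the fixed outcome $\x_0$, and the anticoncentration bound (Theorem~\ref{thm:anticoncentrationFLO}) to make that error small compared with $p_{\x_0}(V,\inpsi)$.

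First I would use hiding to symmetrize the sampler. Let $\mathcal{C}'$ be the randomized polynomial-time procedure that, on input $V$, draws an admissible outcome $\x$ uniformly at random, invokes Lemma~\ref{lem:hidingPROP} to build an FLO relabeling $R_\x$ with $R_\x\ket{\x_0}=\ket{\x}$ (a short product of fermionic swaps for passive FLO, or of the number-changing operators $-\ii m_{2i}m_{2i+1}$ and quasi-braidings $U^{(p,q)}$ for active FLO), runs $\mathcal{C}$ on the circuit $R_\x V$, and relabels the returned outcome by $R_\x^{-1}$. Writing $q'(V)$ for the output distribution of $\mathcal{C}'$ and $\sigma_\x$ for the Fock-basis permutation induced by $R_\x$, the identity $p_{\mathbf{b}}(V,\inpsi)=p_{\sigma_\x(\mathbf{b})}(R_\x V,\inpsi)$ (valid for every $\x$) together with convexity gives $\norm{q'(V)-p(V)}_1\le\epsilon$ for \emph{every} $V$, with $\epsilon=1/(64C)$. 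Averaging that same identity over the internal coin $\x$, then over $V\sim\nu$, and using left-invariance of $\nu$ (so $R_\x V\sim\nu$ for each fixed $\x$) together with $\x\sim\mathrm{unif}$ and $\sigma_\x(\x_0)=\x$, yields the key bound $\E_{V\sim\nu}[\abs{q'_{\x_0}(V)-p_{\x_0}(V,\inpsi)}]\le\E_{W\sim\nu}[\tfrac{1}{|\H|}\norm{q(W)-p(W)}_1]\le\epsilon/|\H|$.

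Next I would apply Stockmeyer's approximate-counting theorem \cite{Stockmeyer1985} to the polynomial-size randomized predicate ``$\mathcal{C}'(V)$ outputs $\x_0$'': this gives an $\mathrm{FBPP^{NP}}$ algorithm $\mathcal{A}$ that, on a description of $V$, outputs with probability $1-2^{-\poly{N}}$ a multiplicative $(1\pm1/\poly{N})$ estimate $\tilde q$ of $q'_{\x_0}(V)$. By Markov applied to the bound above, $\Pr_{V\sim\nu}[\abs{q'_{\x_0}(V)-p_{\x_0}(V,\inpsi)}\ge\tfrac{1}{8|\H|}]\le 8\epsilon=\tfrac{1}{8C}$, while Theorem~\ref{thm:anticoncentrationFLO} with the outcome $\x_0$ and $\alpha=\tfrac12$ gives $\Pr_{V\sim\nu}[p_{\x_0}(V,\inpsi)>\tfrac{1}{2|\H|}]>\tfrac{(1/2)^2}{C}=\tfrac{1}{4C}$. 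On the intersection of these two events, a set of $\nu$-measure $>\tfrac{1}{4C}-\tfrac{1}{8C}=\tfrac{1}{8C}$, one has $\abs{q'_{\x_0}(V)-p_{\x_0}(V,\inpsi)}<\tfrac{1}{8|\H|}<\tfrac14 p_{\x_0}(V,\inpsi)$, hence $q'_{\x_0}(V)<\tfrac54 p_{\x_0}(V,\inpsi)$ and the Stockmeyer slack is $o(1)\,p_{\x_0}(V,\inpsi)$; combining, $\abs{\tilde q-p_{\x_0}(V,\inpsi)}\le(\tfrac14+o(1))\,p_{\x_0}(V,\inpsi)$. This is exactly the claimed multiplicative $1/4+o(1)$ estimate on a $1/(8C)$ fraction of circuits, with $C=\Cpas$ for passive FLO and $C=\Cact$ for active FLO.

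I expect the delicate point to be the symmetrization step rather than the final arithmetic. One must check that Lemma~\ref{lem:hidingPROP} indeed supplies, for any two admissible outcomes, an FLO gate sending one Fock state to the other (Hamming weight $2N$ for passive FLO; even Hamming weight inside $\Hfock^+(\C^{4N})$ for active FLO, which is precisely where the number-changing operators and quasi-braidings enter), and that composing $\mathcal{C}$ with these relabelings neither breaks polynomial running time nor inflates the $l_1$-error---both following from the explicit form of the relabelings and from convexity. It is also worth being careful that the auxiliary outcome $\x$ is sampled \emph{inside} $\mathcal{C}'$, so that $q'_{\x_0}(V)$ is a fixed number once $V$ is fixed and it is this number that Stockmeyer estimates; this is what makes the ``good'' set a genuine subset of circuits, so that the Markov and anticoncentration bounds can be intersected directly and the conclusion concerns a fraction of circuits rather than of circuit--outcome pairs. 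The remaining pieces---the precise statement of Stockmeyer's lemma and the verification that $\alpha=1/2$ and $\epsilon=1/(64C)$ combine to the stated constants---are routine.
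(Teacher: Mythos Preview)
Your proof is correct and reaches the same constants as the paper ($\alpha=1/2$, $\delta=1/(8C)$, $\epsilon=1/(64C)$, yielding the $1/4+o(1)$ multiplicative error on a $1/(8C)$ fraction of circuits). The route, however, differs in the order of operations. The paper applies Stockmeyer to the original sampler $\mathcal{C}$ to estimate $q_\x(V)$, uses Markov over outcomes $\x$ (for fixed $V$) directly from the $l_1$-bound, combines this with anticoncentration (over $V$, for each $\x$) into a bound on the \emph{joint} distribution of $(V,\x)$, and only then invokes hiding to collapse the joint statement into one about $V$ alone at the fiducial $\x_0$. You instead use hiding upfront to build the symmetrized sampler $\mathcal{C}'$, so that the entire analysis is at the fixed outcome $\x_0$ from the start: Markov is applied over $V\sim\nu$ to the scalar $|q'_{\x_0}(V)-p_{\x_0}(V)|$, and the ``good set'' is a set of circuits throughout. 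Your route is slightly more direct (the joint-probability manipulation and its reduction via hiding are absorbed into the symmetrization), at the price of having to verify that composing $\mathcal{C}$ with the relabeling gates behaves as claimed---precisely what you flag in your last paragraph.

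One small remark: your derivation of the key bound $\E_{V}[\,|q'_{\x_0}(V)-p_{\x_0}(V)|\,]\le\epsilon/|\H|$ only uses $R_\x\ket{\x_0}=\ket{\x}$, not that $R_\x$ permutes \emph{all} Fock states. In the active case the quasi-braidings of Lemma~\ref{lem:hidingPROP} do not obviously induce a full Fock-basis permutation, but this is harmless: either replace them by fermionic swaps (passive $\subset$ active FLO) together with $-\ii m_{2i}m_{2i+1}=X_iX_{i+1}$, which do permute Fock states, or simply drop the aside $\|q'(V)-p(V)\|_1\le\epsilon$ (unused downstream) and regard $q'_{\x_0}(V)=\E_\x[q_\x(R_\x V)]$ directly as the acceptance probability of the predicate ``$\mathcal{C}(R_\x V)$ outputs $\x$'', which is what Stockmeyer estimates.
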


    The proof of the above theorem is given in the Appendix and follows the standard reduction based on Stockmayer alghorithm  \cite{Bremner2016,pashayan_estimation_2020}. Alternatively, one could arrive at a similar result in two steps: first showing that a classical approximate sampler implies approximations up to an additive error $\epsilon/|\H|$, where $\epsilon$ is the TV distance achieved in the sampling task in the polynomial hierarchy,  then showing that anticoncentration improves the approximations to multiplicative ones \cite{bouland_complexity_2019}. The alternative proof may be beneficial when anticoncentration does not hold or is undesirable, for example, when anticoncentration renders (black box) certification of quantum advantage infeasible \cite{hangleiter2019sample}.

Armed with Theorem~\ref{thm:sample-computing}, we now state the other conjectures needed before proving the  hardness of sampling.

\begin{Conjecture}[Average-case of approximating probabilities on FLO circuits initialized in $\ket{\inpsi}$  ]\label{conj:AVERhar}

Computing a $(1/4+o(1), 1/(8C))$-multiplicative approximate to $p_{\x_0}(V,\inpsi)$ for $1/(8C)$ fraction of $V$ sampled from the Haar distribution $\nu$ is $\sharP$-hard.
($C=C_{\mathrm{pas}}, \nu=\nupas$ for passive FLO circuits and $C=C_{\mathrm{act}}, \nu=\nuact$ for active FLO circuits)
\end{Conjecture}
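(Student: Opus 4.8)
The plan is to prove the conjecture by a worst-to-average-case reduction that upgrades the robust additive-error reduction of Theorem~\ref{thm:average-hardness-approx-prob} to the multiplicative regime, using anticoncentration (Theorem~\ref{thm:anticoncentrationFLO}) as the bridge between the two notions of error. I would seed the reduction with a worst-case FLO circuit $V_0=\Pi(g_0)$ whose outcome probability $p_{\x_0}(V_0,\inpsi)$ is $\sharP$-hard to compute; this worst-case hardness is available from the mixed-discriminant interpretation of Ivanov~\cite{Ivanov2017} for passive FLO and from the universality of active FLO supplemented with the magic states $\psiQUAD$~\cite{bravyi_universal_2006,hebenstreit2019}. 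Following Fig.~\ref{fig:cayley-path}, I would build the Cayley path $g_\theta=g_0F_\theta(g)$ on the underlying group, with $g$ Haar-random, so that $\theta\mapsto p_{\x_0}(\Pi(g_\theta),\inpsi)$ is a rational function of degree $\poly{N}$ by the bounds of Section~\ref{sec:degree}; recovering its value at $\theta=0$ is then, in principle, a polynomial-interpolation problem.

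Next, I would assume a classical algorithm $\mathcal{O}$ that produces a $(1/4+o(1))$-multiplicative approximation to $p_{\x_0}(V,\inpsi)$ on a $1/(8C)$ fraction of $V\sim\nu$ and show that it can be bootstrapped to compute the worst-case value. For $\theta$ near the random endpoint the circuit $\Pi(g_\theta)$ is distributed close to $\nu$ up to the total-variation deformation controlled in Appendix~\ref{app:TVbound}, so with constant probability a sampled path point is simultaneously anticoncentrated, with $p_{\x_0}\ge\alpha/|\H|$ by Theorem~\ref{thm:anticoncentrationFLO}, and in the success set of $\mathcal{O}$. On such points the relative guarantee $|\mathcal{O}-p_{\x_0}|\le(1/4+o(1))\,p_{\x_0}$ converts into an absolute one, and the hiding property (Lemma~\ref{lem:hidingPROP}) lets me keep the fiducial outcome $\x_0$ fixed throughout. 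Given these noisy evaluations of a low-degree rational function, I would then run an error-correcting rational interpolation — Berlekamp--Welch applied to its numerator and denominator, or the Paturi-style extrapolation sharpened in~\cite{Bouland2021,Kondo2021_robustness} — to extrapolate to $\theta=0$, recover $p_{\x_0}(V_0,\inpsi)$, and thereby place a $\sharP$-hard quantity in $\mathrm{BPP}^{\mathrm{NP}}$, contradicting the worst-case hardness unless no such $\mathcal{O}$ exists.

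The hard part, and the reason this remains a conjecture rather than a theorem, is the quantitative error budget, where two difficulties compound. First, a multiplicative error of $1/4$ yields only additive precision of order $p_{\x_0}/4\sim\exp(-\Theta(N))$ on anticoncentrated instances, whereas extrapolating a degree-$\poly{N}$ rational function from interior points out to the boundary $\theta=0$ is ill-conditioned and amplifies additive noise by an inverse-exponential factor; this is precisely why the established reduction of Theorem~\ref{thm:average-hardness-approx-prob} tolerates only additive error $\exp(-\Theta(N^6))$, far below what a constant-relative-error oracle supplies. Second, success on merely a $1/(8C)$ fraction means that a $1-1/(8C)>1/2$ fraction of evaluation points behave adversarially, which exceeds the unique-decoding radius of Berlekamp--Welch and would demand list-decoding together with an independent mechanism to select the correct candidate. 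Closing either gap — making the extrapolation robust to constant relative error, or amplifying the oracle's success fraction past the decoding threshold — is the essential obstacle, and is the direct analogue of the still-open Permanent-of-Gaussians conjecture of~\cite{Aaronson2013}.
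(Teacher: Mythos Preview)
You have correctly identified that this statement is a \emph{conjecture}, and the paper treats it as such: there is no proof of Conjecture~\ref{conj:AVERhar} in the paper. The paper uses it as a hypothesis in Theorem~\ref{th:SAMPLhar} and supports its plausibility only via the strictly weaker Theorems~\ref{thm:average-hardness-exact-prob} and~\ref{thm:average-hardness-approx-prob} (average-case hardness of \emph{exact} computation, and of approximation to additive error $\exp(-\Theta(N^6))$, respectively). So the correct comparison is: the paper does not attempt what your proposal attempts.

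Your diagnosis of why the natural worst-to-average-case reduction does not close the gap is accurate and matches the state of the art: the extrapolation blow-up in Paturi's lemma (or its refinements~\cite{Bouland2021,Kondo2021_robustness}) is exponential in the polynomial degree, so a constant-relative-error oracle yielding additive precision $\sim 1/|\H|$ is far too coarse, and a success fraction of $1/(8C)<1/2$ is below the unique-decoding radius for Berlekamp--Welch. One small correction: anticoncentration alone does not let you convert the oracle's multiplicative guarantee into an additive one \emph{uniformly along the Cayley path}, because the anticoncentration bound of Theorem~\ref{thm:anticoncentrationFLO} holds only for the Haar endpoint $\theta=1$, and the TVD control of Lemma~\ref{lem:TVdistanceGroup} degrades as you move $\theta$ away from $1$ --- precisely the regime where extrapolation to $\theta=0$ would need the most data. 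This is not a flaw in your reasoning so much as a sharpening of the obstacle you already name.
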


\begin{Conjecture}\label{conj:PHnoncollapse}
The polynomial hierarchy does not collapse. 
\end{Conjecture}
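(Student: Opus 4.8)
The statement to be established, Conjecture~\ref{conj:PHnoncollapse}, is not a lemma internal to this work but one of the central unresolved problems of computational complexity. Writing the polynomial hierarchy as $\mathrm{PH}=\bigcup_{k\ge 0}\Sigma_k^{p}$, non-collapse asserts the strict chain $\Sigma_k^{p}\subsetneq\Sigma_{k+1}^{p}$ for every $k$; since the $k=0$ instance is precisely the $\mathrm{P}$ versus $\mathrm{NP}$ question, the conjecture is strictly stronger than $\mathrm{P}\neq\mathrm{NP}$ and no proof is known. The honest ``plan'' is therefore to describe the natural line of attack and to be explicit about why it is not expected to succeed with present techniques; this is why we, like all prior quantum-advantage proposals \cite{Aaronson2013,Bremner2016,bouland_complexity_2019,movassagh_quantum_2020}, retain it as a working hypothesis rather than attempting a derivation.

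The most direct approach would proceed level by level: for each $k$ one would exhibit a language in $\Sigma_{k+1}^{p}$ and prove it lies outside $\Sigma_{k}^{p}$. The standard route to such separations is through circuit lower bounds, exploiting the correspondence between the low levels of the hierarchy and bounded-depth circuits. The one genuinely powerful tool available here is H\r{a}stad's switching lemma and the resulting lower bounds for constant-depth ($\mathrm{AC}^0$) circuits, which already separate all levels of the hierarchy \emph{relative to a suitable oracle}, providing strong evidence for non-collapse. The plan would then be to ``de-relativize'' these depth-hierarchy separations, turning an oracle separation of bounded-depth circuit classes into an unrelativized separation of the corresponding uniform complexity classes.

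The main obstacle is exactly this de-relativization step, and there are three formal barriers indicating it will be very hard. By the Baker--Gill--Solovay theorem there exist oracles relative to which $\mathrm{PH}$ collapses and oracles relative to which it does not, so any proof must be non-relativizing. The natural-proofs barrier of Razborov and Rudich shows, under standard cryptographic assumptions, that a broad class of combinatorial circuit-lower-bound arguments cannot deliver the needed separations. Finally, the algebrization barrier of Aaronson and Wigderson rules out the arithmetization techniques that normally circumvent relativization. A successful proof would thus have to be simultaneously non-relativizing, non-naturalizing, and non-algebrizing, and no technique with all three properties is presently known. For this reason Conjecture~\ref{conj:PHnoncollapse} is stated as an assumption rather than proved, and our hardness results are conditional on it in the same sense as those of Random Circuit Sampling and Boson Sampling.
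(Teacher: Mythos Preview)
Your assessment is correct and matches the paper's treatment: Conjecture~\ref{conj:PHnoncollapse} is stated without proof as a standard complexity-theoretic assumption, and the paper makes no attempt to establish it. Your additional discussion of the relativization, natural-proofs, and algebrization barriers is accurate and well-motivated context, though the paper itself simply invokes the conjecture as a hypothesis (alongside Conjecture~\ref{conj:AVERhar}) in the proof of Theorem~\ref{th:SAMPLhar} without elaborating on its status.
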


\begin{rem}\label{rem:exact_sharPhard}
The motivation for Conjecture \ref{conj:AVERhar} comes from the fact that computing exactly the probabilities is $\sharP$-hard, this can be seen by writing the output as a polynomial as in Lemma \ref{lem:ivanov_amplitude}, it has been shown that computing a permanent exactly reduces to computing this polynomial and it is know that computing the permanent exactly is $\sharP$-hard. 
\end{rem}

\begin{thm}[Hardness of sampling from FLO circuits initialized in $\ket{\inpsi}$ ]
\label{th:SAMPLhar} 
If Conjectures \ref{conj:AVERhar} and \ref{conj:PHnoncollapse} are true, then there is no efficient classical algorithm that can approximately sample with $l_1$-error $1/(64\Cpas)$ (resp. $1/(64\Cact)$) from output probability distributions induced by passive (resp. active) FLO circuits with the input given by $\ket{\inpsi}$.
\end{thm}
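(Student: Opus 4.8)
The plan is to run the now-standard Stockmeyer-based argument for additive-error hardness of sampling, specializing the generic template to FLO circuits by feeding in the two ingredients established above: the anticoncentration bound of Theorem~\ref{thm:anticoncentrationFLO} and the hiding property of Lemma~\ref{lem:hidingPROP}. First I would argue by contradiction: suppose there is an efficient classical algorithm $\mathcal{C}$ that, given a description of an FLO circuit $V$ (passive or active), samples from a distribution $q(V)$ with $\norm{p(V)-q(V)}_1 \le 1/(64C)$, where $C=\Cpas$ (resp.\ $\Cact$). By Theorem~\ref{thm:sample-computing}, the existence of such $\mathcal{C}$ implies the existence of a $\mathrm{BPP}^{\mathrm{NP}}$ algorithm that produces a $(1/4+o(1),\,1/(8C))$-multiplicative approximation to the fiducial probability $p_{\x_0}(V,\inpsi)$ for a $1/(8C)$ fraction of circuits $V\sim\nu$ (here $\nu=\nupas$ or $\nuact$ according to the scheme). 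This is exactly the computational problem whose $\sharP$-hardness is the content of Conjecture~\ref{conj:AVERhar}.

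The conclusion then follows from Toda's theorem together with the structure of the polynomial hierarchy: under Conjecture~\ref{conj:AVERhar}, the $\mathrm{BPP}^{\mathrm{NP}}$ algorithm above solves a $\sharP$-hard problem, so $\sharP \subseteq \mathrm{FBPP}^{\mathrm{NP}}$, which by Toda's theorem $\mathrm{PH}\subseteq\mathrm{P}^{\sharP}$ places the entire polynomial hierarchy inside $\mathrm{BPP}^{\mathrm{NP}} \subseteq \Sigma_3^p$ (using $\mathrm{BPP}\subseteq\Sigma_2^p$), forcing $\mathrm{PH}$ to collapse to its third level. This contradicts Conjecture~\ref{conj:PHnoncollapse}, so no such $\mathcal{C}$ can exist, which is precisely the statement of the theorem. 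The role of hiding (Lemma~\ref{lem:hidingPROP}) is already absorbed into Theorem~\ref{thm:sample-computing}: it guarantees that the hardness of a \emph{single} fiducial outcome $\x_0$ under a \emph{random} circuit is no easier than the hardness averaged over outcomes, which is what lets Conjecture~\ref{conj:AVERhar} be phrased in terms of one fixed $\x_0$.

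The only real work is Theorem~\ref{thm:sample-computing}, whose proof I would defer to the appendix as indicated, but let me indicate its shape since it is where the constants $1/(64C)$ and $1/(8C)$ come from. Given the approximate sampler $\mathcal{C}$, one invokes Stockmeyer's approximate counting algorithm \cite{Stockmeyer1985} relative to an $\mathrm{NP}$ oracle to estimate, in $\mathrm{BPP}^{\mathrm{NP}}$, the probability that $\mathcal{C}$ outputs $\x_0$ on input $V$; this estimate is within a multiplicative factor of the true sampler probability $q_{\x_0}(V)$. A Markov-type argument over the randomness of $V\sim\nu$ converts the \emph{average} $l_1$-error $1/(64C)$ into a bound on the \emph{additive} error $|p_{\x_0}(V)-q_{\x_0}(V)|$ for a large fraction of circuits; anticoncentration (Theorem~\ref{thm:anticoncentrationFLO}) then upgrades this additive guarantee to a multiplicative one of the form $1/4+o(1)$, but only on the sub-fraction of circuits where $p_{\x_0}(V,\inpsi)$ is anticoncentrated, namely $\gtrsim 1/C$ of them; bookkeeping the union bound of the two bad events yields the $1/(8C)$ fraction. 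The main obstacle, and the place where the argument is most delicate, is precisely this juggling of the error budget --- choosing the sampling tolerance $1/(64C)$ small enough that, after Markov's inequality and after intersecting with the anticoncentration event, one still retains a $1/(8C)$ fraction of circuits with a genuine multiplicative approximation --- but this is routine once the anticoncentration constant $C$ is in hand, and the conjectures then do the rest.
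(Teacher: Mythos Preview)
Your proposal is correct and follows essentially the same route as the paper's proof: invoke Theorem~\ref{thm:sample-computing} to obtain a $\mathrm{BPP}^{\mathrm{NP}}$ algorithm for the average-case approximation problem, then combine Conjecture~\ref{conj:AVERhar} with Toda's theorem and $\mathrm{BPP}^{\mathrm{NP}}\subseteq\Sigma_3$ (via Sipser--Lautemann) to force a collapse of $\mathrm{PH}$, contradicting Conjecture~\ref{conj:PHnoncollapse}. Your additional sketch of how the constants arise inside Theorem~\ref{thm:sample-computing} also matches the appendix proof.
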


\begin{proof}
 By Theorem \ref{thm:sample-computing}, if there were an approximate sampler with respect to passive (resp. active) FLO circuits with input $\ket{\inpsi}$, then there would exist a algorithm $\mathrm{BPP}^{\mathrm{NP}}$ that $(1/4+o(1), 1/(8C))$-multiplicative approximates $p_{\x_0}(V,\inpsi)$ in for $1/(8C)$ fraction of passive (resp. active) FLO circuits. Where $C=C_{\mathrm{pas}}$ in the passive case and $C=C_{\mathrm{act}}$ in the active. By Conjecture \ref{conj:AVERhar} this is a  $\sharP$-hard problem. It is known \cite{lautemann1983} that $\mathrm{BPP}$ is inside the third level of the polynomial hierarchy, i.e., $\mathrm{BPP}^{\mathrm{NP}}\subseteq \Sigma_3$. By a well known result of Toda \cite{Toda1991} $\mathrm{PH} \subseteq \mathrm{P}^{\sharP}$ and thus $\mathrm{PH}\subseteq \Sigma_3$.
\end{proof}





\section{Cayley Path for unitary and orthogonal groups}\label{sec:cayley}

In this section, following \cite{movassagh_quantum_2020}, we introduce a rational interpolation between 
elements of the low-dimensional symmetry groups underlying  FLO transformations. In what follows  by $G$ we will denote either of the Lie group $\U(d)$ ore $\SO(2d)$.  The rational interpolation is constructed from the Cayley transform, which is a  rational mapping form the Lie algebra $\g$ into the corresponding group $G$.  
For both groups we give upper bounds for the total variation distance (TVD) between the Haar measure $\mu_G$ on $G$ and and its \emph{deformations} $\mu^\theta_G$ obtained via Cayley path. These bounds imply TVD  bounds between distributions of the corresponding FLO circuits. This and other technical results  established bellow will be called upon in the proof of the worst-to-average-case reduction in Section~\ref{sec:worst-to-avg-reduction}.

The Lie algebras $\u(d)$ and $\so(2d)$ of $\U(d)$ and $\SO(2d)$ are defined to be
\begin{align}
    \u(d) &= \{X \in \C^{d\times d}|X\dgg = -X\}, \\
    \so(2d) &=  \{X \in \R^{2d \times 2d}|X^{T} = -X\},
\end{align}
where $X^T$ denotes the transpose of the matrix $X$. 

\begin{rem}
We do not use the physicists' convention which requires that elements of Lie algebra $X$ satisfy $\exp(i\theta X) \in G$. Therefore, In particular, here $\u(d)$ (resp. $\so(d)$) consists of skew-Hermitian (resp. antistmmetric) matrices. 
\end{rem}

Every element $X\in\g$ defines a one-parameter path in $G$: $\lbrace\exp(\theta X)\rbrace_{\theta\in\R}$, via the exponential map, $\exp:\g\rightarrow G$. Both orthogonal and unitary groups are compact and connected, Therefore exponential map is surjective and can be used to parametrize $G$, and provides an interpolation between any two group elements. However, the interpolation is not polynomial in nature, and while it is possible to truncate the power series of $\exp$ to obtain a polynomial interpolation \cite{bouland_complexity_2019}, the resulting interpolation represents circuits that are not unitary cf.\cite{movassagh_quantum_2020}).

To remedy this \cite{movassagh_quantum_2020} employs an algebraic Cayley transformation between $\u(d)$ and $\U(d)$. This transformation can be however defined more generally as a mapping between Lie algebra ans the corresponding Lie group \cite{GW}. For our needs it is enough to consider the case of unitary and special orthogonal groups. 
\begin{Definition}
Let $G$ be $\U(d)$ or $\SO(2d)$, and let $\g$ denotes its Lie algebra. The Cayley transform is a mapping $f:\g\rightarrow G$ defined via
\begin{align}\label{eq:cayley-transform}
	f(X) = (\id-X)(\id+X)^{-1}\ . 
\end{align}
\end{Definition}

It is easy to see that the image of $f(\g)$ equals a dense subset $\tilde{G}=\SET{g\in G}{\lbrace-1\rbrace\notin\mathrm{sp}(g) }$ consisting of elements of $G$ (i.e unitary or orthogonal matrices) that do not have $-1$ in their spectrum. On $\tilde{G}$ the inverse of $f$ is well-defined. Specifically,  $f^{-1}:\tilde{G}\rightarrow \g$ is given by
\begin{equation}\label{eq:inverseCayley}
	f^{-1}(g) = (\id-g)(\id+g)^{-1}\ ,
\end{equation}
where $g \in \tilde{G}$. This explicit form of the inverse map can be verified directly from the definition of $f$. 
Cayley map defines a path deformation between $g_0\in G$ and $g_0f(X)$ as follows (see Fig.~\ref{fig:Cpath})  
\begin{figure}\label{fig:Cpath}
    \centering
    \includegraphics[scale=0.4]{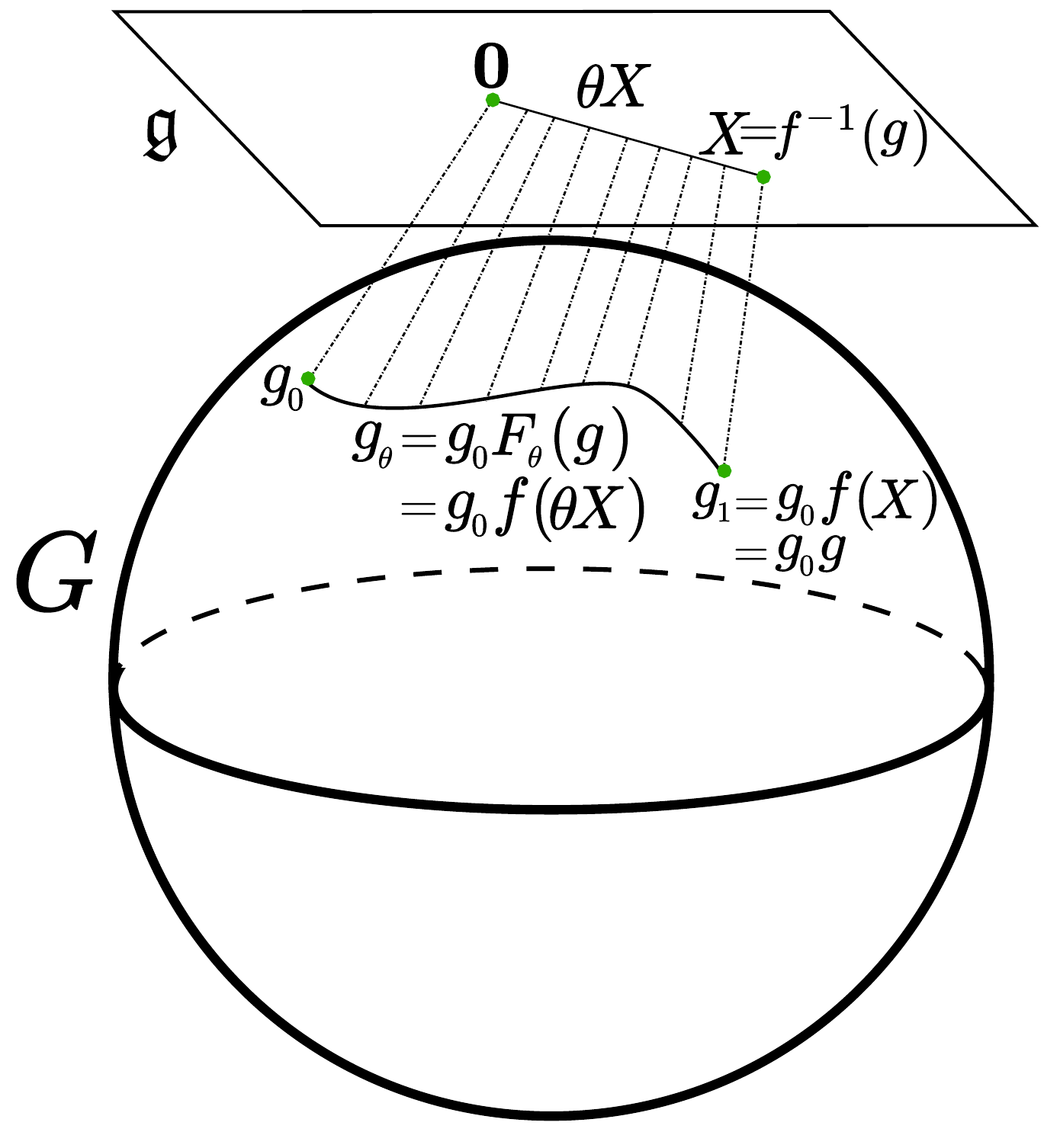}
    \caption{Path deformation defined by the Cayley map in Eq.~\eqref{eq:cayley-transform}. A path is induced between element $g_0\in G$ and $g_0 g$ by taking $X = f^{-1}(g)\in \mathfrak{g}$ and considering the perturbation $g_\theta = g_0 f(\theta X)$. 
    }
\end{figure}

Cayley map can be used to define a rational interpolation between arbitrary group elements. To this end consider first the map $F_{\theta} : \tilde{G}\rightarrow G$, given by
\begin{equation}\label{eq:deform2}
    F_\theta (g)= f(\theta f^{-1}(g)),\ \theta\in[0,1]\ .
\end{equation}
The above mapping can be evaluated explicitly (note that elements of the considered Lie groups are normal matrices and therefore functional calculus can be performed effectively in the same way as if we were dealing with functions of a real variable):
\begin{equation}\label{eq:deformTRUE}
    F_\theta (g) = \frac{(1-\theta)\id +(1+\theta) g}{ (1+\theta)\id +(1-\theta)g }\ ,\ \theta\in[0,1]\ .  
\end{equation}
For both orthogonal and unitary operators we have $\|g\|\ = 1$. Therefore for $\theta\in(0,1]$ the denominator of \eqref{eq:deformTRUE}  does not vanish and therefore we can use \eqref{eq:deformTRUE} to define  $F_\theta$ to be a function defined on whole $G$, while for any $g \in G$ we get that $\lim_{\theta \to 0} F_\theta(g)  =\id$.
Therefore for $\theta\in[0,1]$ the denominator of \eqref{eq:deformTRUE}  does not vanish and therefore we can use \eqref{eq:deformTRUE} to define  $F_\theta$ to a be a function defined on whole $G$. Importantly, for the fixed input as $\theta$ goes from $0$ to $1$ we move on a rational path form the identity $\I$ to $g$. Consequently the path
\begin{equation}\label{eq:deform}
	g_\theta = g_0 F_\theta(g),\ \theta\in[0,1]\ . 
\end{equation}
 a rational interpolation between a fixed group element $g_0$ (which can correspond, for example, to a worst-case $\sharP$-hard FLO circuit) and a completely generic group element $g_0 g$. 

It is important to note that both $f(\theta X)$ and $X$ be simultaneously brought into a block diagonal form by conjugation by elements of the group: $M\mapsto g M g^{-1}$. It follows from the fact that $f(\theta X)$ is simply a function of $X$  and than the transformation properties of elements of $\g$ under the conjugation by elements of $G$. For the case of $X\in\u(d)$ we have an elementary fact from linear algebra that there exist a unitary $U\in\U(d)$ such that 
\begin{equation}
    U X U^\dagger = \sum_{j=1}^d \phi_j X_j\ ,
\end{equation}
where $X_j = i\ketbra{j}{j}$. Similarly, for any  $X\in\so(2d)$  there exist $O\in\SO(2d)$ sush that
\begin{equation}
    O X O^T = \sum_{j=1}^d \phi_j \oX_j \ ,
\end{equation}
where $\oX_j = \ketbra{2j}{2j-1} - \ketbra{2j-1}{2j}$ is the generator for the $j$th block. These statements have analogues on the level of elements of the group. Every unitary $U$ can be transformed into a diagonal form
\begin{align}\label{eq:cartan-basis-U(d)}
    \mathrm{diag}(e^{i\phi_1},e^{i\phi_2},\dots,e^{i\phi_d})
  =
    \exp\left(\sum_{j=1}^d \phi_j X_j\right)\ .
\end{align}
For elements $\SO(2d)$, the block diagonalization amounts to the geometric fact that any $2d$-dimensional rotation can be decomposed into $d$ independent planar rotations of the form $\exp\left(\sum_{j=1}^d \phi_j \oX_j\right)$. 

\raggedbottom
The following lemma, which we prove in Appendix \ref{app:TVbound}, shows that for $1-\theta \leq o(\frac{1}{d^2})$ the distribution of elements of the group $g$, and $g_\theta =g_0 F_\theta(g)$, where $g\sim\mu_G$, are close in total variation distance.

\begin{lem}[TV distance between the Haar measure in $G$ and its $\theta$-deformation]\label{lem:TVdistanceGroup}
Let $G$ be equal to $\U(d)$ or  $SO(2d)$. Let $g_0\in G$ be a fixed element in $G$. Let $g\sim \mu_G$ an let $g_\theta=g_0 F_\theta(g)$, for $\theta\in[0,1]$ and $F_\theta:G\rightarrow G$ defined in \eqref{eq:deformTRUE}. Let now $\mu_G^\theta$ denotes the induced measure according to which $g_\theta$ is distributed. Assume furthermore that $\theta\in [1-\Delta,1]$, for $\Delta>0$. We then have 
\begin{equation}
\begin{split}
    \norm{\mu_{\U(d)} -\mu^\theta_{\U(d)}}_{\mathrm{TVD}} &\leq   d^2 \Delta/2 ,\ \\ \norm{\mu_{\SO(2d)} -\mu^\theta_{\SO(2d)}}_{\mathrm{TVD}} &\leq  d^2 \Delta/2 \ .
\end{split}
\end{equation}
\end{lem}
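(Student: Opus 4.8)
The plan is to bound the total variation distance between $\mu_G$ and $\mu_G^\theta$ by reducing both to measures on the maximal torus (resp.\ the group of block rotations), using the Weyl integration formula, and then to control the ratio of the two densities on this torus. The starting point is that $F_\theta$ commutes with conjugation: if $g = u \, t \, u^{-1}$ with $t$ in the maximal torus $T$ of $G$ and $u\in G$, then $F_\theta(g) = u \, F_\theta(t)\, u^{-1}$, because $F_\theta$ is a (rational) function of $g$. Hence, pushing forward $\mu_G$ through $g\mapsto F_\theta(g)$ and then left-translating by $g_0$ does not change the conjugacy-class data except through the map $\phi_j \mapsto \psi_j(\theta,\phi_j)$ on the eigenphases, where by \eqref{eq:deformTRUE} each eigenvalue $e^{i\phi}$ of $g$ is sent to $F_\theta(e^{i\phi}) = \frac{(1-\theta)+(1+\theta)e^{i\phi}}{(1+\theta)+(1-\theta)e^{i\phi}}$, a Möbius transformation of the unit circle. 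Left translation by the fixed $g_0$ is measure-preserving for both $\mu_G$ and its deformation in the sense relevant to TVD, so it suffices to compare the law of $F_\theta(g)$ with that of $g$.

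Next I would write the Haar measure in Weyl-integration form: for a class function, $\int_G f\, d\mu_G = c_G \int_T f(t)\, |\Delta(t)|^2\, dt$, where $\Delta$ is the Weyl denominator (a product of factors $e^{i\phi_k}-e^{i\phi_\ell}$, plus the appropriate extra factors for $\SO(2d)$). Since $F_\theta$ acts diagonally on eigenphases, the density of the law of $F_\theta(g)$ relative to $d\mu_G$ equals a Jacobian factor coming from the change of variables $\phi \mapsto \psi(\theta,\phi)$ on each circle, times the ratio of Weyl denominators evaluated at the transformed versus original eigenphases. The TVD is then $\tfrac12$ times the integral of $|1 - J(\theta,\phi)|$ against $\mu_G$, where $J$ is this combined Jacobian/denominator ratio. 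The key quantitative estimate is that the Möbius map $F_\theta$ is $C^1$-close to the identity when $\theta$ is close to $1$: one checks $\left.\frac{\partial}{\partial\theta}F_\theta(e^{i\phi})\right|_{\theta=1} = O(1)$ uniformly in $\phi$ (the denominator $(1+\theta)+(1-\theta)e^{i\phi}$ is bounded below on $[1-\Delta,1]$), so $|\psi(\theta,\phi)-\phi| = O(\Delta)$ and $\left|\frac{\partial \psi}{\partial\phi} - 1\right| = O(\Delta)$ uniformly. Feeding these into both the single-variable Jacobians ($d$ of them) and the pairwise denominator ratios ($\binom{d}{2}$ of them, each contributing an $O(\Delta)$ deviation via $\frac{e^{i\psi_k}-e^{i\psi_\ell}}{e^{i\phi_k}-e^{i\phi_\ell}} = 1 + O(\Delta)$, with the $O$ uniform because the difference in the numerator and denominator of this fraction is again controlled by the derivative bound and the factor $e^{i\phi_k}-e^{i\phi_\ell}$ cancels the small-denominator issue), the total deviation of $J$ from $1$ is $O(d^2\Delta)$, giving the claimed $d^2\Delta/2$.

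The main obstacle I anticipate is making the denominator-ratio estimate genuinely uniform: the factor $e^{i\phi_k}-e^{i\phi_\ell}$ in the Weyl denominator vanishes as two eigenphases collide, so one must verify that $e^{i\psi_k}-e^{i\psi_\ell}$ vanishes at exactly the same rate, i.e.\ that the ratio extends continuously to the diagonal with value $1+O(\Delta)$. This follows because $\psi(\theta,\cdot)$ is a diffeomorphism of the circle with derivative $1+O(\Delta)$, so $e^{i\psi_k}-e^{i\psi_\ell} = (e^{i\phi_k}-e^{i\phi_\ell})\big(1 + O(\Delta)\big)$ by the mean value theorem applied to $\phi\mapsto e^{i\psi(\theta,\phi)}$; once this is in hand the product over the $\binom{d}{2}$ root factors contributes a multiplicative $1+O(d^2\Delta)$ and everything assembles. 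A secondary but routine point is handling the two cases $\U(d)$ and $\SO(2d)$ in a unified way: for $\SO(2d)$ the torus is the group of block rotations $\exp(\sum_j\phi_j \oX_j)$ as recorded in the excerpt, the Weyl denominator has the extra factors $(e^{i\phi_k}-e^{-i\phi_\ell})$ and $(1-e^{i\phi_j})$-type terms, but these are manipulated identically since $F_\theta$ still acts coordinatewise on the $\phi_j$ and commutes with complex conjugation of eigenvalues (it has real coefficients). In both cases the count of torus dimensions plus root factors is $O(d^2)$, which is the source of the $d^2$ in the bound; I would carry the two cases in parallel, noting the constant $1/2$ comes from the definition of TVD and is not optimized.
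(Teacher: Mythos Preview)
Your approach is essentially the same as the paper's: reduce to the maximal torus via the Weyl integration formula, compute the density ratio as a product of a single-variable Jacobian and a ratio of Weyl denominators, and bound each factor by $1+O(\Delta)$. The paper carries this out with explicit formulas rather than $O(\cdot)$ estimates: it writes the change of variables as $\varphi = 2\tan^{-1}[\theta\tan(\phi/2)]$, bounds the Jacobian element $|\partial_\phi\varphi|^{-1}$ between $1-\Delta$ and $(1-\Delta)^{-1}$, and computes the denominator ratio $\Gamma$ exactly (e.g.\ for $\U(d)$, $|e^{i\phi_k}-e^{i\phi_j}| = \Gamma_{\mathrm{pas}}|e^{i\varphi_k}-e^{i\varphi_j}|$ with $(1-\Delta)^2\le\Gamma_{\mathrm{pas}}\le(1-\Delta)^{-2}$), yielding the sharp sandwich $(1-\Delta)^{d^2}\le J\le(1-\Delta)^{-d^2}$ and hence the stated constant $d^2\Delta/2$ via $|1-(1-\Delta)^{d^2}|\le d^2\Delta$. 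Your mean-value-theorem argument for the uniformity of the denominator ratio near the diagonal is correct and matches the content of the paper's explicit $\Gamma$ computation. Two small remarks: your $O(\Delta)$ bookkeeping as written delivers $Cd^2\Delta$ for some unspecified $C$, so to recover the precise $1/2$ you would need to sharpen the per-factor bounds to $[(1-\Delta),(1-\Delta)^{-1}]$ as the paper does; and the $\SO(2d)$ Weyl denominator is $\prod_{j<k}(\cos\phi_k-\cos\phi_j)^2$ without the extra $(1-e^{i\phi_j})$-type factors you mention (those appear for type $B_d$, not $D_d$), though this does not affect the structure of the argument.
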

\begin{rem}
A similar analysis was carried out in \cite{movassagh_quantum_2020} for the case of unitary group $\U(d)$. There however considerations were carried out for $d=O(1)$. This was justified because gates in question were only single and two qubit gates. The above Lemma can be viewed as an extension of the analysis given there in the sense of allowing arbitrary relation between $d$ and $\Delta$.
\end{rem}

The robustness of the quantum supremacy claim will be tied directly to the degree of the rational functions that interpolate between quantum circuits (Appendix~\ref{sec:degree}). 
Here we give the explicit rational functions and their degrees in the Cayley-path interpolation $g_{\theta}=F_{\theta}(g)$ at the group level \eqref{eq:deformTRUE} in $\U(d)$ and $\SO(2d)$.
(A similar result for $\U(d)$ was derived in  \cite{movassagh_quantum_2020}.)

In the case of $\U(d)$, since $g$ can always be diagonalized by some element $h$: $hgh^{-1} = \sum_{j=1}^d e^{i\phi_j}\ketbra{j}{j}$, we have that 
\begin{align}
    g_{\theta}  &= \sum_{j=1}^d \frac{(1-\theta)+(1+\theta)e^{i\phi_j }}{(1+\theta)+(1-\theta)e^{i\phi_j }} g_0 h^{-1} \ketbra{j}{j}h \\
    &= \sum_{j=1}^d \frac{1 + i\theta\tan(\phi_j/2)}{1 - i\theta\tan(\phi_j/2)} g_0 h^{-1}\ketbra{j}{j}h \\
    &= \frac{1}{\Q_g(\theta)} \sum_{j=1}^d P_j(\theta) g_0 h^{-1}\ketbra{j}{j}h 
    \eqqcolon \frac{\PP_{g_0,g}(\theta)}{\Q_g(\theta)}, \label{eq:RATpoly_passive}
\end{align}
where
\begin{align}
    \Q_g(\theta) &= \prod_{j=1}^d (1 - i\theta\tan(\phi_j/2)), \label{eq:Q-U(d)} \\
    P_j (\theta) &= (1 + i\theta\tan(\phi_j/2)) \prod_{\substack{1\le k \le d \\ k\neq j}} (1 - i\theta\tan(\phi_k/2)) \label{eq:P-U(d)}
\end{align}
are both polynomials of degree $d$ in $\theta$, and $\mathcal{P}_{g_0,g}(\theta)$ is a formal polynomial that depends on the matrices $g$ and $g_0$. 

The same calculation applies to the case $\SO(2d)$, except that now each eigenspace is two-dimensional and spanned by $\id_j = \ketbra{2j-1}{2j-1} + \ketbra{2j}{2j}$ and $\oX_j =  \ketbra{2j}{2j-1} - \ketbra{2j-1}{2j}$. Again, let $hgh^{-1} = \sum_{j=1}^d (\cos\phi_j\id_j + \sin\phi_j\oX_j)$, and one has
\begin{align}
\begin{split}
    g_{\theta} 
    &= g_0  \sum_{j=1}^d 
    [1+\cos\phi_j + \theta^2(1-\cos\phi_j)]^{-1} \\
    &\times \left\{
    \left[1+\cos\phi_j-\theta^2(1-\cos\phi_j)\right]\id_j + 2\theta\sin\phi_j h^{-1} \oX_j h \right\} 
\end{split}
\\
    &= g_0 \sum_{j=1}^d \frac{[1-\theta^2\tan^2(\phi_j/2)]\id_j + 2\theta\tan(\phi/2) h^{-1} \oX_j h}{1+\theta^2\tan^2(\phi_j/2)} \label{eq:simplFORM}\\
    &= \frac{1}{\Q_g(\theta)} \sum_{j=1}^d 
    \left( 
     P_j^{\mathrm{diag}}(\theta) g_0\id_j + P_j^{\mathrm{off}}(\theta) g_0 h^{-1}\oX_j h
     \right) \label{eq:RATpoly_active} \\
     &\eqqcolon \frac{\mathcal{P}_{g_0,g}(\theta)}{\Q_g(\theta)}\ ,
\end{align}
where in \eqref{eq:simplFORM} we divided both the numerator and the denominator by $1+\cos\phi_j$ and
\begin{align}
    \Q_g(\theta) &= \prod_{j=1}^d (1+\theta^2\tan^2(\phi_j/2))\ , \label{eq:Q-SO(2d)} \\
    \begin{split}
    P_j^{\mathrm{diag}}(\theta) &= (1+\theta^2\tan^2(\phi_j/2))^2) \\ 
    &\times \prod_{\substack{1\le k \le d \\ k\neq j}} (1+\theta^2\tan^2(\phi_j/2))^2)\ ,
    \end{split} \\
    P_j^{\mathrm{off}}(\theta) &= 2\theta\tan(\phi_j/2) \prod_{\substack{1\le k \le d \\ k\neq j}} (1+\theta^2\tan^2(\phi_j/2))^2)
\end{align}
are polynomials in $\theta$ of degree $2d$, $2d$, and $2d-1$ respectively, and $\mathcal{P}_{g_0,g}(\theta)$ is a formal polynomial that depends on the matrices $g$ and $g_0$.

Below we give a lower bound for $\Q_g(\theta)$ to assure that the rational function does not blow up, and an upper bound for generic $g\in G$, which will be crucial for a robust reduction in Section~\ref{sec:worst-to-avg-reduction}. 
Note that the coefficients of the polynomial $\Q_g(\theta)$ depends only on generalized eigenvalues of $g$ ($e^{i\phi_j}$ in the unitary case and $\cos\phi_j$, $\sin\phi_j$ in the orthogonal case) and hence $Q(\theta)$ can be pre-computed in time polynomial in $d$ by diagonalizing $g$, computing each $\tan(\phi_j/2)$ which is just an algebraic function of $e^{i\phi_j}$, and computing the final result.

\begin{lem}\label{lem:DenominatorBOunds}
Let $\Q_g(\theta)$ be the polynomial in defined in \eqref{eq:Q-U(d)} for $G=\U(d)$ and in  \eqref{eq:Q-SO(2d)} for $G=\SO(2d)$. Let now $\tilde{\Delta}>0$. Then  we have the following inequalities  
\begin{align} 
 \Pr_{g\sim\mu_{\U(d)}} \left(   \abs{\Q_g(\theta)}^2 \le \left[1+\left(\frac{\theta\pi}{\tilde{\Delta}}\right)^2 \right]^{d}\ \right)\geq 1- d\frac{\tilde{\Delta}}{\pi} \ , \label{eq:denominatroUPPERboundUd}  \\
\Pr_{g\sim\mu_{\SO(2d)}} \left(    \abs{\Q_g(\theta)}^2 \le \left[1+\left(\frac{\theta\pi}{\tilde{\Delta}}\right)^{2} \right]^{2d} \right) \geq  1- d\frac{\tilde{\Delta}}{\pi} \ . \label{eq:denominatroUPPERboundSO}
\end{align}
In addition, for all $g$, $|Q_g(\theta)|^2 \ge 1$ for both $U(d)$ and $\SO(2d)$.
\end{lem}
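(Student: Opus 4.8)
The plan is to treat the unitary and orthogonal cases uniformly by exploiting the product structure of $\Q_g(\theta)$ in terms of the generalized eigenvalues $e^{i\phi_j}$ (equivalently the angles $\phi_j$). For the lower bound, observe that each factor of $\Q_g(\theta)$ is of the form $1 - i\theta\tan(\phi_j/2)$ (unitary case) or $1 + \theta^2\tan^2(\phi_j/2)$ (orthogonal case); in both cases the modulus squared of each factor is $1 + \theta^2\tan^2(\phi_j/2) \ge 1$. Hence $|\Q_g(\theta)|^2 = \prod_j (1+\theta^2\tan^2(\phi_j/2))^{m}$ (with $m=1$ for $\U(d)$, $m=2$ for $\SO(2d)$) is a product of terms each $\ge 1$, giving $|\Q_g(\theta)|^2 \ge 1$ for \emph{all} $g$, independent of $\theta$. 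This disposes of the final sentence of the lemma with no probabilistic input.

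For the upper bound, the key point is that a uniformly large value of $|\Q_g(\theta)|^2$ can only occur if some $\tan(\phi_j/2)$ is large, which in turn requires some eigenphase $\phi_j$ to be close to $\pm\pi$ (i.e. the corresponding eigenvalue close to $-1$). First I would fix a cutoff: if $|\phi_j - (\pm\pi)| \ge \tilde\Delta$ for every $j$, then $|\tan(\phi_j/2)| \le \cot(\tilde\Delta/2) \le 2/\tilde\Delta \le \pi/\tilde\Delta$ (using a standard elementary bound $\cot(x/2)\le \pi/x$ valid on $(0,\pi]$, or the cruder $\cot(x/2) \le 2/x$), so that each factor satisfies $1+\theta^2\tan^2(\phi_j/2) \le 1 + (\theta\pi/\tilde\Delta)^2$, and multiplying the $d$ (resp.\ $2d$) factors gives exactly the claimed bound $[1+(\theta\pi/\tilde\Delta)^2]^d$ (resp.\ $[1+(\theta\pi/\tilde\Delta)^2]^{2d}$). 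It remains to show this ``good event'' — no eigenphase within $\tilde\Delta$ of $-1$ — has probability at least $1 - d\tilde\Delta/\pi$. For this I would invoke the known eigenvalue density of Haar-random matrices: for $g \sim \mu_{\U(d)}$ the expected number of eigenvalues in any arc of length $\ell$ on the unit circle is $d\ell/(2\pi)$, and by Markov's inequality the probability that the arc of length $2\tilde\Delta$ centred at $-1$ contains at least one eigenvalue is at most $d\tilde\Delta/\pi$. The $\SO(2d)$ case is analogous: the $d$ rotation angles $\phi_j$ of a Haar-random $g\in\SO(2d)$ have a marginal density bounded so that the expected number of $\phi_j$ landing within $\tilde\Delta$ of $\pi$ is again at most $d\tilde\Delta/\pi$ (one can either use the explicit Weyl integration formula for $\SO(2d)$ or reduce to the unitary case via the block structure). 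Taking complements yields \eqref{eq:denominatroUPPERboundUd} and \eqref{eq:denominatroUPPERboundSO}.

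The main obstacle I anticipate is getting the constant in the probability bound exactly right, i.e.\ verifying the $d\tilde\Delta/\pi$ (rather than, say, $d\tilde\Delta/(2\pi)$ or a factor $2$ worse) in both ensembles simultaneously, since it depends on the precise normalization of the one-point eigenvalue correlation function and on the fact that the relevant arc has length $2\tilde\Delta$ (width $\tilde\Delta$ on each side of $-1$). A secondary, purely technical point is confirming the elementary inequality relating $\cot(\tilde\Delta/2)$ to $\pi/\tilde\Delta$ over the full range $\tilde\Delta\in(0,\pi]$ so that the exponentiated bound matches the stated $[1+(\theta\pi/\tilde\Delta)^2]$ form; this is a one-line calculus estimate but must be stated cleanly. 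Everything else — the product factorization, the per-factor modulus computation, and the union/Markov step — is routine.
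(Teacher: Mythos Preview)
Your proposal is correct and follows essentially the same approach as the paper: factorize $|\Q_g(\theta)|^2$ as a product of terms $1+\theta^2\tan^2(\phi_j/2)$ (giving the lower bound $\ge 1$ immediately), then bound each tangent by $\pi/\tilde\Delta$ on the good event that no eigenphase lies within $\tilde\Delta$ of $\pm\pi$, and finally control the probability of the bad event. The paper dispatches your anticipated obstacle about the constant by invoking directly that each marginal eigenphase $\phi_j$ of a Haar-random element of $\U(d)$ or $\SO(2d)$ is uniform on $[-\pi,\pi]$, so $\Pr(\phi_j\in[\pi-\tilde\Delta,\pi]\cup[-\pi,-\pi+\tilde\Delta])=\tilde\Delta/\pi$ exactly, and then applies a union bound over the $d$ eigenphases---which is equivalent to your Markov-on-the-count argument and yields precisely $d\tilde\Delta/\pi$ in both ensembles.
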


\begin{proof}
Since $g\in G$ is Haar distributed, every generalised eigenphase $\phi_j$ is distributed uniformly on the interval $[-\pi,\pi]$ \cite{SzarekBook}. Therefore, for every $j$ we have
\begin{align}
    \Pr_{g\sim\mu_G} \left(\phi_j \in [\pi-\tilde{\Delta},\pi] \cup [-\pi,-\pi+\tilde{\Delta}] \right)
    = \frac{\tilde{\Delta}}{\pi}.
\end{align}
It is easy to verify that for $\phi_j \in [-\pi+\tilde{\Delta},\pi-\tilde{\Delta}]$ we have  $\abs{\tan(\phi_j/2)} \leq \pi/\tilde{\Delta}$. Using the union bound over different $\phi_j$, $j\in[d]$ we obtain that with probability at least $1- d\frac{\tilde{\Delta}}{\pi}$,
\begin{equation}
   \abs{\tan(\phi_j/2)} \le  \pi/\tilde{\Delta} \ \text{for all } j\in[d]\ . 
\end{equation}
Using the definition of polynomials $\Q_g(\theta)$ from Eq. \eqref{eq:Q-U(d)}  and  Eq. \eqref{eq:Q-SO(2d)}, we obtain the claimed inequalities \eqref{eq:denominatroUPPERboundUd} and \eqref{eq:denominatroUPPERboundSO}. 
\end{proof}

\begin{rem}
We believe that the inequalities stated in Lemma~\ref{lem:DenominatorBOunds} can be greatly improved by the usage of more sophisticated techniques from random matrix theory. However, for our purposes these crude estimates are sufficient (see proof of Theorem \ref{thm:average-hardness-approx-prob}).
\end{rem}

\section{Robust average-case hardness of output probabilities of fermionic circuits}\label{sec:worst-to-avg-reduction}

In this part we give strong evidence for the Conjecture \ref{conj:AVERhar} used to show classical hardness of sampling from fermionic linear circuits initialized in $\ket{\inpsi}$ (cf. Theorem \ref{th:SAMPLhar}). There we conjectured that it is $\sharP$-hard to approximate probabilities $p_\x(V,\inpsi)=|\bra{\x}V\ket{\inpsi}|^2$ of \emph{generic} FLO circuits initialized in $\ket{\inpsi}$ to relative error. To support the conjecture we prove weaker theorems showing average-case $\sharP$ hardness of exact computation of $p_\x(V,\inpsi)$ (Theorem \ref{thm:average-hardness-exact-prob}) and extend it further to average-case $\sharP$-hardness of approximating $p(\x|V,\inpsi)$ up to error $\epsilon=\exp(-\Theta(N^6))$ (Theorem \ref{thm:average-hardness-approx-prob}), where $N$ is the number of states $\psiQUAD$ used.


To establish this, we combine previously known worst-case hardness results (see discussion in Section \ref{sec:hardness-sampling}) and generalize  to our setting  the rational interpolation method based on Cayley transform introduced recently by Movassagh \cite{movassagh_quantum_2020}.
We also use the fact that both passive FLO circuits  ($\Gpas$) as well as  active FLO ($\Gact$) images of  representations $\Pi$ of low-dimensional symmetry groups $G$ (equal to $\U(d)$ or $\SO(2d)$). This implies, according to Lemma~\ref{lem:degreesOFRAT} in  Appendix~\ref{sec:degree}, that, when evaluated on the Cayley path $g_\theta$, the circuit gives rise to outcome probabilities described as rational functions of low degree. This low-degree structure allows a worst-to-average-case reduction for the family of circuits considered. Specifically, we reduce the problem of computation of the worst-case probability $p(\x|V_0,\inpsi)$ to computing  $p(\x|V,\inpsi)$, for a typical passive or active FLO circuit $V$. {On a high level, the reductions proceeds as follows 
\begin{itemize}
    \item[(i)] We use the  Cayley path to construct a low-degree rational interpolation $g_\theta$ between the worst-case element $g_0$ of $G$ and a generic element $g_{\theta=1} =g\sim \mu_G$.
    \item[(ii)] The outcome probabilities $p(\x|\Pi(g_{\theta}),\inpsi)$ are low degree rational functions of $\theta$ such that the denominator is efficiently computable and it is not too large for most group elements as shown in Corollary \ref{cor:DenominatorCircuitBOunds}.
    \item[(iii)] We then assume the existence of an oracle that can evaluate (exactly or with some precision) the outcome probabilities $p(\x|\Pi(g_{\theta}),\inpsi)$ for certain fraction of FLO circuits. We show that by evaluating the probability outputs of circuits corresponding to group elements $g_\theta$ with $\theta$ close to $1$ and by the bound for the denominator in the rational polynomial representing the probabilities, we can recover up to some small error the output probabilities of the original worst-case  $g_0\in G$ using polynomial or rational  interpolation. On the technical side this is obtained by appllying  rational Berlekamp-Welch theorem (Theorem \ref{th:BWrational}) for the exact case and  Paturi's lemma (cf. Lemma ~\ref{lem:PaturiLem}) in the approximate case  and Theorem~\ref{thm:poly-eqspace}.
    \item[(iv)] Thus, if we can compute (or approximate) the outcome probabilities for a finite fraction of circuits, then we can also compute (approximate) the outcome probabilities for the worst-case circuit. This means that $\sharP$-hardness of the worst-case implies that approximating output probabilities for a finite fraction of circuits is also  $\sharP$-hard.
\end{itemize}}

{
We first state and prove result about average-case hardness of exact computation of $p(\x|V,\inpsi)$. As stated above we will make use of the following result that guarantees that it is possible to recover an unknown rational function $F(\theta)$ from a set of its values at different points, even if some of the evaluation are erroneous. } 

\begin{thm}[Berlekamp-Welch for rational functions \cite{movassagh_quantum_2020}]\label{th:BWrational} Let $R(\theta)$ be a  rational function of degree $\deg(R)=(d_1,d_2)$. A set of points  $\S=\lbrace (\theta_1,r_1),(\theta_2,r_2),\ldots,(\theta_L,r_L)\rbrace$ specifies $R(\theta)$ uniquely  provided $L>d_1+d_2+2t$, where 
\begin{equation}
    \left| \SET{i\in[L]}{R(\theta_i)\neq r_i} \right|\leq t\ .
\end{equation}
Moreover, $F(\theta)$ can be recovered in polynomial time in $L$ and $\deg(R)$, when $\S$ is given. 
\end{thm}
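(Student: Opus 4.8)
The plan is to adapt the classical Berlekamp--Welch decoding argument to the rational setting. Write the target function in lowest terms as $R(\theta)=P(\theta)/Q(\theta)$ with $\deg P\le d_1$ and $\deg Q\le d_2$ (this is the meaning of $\deg(R)=(d_1,d_2)$), and let $\mathrm{Bad}\coloneqq\SET{i\in[L]}{R(\theta_i)\neq r_i}$, so $|\mathrm{Bad}|\le t$. Introduce the \emph{error-locator} polynomial $E^*(\theta)\coloneqq\prod_{i\in\mathrm{Bad}}(\theta-\theta_i)$, of degree $\le t$. The crucial observation is that the pair $A^*\coloneqq E^*P$ (of degree $\le d_1+t$) and $B^*\coloneqq E^*Q$ (of degree $\le d_2+t$) satisfies the bilinear interpolation identity $A^*(\theta_i)=r_i\,B^*(\theta_i)$ for \emph{every} $i\in[L]$: at a good index $r_i=P(\theta_i)/Q(\theta_i)$, so both sides equal $E^*(\theta_i)P(\theta_i)$; at a bad index $E^*(\theta_i)=0$ kills both sides.

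First I would set up the induced linear-algebra problem: search for a pair of polynomials $(A,B)$, not both identically zero, with $\deg A\le d_1+t$ and $\deg B\le d_2+t$, subject to the $L$ homogeneous linear constraints $A(\theta_i)=r_i B(\theta_i)$, $i\in[L]$, in the $2t+d_1+d_2+2$ unknown coefficients. By the previous paragraph $(A^*,B^*)$ is such a solution and is nontrivial (since $B^*=E^*Q$ is a product of nonzero polynomials), so Gaussian elimination produces some nontrivial solution in time $\poly{L,d_1,d_2}$.

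The heart of the argument --- and the only place the hypothesis $L>d_1+d_2+2t$ enters --- is to show that \emph{any} nontrivial solution $(A,B)$ of this system already reconstructs $R$. Consider $W(\theta)\coloneqq A(\theta)Q(\theta)-B(\theta)P(\theta)$, whose degree is at most $d_1+d_2+t$. For each of the at least $L-t$ good indices $i$ one has $A(\theta_i)=r_iB(\theta_i)$ and $P(\theta_i)=r_iQ(\theta_i)$, hence $W(\theta_i)=r_iB(\theta_i)Q(\theta_i)-B(\theta_i)r_iQ(\theta_i)=0$; since $L-t>d_1+d_2+t\ge\deg W$, the polynomial $W$ has strictly more roots than its degree and so $W\equiv 0$, i.e.\ $AQ=BP$. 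One also checks $B\not\equiv 0$: otherwise $A$ vanishes at all $L$ points, and $L>d_1+d_2+2t\ge d_1+t\ge\deg A$ forces $A\equiv0$, contradicting nontriviality. Therefore $A/B=P/Q=R$ as rational functions, which simultaneously shows that $\S$ pins down $R$ uniquely and that reducing the computed fraction $A/B$ to lowest terms outputs $R$ in time polynomial in $L$ and $\deg(R)$.

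I expect no genuine obstacle here --- the result is essentially classical --- but the two points that need care are the degree bookkeeping (verifying $\deg W\le d_1+d_2+t$ and $\deg A\le d_1+t$, so that the root count bites precisely when $L>d_1+d_2+2t$), and the conceptual point that the recovery algorithm must work with an arbitrary nontrivial solution of the linear system, since it has no access to $\mathrm{Bad}$ or to $E^*$; the passage from that arbitrary solution to $R$ via the identical vanishing of $W$ is what makes the procedure legitimate.
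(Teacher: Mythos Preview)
Your argument is correct and is the standard adaptation of the classical Berlekamp--Welch decoder to rational functions: the error-locator $E^*$ provides a guaranteed nontrivial solution, and the degree count $L-t>d_1+d_2+t\ge\deg W$ forces $AQ=BP$ for any nontrivial solution. The paper itself does not prove this theorem; it is imported as a cited result from \cite{movassagh_quantum_2020}, so there is no ``paper's own proof'' to compare against---your write-up simply supplies the omitted standard argument.
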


Recall that in our quantum advantage scheme we have $d=4N$, $\ket{\inpsi}=\psiQUAD^{\ot N}$ , for $\psiQUAD=(\ket{0011} +\ket{1100})/\sqrt{2}$ (therefore for the case of passive FLO $n=2N$). Let now $g_0\in G$ be an element of the symmetry group such that $p_{\x_0}(V_0,\inpsi)$ is $\sharP$-hard to compute, where $V_0=\Pi(g_0)$ and $\x_0$ is the specific output state. We use a Cayley path interpolation between $g_0$ and Haar-random elements from $G$
\begin{equation}\label{eq:interTW}
    g_\theta=g_0 F_\theta(g) \ ,\  g\sim \mu_G\ .
\end{equation}
Let $\mu^\theta_G$ be the distribution of $g_\theta$ obtained in this way. In Lemma \ref{lem:TVdistanceGroup} we proved bounds for the TV distances $\|\mu_G-\mu_{G}^{\theta} \|_\mathrm{TVD}$. These bounds can be directly translated on the level of the corresponding circuits. Indeed, let  $V_\theta= \Pi(g_\theta)$  and let $\nu^\theta_G$
denote the distribution of the corresponding quantum circuits obtained by appropriate representation $\Pi$ of $G$. Since distribution of the Haar random FLO circuits $\nupas,\nupas$  are obtained in exactly the same way  we get from the monotonicity of TV distance (cf. Section \ref{sec:notations}).
\begin{equation}\label{eq:TVschemeBOUNDS}
\begin{split}
    \norm{\nupas -\nupas^\theta}_{\mathrm{TVD}} &\leq   8 N^2\Delta \ ,\\   
    \norm{\nuact -\nuact^\theta}_{\mathrm{TVD}} &\leq  8 N^2 \Delta \ ,
\end{split}
\end{equation}    
where $\theta\in[1-\Delta,1]$. Finally, from Lemma \ref{lem:degreesOFRAT}  we know that probabilities  $R(\theta) =\tr(\kb{\x_0}{\x_0} \Pi(g_\theta) \rho \Pi(g_\theta)^\dag)$ are rational functions of the deformation parameter $\theta$ of degrees 
\begin{equation}\label{eq:finalDEGREES}
\begin{split}
\text{Passive FLO: }&\ \    \deg (R) =  (16 N^2, 16 N^2)\ \ , \\
\text{Active FLO: }&\ \    \deg (R) =  (32 N^2,32 N^2)\ ,
\end{split}
\end{equation} 
and act that passive and active Fermionic linear circuits  
gives an average to worst-case reduction for outcome probabilities generated by fermionic circuits. 


\begin{thm}[Average-case $\sharP$-hardness of computation of outcome probabilities of FLO circuits]\label{thm:average-hardness-exact-prob}
 Let $\ket{\x_0}$ belong to the suitable Hilbert space $\Hpas=\bigwedge^{2N}(\C^{4N})$ for passive FLO and $\Hact=\Hfock^+(\C^{4N})$, respectively. Then it is $\sharP$-Hard to compute $p_{\x_0}(V,\inpsi)=\left|\bra{\x_0}V \ket{\inpsi}\right|^2$ with probability $\alpha>\frac{3}{4}+\delta$, $\delta=\frac{1}{\poly{N}}$, over the the uniform distribution of circuits: $V\sim \nupas$ for passive FLO and $V\sim \nuact$ for active FLO. \end{thm}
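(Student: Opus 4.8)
The plan is to set up a standard polynomial-interpolation (worst-to-average) reduction, but carried out at the level of the symmetry group $G$ (equal to $\U(d)$ or $\SO(2d)$ with $d=4N$) rather than directly at the level of circuits, exactly as summarized in items (i)--(iv) preceding Theorem~\ref{th:BWrational}. First I would fix a worst-case group element $g_0 \in G$ such that $p_{\x_0}(\Pi(g_0),\inpsi)$ is $\sharP$-hard to compute; such $g_0$ exists because exact computation of these probabilities is $\sharP$-hard in the worst case (see Remark~\ref{rem:exact_sharPhard}, reducing from the permanent via Lemma~\ref{lem:ivanov_amplitude}). Then, for a Haar-random $g\sim\mu_G$, define the Cayley path $g_\theta = g_0 F_\theta(g)$ as in \eqref{eq:interTW}, and observe that $g_{\theta=1} = g_0 g$ is itself Haar-distributed (right-invariance of $\mu_G$), while $g_{\theta=0}=g_0$ recovers the worst case. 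By \eqref{eq:TVschemeBOUNDS}, for $\theta\in[1-\Delta,1]$ the distribution $\nu^\theta$ of the circuit $V_\theta=\Pi(g_\theta)$ is within TV distance $8N^2\Delta$ of the target distribution $\nu$ ($\nupas$ or $\nuact$).

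Next I would invoke Lemma~\ref{lem:degreesOFRAT} (via the explicit rational forms \eqref{eq:RATpoly_passive} and \eqref{eq:RATpoly_active}) to conclude that $R(\theta) := p_{\x_0}(\Pi(g_\theta),\inpsi) = \tr\!\left(\kb{\x_0}{\x_0}\,\Pi(g_\theta)\,\inpsi\,\Pi(g_\theta)^\dagger\right)$ is a rational function of $\theta$ of degree $(16N^2,16N^2)$ in the passive case and $(32N^2,32N^2)$ in the active case, as recorded in \eqref{eq:finalDEGREES}. The reduction then proceeds as follows. Suppose, for contradiction, that there is an oracle $\O$ computing $p_{\x_0}(V,\inpsi)$ exactly on a fraction $\alpha > 3/4 + \delta$ of circuits $V\sim\nu$, with $\delta = 1/\poly{N}$. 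Choose $L = \Theta(N^2)$ evaluation points $\theta_1,\dots,\theta_L$ in a small interval $[1-\Delta,1]$ with $\Delta = \Theta(\delta/N^2)$ chosen so that $8N^2\Delta$ is a sufficiently small constant; by \eqref{eq:TVschemeBOUNDS} and a union bound over the $L$ points, the probability (over $g\sim\mu_G$) that $\O$ fails on \emph{any} of the circuits $\Pi(g_{\theta_i})$ is at most $L\cdot\big(8N^2\Delta + (1-\alpha)\big) < 1/4$ once constants are tuned — so with probability $>3/4$ over $g$, the oracle returns the correct value $R(\theta_i)$ at \emph{all} $L$ points. Feeding these $L$ correct evaluations into the rational Berlekamp--Welch procedure (Theorem~\ref{th:BWrational}) with $L > d_1 + d_2 + 2t$ and $t=0$, we recover $R(\theta)$ exactly as a rational function in $\poly{N}$ time, and in particular evaluate $R(0) = p_{\x_0}(\Pi(g_0),\inpsi)$, the $\sharP$-hard worst-case quantity. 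Repeating $O(\poly N)$ times and taking a majority vote boosts the success probability to be overwhelming, placing the worst-case computation in $\mathrm{BPP}^{\O}$; since the worst-case value is $\sharP$-hard, so is the average-case task of the oracle.

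The main obstacle is the bookkeeping in step (iii): one must verify that the evaluation points $\theta_i$ can simultaneously (a) lie close enough to $1$ that the TV-distance bound \eqref{eq:TVschemeBOUNDS} keeps the induced distribution $\nu^{\theta_i}$ statistically close to $\nu$ (so the oracle's success guarantee transfers), (b) be spread out enough and numerous enough ($L$ strictly exceeding $d_1+d_2 = \Theta(N^2)$) for Berlekamp--Welch to apply, and (c) avoid the poles of $R$, which by the denominators $\Q_g(\theta)$ in \eqref{eq:Q-U(d)}, \eqref{eq:Q-SO(2d)} occur only off $[0,1]$ anyway since $|\Q_g(\theta)|^2 \ge 1$ there (Lemma~\ref{lem:DenominatorBOunds}). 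The delicate point is that $\Delta$ must shrink like $1/\poly N$ while $L$ grows like $\poly N$, yet the product $L\Delta N^2$ controlling the union-bound failure probability must stay below a small constant; this is arranged by taking, e.g., $\Delta = \delta/(64 N^2 L)$ with $L = 33 N^2 + 1$ (active case), so that $L \cdot 8N^2\Delta \le \delta/8$ and the residual failure probability from the oracle's own $(1-\alpha)$-error, summed over $L$ points, is $< 1/4 - \delta/8$. Everything else — the existence of $g_0$, the degree count, the Berlekamp--Welch runtime — is quoted directly from earlier results, so the proof is essentially an assembly of \eqref{eq:finalDEGREES}, \eqref{eq:TVschemeBOUNDS}, Theorem~\ref{th:BWrational}, and Remark~\ref{rem:exact_sharPhard}.
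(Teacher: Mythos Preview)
Your overall architecture is right --- Cayley path on the symmetry group, rational-function probabilities of degree $\Theta(N^2)$, TV-distance transfer of the oracle guarantee, rational Berlekamp--Welch, majority vote --- but the argument in step (iii) breaks at the union-bound step. You write that the probability the oracle fails on \emph{any} of the $L$ points is at most $L\cdot\big(8N^2\Delta + (1-\alpha)\big) < 1/4$, and later that ``the residual failure probability from the oracle's own $(1-\alpha)$-error, summed over $L$ points, is $<1/4-\delta/8$''. But $1-\alpha$ is not small: the hypothesis only gives $\alpha>3/4+\delta$, so $1-\alpha$ is essentially $1/4$, and $L$ must be at least $d_1+d_2=\Theta(N^2)$ for Berlekamp--Welch to apply. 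Hence $L(1-\alpha)=\Theta(N^2)$, not a constant below $1/4$. No choice of $\Delta$ can save this, since the $L(1-\alpha)$ term does not involve $\Delta$ at all. In short, you cannot hope that the oracle is simultaneously correct at all $\Theta(N^2)$ evaluation points when it is only guaranteed to be correct on a $3/4$ fraction.

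The fix is exactly the point of using the \emph{error-correcting} rational Berlekamp--Welch (Theorem~\ref{th:BWrational}) rather than plain interpolation with $t=0$: allow a nonzero number of erroneous evaluations $t(g)$, and control $t(g)$ with Markov's inequality instead of a union bound. Concretely, $\E_{g}\,t(g)\le L\big[(1-\alpha)+8N^2\Delta\big]$, and Berlekamp--Welch succeeds whenever $t(g)\le (L-d_1-d_2)/2$. Markov then gives
\[
\Pr_{g}\!\left[t(g)>\tfrac{L-d_1-d_2}{2}\right]\ \le\ \frac{(1-\alpha)+8N^2\Delta}{\tfrac12-\tfrac{d_1+d_2}{2L}}\ ,
\]
and by choosing $\Delta$ so that $8N^2\Delta\le\delta/2$ and $L$ large enough that $(d_1+d_2)/(2L)\le\delta/4$ (both achievable with $\Delta,L^{-1}=1/\poly N$), the right-hand side is at most $\tfrac12-\Theta(\delta)$. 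This is the step the paper carries out, and it is precisely where the threshold $3/4$ comes from: you need the \emph{expected fraction} of failures to be below $1/4$ so that Markov keeps the \emph{actual fraction} below $1/2$ with probability bounded away from $1/2$, which is exactly what error-correcting Berlekamp--Welch can tolerate.
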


\begin{rem} Due to hiding property (see Lemma \ref{lem:hidingPROP} both active and passive FLO gates can permute between possible output Fock states $\ket{\x}$ in  $\Hpas=\bigwedge^{2N}(\C^{4N})$ and $\Hact=\Hfock^+(\C^{4N})$, respectively. Therefore, using the invariance of the Haar measure on $G$, we can transform $\x_0$ above into any other output $\x$ satisfying $|\x|=2N$ (for passive FLO) and $|\x|$ even (for active FLO).  

\end{rem}

\begin{proof}
We first fix the symmetry group $G$ describing a class of FLO circuits. The proof is virtually identical for both $G=\U(4N)$ and $G=\SO(8N)$. Suppose that $\O$ is an oracle that given a description of $\Pi(g)$ computes $\left|\bra{\x_0}\Pi(g) \ket{\inpsi}\right|^2$ with high probability, i.e.,
\begin{equation}\label{eq:oractleWCR}
    \Pr_{g\sim \mu_G} \left[ \O(\Pi(g))= \left|\bra{\x_0}\Pi(g) \ket{\inpsi}\right|^2 \right] >\alpha\ .
\end{equation}
The uniform distribution of FLO circuits $\nu_G$  is obtained by setting $V=\Pi(g)$, where $g\sim \mu_G$  (recall that $\nu_G=\nupas$ for $G=\U(4N)$ and $\nu_G=\nuact$ for $G=\SO(8N)$).  Therefore Eq. \eqref{eq:oractleWCR} is equivalent to
\begin{equation}\label{eq:oractleWCR2}
    \Pr_{V\sim \nu_G} \left[ \O(V)= \left|\bra{\x_0} V\ket{\inpsi}\right|^2 \right] >\alpha\ .
\end{equation}
In what follows we argue that oracle $\O$ can be used to compute the $\sharP$-hard probability in polynomial time. The argument presented bellow follows steps from worst-to-average-case reduction for permanents of Gaussian matrices from  \cite{Aaronson2013}, and its modification that involving rational interpolation from \cite{movassagh_quantum_2020}.  We consider a rational path interpolation   ${g}_\theta=g_0 F_\theta(g)$ between worst-case $g_0$ and =$g'=g_0 g$, where $g$ is chosen according to Haar measure on $G$. We will call $\O$ on $L$ distinct FLO circuits  $\Pi(g_{\theta_1}),\Pi(g_{\theta_2}),\ldots,\Pi(g_{\theta_L})$, where $\theta_i\in[1-\Delta,1]$, and the parameter $\Delta$ will be chosen later. We use evaluations $\lbrace\O(\Pi(g_{\theta_i}))\rbrace_{i=1}^L$ to efficiently reconstruct the rational function using Berlekamp-Welch algorithm for rational functions   $R(\theta)= \left|\bra{\x_0} \Pi(g_\theta)\ket{\inpsi}\right|^2$ (cf. Theorem \ref{th:BWrational}). If the reconstruction is successful, evaluation of $R(\theta)$ at $\theta=0$ gives us the $\sharP$-hard probability $R(0)= \left|\bra{\x_0} V_0\ket{\inpsi}\right|^2$ (we used here $V_0=\Pi(g_0)$).

To assess the success probability with which the above scheme evaluates $\left|\bra{\x_0} V_0\ket{\inpsi}\right|^2$ correctly we first bound the success probability with which oracle $\O$ computes the value of $p_{\x_0}(\Pi(g_\theta),\inpsi)$ correctly. Using variational characterization of TV distance and bounds from Eq.\eqref{eq:TVschemeBOUNDS} we obtain
\begin{equation}
\begin{split}
\Pr_{V\sim \nu_G} &\left[ \O(V)= \left|\bra{\x_0} V\ket{\inpsi}\right|^2 \right] \\ 
&-\Pr_{V\sim \nu^\theta_G} \left[ \O(V)= \left|\bra{\x_0} V\ket{\inpsi}\right|^2 \right] \leq C N^2 \Delta\ .
\end{split}
\end{equation}
Combining the above with \eqref{eq:oractleWCR2} we get
\begin{equation}
    \Pr_{V\sim \nu^\theta_G} \left[ \O(V)= \left|\bra{\x_0} V\ket{\inpsi}\right|^2 \right] \geq \alpha -  C N^2 \Delta\ .
\end{equation}
Or equivalently
\begin{equation}\label{eq:succBOUND}
    \Pr_{g\sim \mu_G} \left[ \O(\Pi(g_\theta))= \left|\bra{\x_0} \Pi(g_\theta)\ket{\inpsi}\right|^2 \right] \ge \alpha -  C N^2  \Delta\  .
\end{equation}
According to rational Berlekamp-Welch algorithm the number of evaluations $L$ of a rational function $R(\theta)$ that allows to reconstruct it despite heaving at most $t$ incorrect evaluations has to satisfy $L>d_1+d_2 +2 t$. Note that in the considered case $d_1 + d_2=\Theta(N^2)$  (cf. \eqref{eq:finalDEGREES}). The probability of having number of errors that exceeds the bound allowing for reconstruction of $R(\theta)$ can be estimated using Markov inequality applied for the random variable counting the number of invalid evaluations of the oracle
\begin{equation}
    t(g) = \left|\SET{\theta_i}{   \O(\Pi(g_{\theta_i}))\neq|\bra{\x_0}\Pi(g_{\theta_i})\ket{\inpsi}|^2,\ i\in[L]}\right|. 
\end{equation}
From the definition of $t$ and the inequality \eqref{eq:succBOUND} it follows that $\E_{g\sim \mu_G} t(g) \leq [1-\alpha + CN^2 \Delta]  L$. Using this estimate in Markov inequality (recall that by assumption  $\alpha>\frac{3}{4}+\delta$, for $\delta=\frac{1}{\poly{N}}$) we get
\begin{equation}
\begin{split}
    \Pr_{g\sim \mu_G}\left[ t(g)>\frac{L-d_1-d_2}{2}\right] &\leq \frac{[1-\alpha + CN^2 \Delta ] L }{\frac{L-d_1-d_2}{2}} \\
    &\leq \frac{\frac{1}{4}-\delta + CN^2\Delta }{\frac{1}{2}-\frac{d_1+d_2}{2L}}\ .
\end{split}
\end{equation}
By choosing $\Delta$ and $L$ such that $C N^2 \Delta \leq \frac{\delta}{2}$ and $\frac{d_1+d_2}{2L} \leq  \frac{\delta}{4}$ (this can be done with $\Delta=\frac{1}{\poly{N}}$ and $L=\poly{N}$ because $d_1+d_2=\Theta(N^2)$), we obtain 
\begin{equation}
    \Pr_{g\sim \mu_G}\left[ t(g)>\frac{L-d_1-d_2}{2}\right]  \leq \frac{\frac{1}{4} -\frac{\delta}{2}}{\frac{1}{2} -\frac{\delta}{4}} \leq \frac{1}{2}- \frac{\delta}{4}\ .
\end{equation}
The leftmost part of the above inequality is the probability of failure of our protocol. Therefore, since $\delta=\frac{1}{\poly{N}}$, we can repeat the procedure polynomially many  number of times, for different choices of $\Pi(g)$, compute $R{\Pi(g)}(0)$ each time, and output the majority vote. The probability of successfully computing the right result (i.e., $\left|\bra{\x_0} V_0\ket{\inpsi}\right|^2$) can be made exponentially close to $1$ in this way.   
\end{proof}

We proceed with proving the robust version of the above result. To this end we shift to polynomial interpolation because much more is known about its robustness to errors. To phrase our problem using polynomials we first note that the rational function $R_{g_0,g}=\left|\bra{\x_0} \Pi(g_\theta)\ket{\inpsi}\right|^2$, where $g_\theta =g_0 F_\theta(g)$ can be written as 
\begin{equation}\label{eq:Rdef}
    R_{g_0,g}(\theta)= \frac{D_{g_0,g}(\theta)}{Q_g(\theta)}\ ,
\end{equation}
where for both groups $D_{g_0,g},Q_g$ are real polynomials of degrees $D_{g_0,g}=d_1 =\Theta(N^2)$, $Q_g=d_2=\Theta(N^2)$ (cf Lemma \ref{lem:degreesOFRAT}). Moreover, the denominator $Q_g(\theta)$ can be computed efficiently in $N$ (see Eq. \eqref{eq:denominatorsCIRCUITS}) given a classical description of $g$. Hence, we have
\begin{lem} \label{lem:equivRAT}
 Let $R_{g_0,g}(\theta)$ be defined as in \eqref{eq:Rdef} (for fixed $g_0,g\in G$, where $G=\U(4N)$ or $G=\SO(8N)$). Then complexity of computation of $R_{g_0,g}(\theta)$ and $D_{g_0,g}(\theta)$ is equivalent up to $\Theta(N^2)$ overhead.
\end{lem}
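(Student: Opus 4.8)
The plan is to exploit the factorization $R_{g_0,g}(\theta)=D_{g_0,g}(\theta)/Q_g(\theta)$ from \eqref{eq:Rdef} together with the fact, established in Section~\ref{sec:cayley}, that the denominator $Q_g$ is cheap to produce. First I would recall that the coefficients of $Q_g(\theta)$ depend only on the generalized eigenvalues of $g$: as explained in the remark preceding Lemma~\ref{lem:DenominatorBOunds} and in Eq.~\eqref{eq:denominatorsCIRCUITS}, one diagonalizes $g$ — a matrix of dimension $4N$ (for $\U(4N)$) or $8N$ (for $\SO(8N)$) — extracts the eigenphases $\phi_j$, forms the algebraic quantities $\tan(\phi_j/2)$, and expands the product. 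Since $\deg Q_g=d_2=\Theta(N^2)$ by Lemma~\ref{lem:degreesOFRAT}, this preprocessing runs in $\poly{N}$ time and produces a polynomial with $\Theta(N^2)$ coefficients; thereafter $Q_g$ can be evaluated at any point $\theta$ in $\Theta(N^2)$ arithmetic operations. I would also invoke the bound $|Q_g(\theta)|\ge 1$ from Lemma~\ref{lem:DenominatorBOunds} (equivalently, that $Q_g$ does not vanish on the interpolation interval), which guarantees that division by $Q_g(\theta)$ is well-defined and numerically benign.

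With this in hand, the two reductions are immediate. If an algorithm computes $D_{g_0,g}(\theta)$, then to obtain $R_{g_0,g}(\theta)$ one evaluates $Q_g(\theta)$ by the procedure above and returns the quotient $D_{g_0,g}(\theta)/Q_g(\theta)$. Conversely, if an algorithm computes $R_{g_0,g}(\theta)$, then one recovers $D_{g_0,g}(\theta)=R_{g_0,g}(\theta)\,Q_g(\theta)$ by a single multiplication. In either direction the new algorithm invokes the assumed one exactly once and performs $\Theta(N^2)$ extra field operations — one evaluation of a degree-$\Theta(N^2)$ polynomial plus one product or quotient — which is precisely the asserted $\Theta(N^2)$ overhead.

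I do not anticipate a genuine obstacle: the statement is essentially a bookkeeping remark whose content is the efficient computability of $Q_g$. The only point deserving care is the identification of the polynomial $Q_g$ appearing in \eqref{eq:Rdef} — the denominator of the $n$-body outcome probability — with a fixed power of the group-level denominator analyzed in Section~\ref{sec:cayley}, and the verification that its degree is $\Theta(N^2)$ rather than $\Theta(N)$; both facts are supplied by Lemma~\ref{lem:degreesOFRAT}. Once that identification is made explicit, the equivalence follows exactly as above.
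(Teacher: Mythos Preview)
Your proposal is correct and mirrors the paper's own reasoning: the paper does not give a separate proof environment for this lemma but simply states it as an immediate consequence (``Hence, we have'') of the efficient computability of the denominator $Q_g(\theta)$ noted just before \eqref{eq:Rdef}. Your write-up is a faithful, slightly more detailed unpacking of that same one-line observation, with the additional (and helpful) remark that $|Q_g(\theta)|\ge 1$ rules out division-by-zero issues.
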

The above allows us to use, following \cite{movassagh_quantum_2020,bouland_complexity_2019}, techniques of polynomial interpolation in order to estimate the hard probability $R_{g_0,g}(0)$. We will now state two results form this domain that will be used later in the robust version of the worst-to-average-case reduction given above.

\begin{lem}[Paturi lemma \cite{Paturi1992}]\label{lem:PaturiLem}
Let $P(\theta)$ be a polynomial of degree $k$ and suppose that for $|P(\theta)|\leq \epsilon$ for $\theta\in[1-\Delta,1]$, $\Delta\in(0,1]$. Then 
\begin{equation}\label{eq:PaturiLEMMA}
    P(0)\leq \epsilon\,\exp(4k(1+\Delta^{-1}))\ .
\end{equation}
\end{lem}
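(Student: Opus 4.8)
The plan is to reduce the statement to the classical extremal problem for Chebyshev polynomials. An affine change of variables transports the interval $[1-\Delta,1]$, on which $P$ is controlled, to the standard interval $[-1,1]$, while sending the evaluation point $\theta=0$ to a real point of modulus at least $1$. The fact that the Chebyshev polynomial $T_k$ has the fastest possible growth outside $[-1,1]$ among degree-$k$ polynomials bounded there then yields the bound, and a short estimate matches the constant $\exp(4k(1+\Delta^{-1}))$. Throughout I write $|P(0)|$ for what the statement abbreviates to $P(0)$.

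First I would introduce the affine map $\ell(\theta)=1+\tfrac{2(\theta-1)}{\Delta}$, which satisfies $\ell(1)=1$, $\ell(1-\Delta)=-1$ and $\ell(0)=1-\tfrac{2}{\Delta}$. Since $\Delta\in(0,1]$, the image $y_0:=\ell(0)$ obeys $|y_0|=\tfrac{2}{\Delta}-1\ge 1$, so $0$ is mapped outside $[-1,1]$. Setting $\tilde{P}:=P\circ\ell^{-1}$ gives a real polynomial of the same degree $k$ with $\sup_{x\in[-1,1]}|\tilde{P}(x)|\le\epsilon$ and $|P(0)|=|\tilde{P}(y_0)|$.

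Second I would invoke the Chebyshev extremal inequality: if $q$ is a real polynomial of degree at most $k$ with $|q|\le 1$ on $[-1,1]$, then $|q(y)|\le|T_k(y)|$ for every real $y$ with $|y|\ge 1$, where $T_k$ is the $k$-th Chebyshev polynomial of the first kind. This is classical and follows from the equioscillation property of $T_k$ by a sign-counting argument: if one had $|q(y_0)|>|T_k(y_0)|$, then $cq-T_k$ for a suitable constant with $|c|<1$ would alternate in sign across the $k+1$ extrema of $T_k$ in $[-1,1]$ and additionally vanish at $y_0$, producing more than $k$ zeros of a polynomial of degree at most $k$. Applying this to $q=\tilde{P}/\epsilon$ yields $|P(0)|\le\epsilon\,|T_k(y_0)|$. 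Using the standard growth bound $|T_k(y)|=\cosh\!\big(k\operatorname{arccosh}|y|\big)\le\big(|y|+\sqrt{y^2-1}\big)^k\le(2|y|)^k$ valid for $|y|\ge 1$, together with $|y_0|\le 2/\Delta$, this gives $|P(0)|\le\epsilon\,(4/\Delta)^k$.

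Finally I would absorb this power into the claimed exponential: $(4/\Delta)^k=\exp\!\big(k\ln(4/\Delta)\big)$, and $\ln(4/\Delta)\le 4+4/\Delta$ for all $\Delta\in(0,1]$ — the function $4+4/\Delta-\ln(4/\Delta)$ has derivative $-4/\Delta^2+1/\Delta<0$ on $(0,1]$ and equals $8-\ln 4>0$ at $\Delta=1$ — whence $|P(0)|\le\epsilon\exp\!\big(4k(1+\Delta^{-1})\big)$, as desired. I do not expect any genuine obstacle here: the points requiring care are getting the orientation of $\ell$ right so that $\theta=0$ lands outside $[-1,1]$, and the elementary constant-chasing at the end; the only non-trivial ingredient is the Chebyshev extremal inequality, which I would either cite directly or reprove in the two lines indicated above.
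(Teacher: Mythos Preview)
Your proof is correct. The affine change of variables, the invocation of the Chebyshev extremal inequality (with the clean sign-counting justification you sketch), and the final constant-chasing all check out. One tiny remark: you write ``a real polynomial'' for $\tilde P$, but nothing in the statement forces $P$ to be real; this is harmless, since for complex $P$ one takes $r(x)=\mathrm{Re}(\alpha\,\tilde P(x))$ with $\alpha$ a suitable phase and applies the real Chebyshev bound to $r$, recovering the same estimate with no extra constant.

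As for comparison with the paper: the paper does not give a self-contained proof at all. It simply cites the standard Paturi lemma from \cite{Paturi1992} (stated there on a symmetric interval $[-\Delta,\Delta]$ with constant $\exp(2k(1+\Delta^{-1}))$) and observes that the affine map $\theta\mapsto -\tfrac{2}{2-\Delta}\theta+1$ converts that form into the one needed here. Your approach shares the same first move --- an affine transport of the control interval --- but then goes one level deeper, reducing directly to the Chebyshev extremal problem rather than to Paturi's black-boxed result. Since Paturi's bound is itself proved via Chebyshev polynomials, what you have effectively done is unpack the citation; this buys self-containment and a slightly sharper intermediate bound $|P(0)|\le\epsilon(4/\Delta)^k$, at the cost of a few more lines.
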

\begin{rem}
The above lemma is usually presented in a slightly different form in which the assumption $|P(\theta)|\leq\epsilon$ for $\theta\in[-\Delta,\Delta]$ ($\Delta>0$) is used to establish  $P(0)\leq \epsilon\,\exp(2k(1+\Delta^{-1}))$. Our result can be deduced from the former via simple affine change of variables $\theta\mapsto\theta'=- \frac{2}{2-\Delta}\theta + 1$. 
\end{rem}
\bigskip
\begin{thm}[Values of polynomials bounded at equally spaced points \cite{PolyGrowth}]\label{thm:poly-eqspace}
Let $\theta_i$, $i=1,\ldots,L$ be a collection of $L$ equally spaced points in the interval $[1-\Delta,1]$, $\Delta\in(0,1)$. Let $P(\theta)$ be a polynomial of degree $k$. Assume that for every $i$, $|P(\theta_i)|\leq \epsilon$. Then there exist absolute constants $a,b>0$ such that
\begin{equation}\label{eq:valuesBOUND}
    \max_{\theta\in[1-\Delta,1]} |P(\theta)|\leq \epsilon\,  \exp\left(b\frac{k^2}{L} + a\right)\ .
\end{equation}
\end{thm}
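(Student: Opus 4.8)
The plan is to reduce the statement to a question about how much a low-degree polynomial can grow between equally spaced sample points, and then to invoke the Coppersmith--Rivlin-type estimate for equispaced nodes.

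First I would normalize: the affine substitution $\theta \mapsto y = 1 + \tfrac{2}{\Delta}(\theta-1)$ sends $[1-\Delta,1]$ to $[-1,1]$ and the $L$ sample points to $y_i = -1 + \tfrac{2(i-1)}{L-1}$, so the gap becomes $h = 2/(L-1)$; it turns $P$ into a degree-$k$ polynomial (still called $P$) with $|P(y_i)| \le \epsilon$, and by homogeneity in $\epsilon$ I may set $\epsilon = 1$. Write $M = \max_{[-1,1]}|P|$; the goal is $M \le \exp(bk^2/L + a)$. The regime $L - 1 \ge 2k^2$ is immediate from Markov's inequality: if $M = |P(y^\ast)|$ and $y_j$ is the node closest to $y^\ast$, then $|y^\ast - y_j| \le h/2$, so $M - 1 \le |P(y^\ast) - P(y_j)| \le \tfrac{h}{2}\,\|P'\|_{[-1,1]} \le \tfrac{h}{2}\,k^2 M$, whence $M = O(1)$. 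Hence the whole content lies in the range $k \lesssim L \lesssim k^2$.

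For the general bound I would use $L \ge k+1$ (needed, since otherwise a degree-$k$ polynomial vanishing at all $L$ nodes is unconstrained), so that a degree-$k$ polynomial is determined by its node values. Fix the maximizer $x_0 \in [-1,1]$. The evaluation functional $P \mapsto P(x_0)$, restricted to degree-$\le k$ polynomials, is a linear combination of the node-evaluation functionals, i.e.\ there are coefficients $c_i = c_i(x_0)$ with $P(x_0) = \sum_{i=1}^L c_i\,P(y_i)$ for every such $P$; therefore $M = |P(x_0)| \le \|c\|_1 \max_i |P(y_i)| \le \|c\|_1\,\epsilon$. It then remains to exhibit one coefficient vector $c$ with $\|c\|_1 \le \exp(bk^2/L + a)$.

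This last step is the main obstacle: it is exactly the discrete-versus-continuous norm comparison for equispaced nodes, and naive choices all fail (the full Lagrange interpolant gives $\|c\|_1$ of the order of the equispaced Lebesgue constant $\sim 2^L$, and Newton--Cotes quadrature weights blow up similarly). The route I would take is that of Coppersmith and Rivlin \cite{PolyGrowth}: compare with the Chebyshev polynomial $T_k$ transplanted to $[-1,1]$, whose extrema cluster near $\pm 1$ at scale $\Theta(1/k^2)$, so that as soon as $h \gtrsim 1/k^2$ the equispaced samples undersample the boundary oscillations; solving the associated Chebyshev-type extremal problem (bang-bang / equioscillation concentrated on a block of nodes next to an endpoint) simultaneously produces the extremal polynomial realising growth $e^{\Theta(k^2/L)}$ and the matching upper bound on $\min_c\|c\|_1$. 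Undoing the affine map and re-inserting $\epsilon$ yields the claimed inequality; the only genuinely delicate point is the quantitative control of that extremal problem near the endpoints of the interval, everything else being bookkeeping, and for our application $L$ is polynomially large in $k$ so the absolute constants $a,b$ are harmless.
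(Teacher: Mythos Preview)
The paper does not prove this theorem: it is quoted verbatim as the Coppersmith--Rivlin result \cite{PolyGrowth}, with only a remark that more recent refinements exist but are not needed. Your proposal is therefore not in competition with any argument in the paper; it is a sketch of how one would go about establishing the cited result itself. That sketch is sound as far as it goes---the affine normalization, the easy regime $L-1\ge 2k^2$ via Markov's inequality, and the reduction of the hard regime to bounding the $\ell_1$ norm of a representing coefficient vector are all correct---but you yourself acknowledge that the substantive step (the Chebyshev-type extremal analysis near the endpoints) is precisely the content of \cite{PolyGrowth}, so in the end you are citing the same source the paper does.
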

\begin{rem}
The problem of bounding values of polynomials that are bounded on uniformly spaced interval has a long history and there were more recent developments in this topic (see for example \cite{PolyGrowth2}). However for our purposes the above result by Coppersmith and Rivlin is sufficient.
\end{rem}

Before we formulate our result and prove our main theorem we need one more technical ingredient. Informally speaking, since $Q_g(\theta)$ appears in the denominator of \eqref{eq:Rdef} we need ensure that values of $Q_g(\theta)$ are not too large for typical values of $g$. This is achieved by combining Lemma \ref{lem:DenominatorBOunds} and explicit formulas for $Q_g(\theta)$ given in \eqref{eq:denominatorsCIRCUITS} we obtain 

\begin{corr}\label{cor:DenominatorCircuitBOunds}
Let $g\in G$ and let $Q_g(\theta)$ be the polynomial in defined in \eqref{eq:denominatorsCIRCUITS} for $G=\U(d)$  in $G=\SO(2d)$. Assume that $n=2N$ , $d=4N$. Let now $\tilde{\Delta}>0$. We then have the following inequalities  
\begin{align} 
 \Pr_{g\sim\mu_{\U(d)}} \left(   Q_g(\theta) \le \left[1+\left(\frac{\theta\pi}{\tilde{\Delta}}\right)^2 \right]^{16 N^2}\ \right)\geq 1- 4N\frac{\tilde{\Delta}}{\pi} \ ,  \\
\Pr_{g\sim\mu_{\SO(2d)}} \left(    Q_g(\theta) \le \left[1+\left(\frac{\theta\pi}{\tilde{\Delta}}\right)^{2} \right]^{32 N^2} \right) \geq  1- 4N\frac{\tilde{\Delta}}{\pi} \ . 
\end{align}
\end{corr}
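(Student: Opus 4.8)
The plan is to obtain the corollary as an immediate consequence of Lemma~\ref{lem:DenominatorBOunds} combined with the explicit form of the circuit-level denominator recorded in \eqref{eq:denominatorsCIRCUITS}. The starting observation, which is the content of Lemma~\ref{lem:degreesOFRAT} and the discussion around \eqref{eq:RATpoly_passive}--\eqref{eq:RATpoly_active}, is that along the Cayley path $g_\theta=g_0F_\theta(g)$ the amplitude $\bra{\x_0}\Pi(g_\theta)\ket{\inpsi}$ is a rational function of $\theta$ whose denominator is a fixed power of the group-level polynomial $\Q_g(\theta)$ of \eqref{eq:Q-U(d)}, \eqref{eq:Q-SO(2d)}: the exponent is dictated by the particle number $n=2N$ in the passive case (the amplitude being a linear combination of $n\times n$ minors of $g_\theta$) and by the analogous Pfaffian degree in the active case, and it is then doubled because $p_{\x_0}(V,\inpsi)=|\bra{\x_0}\Pi(g_\theta)\ket{\inpsi}|^2$ multiplies the denominator once more by its complex conjugate. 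The upshot is that, with $d=4N$, the circuit denominator $Q_g(\theta)$ equals $\bigl(|\Q_g(\theta)|^2\bigr)^{n}$, hence is a product over the $d$ generalized eigenphases $\phi_j$ of $g$ of factors of the shape $1+\theta^2\tan^2(\phi_j/2)$ raised to a common multiplicity; its total degree in $\theta$ is $16N^2$ for passive FLO and $32N^2$ for active FLO, in agreement with \eqref{eq:finalDEGREES}.

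With this structural fact in hand I would invoke the very event used in the proof of Lemma~\ref{lem:DenominatorBOunds}. Because $g\sim\mu_G$, each eigenphase $\phi_j$, $j\in[d]$, is uniform on $[-\pi,\pi]$, so by a union bound the event
\begin{equation}
    \mathcal{E}\;=\;\Bigl\{\,|\tan(\phi_j/2)|\le \pi/\tilde{\Delta}\ \text{for all}\ j\in[d]\,\Bigr\}
\end{equation}
has probability at least $1-d\tilde{\Delta}/\pi=1-4N\tilde{\Delta}/\pi$, both for $G=\U(4N)$ and for $G=\SO(8N)$ (in the latter case the $\SO(2d)$ parameter is $d=4N$). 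On $\mathcal{E}$ every factor $1+\theta^2\tan^2(\phi_j/2)$ appearing in $Q_g(\theta)$ is at most $1+(\theta\pi/\tilde{\Delta})^2$; multiplying these bounds over all factors, with the multiplicities read off from \eqref{eq:denominatorsCIRCUITS}, yields $Q_g(\theta)\le[1+(\theta\pi/\tilde{\Delta})^2]^{m}$ for the corresponding exponent $m$. Since $1+(\theta\pi/\tilde{\Delta})^2\ge 1$, the exponent may be freely enlarged, so in particular the right-hand side is at most $[1+(\theta\pi/\tilde{\Delta})^2]^{16N^2}$ for passive FLO and at most $[1+(\theta\pi/\tilde{\Delta})^2]^{32N^2}$ for active FLO. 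These are precisely the two claimed inequalities, and they hold on $\mathcal{E}$, i.e.\ with probability at least $1-4N\tilde{\Delta}/\pi$.

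There is essentially no obstacle here beyond careful bookkeeping of exponents. The one point to verify is that in the explicit formula \eqref{eq:denominatorsCIRCUITS} every eigenphase $\phi_j$ enters $Q_g(\theta)$ with the same multiplicity and that the total $\theta$-degree is the one recorded in \eqref{eq:finalDEGREES}, so that replacing the true exponent $m$ by the round number $16N^2$ (resp.\ $32N^2$) is harmless; this is immediate from the product structure of \eqref{eq:Q-U(d)}--\eqref{eq:Q-SO(2d)}. No probabilistic input is needed beyond the uniformity of the Haar eigenphases already exploited in Lemma~\ref{lem:DenominatorBOunds}, and the complementary lower bound $Q_g(\theta)\ge 1$ for all $g$ follows from the same power argument together with the bound $|\Q_g(\theta)|^2\ge 1$ of that lemma, should it be needed.
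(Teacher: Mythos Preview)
Your proposal is correct and follows the same approach as the paper, which simply states that the corollary is obtained by combining Lemma~\ref{lem:DenominatorBOunds} with the explicit formulas \eqref{eq:denominatorsCIRCUITS}. You have correctly filled in the details: on the event $\mathcal{E}$ from the proof of Lemma~\ref{lem:DenominatorBOunds} each factor $1+\theta^2\tan^2(\phi_j/2)$ is bounded by $1+(\theta\pi/\tilde\Delta)^2$, and the product structure of \eqref{eq:denominatorsCIRCUITS} together with the observation that the exponent may be freely enlarged (since the base is $\ge 1$) gives exactly the stated bounds with the stated probability $1-4N\tilde\Delta/\pi$.
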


Combining all technical ingredients  stated above we are in the position to prove our main result.

\begin{thm}[Average-case $\sharP$-hardness of approximation outcome probabilities of FLO circuits]\label{thm:average-hardness-approx-prob}
Let $V_0$ be a FLO circuit such that computing $p_{\x_0}(V_0,\inpsi)=\left|\bra{\x_0}V_0 \ket{\inpsi}\right|^2$ is $\sharP$-Hard, where $V_0$ is element of either passive or active FLO circuits and the output Fock state $\ket{\x_0}$ belongs to the suitable Hilbert space $\Hpas=\bigwedge^{2N}(\C^{4N})$ for passive FLO and $\Hact=\Hfock^+(\C^{4N})$, respectively.

Let $\epsilon=\exp(-\Theta(N^6))$. Then it is $\sharP$-Hard to compute  $p_{\x_0}(V,\inpsi)=\left|\bra{\x_0}V \ket{\inpsi}\right|^2$ to accuracy $\epsilon$ with probability $\alpha>1-\delta$, $\delta=o(N^{-2})$, over the the uniform distribution of circuits: $V\sim \nupas$ for passive FLO and $V\sim \nuact$ for active FLO.
\end{thm}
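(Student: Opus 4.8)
The plan is to follow the worst-to-average-case reduction in the proof of Theorem~\ref{thm:average-hardness-exact-prob} almost verbatim, but to replace the exact rational Berlekamp--Welch step (too fragile for a noisy oracle) by noise-robust polynomial interpolation in the spirit of Movassagh~\cite{movassagh_quantum_2020}, carried out at the level of the group. First I would fix the symmetry group $G$ ($=\U(4N)$ for passive, $=\SO(8N)$ for active FLO; the argument is identical up to the degrees in \eqref{eq:finalDEGREES}) and suppose, towards a reduction, that $\O$ is an oracle (say in $\mathrm{BPP}$) that on input a description of an FLO circuit $V$ returns a number within $\epsilon$ of $p_{\x_0}(V,\inpsi)$ with probability at least $1-\delta$ over $V\sim\nu_G$, with $\nu_G=\nupas$ or $\nuact$. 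Since the worst-case circuit is $V_0=\Pi(g_0)$ for some $g_0\in G$, I would draw $g\sim\mu_G$, form the Cayley path $g_\theta=g_0F_\theta(g)$, and pick $L=k+1$ equally spaced nodes $\theta_1,\dots,\theta_L\in[1-\Delta,1]$, where $k=\Theta(N^2)$ is the degree from Lemma~\ref{lem:degreesOFRAT} (namely $16N^2$ resp.\ $32N^2$) and $\Delta=1/\poly{N}$ is to be fixed. By Lemma~\ref{lem:degreesOFRAT} the outcome probability $R(\theta)\coloneqq p_{\x_0}(\Pi(g_\theta),\inpsi)=D_{g_0,g}(\theta)/Q_g(\theta)$ is a ratio of real polynomials of degree $k$, with $Q_g(0)=1$ as is evident from the explicit forms \eqref{eq:denominatorsCIRCUITS}; hence $R(0)=D_{g_0,g}(0)=p_{\x_0}(V_0,\inpsi)$ is exactly the $\sharP$-hard quantity, and $Q_g$ is efficiently computable from a classical description of $g$ (cf.\ the discussion preceding Lemma~\ref{lem:DenominatorBOunds}).

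Next I would show that, with constant probability over $g$, the oracle is accurate along the whole path. For each $i$, since $\Pi(g_{\theta_i})\sim\nu_G^{\theta_i}$, the variational characterization of total-variation distance together with \eqref{eq:TVschemeBOUNDS} gives that $\O(\Pi(g_{\theta_i}))$ fails to $\epsilon$-approximate $R(\theta_i)$ with probability at most $\delta+8N^2\Delta$ over $g\sim\mu_G$; a union bound over the $L$ nodes, combined with Corollary~\ref{cor:DenominatorCircuitBOunds} applied to the event $\bigl\{|Q_g(\theta_i)|\le B^{\,k}\ \text{for all }i\bigr\}$ with $B=1+(\pi/\tilde\Delta)^2$ and $\tilde\Delta=1/\poly{N}$, makes the probability that every oracle call is $\epsilon$-accurate \emph{and} the denominator bound holds at least $1-L(\delta+8N^2\Delta)-4N\tilde\Delta/\pi$. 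Conditioning on this good event, I would set $\tilde d_i\coloneqq \tilde r_i\,Q_g(\theta_i)$ for the oracle outputs $\tilde r_i$, so that $|\tilde d_i-D_{g_0,g}(\theta_i)|\le\epsilon\,B^{\,k}$ (Lemma~\ref{lem:equivRAT}), let $\hat D$ be the unique degree-$\le k$ interpolant of $\{(\theta_i,\tilde d_i)\}_{i=1}^{L}$, and note that $E\coloneqq\hat D-D_{g_0,g}$ is a degree-$\le k$ polynomial with $|E(\theta_i)|\le\epsilon B^{\,k}$ at the $L$ equally spaced nodes. Theorem~\ref{thm:poly-eqspace} then bounds $\max_{\theta\in[1-\Delta,1]}|E(\theta)|\le\epsilon B^{\,k}\exp(bk^2/L+a)$, and Paturi's Lemma~\ref{lem:PaturiLem} upgrades this to
\begin{equation*}
|E(0)|\ \le\ \epsilon\,B^{\,k}\,\exp\!\bigl(bk^2/L+a\bigr)\,\exp\!\bigl(4k(1+\Delta^{-1})\bigr),
\end{equation*}
so that $\hat D(0)=R(0)+E(0)$ reproduces the $\sharP$-hard probability to additive error $|E(0)|$.

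Then I would choose parameters. With $k=\Theta(N^2)$ and $L=k+1$ one has $bk^2/L+a=\Theta(N^2)$, and with $\tilde\Delta=\Theta(N^{-1})$ one has $B=\Theta(N^2)$, so $B^{\,k}\exp(bk^2/L+a)=\exp(\Theta(N^2\log N))$. Taking $\Delta=\Theta(N^{-4})$ keeps the Step-2 failure probability $L(\delta+8N^2\Delta)+4N\tilde\Delta/\pi$ at a fixed constant strictly below $1/2$ (since $\delta=o(N^{-2})$ and $L=\Theta(N^2)$ force $L\delta=o(1)$, while $8LN^2\Delta$ and $4N\tilde\Delta/\pi$ can each be tuned below a small constant), and makes $\exp(4k(1+\Delta^{-1}))=\exp(\Theta(N^6))$ the dominant factor. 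Hence if $\epsilon=\exp(-\Theta(N^6))$ with a sufficiently small implied constant, $|E(0)|$ drops below the inverse-exponential additive accuracy at which computing $p_{\x_0}(V_0,\inpsi)$ is still $\sharP$-hard — this robustness of the worst-case statement follows from the reduction to the permanent underlying Remark~\ref{rem:exact_sharPhard}, a $0/1$ permanent being an integer recoverable from a sufficiently precise approximation of the probability. Finally, amplifying the constant success probability of Step~2 to $1-2^{-\poly{N}}$ by repeating the whole reduction over independent draws of $g$ and taking a median, and absorbing the standard errors from approximate Haar sampling of $g$ and finite-precision arithmetic into the same budget, yields a $\mathrm{BPP}^{\O}$ algorithm for a $\sharP$-hard problem, which is the claimed average-case hardness.

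I expect the main obstacle to be the joint choice of $\Delta$, $\tilde\Delta$ and $L$: the deformation bound \eqref{eq:TVschemeBOUNDS} costs $\sim N^2\Delta$ per path point and must survive a union bound over $L=\Theta(N^2)$ nodes, forcing $\Delta$ down to $\Theta(N^{-4})$, whereupon Paturi's lemma charges $\exp(\Theta(k/\Delta))=\exp(\Theta(N^6))$, and it is precisely this tension that pins the error tolerance at $\exp(-\Theta(N^6))$. The degree $k=\Theta(N^2)$ is itself inherited from the fermionic representation being only polynomially large (Lemma~\ref{lem:degreesOFRAT}), and one must also keep the Coppersmith--Rivlin factor $\exp(\Theta(k^2/L))$ subdominant; getting any of these scalings wrong shifts the exponent off $N^6$. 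Secondary technical points are verifying $Q_g(0)=1$ so that $R(0)$ is recovered without a spurious normalization, and checking that the denominator bound of Corollary~\ref{cor:DenominatorCircuitBOunds} holds simultaneously at all $L$ nodes.
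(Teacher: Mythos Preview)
Your proposal is correct and follows essentially the same route as the paper's proof: Cayley-path interpolation, conversion to the numerator polynomial $D_{g_0,g}$ via the efficiently computable $Q_g$, union bound over $L=\Theta(N^2)$ equally spaced nodes combined with the TV bound \eqref{eq:TVschemeBOUNDS} and the denominator bound of Corollary~\ref{cor:DenominatorCircuitBOunds}, and then Coppersmith--Rivlin (Theorem~\ref{thm:poly-eqspace}) followed by Paturi (Lemma~\ref{lem:PaturiLem}) to extrapolate to $\theta=0$, with the same choice $\Delta=\Theta(N^{-4})$ forcing the $\exp(-\Theta(N^6))$ tolerance. The only cosmetic differences are that you add the denominator failure probability $4N\tilde\Delta/\pi$ outside the $L$-fold union bound (which is sharper, since the eigenphase event in the proof of Lemma~\ref{lem:DenominatorBOunds} depends only on $g$ and thus holds for all $\theta$ simultaneously) and accordingly take $\tilde\Delta=\Theta(N^{-1})$ rather than the paper's $\Theta(N^{-3})$; either way the $B^k=\exp(\Theta(N^2\log N))$ contribution is subdominant to the Paturi factor and the final exponent is unchanged.
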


\begin{rem} Using the same arguments as in remark below Theorem \ref{thm:average-hardness-exact-prob} we can transform $\x_0$ above into any other output $\x$ satisfying $|\x|=2N$ (for passive FLO) and $|\x|$ even (for active FLO).  
\end{rem}

\begin{proof}
We first fix the symmetry group $G$ describing a class of FLO circuits. The uniform distribution of FLO circuits $\nu_G$  is obtained by by setting $V=\Pi(g)$, where $g\sim \mu_G$  (recall that $\nu_G=\nupas$ for $G=\U(4N)$ and $\nu_G=\nuact$ for $G=\SO(8N)$). 
We start with an oracle $\O$ that given a classical description of $V=\Pi(g)$, is able to approximately compute $p_{\x_0}(V,\inpsi)=\left|\bra{\x_0}\Pi(g) \ket{\inpsi}\right|^2$,
\begin{equation}\label{eq:oractleAPPROX1}
    \Pr_{g\sim \mu_G} \left[\left| \O(\Pi(g))- |\bra{\x_0}\Pi(g) \ket{\inpsi}|^2\right|\leq \epsilon \right]  >1-\delta\ .
\end{equation}
Equivalently, we have 
\begin{equation}\label{eq:oractleAPPROX2}
    \Pr_{V\sim \nu_G} \left[\left| \O(V)- |\bra{\x_0}V \ket{\inpsi}|^2\right|\leq \epsilon \right]  >1-\delta\ .
\end{equation}
For a generic Haar random $g\in G$ we again consider a rational path $g_\theta=g_0 F_\theta(g)$ between $g_0g$ and $g_0$, where $g_0$ is an element of the group corresponding to the worst-case circuit. Recall that by $\mu^\theta_G$  we denoted the distribution of $g_\theta$ for $g\sim\mu_G$.  We will now query oracle $\O$ multiple times on $g_{\theta_i}$, where $\theta_i$ are $L$ equally distributed points in the interval $[1-\Delta,1]$, for $\Delta>0$ to be set latter. 
By using the variational characterization of TV distance and Eq.~\eqref{eq:TVschemeBOUNDS}, we obtain that for every  $\theta_i\in[1-\Delta,1]$
\begin{equation}\label{eq:oractleAPPROX3}
\begin{split}
    \Pr_{g\sim \mu_G} &\left[\left| \O(\Pi(g_{\theta_i}))- |\bra{\x_0}\Pi(g_{\theta_i}) \ket{\inpsi}|^2\right|\leq \epsilon \right]  \\
    &>1-\delta - 8\Delta N^2\ ,
\end{split}
\end{equation}
 Let now $D_{g_0,g}(\theta)$ be a polynomial of degree $\deg(D_{g_0,g})=\Theta(N^2)$ that we defined \eqref{eq:Rdef}. Recall that the denominator of $R_{g_0,g}(\theta)$, $Q_g(\theta)$ can be computed efficiently (cf. Lemma~\ref{lem:equivRAT}). Therefore we can use $\O$ to construct an oracle $\tilde{\O}$ that computes approximations of values of polynomial $D_{g_0,g}$ at point $\theta_i$ with potentially high probability over the choice of $g$
\begin{equation}
\begin{split}
    \Pr_{g\sim \mu_G} &\left[| \tilde{\O}(\Pi(g_{\theta_i}))- D_{g_0,g}(\theta_i) | \leq \epsilon Q_g(\theta_i) \right] \\
    &>1-\delta - 8\Delta N^2\ ,
\end{split}
\end{equation}
We now use Corollary \ref{cor:DenominatorCircuitBOunds} to bound $Q_g(\theta)$ in the above expression: 
\begin{equation}
     \Pr_{g\sim\mu_G} \left[   Q_g(\theta) \le \exp\left( A \log\left(\frac{1}{\tilde{\Delta}}\right) N^2  \right) \right] \geq 1- 4N\frac{\tilde{\Delta}}{\pi} \label{eq:denominatroUPPERboundG}\ ,
\end{equation}
where we assumed  $\tilde{\Delta}\in(0,1)$ and $A$ a positive numerical constant mildly depending on the group $G$. Using  the bound $\Pr(X\cap Y) \geq \Pr(X)+\Pr(Y)-1$ we obtain
\begin{equation}\label{eq:oractleAPPROX4}
\begin{split}
    \Pr_{g\sim \mu_G} &\left[| \tilde{\O}(\Pi(g_{\theta_i})){-} D_{g_0,g}(\theta_i)|\leq \epsilon\, \exp\left( A \log\left(\frac{1}{\tilde{\Delta}}\right) N^2  \right) \right] \\ &>1-\delta - 8\Delta N^2\ - 4N\frac{\tilde{\Delta}}{\pi} .
\end{split}
\end{equation}

We finally use union bound lower to bound the probability that $\tilde{\O}$ is successful \emph{for all} $L$ equally spaced $\theta_i$ in $[1-\Delta,1]$:
\begin{align}
     \Pr_{g\sim \mu_G} &\left[\forall \theta_i \ | \tilde{\O}(\Pi(g_{\theta_i})){-} D_{g_0,g}(\theta_i)|\leq \epsilon\, \exp( A \log(\tfrac{1}{\tilde{\Delta}}) N^2  ) \right]
   \nonumber \\ &>1-L(\delta + 8\Delta N^2\ + 4N\frac{\tilde{\Delta}}{\pi}) .
    \label{eq:oractleAPPROX5}
\end{align}
If $L\approx \deg(D_{g_0,g})=\Theta(N^2)$ the above evaluations of $\tilde{\O}$ can be used to recover polynomial $\tilde{P}_{g_0,g}$ passing through points $(\theta_i, \tilde{\O}(\Pi(g_{\theta_i})))$ and having identical degree to $D_{g_0,g}$. By \eqref{eq:oractleAPPROX4} and results of Coppersmith and Rivlin stated in \cite{PolyGrowth} we know that (note that we set $L\approx\deg(D_{g_0,g})=\Theta(N^2)$)
\begin{equation}
\begin{split}
    \max_{\theta\in[1-\Delta, 1} &\left|\tilde{P}_{g_0,g}(\theta) - D_{g_0,g}(\theta)\right| \\
    &\leq \epsilon\, \exp\left( A \log\left(\frac{1}{\tilde{\Delta}}\right) N^2\right)  \exp(\Theta(N^2)) \\
    &= \ep\, \exp\left(\Theta(N^2) \log\left(\frac{1}{\tilde{\Delta}}\right)  \right)\ .
\end{split}
\end{equation}
Recall that by assumption and definition of Cayley path  $D_{g_0,g}(0)$ encodes a (rescaled) $\sharP$-hard probability amplitude. Using Paturi lemma for the polynomial $\tilde{D}_{g_0,g}(\theta) - D_{g_0,g}(\theta)$ we finally obtain 
\begin{equation}
\begin{split}
    \Big\vert&\tilde{D}_{g_0,g}(0) - D_{g_0,g}(0)\Big\vert \\
    &\leq \ep\, \exp\left(\Theta(N^2)\log\left(\frac{1}{\tilde{\Delta}}\right)+\Theta(N^2)(1{+}\Delta^{-1})\right). 
\end{split}
\end{equation}

To sum up, the initially assumed oracle $\O$ allows us to construct an efficient algorithm $\A$ that approximately computes $\sharP$-hard quantity  $D_{g_0,g}(0)=Q_g(\theta) \left|\bra{\x_0}\Pi(g_0) \ket{\inpsi}\right|^2$:

\begin{equation}\label{eq:oractleAPPROX6}
\begin{split}
    \Pr_{g\sim \mu_G} &\left[ \left| \A(\Pi(g))- D_{g_0,g}(0)|\right|\leq \tilde{\ep} \right]  \\
    &>1- B N^2 \left(\delta + 8\Delta N^2\ + 4N\frac{\tilde{\Delta}}{\pi} \right) ,
\end{split}
\end{equation}
where $\tilde{\ep}=\ep\, \exp\left(\Theta(N^2)\log\left(\frac{1}{\tilde{\Delta}}\right)+\Theta(N^2)(1+\Delta^{-1})\right)$, and $B>0$ is a numerical constant. Success probability of the protocol to exceeds $\frac{1}{2}$  with the following scaling
\begin{equation}\label{eq:DELTASset}
    \Delta=\Theta(N^{-4}) \ , \; \tilde{\Delta}= \Theta(N^{-3})\ .
\end{equation}

From the result of \cite{Dyer2000} we have $\sharP$ hardness guarantees up to constant multiplicative error. Since for $\sharP$-hard quantity this such error implies additive error of magnitude at most $2^{-\Theta(N)}$. Therefore by setting $\tilde{\ep}\leq 2^{-\Theta(N)}$ which, by the virtue of Eq.\eqref{eq:DELTASset} corresponds to scaling of the original error  $\ep=\exp(-\Theta(N^6))$  allows to to extrapolate to the hardness neighbored.

\end{proof}

\begin{rem}
In the course of the proof of the above result we have realised an inadequate usage of the oracle in the reduction by Movassagh \cite{movassagh_quantum_2020} (the author assumed that the oracle works as in \eqref{eq:oractleAPPROX3} but without the neccesary dependence on $\Delta$ . Correction of the proof seems to give in that case worse than claimed tolerance for error $\ep=\exp(-\theta(N^{4.5})$ (for the Google layout), which is still better then the one claimed here.
\end{rem}

\section{Efficient tomography of fermionic linear optics}\label{sec:cert} 

\begin{figure}
    \includegraphics[scale=0.6]{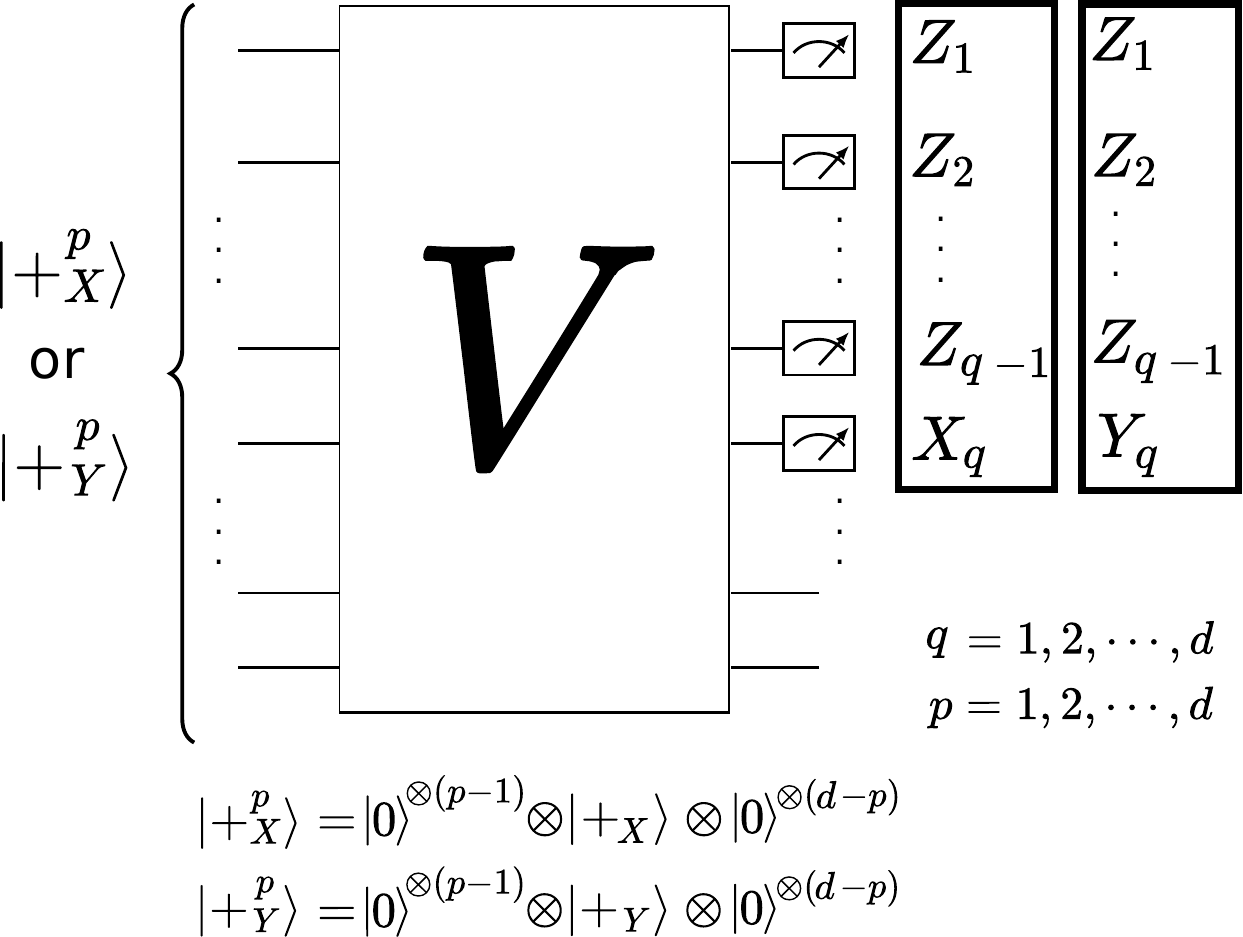}
    \caption{A graphical presentation of the tomography protocol of an active FLO circuit $V$. A single step of the protocol consists of (i) preparation of $2d$ input states  $\ket{+^{p}_X}$ and $\ket{+^{p}_Y}$ ($p=1,\ldots,d$),  (ii) transformation of the states via the circuit $V$ and (iii)  for each of the $2d$ states measuring the operators $Z_1 Z_2 \cdots Z_{q-1} X_{q}$ and $Z_1 Z_2 \cdots Z_{q-1} Y_{q}$ ($q=1,\ldots,d$). These operations are then repeated multiple times in order to gather sufficient statistics necessary to reconstruct the orthogonal matrix $O\in\SO(2d)$ that defines the unitary channel $\Phi_V$ associated to  $V=\Piact(O)$. }\label{fig:tomog}
\end{figure}

The tomography and certification of gates, i.e., the task of ensuring that the correct unitary was implemented, is vital for near-term quantum devices. However, it 
is often an inherently challenging problem due to exponential scaling of the number of parameters describing a general multiqubit quantum operation \cite{eisert2020quantum}. Here we show that the structure of FLO unitaries allows us to perform  their tomography efficiently using resources scaling only polynomially with the system size. As passive fermionic gates form a subset of active FLO circuits, we focus only on the tomography of the latter ones, since from this also the tomography of passive circuit follows.


We will use here again the Jordan-Wigner mapping between $d$ qubit system and fermionic Fock space with $d$ physical modes (see Section \ref{sec:notations}), and define the following $2d$ pure states:
\begin{align}
    \begin{split}
    \ket{+^{p}_X}  & = \id^{\otimes (p-1)} \otimes H \otimes \id^{\otimes (d-p)}  |0 \rangle^{\otimes d} \\
    &= |0 \rangle^{\otimes (p-1)} \otimes | +_X \rangle \otimes | 0 \rangle^{\otimes (d-p)} \ ,\end{split}\\
    \begin{split}
    \ket{+^{p}_Y} & = \id^{\otimes (p-1)} \otimes \tilde{H} \otimes \id^{\otimes (d-p)}  |0 \rangle^{\otimes d} \\
    &=|0 \rangle^{\otimes (p-1)} \otimes | +_Y \rangle \otimes | 0 \rangle^{\otimes (d-p)} \ , 
    \end{split}
\end{align}
where $p=1, \ldots, d$.


In terms of majorana operators, one can write the density matrices of these states  as

\begin{widetext}
\begin{align}
  \rho_{2p-1} &= \ketbra{+^{p}_X}{+^{p}_X}  = 
    \prod_{q=1}^{p-1}\left( \frac{\id + i m_{2q-1}m_{2q} }{2} \right) 
    \left( \frac{\id + \prod_{q=1}^{p-1}(i m_{2q-1}m_{2q})
    m_{2p-1}}{2}\right) 
    \prod_{q=p+1}^d\left( \frac{\id + i m_{2q-1}m_{2q} }{2} \right)
   \ ,  
\\
    \rho_{2p} &= \ketbra{+^{p}_Y}{+^{p}_Y} = 
    \prod_{q=1}^{p-1}\left( \frac{\id + i m_{2q-1}m_{2q} }{2} \right) 
    \left( \frac{\id + \prod_{q=1}^{p-1}(i m_{2q-1}m_{2q}) m_{2p}}{2}\right) 
    \prod_{q=p+1}^d\left( \frac{\id + i m_{2q-1}m_{2q} }{2} \right) \ ,
\end{align}
\end{widetext}
where, as before, $p=1, \ldots, d$. Expanding these density matrices in terms of majorana monomials (given in Eq.~\eqref{eq:major-mon}), we observe that for an arbitrary $\rho_x$ ($x=1, \ldots , 2d$)
there is only one majorana monomial of degree 1 appearing, namely $m_x$. 
Thus, considering the FLO evolved states $V \rho_x V^{\dagger}$, the degree 1 majorana terms will be of the form (see Eq.~\eqref{eq:mode_transform}) $Vm_xV^{\dagger}= \sum_{y=1}^{2d} O_{yx} m_y$, where $O\in\SO(2d)$ is the orthogonal matrix that encodes the FLO circuit $V$. In order to obtain arbitrary element of the orthogonal matrix $O_{yx}$, one needs only to insert the state $\rho_x$, evolve it with the FLO unitary $V$, and then measure the expectation value of $m_y$:
\begin{equation}\label{eq:expVALUEort}
    O_{yx}= \tr( m_y V \rho_y V^{\dagger} )\ .
\end{equation}
Measuring the expectation value of $m_{2q-1}$ and $m_{2q}$ amounts to measuring $Z_1 \cdots Z_{q-1} X_q$ and $Z_1 \cdots Z_{q-1} Y_q$, respectively. These can all be done, after a single layer of local base change operations, through usual computational basis measurements. The graphical presentation of our tomography scheme is given in Fig.~\ref{fig:tomog}.
The following theorem show that the construction outlined above allows to recover an unknown FLO circuit $V$ efficiently in $d$, both  in terms of the number of different setups needed for the implementation as well as in terms of sample complexity. Importantly, our results give rigorous recovery guarantees in the diamond norm, despite the presence of statistical fluctuations.

\begin{thm}[Efficient tomography of active FLO unitary channels]\label{th:EFtomography}
Let $V$ be an unknown active FLO circuit acting on $d$ qubits. Consider the following estimation protocol using the states $\rho_x$ and observables $m_y$ ($x,y \in [2d]$) and comprising of  $r$ independent experimental rounds.  A single experimental round, say the $k$'th, consists of the following routine:
\begin{itemize}
    \item For every pair $(x, y) \in[2d]^{\times 2}$: (i) prepare $\rho_x$ as input state; (ii) evolve $\rho_x$ via the circuit $V$; (iii) measure $V\rho_x V^\dagger$ using $m_y$ obtaining outcome $m^{(k)}_{yx}\in\{-1, 1\}$.
\end{itemize}
The outcomes of the $k$'th round are gathered in the  $2d\times2d$
matrix  $M^{(k)}$ with entries $m^{(k)}_{yx}$. After $r$ rounds, define $\hat{M}_r\coloneqq\frac{1}{r}\sum_{k=1}^r M^{(k)}$ as the sample average of matrices $M^{(k)}$. Then, let $\hat{O}_r\in \SO(2d)$ be defined as the orthogonal matrix appearing in the polar decomposition of $\hat{M}_r$ (i.e., $\hat{M}_r=  O_r P $, where $P$ is a semidefinite real $2d\times 2d$ matrix). Finally, set $\hat{V}\coloneqq\Piact(\hat{O}_r)$ as the estimator of the circuit $V$ after $r$ rounds of the protocol. 

Assume that all routines in the protocol are implemented perfectly. Furthermore, let $\delta\in(0,1)$ be fixed and let $\Phi_{V}$ and $\Phi_{\hat{V}}$ be the unitary channels defined by the active FLO circuits $V$ and $\hat{V}$, respectively. Then, for the number of rounds satisfying 
\begin{equation}\label{eq:SamplecomplexityTomography}
    r\geq \frac{28 d^3}{\ep^2} \log\left(\frac{4d}{\delta}\right) \ ,
\end{equation}
the protocol outputs an FLO circuit $\hat{V}$ such that  $\| \Phi_{V} - \Phi_{\hat{V}} \|_{\Diamond}\leq \ep$  with probability at least $1-\delta$.

\end{thm}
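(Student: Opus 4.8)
The plan is to control $\norm{\Phi_V-\Phi_{\hat V}}_{\Diamond}$ by chaining three Lipschitz‑type estimates: (i) statistical concentration of the sample–average matrix $\hat M_r$ around the true orthogonal matrix $O$ with $V=\Piact(O)$; (ii) stability of the polar–decomposition map near $\SO(2d)$, which transfers the error from $\hat M_r$ to its orthogonal factor $\hat O_r$; and (iii) a stability estimate for the Gaussian–unitary representation $\Piact$, which transfers the error from $\hat O_r$ to $\hat V=\Piact(\hat O_r)$ and hence, for free, to $\Phi_{\hat V}$. Schematically the target inequality is
\[
\norm{\Phi_V-\Phi_{\hat V}}_{\Diamond}\le 2\norm{V-\hat V}_\infty \le c_\Pi\, d\,\norm{O-\hat O_r}_\infty \le 2 c_\Pi\, d\,\norm{\hat M_r-O}_\infty,
\]
where $\norm{\cdot}_\infty$ is the operator norm, the first step is the standard bound between unitary channels and their unitaries, $c_\Pi d$ is the Lipschitz constant of $\Piact$ near $O$ (the content of the stability lemma of Appendix~\ref{app:effTOM}, a fermionic analogue of the Boson–Sampling stability result of \cite{arkhipov2015bosonsampling}), and the last step is the perturbation bound for the polar factor. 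It then suffices to force $\norm{\hat M_r-O}_\infty \le \epsilon/(2c_\Pi d)$ with probability at least $1-\delta$.

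For step (i) I would first record that, by Eq.~\eqref{eq:expVALUEort}, $\E[m^{(k)}_{yx}]=\tr(m_y V\rho_x V^\dagger)=O_{yx}$, that $m^{(k)}_{yx}\in\{-1,1\}$, and that all $(2d)^2$ outcomes within round $k$ come from independent experiments while distinct rounds are independent; hence $\hat M_r-O=\tfrac1r\sum_{k=1}^r(M^{(k)}-O)$ is an average of i.i.d.\ mean‑zero $2d\times 2d$ random matrices. I would then invoke the matrix Bernstein inequality for rectangular matrices (via the Hermitian dilation), for which one needs the almost‑sure bound $\norm{M^{(k)}-O}_\infty\le 2d+1$ and the matrix‑variance proxy $\max\!\big(\norm{\E (M^{(k)}{-}O)(M^{(k)}{-}O)^{T}}_\infty,\norm{\E (M^{(k)}{-}O)^{T}(M^{(k)}{-}O)}_\infty\big)\le 2d$ --- the off‑diagonal entries of these expectations vanish because distinct entries of $M^{(k)}$ are independent, while each diagonal entry is a sum of $2d$ single‑bit variances, each $\le 1$. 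This yields a tail of the form $\Pr[\,\norm{\hat M_r-O}_\infty> t\,]\le 4d\,\exp\!\big(-\Theta(r t^2/d)\big)$ for $t=O(1)$, the prefactor $4d=2d+2d$ being exactly the rectangular dimension factor responsible for the $\log(4d/\delta)$ in the statement. Setting $t=\epsilon/(2c_\Pi d)$ and requiring the right‑hand side to be at most $\delta$ forces $r=\Theta\!\big(d^{3}\epsilon^{-2}\log(4d/\delta)\big)$; carrying the numerical constants through the matrix‑Bernstein exponent, the polar bound and the stability lemma reproduces the stated $r\ge 28\,d^{3}\epsilon^{-2}\log(4d/\delta)$.

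For step (ii), on the concentration event $\norm{\hat M_r-O}_\infty<1$ the matrix $\hat M_r$ is nonsingular, so its polar decomposition $\hat M_r=\hat O_r P$ with $P$ real positive semidefinite is well defined; the classical perturbation bound for the orthogonal factor gives $\norm{\hat O_r-O}_\infty\le \tfrac{2}{\sigma_{\min}(O)+\sigma_{\min}(\hat M_r)}\norm{\hat M_r-O}_\infty\le 2\norm{\hat M_r-O}_\infty$, and since $\det\hat O_r=\sgn(\det\hat M_r)=\det O=1$ we indeed have $\hat O_r\in\SO(2d)$. For step (iii) I would quote the stability lemma of Appendix~\ref{app:effTOM}: $\Piact$ is Lipschitz with constant $O(d)$ in operator norm in a neighbourhood of any $O\in\SO(2d)$, so that $\norm{\Piact(O)-\Piact(\hat O_r)}_\infty\le c_\Pi d\,\norm{O-\hat O_r}_\infty$, and then combine with $\norm{\Phi_{U}-\Phi_{U'}}_{\Diamond}\le 2\norm{U-U'}_\infty$ to close the chain.

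The main obstacle is step (iii): proving that the exponential‑dimensional spin/Gaussian‑unitary representation $\Piact$ is only \emph{polynomially} --- in fact linearly --- Lipschitz in its $\SO(2d)$ argument. The natural route is to write $\Piact(O)-\Piact(O')=\int_0^1\tfrac{d}{ds}\Piact(O_s)\,ds$ along a short path $O_s=\exp\big((1-s)\log O+s\log O'\big)$, valid once $\norm{O-O'}_\infty$ is small (which is precisely the regime delivered by step (i)), and to bound $\norm{\tfrac{d}{ds}\Piact(O_s)}_\infty$ using $\Piact(O_s)=\exp\big(\tfrac14\sum_{ij}[\log O_s]_{ij}m_im_j\big)$ together with the facts that $\Piact(O_s)$ is unitary (so $\norm{e^{H}-e^{H'}}_\infty\le\norm{H-H'}_\infty$) and that each Majorana monomial $m_im_j$ has unit operator norm. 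The delicate points are that $\log$ is only defined locally on $\SO(2d)$, and that a naive triangle inequality over the $\Theta(d^2)$ terms of $\sum_{ij}[\log O_s]_{ij}m_im_j$ would only give an $O(d^2)$ Lipschitz constant; to bring the power down to $O(d)$ one must exploit the block structure of antisymmetric generators (the decomposition into $d$ independent rotation planes already used in Section~\ref{sec:cayley}). The remaining ingredients --- the variance computation in step (i), the determinant/connectedness claim in step (ii), and the bookkeeping of constants to land on $28$ --- are routine.
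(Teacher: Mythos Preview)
Your three-step chain (matrix Bernstein $\to$ polar-factor stability $\to$ representation stability) is exactly the skeleton of the paper's proof, and your handling of steps (i) and (ii) matches the paper up to constants (the paper cites a polar-decomposition perturbation bound with constant $1$ rather than your $2$, which is what allows the final $28$).

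The genuine difference is in step (iii). You propose to prove the Lipschitz estimate for $\Piact$ analytically, by writing $\Piact(O)=\exp\!\big(\tfrac14\sum_{ij}[\log O]_{ij}m_im_j\big)$, interpolating along a path in the Lie algebra, and using block-diagonalisation of antisymmetric matrices to beat the naive $O(d^2)$ down to $O(d)$. You correctly flag that $\log$ is only locally defined on $\SO(2d)$ and that $\Piact$ is only a \emph{projective} representation (so $\norm{V-\hat V}_\infty$ is ambiguous up to a sign); these are real obstacles, and your sketch does not resolve them. The paper sidesteps both issues by a purely representation-theoretic argument: it passes to the \emph{proper} representation $\Piact\otimes\Piact$ (so that $\norm{\Phi_V-\Phi_{V'}}_\Diamond\le\norm{\Phi_{V\otimes V}-\Phi_{V'\otimes V'}}_\Diamond\le 2\norm{V\otimes V-V'\otimes V'}_\infty$ is unambiguous), uses the known decomposition $\Piact\otimes\Piact\cong\bigoplus_{\ell=0}^{d}\bigwedge^\ell\Pi$ into antisymmetric tensor powers of the defining representation, and then invokes the elementary telescoping bound $\norm{O^{\otimes\ell}-O'^{\otimes\ell}}_\infty\le\ell\,\norm{O-O'}_\infty$. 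This immediately gives $\norm{\Phi_V-\Phi_{V'}}_\Diamond\le 2d\,\norm{O-O'}_\infty$, globally on $\SO(2d)$, with no logarithms, no paths, and no phase ambiguity. Your route could likely be salvaged (e.g.\ by writing $O'=O\cdot(O^{-1}O')$ so that only the logarithm of an element close to the identity is needed, and by working at the channel level to absorb the projective phase), but the paper's argument is both shorter and more robust.
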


\begin{rem}
We believe that it possible to improve the sampling complexity and the number of quantum circuits needed for the tomography of an unknown FLO unitary $V$. Moreover, we expect that our proof technique can also be used for the quantum process tomography of general fermionic Gaussian channels.
\end{rem}

There are three key difficulties that need to be circumvented in order to establish the above result. The first one is related to the fact that, by the virtue of \eqref{eq:expVALUEort}, the protocol estimates an orthogonal matrix $O\in\SO(2d)$ not the circuit $V$ or the associated $d$-qubit channel $\Phi_V$. The following Lemma, proved in Appendix \ref{app:effTOM}, allows us to connect operator-norm distance between elements of the orthogonal group  with the diamond norm between the corresponding quantum channels (this result can be viewed as a fermionic version of the analogous stability result proved by Arkhipov for standard Boson Sampling \cite{arkhipov2015bosonsampling}) .

\begin{lem}[Stability of the active FLO representation]\label{lem:stabAFLO}
Consider two elements of the orthogonal group, $O, O' \in \SO(2d)$, and let $V$ and $V'$ be the corresponding active FLO unitaries, i.e., $V = \Piact(O)$ and $V' = \Piact(O')$. Furthermore, let $\Phi_{V}$ and $\Phi_{V'}$ be the unitary channels defined by $V$ and $V'$, respectively. Then the following inequality is satisfied  
\begin{equation}
   \| \Phi_{V} - \Phi_{V'} \|_{\Diamond} 
   \le 2 d \| O - O' \|.
\end{equation}
\end{lem}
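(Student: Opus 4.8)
The plan is to reduce the diamond-norm bound to an operator-norm bound on the unitaries $V$ and $V'$, and then to control $\|V - V'\|$ in terms of $\|O - O'\|$ using the one-parameter structure of the Cayley/exponential path on $\SO(2d)$ together with the explicit form of the generator $H = \tfrac14\sum_{j,k}A_{jk}m_jm_k$. First I would use the standard fact that for unitary channels $\Phi_V,\Phi_{V'}$ one has $\|\Phi_V - \Phi_{V'}\|_\Diamond \le 2\|V - V'\|$ (this follows from $\|\Phi_V - \Phi_{V'}\|_\Diamond \le 2\min_{\phi}\|V - e^{i\phi}V'\|$ and dropping the phase optimization). So it suffices to show $\|V - V'\| \le d\,\|O - O'\|$.

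To bound $\|V - V'\|$, I would interpolate at the group level. Write $O' = e^{-A'}$, $O = e^{-A}$ with $A,A'$ real antisymmetric, and consider the path $O_t$ from $O'$ to $O$ inside $\SO(2d)$ obtained by taking the geodesic-type interpolation (or, more conveniently, writing $O = O' e^{-B}$ with $B$ antisymmetric and setting $O_t = O' e^{-tB}$, $t\in[0,1]$). Lifting through $\Piact$, the corresponding FLO unitaries $V_t = \Piact(O_t) = V'\exp\!\bigl(\tfrac14\sum_{j,k}(tB)_{jk}\,m_jm_k\bigr)$ form a smooth path with $V_0 = V'$, $V_1 = V$. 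Then
\begin{equation}
\|V - V'\| = \Bigl\|\int_0^1 \frac{d}{dt}V_t\,dt\Bigr\| \le \int_0^1 \Bigl\|\frac{d}{dt}V_t\Bigr\|\,dt = \Bigl\|\tfrac14\sum_{j,k}B_{jk}\,m_jm_k\Bigr\|,
\end{equation}
using that $V_t$ is unitary so the derivative has the same operator norm as the Hamiltonian. The key remaining estimate is $\bigl\|\tfrac14\sum_{j,k}B_{jk}m_jm_k\bigr\| \le \tfrac{d}{2}\|B\|$ (or a comparable constant): this is where one invokes that the $m_j$ satisfy $\{m_j,m_k\}=2\delta_{jk}$, so each $m_j$ is a unitary self-adjoint operator, and the quadratic form in Majoranas associated to an antisymmetric $B$ can be block-diagonalized (by an orthogonal conjugation, as in Eq.~\eqref{eq:mode_transform}) into $\sum_\ell \lambda_\ell\, m'_{2\ell-1}m'_{2\ell}$ with $\sum_\ell$ running over $d$ blocks and $|\lambda_\ell|$ the eigenvalues of $B$; the operator norm is then $\tfrac12\sum_\ell|\lambda_\ell| \le \tfrac{d}{2}\|B\|$. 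Finally I would relate $\|B\|$ back to $\|O - O'\|$: since $O' e^{-B} = O$, i.e. $e^{-B} = (O')^{-1}O = \id - (O')^{-1}(O-O')$, and for unitary/orthogonal $W$ one has $\|\log W\| \le \tfrac{\pi}{2}\|W - \id\|$ on the relevant range (or more simply $\|B\| \le C\|e^{-B}-\id\| = C\|O - O'\|$), combining the constants to get the final $2d\|O-O'\|$. If the naive constants come out slightly worse than $2d$, I would instead interpolate directly with the Cayley path $F_\theta$ from Section~\ref{sec:cayley}, whose derivative bounds are already controlled, or optimize the block-diagonalization step to recover exactly the factor $2d$.

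The main obstacle I expect is getting the constant right: the crude chain "$\|\Phi_V-\Phi_{V'}\|_\Diamond \le 2\|V-V'\|$, then $\|V-V'\| \le (\text{norm of Hamiltonian}) \le \tfrac d2\|B\|$, then $\|B\| \le (\text{const})\|O-O'\|$" naturally produces the product of three constants, and pinning the final product down to exactly $2d$ requires being careful at each step — in particular using $\|V-V'\| \le d\|O-O'\|$ directly via a sharp Lipschitz estimate for $\Piact$ rather than passing through the Lie-algebra norm with lossy constants. The representation-theoretic content (block-diagonalization of quadratic Majorana Hamiltonians and the eigenvalue bound) is routine given the anticommutation relations; the analytic packaging into the diamond norm with the stated constant is the part that needs care, and is presumably what is carried out in Appendix~\ref{app:effTOM}.
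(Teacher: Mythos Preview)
Your approach is correct and, once the constants are tracked carefully, actually yields the slightly sharper bound $\|\Phi_V - \Phi_{V'}\|_\Diamond \le \tfrac{\pi d}{2}\|O - O'\|$: block-diagonalizing $B$ gives $\bigl\|\tfrac14\sum_{j,k} B_{jk}m_jm_k\bigr\| = \tfrac12\sum_\ell|\lambda_\ell| \le \tfrac{d}{2}\|B\|$, and $\|B\| \le \tfrac{\pi}{2}\|e^{-B}-\id\| = \tfrac{\pi}{2}\|O-O'\|$ via the elementary inequality $|\lambda| \le \pi|\sin(\lambda/2)|$ on $[-\pi,\pi]$. The projectivity of $\Piact$ is harmless here because you are free to pick the continuous lift of the path $t\mapsto O_t$ to $\Spin(2d)$, and $\|\Phi_V-\Phi_{V'}\|_\Diamond$ is insensitive to the overall sign of $V$. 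So your worry about the constant is unfounded; the Cayley-path fallback is unnecessary.

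The paper, however, takes a genuinely different route. Rather than bounding $\|V-V'\|$ via a geodesic and the Majorana Hamiltonian, it passes to the tensor square to kill the projectivity outright ($\Piact\otimes\Piact$ is an honest representation of $\SO(2d)$), invokes the explicit branching $\Piact\otimes\Piact \cong \bigoplus_{s=0}^{d-1}(\bigwedge^s\Pi)^{\oplus 2}\oplus\bigwedge^d\Pi$ into antisymmetric tensor powers of the defining representation $\Pi$, and then bounds each block by the telescoping estimate $\|O^{\otimes\ell}-O'^{\otimes\ell}\|\le\ell\,\|O-O'\|\le d\,\|O-O'\|$. This avoids the logarithm step and any Lie-algebra norm calculation, at the price of needing the spinor tensor-square decomposition as input. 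Your argument is more self-contained and gives a marginally better constant; the paper's is cleaner representation-theoretically and sidesteps any discussion of lifts or well-definedness of $\log$.
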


The second technical issue arises because the sample-average matrices $\hat{M}_s$ appearing in the protocol are not necessarily orthogonal. For this reason we use the (real) polar decomposition in order to get an orthogonal matrix from $\hat{M}_s$.  The Lemma below gives an upper bound for the possible operator-norm error that can result from this procedure. 

\begin{lem}[Operator-norm stability of the real polar decomposition \cite{PolarStab}]\label{lem:PolarDecomp}
Let $O$ be orthogonal matrix $n\times n$. Let $\Delta $ be $n \times n$ real matrix such that $\|\Delta  \|\leq 1$. Let $O_\Delta$ be the orthogonal transformation appearing in the polar decomposition of $O+\Delta A$ (i.e. $O+\Delta  =O_{O+\Delta } P$ for a semidefinite real matrix $P$). We then have the following inequality
\begin{equation}\label{eq:stabPolar}
    \|O-O_\Delta \| \leq \| \Delta\|\ .
\end{equation}
\end{lem}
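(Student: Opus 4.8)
The plan is to first strip away the reference orthogonal matrix $O$ by using invariance of the operator norm under orthogonal multiplication. Writing $O+\Delta = O(\id+E)$ with $E \coloneqq O^{T}\Delta$, we have $\|E\|=\|\Delta\|\le 1$, and if $\id+E=WP$ is the polar decomposition of $\id+E$ (with $W$ orthogonal and $P\succeq 0$), then $O+\Delta=(OW)P$ is the polar decomposition of $O+\Delta$. Hence $O_\Delta=OW$ and $\|O-O_\Delta\|=\|O(\id-W)\|=\|\id-W\|$. It therefore suffices to prove $\|\id-W\|\le\|E\|$ for the orthogonal polar factor $W$ of $\id+E$, where $\|E\|\le 1$.

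Next I would introduce the singular value decomposition $\id+E=U\Sigma V^{T}$, so that $W=UV^{T}$ and $P=V\Sigma V^{T}$, and split the estimate into a ``radial'' part and an ``angular'' part. The radial part is immediate from the variational characterization that the polar factor is the orthogonal matrix nearest to $\id+E$ in the operator norm: taking the trial matrix $\id$ gives $\|P-\id\|=\|(\id+E)-W\|\le\|(\id+E)-\id\|=\|E\|$, which by Weyl's inequality for singular values is equivalent to $\sigma_j\in[1-\|E\|,\,1+\|E\|]$ for every $j$, so in particular $\sigma_{\min}\ge 1-\|E\|\ge 0$. The angular part is the quantity we actually want, $\|\id-W\|=\|\id-UV^{T}\|$, and after right-multiplying by $V$ and left-multiplying by $U^{T}$ (both norm-preserving) it equals $\|\id-S\|$ with $S\coloneqq U^{T}V$ the orthogonal ``rotation'' mismatch.

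The substantive step is bounding the angular deviation with the sharp constant. The naive route loses a factor of two: $\|\id-W\|\le\|\id-(\id+E)\|+\|(\id+E)-W\|\le\|E\|+\|P-\id\|\le 2\|E\|$. To see why the correct constant is $1$, I would argue at the level of the derivative of the polar factor along the path $A_t=\id+tE$. Writing $W(t)=\id+tX+O(t^{2})$ and $P(t)=\id+tY+O(t^{2})$, orthogonality of $W$ forces $X$ skew-symmetric and positivity of $P$ forces $Y$ symmetric, while $A_t=W(t)P(t)$ gives $E=X+Y$; hence $X=\tfrac12(E-E^{T})$ and $\|\dot W(0)\|=\|X\|\le\|E\|$. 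Thus to leading order the polar factor only ``sees'' the skew-symmetric part of the perturbation, whose norm is controlled by $\|E\|$. Integrating $\dot W(t)$ over $t\in[0,1]$ and controlling the along-the-path derivative through $\sigma_{\min}(A_t)$ yields an estimate of the form $\|\id-W\|\le\frac{2}{1+\sigma_{\min}}\|E\|$; because $O$ is \emph{exactly} orthogonal its singular values are all equal to one, which is precisely what drives the denominator toward $2$ and returns the claimed constant in the relevant regime.

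The main obstacle is exactly this last angular estimate with the sharp constant: improving on the triangle-inequality bound of $2\|E\|$ is genuinely delicate and must exploit that the reference matrix $O$ has no singular-value spread. I would carry it out either by the integral-of-derivative argument above, bounding $\|\dot W(t)\|$ uniformly along the path (the bound degrades as $\sigma_{\min}(A_t)$ shrinks, which is where the restriction $\|\Delta\|\le 1$ enters to keep $A_t$ away from singularity), or by reducing $S=U^{T}V$ to its $2\times 2$ rotation blocks and checking an elementary planar inequality relating a rotation angle to the corresponding singular value; the latter localizes the whole difficulty to a scalar computation, which is the cleanest place to pin down the constant.
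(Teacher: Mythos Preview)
The paper does not prove this lemma; it only records that it ``follows as a direct corollary of Theorem~2.3 in~\cite{PolarStab}.'' So there is no argument in the paper to compare your sketch against.

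More to the point: the inequality with constant $1$ is \emph{false}, so your inability to close the angular estimate at that constant is not a defect of your method. Take $n=2$, $O=\id$, and
\[
\id+\Delta\;=\;\tfrac{1}{2}\,R_{\pi/3},\qquad
R_{\pi/3}=\begin{pmatrix}\tfrac12&-\tfrac{\sqrt3}{2}\\[2pt]\tfrac{\sqrt3}{2}&\tfrac12\end{pmatrix}.
\]
Then $\Delta=\tfrac12 R_{\pi/3}-\id$ is normal with eigenvalues $-\tfrac34\pm i\tfrac{\sqrt3}{4}$, hence $\|\Delta\|=\sqrt3/2<1$. The polar decomposition reads $\id+\Delta=R_{\pi/3}\cdot(\tfrac12\id)$, so $O_{\Delta}=R_{\pi/3}$ and
\[
\|O-O_{\Delta}\|=\|\id-R_{\pi/3}\|=\bigl|1-e^{i\pi/3}\bigr|=1\;>\;\tfrac{\sqrt3}{2}=\|\Delta\|.
\]
What the standard perturbation theory for the polar factor actually delivers is
\[
\|O-O_{\Delta}\|\;\le\;\frac{2}{\sigma_{\min}(O)+\sigma_{\min}(O+\Delta)}\,\|\Delta\|\;\le\;\frac{2}{2-\|\Delta\|}\,\|\Delta\|,
\]
which is exactly the estimate your integral-of-derivative sketch is heading toward (and which the counterexample saturates up to a small slack: $\tfrac{2}{1+1/2}\cdot\tfrac{\sqrt3}{2}=\tfrac{2\sqrt3}{3}\approx1.155\ge1$). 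The place your argument goes astray is the sentence ``because $O$ is exactly orthogonal its singular values are all equal to one, which \ldots\ drives the denominator toward $2$'': the $\sigma_{\min}$ in that bound is that of the \emph{perturbed} matrix $O+\Delta$, and the fact that $O$ has unit singular values was already consumed in your reduction to $O=\id$. For the paper's application in Theorem~\ref{th:EFtomography} the discrepancy is harmless, since $\|\Delta\|=\|\hat M_{r}-O\|$ is driven small and the extra factor $\tfrac{2}{2-\|\Delta\|}\le 2$ only changes constants in~\eqref{eq:SamplecomplexityTomography}, not the polynomial scaling.
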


The above lemma follows as a direct corollary of Theorem 2.3 in \cite{PolarStab}.

The last technical ingredient needed for the proof of Theorem \ref{th:EFtomography} is the following matrix concentration bound, which allows to control the magnitude of statistical fluctuations incurred in our scheme. 

\begin{lem}[Matrix Bernstein inequality \cite{TroopMatrixConcentration}]\label{lem:matrixBernstein} 
Let $S^{(1)},\ldots, S^{(r)}$ be independent, centered real $n\times n$ random matrices with uniformly bounded operator norm, i.e., for all $k\in[r]$
\begin{equation}
   \E S^{(k)} =0\ ,\ \|S^{(k)}\|\leq L\ .
\end{equation}
Assume furthermore that the entries of each $S^{(k)}$ are independently distributed  with a variance upper bounded by a constant, $\mathrm{Var}(S^{(k)}_{ij})\leq c$. 

We then have the following concentration inequality valid for arbitrary $\tau>0$
\begin{equation}\label{eq:MatrixBernstein}
    \Pr\left(\left\|\frac{1}{r}\sum_{k=1}^r S^{(k)}  \right\| \geq \tau  \right) \leq 2 n\, \exp\left(-\frac{r \tau^2}{2(n c +\frac{L}{3} \tau ) }\right)\ .
\end{equation}

\end{lem}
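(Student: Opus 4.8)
The plan is to obtain the stated inequality as a direct specialization of the rectangular matrix Bernstein bound of Tropp \cite{TroopMatrixConcentration}, whose only inputs are a uniform operator-norm bound on the summands and a bound on the \emph{matrix variance statistic}. Accordingly, I would set $X_k \coloneqq \frac{1}{r} S^{(k)}$, so that $\frac{1}{r}\sum_k S^{(k)} = \sum_{k=1}^r X_k$ is a sum of independent, centered real $n\times n$ matrices with $\norm{X_k}\leq L/r$. The rectangular form of the bound (applied to $n\times n$, hence non-Hermitian, summands, which is why the prefactor is $d_1+d_2=2n$ rather than $n$) states that for $Z=\sum_k X_k$ one has
\begin{equation}
    \Pr\left(\norm{Z}\geq\tau\right)\leq 2n\,\exp\left(\frac{-\tau^2/2}{v + (L/r)\tau/3}\right),
\end{equation}
where $v=\max\{ \norm{\sum_k \E(X_k X_k^T)},\, \norm{\sum_k \E(X_k^T X_k)}\}$ is the matrix variance statistic. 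It therefore remains only to show $v\leq nc/r$; inserting this value and clearing the factors of $r$ reproduces \eqref{eq:MatrixBernstein} verbatim.

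The single non-routine step is the passage from the scalar hypothesis $\mathrm{Var}(S^{(k)}_{ij})\leq c$ to the bound on $v$. The key observation is that independence and centering of the entries force the second-moment matrices to be diagonal: the $(i,j)$ entry of $\E\big(S^{(k)}(S^{(k)})^T\big)$ equals $\sum_{l=1}^n \E\big(S^{(k)}_{il}S^{(k)}_{jl}\big)$, and for $i\neq j$ each term factorises as $\E(S^{(k)}_{il})\E(S^{(k)}_{jl})=0$, while for $i=j$ it equals $\sum_l \mathrm{Var}(S^{(k)}_{il})\leq nc$. Hence $\E\big(S^{(k)}(S^{(k)})^T\big)$ is positive semidefinite and diagonal with entries at most $nc$, so its operator norm is at most $nc$; the identical argument applies to $\E\big((S^{(k)})^T S^{(k)}\big)$. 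Summing over the $r$ independent rounds and using the triangle inequality for the operator norm of a sum of positive semidefinite matrices gives $\norm{\sum_k\E(X_kX_k^T)}\leq \frac{1}{r^2}\cdot r\cdot nc = nc/r$, and likewise for the other term, whence $v\leq nc/r$ as required.

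The main obstacle, if one insists on a self-contained argument rather than invoking \cite{TroopMatrixConcentration} as a black box, lies in establishing the master tail inequality itself. This is done by the matrix Laplace transform method: one bounds $\Pr(\lambda_{\max}(Y)\geq t)\leq \inf_{\vartheta>0} e^{-\vartheta t}\,\E\tr\,e^{\vartheta Y}$ for the Hermitian dilation $Y$ of $Z$, uses Lieb's concavity theorem to obtain subadditivity of the matrix cumulant generating functions across the independent summands, controls each single-summand generating function through the Bernstein-type moment estimate that converts the norm bound $L/r$ and variance proxy $v$ into an exponential bound, and finally optimises over $\vartheta$. Since the lemma is stated with attribution to \cite{TroopMatrixConcentration}, I would take this machinery as given, so that the proof genuinely reduces to the variance computation above together with the elementary rescaling bookkeeping.
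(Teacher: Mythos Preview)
Your proposal is correct and aligns with the paper's treatment: the paper does not prove this lemma at all but simply cites Tropp, remarking that ``A more general version of the above inequality (that does not require independently distributed entries of matrices $S^{(k)}$) can be found in Theorem 1.6.2 from \cite{TroopMatrixConcentration}.'' Your derivation correctly fills in the one detail the paper leaves implicit, namely how the extra hypothesis of independent centered entries forces $\E\big(S^{(k)}(S^{(k)})^T\big)$ and $\E\big((S^{(k)})^T S^{(k)}\big)$ to be diagonal with entries at most $nc$, yielding the matrix variance statistic bound $v\leq nc/r$ that, upon insertion into Tropp's rectangular Bernstein inequality, reproduces \eqref{eq:MatrixBernstein} exactly.
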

     
A more general version of the above inequality (that does not require independently distributed entries of matrices $S^{(k)}$) can be found  in Theorem 1.6.2 from  \cite{TroopMatrixConcentration}.

\begin{proof}[Proof of Theorem \ref{th:EFtomography}]
Let us remark first that our tomography protocol was defined such that the matrices $M^{(k)}$ originating form different rounds $k$ are independent from each other, and  for fixed $k$ also their entries $m_{yx}^{(k)}$ are  independent. Furthermore, by virtue of Eq.~\eqref{eq:expVALUEort}, we have
\begin{equation}
    \E m_{yx}^{(k)} = O_{yx}\ , 
\end{equation}
where $O\in\SO(2d)$ is an orthogonal matrix corresponding to the circuit $V$. We now apply Lemma~\ref{lem:matrixBernstein} to the sequence of $2d\times2d$ matrices $\Delta^{(k)}\coloneqq M^{(k)}-O$. From definition matrix elements of $\Delta^{(k)}$ satisfy $|\Delta^{(k)}_{yx}|\leq 2 $. From this and the fact that $m_{yx}^{(k)}\in\{-1,1\}$ it easily follows that 
\begin{equation}
    \|\Delta^{(k)} \| \leq 4 d\ ,\  \mathrm{Var}(\Delta^{(k)}_{yx}) \leq 1\ . 
\end{equation}
Inserting these estimates in Eq.~\eqref{eq:MatrixBernstein} (and noting that $n=2d$) gives 
\begin{equation}
    \Pr\left(\left\| \frac{1}{r} \sum_{k=1}^r \Delta^{(k)} \right\| \geq \tau  \right) \leq 4 d\, \exp\left(-\frac{r\tau^2}{4d(1  +\frac{2}{3} \tau ) }\right) \ .
\end{equation}
Recalling that $\hat{M}_r=\frac{1}{r}\sum_{k=1}^r M^{(k)}$, using the definition of $\Delta^{(k)}$, and assuming that $\tau<1$ (in what follows we will see that we can indtroduce this constraint without the loss of generality)  we obtain 
\begin{equation}
    \Pr\left(\left\|  \hat{M}_r - O \right\| \leq \tau  \right) \geq 1 - 4d\, \exp\left(-\frac{r \tau^2}{7d }\right)\ .
\end{equation}

We therefore know that, provided $r$ is high enough, the sample average $\hat{M}^{(k)}$ approximates matrix $O$ in operator norm. Applying Lemma~\ref{lem:PolarDecomp} to $\hat{O}_r$, i.e., to the orthogonal part of the polar decomposition of $\hat{M}_r$ (this corresponds to setting $\Delta=  \hat{M}_r-O $ in \eqref{eq:stabPolar}), we obtain 

\begin{equation}
    \Pr\left(\left\|  \hat{O}_r - O \right\| \leq \tau  \right) \geq 1 - 4d\, \exp\left(-\frac{r \tau^2}{7d }\right)\ .
\end{equation}

Recalling that $\hat{V}=\Piact(\hat{O})$ and $V=\Piact(O)$ and invoking Lemma~\ref{lem:stabAFLO} we finally arrive to

\begin{equation}\label{eq:finalPROBbound}
    \Pr\left(\left\|  \Phi_{\hat{V}} - \Phi_{V} \right\|_\Diamond \leq 2 d\, \tau \right) \geq 1 - 4 d\, \exp\left(-\frac{r \tau^2}{7d }\right) .
\end{equation}
We conclude the proof by setting $\ep\coloneqq 2 d\, \tau$ and noting that \eqref{eq:SamplecomplexityTomography} follows from requiring that right-hand side of Eq. \eqref{eq:finalPROBbound} is larger than $1-\delta$.

\end{proof}

{\bf Acknowledgements}\\[3mm] We would like to thank Ramis Movassagh for explanations regarding his work. MO and ND acknowledge support by the Foundation for Polish Science through the TEAM-NET project (contract no. POIR.04.04.00-00-17C1/18-00). 
 ZZ was supported by the NKFIH through the Quantum Technology National Excellence Program (project no. 2017-1.2.1-NKP-2017-00001), the grants K124152, K124176, KH129601, K120569  and the Quantum Information National Laboratory of Hungary.

\bibliography{fermADV}

\newpage
\onecolumngrid
\part*{Appendix}

\appendix

We collect here technical results that are used in the main part of the paper. Some of the results stated here can be of independent interest for further works on quantum information processing with fermions.

\section{Decomposition of passive and active FLO unitaries into two-qubit gates}
\label{app:decomp}

Here we provide the derivation of the decomposition of arbitrary passive and active FLO gates into two-qubit gates with layouts depicted in Fig.~\ref{fig:brick_triangle}, which was also studied in Refs.~\cite{fermPASSlayout2018,
fermACTpassLAYOUT2018, fermGAUSSlayout2019}. For  passive bosonic linear optics the analogous decompositions were discussed in Refs.~\cite{Reck1994, ReckUniform2016}.
The way to obtain these results is to consider the standard decomposition of  $\U(d)$ and $\SO(2d)$ elements into so-called Givens rotations and then apply the appropriate FLO representations $\Pipas$ and $\Piact$ on this decomposition, as we will explain below. For simplicity, we  will assume that $d$ is even, which is also the relevant case for our paper.

The (nearest neighbor) Givens rotations $G^{k}(\alpha, \varphi) \in \U(d)$ ($k=1, \ldots, d{-}1$) have the form   
\begin{equation}
G^{(k)}(\alpha, \varphi) = 
       \scalebox{0.9}{ $
       \begin{bmatrix}   1   & \cdots &    0   &  0   & \cdots &    0   \\
                      \vdots & \ddots & \vdots &     \vdots &        & \vdots \\
                         0   & \cdots &    e^{i \varphi}\cos(\alpha)   &   -\sin(\alpha)   &
                         \cdots &   0 \\ 0 & \cdots & e^{i\varphi}\sin(\alpha)   &   \phantom{-}\cos(\alpha)   & \cdots &    0   \\
                      \vdots &     &    \vdots &        \vdots & \ddots & \vdots \\
                         0   & \cdots &    0   &     0   & \cdots &    1
       \end{bmatrix}$} \, , \;
       \end{equation}
where only the $2 \times 2$ block consisting of the entries with row and column indices $k$ and $k+1$  are non-trivial. A general element $U \in \U(d)$ can then be decomposed into Givens rotations in different ways, we will consider two of these (also discussed in \cite{ReckUniform2016}). In the first decomposition one applies  alternatingly ($d$ number of times) a series of Givens rotations $G^{(k)}$ with odd and even $k$ indices, and finally a diagonal unitary $T= \mathrm{diag}(e^{i\kappa_1},e^{i \kappa_2},\dots,e^{i\kappa_d})$ , i.e.,
\begin{align}
     U &= T B_{d/2} A_{d/2}  \ldots B_{2} A_{2} B_{1} A_{1} 
     \,  
     ,    \label{eq:app_brick_1}\\
     A_{j}&= \prod_{\substack{ k \in [d-1] \\ k \; \; \mathrm{odd}}} G^{(k)}(\alpha_{(k,j)}, \varphi_{(k,j)}) \, ,  \; \;  B_{j}= \prod_{{\substack{ k \in [d-1] \\ k \; \; \mathrm{even}}}} G^{(k)}(\beta_{(k,j)}, \nu_{(k,j)})
     , \; \; \;  j \in [d/2]. 
\end{align}
Also in the second decomposition one applies alternatingly  a series of Givens rotations $G^{(k)}$ with odd and even $k$ indices, however, this time there is $2d{-}1$ such layers and in the $\ell$th layer there are only Givens rotations up to index $(d {-} |\ell {-} d|)$, and finally there is again a diagonal unitary $T'= \mathrm{diag}(e^{i\kappa'_1},e^{i \kappa'_2},\dots,e^{i\kappa'_d})$, i.e.,
\begin{align}
     U &=  T' A'_{d} B'_{d-1}  \ldots  B'_{2}A'_{2} B'_{1} A'_{1} 
     \,  
     ,    \label{eq:app_triangle_1}\\
     A'_{j}&= \prod^{d{-}|2j{-}d|}_{\substack{ k =1 \\ k \; \; \mathrm{odd}}} G^{(k)}(\gamma_{(k,j)}, \tau_{(k,j)}) \, ,  \; \;  B'_{j}= \prod^{d{-}|2j{-}d|}_{{\substack{ k \in [d-1] \\ k \; \; \mathrm{even}}}} G^{(k)}(\delta_{(k,j)}, \sigma_{(k,j)})
     , \; \; \;  j \in [d/2]. 
\end{align}
Note that the both decompositions use the same number of elementary Givens rotations.

Now, given an arbitrary passive FLO transformation $V=\Pipas (U)$, with $U \in \U(d)$, we can use the fact that  $\Pipas$ is a representation (and thus a homomorphism) and apply it to the decompositions of Eqs.~\eqref{eq:app_brick_1} and \eqref{eq:app_triangle_1}. We obtain
\begin{align}
     &V = \Pipas(U) = \Pipas(T)  L_{d/2} K_{d/2}  \ldots  L_{2} K_{2} L_{1} K_{1} 
     \,  
     ,    \label{eq:abs_brickwall} \\
     &K_{j}= \prod_{\substack{ k \in [d-1] \\ k \; \; \mathrm{odd}}} \Pipas(G^{(k)}(\alpha_{(k,j)}), \varphi_{(k,j)}) \, ,  \; \;  L_{j}= \prod_{{\substack{ k \in [d-1] \\ k \; \; \mathrm{even}}}} \Pipas(G^{(k)}(\beta_{(k,j)}, \nu_{(k,j)}))
     , \; \; \;  j \in [d/2], 
\end{align}
and similarly
\begin{align}
     & V = \Pipas(U)=  \Pipas(T') K'_{d} L'_{d-1}  \ldots  L'_{2}K'_{2} L'_{1} K'_{1} 
     \,  
     ,    \label{eq:abs_triangle} \\
     &K'_{j}= \prod^{d{-}|2j{-}d|}_{\substack{ k =1 \\ k \; \; \mathrm{odd}}} \Pipas(G^{(k)}(\gamma_{(k,j)}, \tau_{(k,j)})) \, ,  \; \;  L'_{j}= \prod^{d{-}|2j{-}d|}_{{\substack{ k \in [d-1] \\ k \; \; \mathrm{even}}}} \Pipas(G^{(k)}(\delta_{(k,j)}, \sigma_{(k,j)}))
     , \; \; \;  j \in [d/2]. 
\end{align}
Using the definition of $\Pipas$ and the Jordan-Wigner correspondence between fermions and qubits systems, we have that  
\begin{align} 
   \Pipas(\mathrm{diag}(e^{i\alpha},e^{i \alpha_2},\dots,e^{i\alpha_d}))&= \e^{i \alpha_1 Z } \otimes \e^{i \alpha_1 Z } \otimes \cdots \otimes  \e^{i \alpha_d Z }, \\
    \Pipas(G^{(k)}(\alpha_1, \alpha_2)) &= \id^{\otimes k-1 } \otimes (\e^{-i \alpha_1 Z/2} \otimes \e^{i \alpha_1 Z/2}) \; \e^{i \alpha_2 (X \otimes X + Y \otimes Y)/2} \otimes \,  \id^{\otimes d -k-1 }
\end{align}
Thus, Eq.~\eqref{eq:abs_brickwall} and Eq.~\eqref{eq:abs_triangle} provides exactly the brickwall and triangle decomposition of Fig.~\ref{fig:brick_triangle}.

\begin{figure}
    \centering
    \includegraphics[scale=0.4]{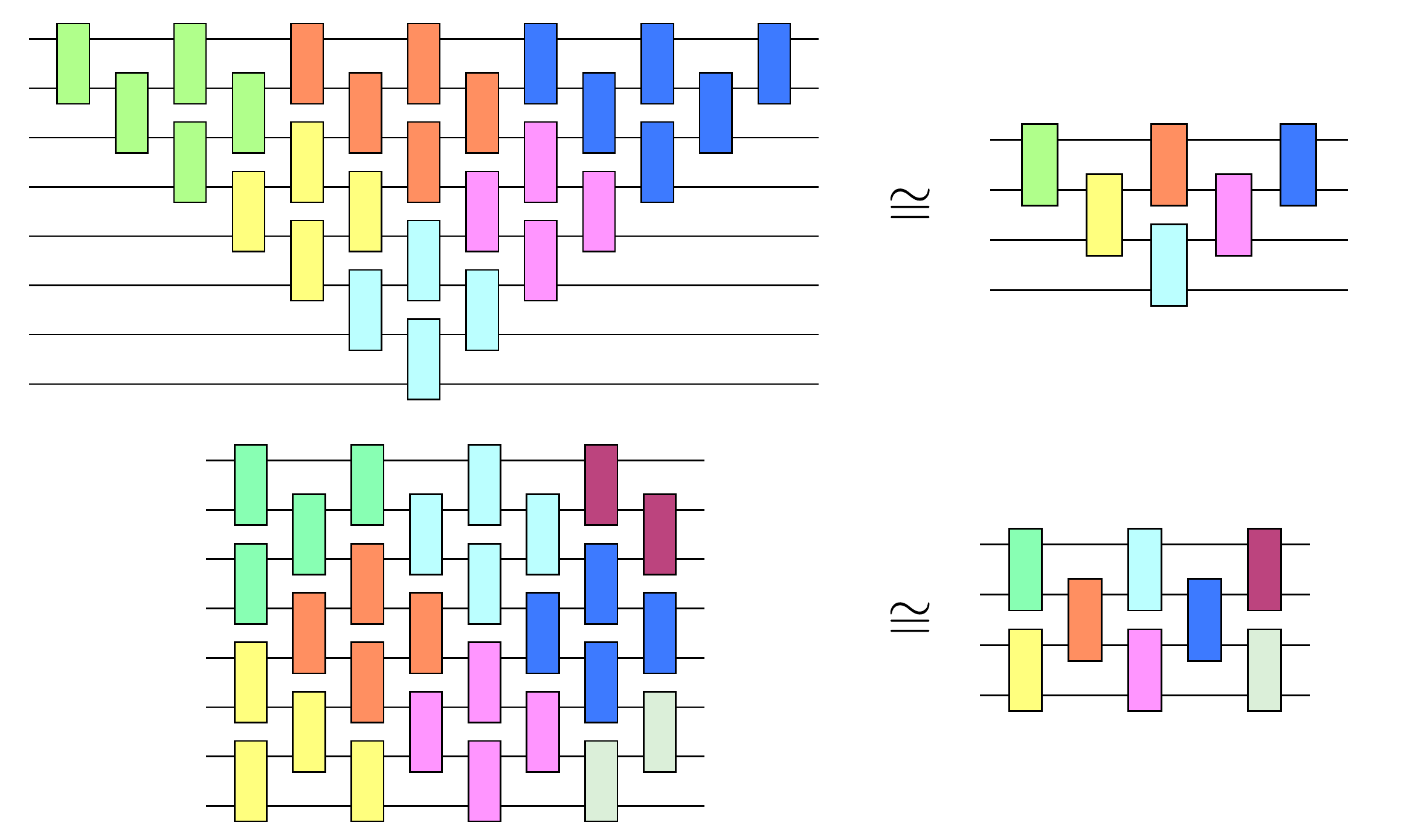}
    \caption{Decomposition of an arbitrary $V=\Piact(O)$ using  majorana-line (left) and qubit-line (right) circuit pictures. The represented Givens rotations can be merged (identical colors depicting the merged rotations) giving rise to a layout of Fig.~\ref{fig:brick_triangle}, with two-qubit gates of type $D_{\mathrm{act}}(\{ \beta_i \})$.}
    \label{fig:coarse-graining}
\end{figure}

Let us now turn to the decomposition of an arbitrary active FLO gate $V=\Piact(O)$ ($O \in \SO(2d)$). An orthogonal matrix $O$ can be decomposed into a sequence of real Givens rotations $G^{(k)}(\alpha):=G^{(k)}(\alpha, 0) \in SO(2d)$ analogously to the  decompositions of a unitary (Eqs.~\eqref{eq:app_brick_1} and \eqref{eq:app_triangle_1}). One can apply alternatingly ($d$ number of times) a series of real Givens rotations $G^{(k)}$ with odd and even $k$ indices, and finally a diagonal orthogonal matrix $S= \mathrm{diag}(s_1,s_2 ,\dots,s_d)$ (with $s_i \in \{1, -1\}$ and $\prod_{i=1}^{2d}=1$), i.e.,
\begin{align}
     U &= S D_{d/2} C_{d/2}  \ldots D_{2} C_{2} D_{1} C_{1} 
     \,  
     ,    \label{eq:app_brick_2}\\
     C_{j}&= \prod_{\substack{ k \in [d-1] \\ k \; \; \mathrm{odd}}} G^{(k)}(\alpha_{(k,j)}) \, ,  \; \;  D_{j}= \prod_{{\substack{ k \in [d-1] \\ k \; \; \mathrm{even}}}} G^{(k)}(\beta_{(k,j)})
     , \; \; \;  j \in [d/2]. 
\end{align}
Alternatively, one can apply alternatingly  a series of Givens rotations $G^{(k)}$ with odd and even $k$ indices with $2d{-}1$ layers and in the $\ell$th layer there are only real Givens rotations up to index $(d {-} |\ell {-} d|)$, and finally there is again a diagonal matrix with signs $S'= \mathrm{diag}(s'_1,s'_2,\dots, s'_d)$, i.e.,
\begin{align}
     U &=  S' C'_{d} D'_{d-1}  \ldots  D'_{2}C'_{2} D'_{1} C'_{1} 
     \,  
     ,    \label{eq:app_triangle_2}\\
     C'_{j}&= \prod^{d{-}|2j{-}d|}_{\substack{ k =1 \\ k \; \; \mathrm{odd}}} G^{(k)}(\gamma_{(k,j)}) \, ,  \; \;  D'_{j}= \prod^{d{-}|2j{-}d|}_{{\substack{ k \in [d-1] \\ k \; \; \mathrm{even}}}} G^{(k)}(\delta_{(k,j)})
     , \; \; \;  j \in [d/2]. 
\end{align}

Given an arbitrary active FLO transformation $V=\Piact (O)$, with $O \in \SO(2d)$, we can use the fact that  $\Piact$ is a projective representation (and thus a projective homomorphism) and apply it to the decompositions of Eqs.~\eqref{eq:app_brick_2} and \eqref{eq:app_triangle_2}, obtaining upto an irrelevant signs $\sigma, \sigma' \in \{1,-1 \}$ that
\begin{align}
     &V = \Piact(O) = \sigma \Piact(S)  F_{d/2} E_{d/2}  \ldots  F_{2} E_{2} F_{1} E_{1} 
     \,  
     ,    \label{eq:abs_brickwall} \\
     &E_{j}= \prod_{\substack{ k \in [d-1] \\ k \; \; \mathrm{odd}}} \Piact(G^{(k)}(\alpha_{(k,j)})) \, ,  \; \;  F_{j}= \prod_{{\substack{ k \in [d-1] \\ k \; \; \mathrm{even}}}} \Piact(G^{(k)}(\beta_{(k,j)}))
     , \; \; \;  j \in [d/2], 
\end{align}
and 
\begin{align}
     & V = \Piact(O)=  \sigma' \Pipas(S') F'_{d} E'_{d-1}  \ldots  F'_{2}E'_{2} F'_{1} E'_{1} 
     \,  
     ,    \label{eq:abs_triangle} \\
     &E'_{j}= \prod^{d{-}|2j{-}d|}_{\substack{ k =1 \\ k \; \; \mathrm{odd}}} \Piact(G^{(k)}(\gamma_{(k,j)})) \, ,  \; \;  L'_{j}= \prod^{d{-}|2j{-}d|}_{{\substack{ k \in [d-1] \\ k \; \; \mathrm{even}}}} \Piact(G^{(k)}(\delta_{(k,j)}))
     , \; \; \;  j \in [d/2]. 
\end{align}
Using the definition of $\Pipas$ and the Jordan-Wigner correspondence between fermions and qubits systems, we have that
\begin{align}
    \Piact(S) = \pm \, X^{s_{1}} Y^{s_{2}} \otimes X^{s_3} Y^{s_4} \otimes \ldots \otimes X^{s_{2d{-}1}} Y^{s_{2d}},
\end{align}
and 
\begin{align}
    \Pipas(G^{(k)}(\alpha)) = \e^{ - \alpha  \, m_{k} m_{k+1} }
    =
    \begin{cases}
     \id^{\otimes(\ell{-}1)} \otimes \e^{ i \alpha Z_{\ell}} \otimes \id^{\otimes(d{-}\ell)} \; \; &\text{if} \;  k=2\ell \; \text{is even} \\ 
      \id^{\otimes(\ell{-}1} \otimes \e^{  i \alpha X_{\ell} X_{\ell+1}} \otimes \id^{\otimes(d{-}\ell{-}1)} \; \; &\text{if} \; k=2\ell+1 \; \text{is odd} \\ 
    \end{cases}
\end{align}
Thus, the circuits would resemble the  brickwall and layouts, however with depths $2d$ and $(4d{-}1)$ on $2d$ majorana lines and not of depth $d$ and $2d-1$ on $d$ qubit lines, see Fig.~\ref{fig:coarse-graining}. 
(In circuits with majorana lines the lines represent individual operators and gates between two majorana lines are unitaries that is composed only of the corresponding two majorana operators \cite{bravyi_fermionic_2002}.)   However, we can make some simplifications by merging gates as shown in Fig.~\ref{fig:coarse-graining}: in the middle of the circuit we can merge 4 two-qubit gates (corresponding to 4 Givens rotations) of the form $\e^{i \alpha_1 X \otimes X} \, (\e^{i \alpha_2 Z} \otimes \e^{i \alpha_3 Z}) \, \e^{i \alpha_4 X \otimes X}$ and these are equal to gates of the form  $D_{\mathrm{act}}(\{ \beta_i \}) = 
     (\e^{i \beta_5  Z/2} \otimes \e^{i \beta_6  Z/2}) \, \e^{i (\beta_3 X \otimes X + \beta_4 Y \otimes Y)/2} \, (\e^{i \beta_1  Z/2} \otimes \e^{i \beta_2  Z/2})$ , where the $\beta_i$'s has to be chosen to satisfy
\begin{align}
   \cos (\alpha_1 + \alpha_4) \cos(\alpha_2-\alpha_3) = \cos(\theta_2) \cos(\theta_1 + \theta_3), \; \; 
\sin (\alpha_1 + \alpha_4)
\cos (\alpha_2 - \alpha_3) = \sin (\theta_2) \cos(\theta_1 - \theta_3), \\
\cos (\alpha_1 - \alpha_4) \sin(\alpha_2-\alpha_3) = \cos(\theta_2) \sin (\theta_1 + \theta_3), \; \; \cos (\alpha_1 - \alpha_4) \sin(\alpha_2 + \alpha_3)=\cos(\theta_5) \sin (\theta_4 + \theta_6) , \\
  \cos (\alpha_1 + \alpha_4) \cos(\alpha_2 + \alpha_3) = \cos(\theta_5) \cos(\theta_4 + \theta_6), \; \; 
\sin (\alpha_1 + \alpha_4)
\cos (\alpha_2+\alpha_3) = \sin (\theta_5) \cos(\theta_4 - \theta_6),
\end{align}
where we used the notations $\theta_1= \beta_1 - \beta_2$, $\theta_2= \beta_3- \beta_4$, $\theta_3=\beta_5 - \beta_6$, $\theta_4 = \beta_1 + \beta_2 $, $\theta_5= \beta_3+ \beta_4$, 
$\theta_6=\beta_5 + \beta_6$.
At the edges of the the circuit we may just either have to join additional local $Z$-rotations to the merged gates, thus it can be again expressed as $D_{\mathrm{act}}(\{ \beta_i \})$, or it is already of the form of $D_{\mathrm{act}}(\{ \beta_i \})$. In this way, we obtain exactly the brickwall and triangle decomposition of Fig.~\ref{fig:brick_triangle} with two-qubit gates of the form of $D_{\mathrm{act}}(\{ \beta_i \})$.


\section{Proof of Theorem \ref{th:SAMPLhar}}\label{app:hardnessOFsampling}

\begin{proof}
We will consider in parallel active and passive FLO circuits. For pasive FLO we have $\H=\Hpas$ and $\nu=\nupas$ while for active FLO we have $\Hact$ and $\nu=\nuact$. With the fixed input $\ket{\inpsi}=\psiQUAD^{\otimes N}$ ,
we write $p_{\x}(V) = \abs{\av{\x|V|\inpsi}}^2$ for the probability of outcome $\x$ (we assume  that $\ket{x}\in\H$), and $p(V)$ for the output probability distribution of a circuit $V$. 
Suppose that there exists a classical sampler $\mathcal{C}$ that performs Fermion Sampling for a fixed but arbitrary FLO circuit $V$, and denote by $q(V)$ the distribution from which $\mathcal{C}$ samples.
Then for a given $\x$, by Stockmeyer's approximate counting algorithm \cite{Stockmeyer1985}, a $\mathrm{BPP^{NP}}$ machine with an oracle access to $\mathcal{C}$ can produce a multiplicative estimates $\tilde{q}_\x(V)$ of $q_{\x}(V)$ such that
\begin{align}
    \abs{q_{\x}(V) - \tilde{q}_{\x}(V)} \le \frac{q_{\x}}{\poly{N}}
\end{align}
for every $\x$. 
We will show that $\tilde{q}_{\x}(V)$ is also close to $p_{\x}(V)$ for most $\x$ and $V$ that anti-concentrate.
Judiciously applying the triangle inequality, we have that
\begin{align}
\begin{split}
	|p_{\x}&(V)-\tilde{q}_{\x}(V)| \\
	&\le |p_{\x}(V)-q_{\x}(V)| + |q_{\x}(V)-\tilde{q}_{\x}(V)|  
\end{split}	
	\\
		&\le |p_{\x}(V)-q_{\x}(V)| + \frac{q_{\x}(V)}{\poly{N}} \\
		&\le |p_{\x}(V)-q_{\x}(V)| + \frac{|p_{\x}(V) - q_{\x}(V)| + p_{\x}(V)}{\poly{N}} \\
		&= \frac{p_{\x}(V)}{\poly{N}} + |p_{\x}(V)-q_{\x}(V)| \left(1 + \frac{1}{\poly{N}}\right)\label{eq:stockmeyer2}
\end{align}
Given that the distributions $p(V)$ and $q(V)$ are $\epsilon$-close in the $l_1$ norm, particular probabilities $p_{\x}(V)$ and $q_{\x}(V)$ must be exponentially close for most $\x$. 
This statement is made precise using Markov's inequality:
for a nonnegative random variable $X$ and $a>0$,
\begin{equation}
{\rm Pr}(X\ge a) \le \frac{\mathbb{E}X}{a}.
\end{equation}
Setting $X=|p_{\x}(V)-q_{\x}(V)|$ and  $a=\epsilon/(|\H|\delta)$, (the probability is over the outcomes $\x$ which is distributed uniformly over $\H$, see Remark \ref{rem:circuit-outcome-joint-prob})
\begin{equation}
\begin{split}
\Pr_{\x\sim\mathrm{unif}(\H)} &\left( |p_{\x}(V)-q_{\x}(V)|  
\ge \frac{\epsilon}{|\H|\delta} \right)\\
&\le \frac{\E_{\x\sim\mathrm{unif}(\H)}(|p_{\x}(V)-q_{\x}(V)|) |\H| \delta}{\epsilon} \le \delta.  
\end{split}
\end{equation}
Combining the probability bound with the inequality \eqref{eq:stockmeyer2}, we have that with probability at least $1-\delta$ over random $\x\sim\mathrm{unif}(\H)$
 we have 
\begin{equation}\label{eq:stock1}
|p_{\x}(V)-\tilde{q}_{\x}(V)| 
	 < \frac{p_{\x}(V)}{\poly{N}} + \frac{\epsilon}{|\H|\delta} \left(1 + \frac{1}{\poly{N}}\right)\ .
\end{equation}

To turn the above additive upper bound to a multiplicative one, we use the anticoncentration property (Theorem \ref{thm:anticoncentrationFLO}), which let us replace $1/|\H|$ by an upper bound $p_{\x}(V)/\alpha$ with probability $(1-\alpha)^2/C$.

In order to do so, we must consider the joint probability of $(V,\x)$ as described in Remark \ref{rem:circuit-outcome-joint-prob}. 
Let $A$ be the event that $p_{\x}(V)$ and $q_{\x}(V)$ for a fixed $V$ are exponential close due to Markov's inequality, and $B$ be the event that the distribution $p(V)$ anticoncentrates.
The probability of both ``good events" happening is lower bounded by $\Pr(A\cap B) \ge \max\{0,\Pr(A)+\Pr(B)-1 \}$. That is, if we denote by $\A(V,\x)$ an event that
\begin{equation}
\begin{split}
    |p_{\x}(V)&-\tilde{q}_{\x}(V)| \\
    &<
	p_{\x}(V)\left[
	 \frac{1}{\poly{N}} + \frac{\epsilon}{\alpha\delta} \left(1 + \frac{1}{\poly{N}}\right) \right],
\end{split}
\end{equation}
we have that
\begin{equation}\label{eq:stock2}
	\Pr_{V\sim\nu,\x\sim \mathrm{unif}(\H)} \left[\A(V,\x)\right]
	 > \frac{(1-\alpha)^2}{C}-\delta,
\end{equation}
which can be simplified by using the hiding property described in Lemma \ref{lem:hidingPROP}. The property implies that  $p_{x}(V)=p_{\x_0}(V_\x)$ and $\tilde{q}_\x (V)=\tilde{q}_{\x_0} (V_\x)$ so that
    \begin{equation}
    \Pr_{V\sim\nu,\x\sim  \mathrm{unif}(\H)}\left[\A(V,\x)\right] 
    =\underset{\x\sim\mathrm{unif}(\H)}{\E} \left(\Pr_{V\sim\nu}\left[\A(V_\x,\x_0)\right] \right).
    \end{equation}
Moreover from the invariance of the Haar measure it follows that for every $\ket{\x}\in\H$,    $V_\x$ is distributed in the same way as $V$.  Consequently,
\begin{equation} 
\begin{split}
 \underset{\x\sim\mathrm{unif}(\H)}{\E} \! \! \left(\Pr_{V\sim\nu}\left[\A(V_\x,\x_0)\right]
\right) \! 
&= \! \! \! \! \underset{\x\sim\mathrm{unif}(\H)}{\E} \! \!
\left(\Pr_{V\sim\nu}\left[\A(V,\x_0)\right] \right) \\
&= \Pr_{V\sim\nu}\left[\A(V,\x_0)\right].
\end{split}
\end{equation}
We finally obtain that for every $\x_0$,
\begin{align}\label{eq:stock3}
	\Pr_{V\sim\nu} \left\{|p_{\x_o}(V)-\tilde{q}_{\x_o}(V)|  < 
	p_{\x_o}(V)\left[
	 \frac{1}{\poly{N}} + \frac{\epsilon}{\alpha\delta} \left(1 + \frac{1}{\poly{N}}\right) \right]\right \}
	 > \frac{(1-\alpha)^2}{C}-\delta.
\end{align}

Following \cite{Bremner2011} and requiring a constant $\epsilon$ and relative error $\epsilon/(\alpha\delta)$ (See Remark \ref{rem:constants-in-stockmeyer}) we may set, for instance,
\begin{align}
    \alpha = \frac{1}{2}, && \delta = \frac{(1-\alpha)^2}{2C} = \frac{1}{8C}, && \epsilon = \frac{\alpha\delta}{4} = \frac{1}{64C},
\end{align}
Stockmeyer's algorithm is able to output  $(1/4+o(1), 1/(8C))$-multiplicative approximates of the output probabilities for $1/(8C)$ fraction of the (passive or active, with constant $C_{\mathrm{pas}}$ or $C_{\mathrm{act}}$ respectively) FLO circuits $V$ if there is a classical machine that approximately sample from $p_{\x}(V)$ for any  FLO circuit $V$ within the $l_1$ distance $1/(64C)$. 

\end{proof}

\begin{rem}\label{rem:constants-in-stockmeyer}
    Of the three parameters $\epsilon,\delta$, and $\alpha$, the $l_1$-distance $\epsilon$ and the relative error $\epsilon/(\alpha\delta)$ are typically assumed to be constant ($1/4+o(1)$ for the latter) in quantum advantage proposals \cite{Bremner2016,gao_ising_2017,bermejo-vega_architectures_2018,bouland_conjugated_2018}. In which case $\delta$ is also a constant, and then one optimizes for the constant $\alpha$. However, one may allow $\epsilon$ to decay inverse polynomially in the size of the system while retaining a sensible notion of simulation by the sampling task \cite{Aaronson2013,pashayan_estimation_2020}. Doing so allows a more plausible (weaker) average-case hardness assumption but the sampling task becomes more demanding.
\end{rem}

\section{TV distance between Haar measure and its Cayley path deformations}\label{app:TVbound}

In this part we prove Lemma \ref{lem:TVdistanceGroup} that upper bounds total variation distance between the Haar mesures $\mu_G$ and thier deformed versions $\mu^\theta_G$, for $G=\U(d)$ and $G=\SO(2d)$. In what follows we will use the notions and notation established in Section  \ref{sec:cayley}

{\bf Lemma} {\it
Let $G$ be equal to $\U(d)$ or  $SO(2d)$. Let $g_0\in G$ be a fixed element in $G$. Let $g\sim \mu_G$ an let $g_\theta=g_0 F_\theta(g)$, for $\theta\in[0,1]$ and $F_\theta:G\rightarrow G$ defined in \eqref{eq:deformTRUE}. Let now $\mu_G^\theta$ denotes the induced measure according to which $g_\theta$? is distributed. Assume furthermore that $\theta\leq 1-\Delta$, for $\Delta>0$. We then have }
\begin{equation}
\begin{split}
    \norm{\mu_{\U(d)} -\mu^\theta_{\U(d)}}_{\mathrm{TVD}} &\leq   d^2 \Delta/2\  ,\ \\ \norm{\mu_{\SO(2d)} -\mu^\theta_{\SO(2d)}}_{\mathrm{TVD}} &\leq  d^2 \Delta/2 \ .
\end{split}
\end{equation}

\begin{proof}
By the block diagonalization previously discussed, the TVD can be computed in terms of an integral on maximal torus $\mathbb{T}$ of $G$:
\begin{align}
	\norm{\mu_G - \mu_G^{\theta}}_{\mathrm{TVD}}
	=\frac{1}{2} \int_{\mathbb{T}} d\bVar \abs{\mu_G(\bVar) - \mu_G^{\theta}(\bVar)}.
\end{align}
$\mu_G$ is the distribution for generic $g\in G$ of the ``generalized eigenvalues" $\bPh \coloneqq (\phi_1,\dots,\phi_d)$, quantities that are invariant under conjugation by any element of $G$, and it is given by the celebrated Weyl's integration formulas:

\begin{fact}[Weyl's integration formula for $\U(d)$ and $\SO(d)$]\label{thm:weyl} 
\begin{equation}\label{eq:weyl}
\begin{split}
	\mu_{\U(d)}(\bPh) &= 
	\frac{1}{d!(2\pi)^d} \prod_{1\le j,k\le d} \left|e^{i\phi_k} - e^{i\phi_j}\right|^2   \\
	\mu_{\SO(2d)}(\bPh) &= 
	\frac{2}{d!(2\pi)^d} \prod_{1\le j,k\le d} 4(\cos(\phi_k) - \cos(\phi_j))^2 
\end{split}
\end{equation}
\end{fact}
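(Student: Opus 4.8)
\emph{Proof proposal.} Both densities are special cases of the \emph{Weyl integration formula}, which I would establish by analysing the conjugation map rather than quoting the result. Fix $G=\U(d)$ or $G=\SO(2d)$ with the maximal torus $\mathbb{T}$ described in Section~\ref{sec:cayley} (diagonal phases $\mathrm{diag}(e^{i\phi_1},\dots,e^{i\phi_d})$ for $\U(d)$, and $d$ independent planar rotations through angles $\phi_1,\dots,\phi_d$ for $\SO(2d)$), and let $W=N_G(\mathbb{T})/\mathbb{T}$ be the Weyl group. The plan is: (i) use that $G$ is compact and connected so that every $g\in G$ is conjugate into $\mathbb{T}$, making the conjugation map $q\colon (G/\mathbb{T})\times\mathbb{T}\to G$, $q(\bar h,t)=hth^{-1}$, surjective and generically $|W|$-to-one; (ii) compute its Jacobian at a regular point; (iii) push the Haar measure of $G$ forward through $q$ and read off the density in the coordinates $\bPh=(\phi_1,\dots,\phi_d)$; (iv) insert the root data for each group and fix the normalisation.

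For step (ii) I would complexify and decompose the Lie algebra into Cartan and root spaces, $\g_\C=\mathfrak{t}_\C\oplus\bigoplus_{\alpha}\g_\alpha$. Identifying tangent spaces by left translation, the differential of $q$ is the identity along the $\mathbb{T}$ direction and equals $\Ad(t^{-1})-\mathrm{id}$ along $\g/\mathfrak{t}$; since $\Ad(t)$ acts on $\g_\alpha$ by the character $e^{i\alpha(\bPh)}$, pairing each positive root with its negative yields the real, nonnegative Jacobian
\begin{equation}
    J(\bPh)=\prod_{\alpha>0}\bigl|e^{i\alpha(\bPh)}-1\bigr|^{2}.
\end{equation}
Hence for any class function $f$ one gets $\int_G f\,d\mu_G=\frac{1}{|W|(2\pi)^d}\int_{\mathbb{T}} f\,J(\bPh)\,d\phi_1\cdots d\phi_d$, which identifies $\mu_G(\bPh)$ with $J(\bPh)$ up to an overall constant.

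It then remains (step iv) to substitute the root systems. For $\U(d)$ the positive roots are $e_j-e_k$ with $j<k$, $|W|=d!$, and $|e^{i(\phi_j-\phi_k)}-1|=|e^{i\phi_j}-e^{i\phi_k}|$, so $J$ becomes $\prod_{j<k}|e^{i\phi_j}-e^{i\phi_k}|^2$, reproducing the first line of \eqref{eq:weyl} (the product being read over unordered pairs $j<k$); this is just the classical CUE eigenvalue density. For $\SO(2d)$ the root system is $D_d$ with positive roots $e_j\pm e_k$ ($j<k$) and $|W|=2^{d-1}d!$, and the elementary identity
\begin{equation}
    \bigl|e^{i\phi_j}-e^{i\phi_k}\bigr|^2\bigl|e^{i\phi_j}-e^{-i\phi_k}\bigr|^2=4(\cos\phi_j-\cos\phi_k)^2
\end{equation}
collapses $J$ to $\prod_{j<k}4(\cos\phi_j-\cos\phi_k)^2$, matching the second line.

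The structural part above is routine; the step I expect to be the genuine obstacle is pinning down the overall constant for $\SO(2d)$, in particular the prefactor $2$ in \eqref{eq:weyl}. This is delicate because it couples three conventions that must be made mutually consistent: the range chosen for the block angles $\phi_j$ (full circle versus $[0,\pi]$, the latter exploiting that the spectrum depends only on $\cos\phi_j$), the order $2^{d-1}d!$ of the $D_d$ Weyl group with its even-sign-change restriction, and the $(2\pi)^d$ torus normalisation. Rather than track every constant through the Jacobian, I would fix the prefactor at the end by imposing $\int_{\mathbb{T}}\mu_G(\bPh)\,d\bPh=1$, which determines it unambiguously once the angle range is specified.
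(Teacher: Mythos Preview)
The paper does not prove this statement at all: it is recorded as a \emph{Fact} (a classical result) and simply quoted inside the proof of Lemma~\ref{lem:TVdistanceGroup}, with no derivation given. Your proposal therefore goes well beyond what the paper does---you are supplying the standard proof of Weyl's integration formula via the conjugation map $(G/\mathbb{T})\times\mathbb{T}\to G$, the root-space decomposition of $\g$, and the explicit root data for $A_{d-1}$ and $D_d$. That route is correct in outline and is the textbook argument; your identification of the Jacobian as $\prod_{\alpha>0}|e^{i\alpha(\bPh)}-1|^2$, the trigonometric identity collapsing the $D_d$ factors to $4(\cos\phi_j-\cos\phi_k)^2$, and your plan to fix the $\SO(2d)$ prefactor by normalisation are all sound. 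Since the paper offers nothing to compare against here, there is no methodological divergence to discuss---only that you have proved what the paper merely cites.
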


To upper bound the TVD,
we use the fact that the Haar measure $\mu_G(\bPh)$ is induced from $\mu_G^{\theta}(\bVar)$ by the inverse map $F^{-1}_{\theta}$ to compute $\mu_G(\bVar)$ by Fact~\ref{fact:transport-measure} below.
In particular, we will show that the two measures $\mu_G$ and $\mu_G^{\theta}$ expressed in the same coordinates $\bVar$ are proportional to each other and bound the proportionality constant.

\begin{fact}[Transport of measure]\label{fact:transport-measure}
Let $M$ and $N$ be $d$-dimensional smooth manifolds  with local coordinates $\bPh = (\phi_1,\phi_2,\dots,\phi_d)$ and $\bVar = (\varphi_1,\varphi_2,\dots)$, $\mu$ a measure on $M$, and $F:M\to N$ a smooth map. Then 
\begin{align}
    \tilde{\mu} = \mu \circ F^{-1}
\end{align}
is a measure on $N$ transported under $F$, where $F^{-1}$ denotes the pre-image of $F$.
In particular, for any measurable set $A \subset N$,
\begin{align}\label{eq:pushforward}
    \tilde{\mu}(A) = \mu(F^{-1}(A)) \coloneqq \int_{F^{-1}(A)} \mu(\bPh) d\bPh.
\end{align}
Explicitly, since the manifolds are locally Euclidean, $\tilde{\mu}(A)$ has an expression in terms of the Jacobian:
\begin{align}\label{eq:induced-measure}
    \tilde{\mu}(A) = \int_A \mu(F^{-1}(d\bVar)) \abs{DF^{-1}(\bVar)},
\end{align}
where $\abs{DF^{-1}(\bVar)}$ is the Jacobian, which
by the inverse function theorem
\begin{align}
    \abs{DF^{-1}(\bVar)} = \abs{DF(F^{-1}(\bVar))}^{-1}.
\end{align}
\end{fact}

\begin{rem}
Since $F^{-1}(\bVar)=\bPh$, the formula can be interpreted as a change of variable from $\bVar$ to $\bPh$. In our case,  $F^{-1}_{\theta}: \mathbb{T}(\bPh) \to \mathbb{T}(\bVar)$ plays the role of $F$. \eqref{eq:change-of-variable} is precisely the change of variable induced by $F^{-1}_{\theta}$.
\end{rem}

As an intermediate step, let us derive the explicitly change-of-variable formula from $\bVar$ to $\bPh$.
By applying to 
\begin{align}\label{eq:rescaling-map}
    \exp(\varphi
    _j\oX_j) = F_{\theta}\ (\exp(\phi_j \oX_j)).
\end{align}
(recalling that $\oX$ are generators of the maximal torus for $\U(d)$ and $\SO(2d)$)
the identity
\begin{align}\label{eq:tan} 
	f^{-1} (\exp(\phi_j \oX_j)) &= -\tan(\phi_j/2)\oX_j,
\end{align}
which can be verified by explicitly computing the Cayley transform \eqref{eq:cayley-transform} of at most a $2\times 2$ matrix, we obtain the change-of-variable formula:
\begin{align}\label{eq:change-of-variable}
	\varphi &= 2\tan^{-1}[\theta\tan(\phi/2)].
\end{align}

Now we compute and bound \eqref{eq:induced-measure} 
\begin{align}
    \mu_G(A) = \int_A \mu_G (d \bPh =F_{\theta}(d\bVar))
    \abs{DF^{-1}_{\theta}(\bPh = F_{\theta}(\bVar))}^{-1}.
\end{align}
Throughout the proof, we set $\theta = 1-\Delta$ and notice that the final upper bound on the TVD would still hold for $\theta \le 1-\Delta$.
The change-of-variable formula \eqref{eq:change-of-variable} directly gives the element of the (diagonal) Jacobian
\begin{align}
	\abs{\partial_{\phi}\varphi}^{-1} = 
	\frac{\cos^2(\phi/2) + (1-\Delta)^2\sin^2(\phi/2)}{1-\Delta} = \frac{1-\Delta(2-\Delta) \sin^2(\phi/2)}{1-\Delta},
\end{align}
which attains the minimum when $\sin^2(\phi/2)=1$ and the maximum when $\sin^2(\phi/2)=0$. 
Thus, we have the following bound on the Jacobian for both the passive and active cases
\begin{align}
    &1-\Delta \le \abs{\partial_{\phi}\varphi}^{-1} \le 
    \frac{1}{1-\Delta}, \\
	(1-\Delta)^d \le  &\abs{DF^{-1}_{\theta}(\bPh = F_{\theta}(\bVar))}^{-1} \le \frac{1}{(1-\Delta)^d}.
\end{align}

At last, to bound the TVD, we express the measures $\mu_G$ and $\mu_G^{\theta}$ in the same coordinates $\bVar$.
For the case of passive FLO, this can be done by directly applying the inverse of the deformation map \eqref{eq:deform2} to each group element $e^{i\phi_j}$, $j\in [d]$
\begin{align}
    F^{-1}_{1-\Delta}(e^{i\varphi_j})
    = \frac{\Delta + (\Delta-2)e^{i\varphi_j}}{\Delta(e^{i\varphi_j}+1) - 2}.
\end{align}
As a result,
\begin{align}
	|e^{i\phi_k} &- e^{i\phi_j}| 
	= 
	\abs{F^{-1}_{1-\Delta}(e^{i\varphi_k}) - F^{-1}_{1-\Delta}(e^{i\varphi_j})} 
	= 
	\frac{(1-\Delta)\abs{e^{i\varphi_k} - e^{i\varphi_j}}}
	{\abs{ 1-\frac{\Delta}{2}(e^{i\varphi_j}+1)}
	\abs{ 1-\frac{\Delta}{2}(e^{i\varphi_k}+1)}}
	\eqqcolon \Gamma_{\mathrm{pas}}\abs{e^{i\varphi_k} - e^{i\varphi_j}},
\end{align}
which implies, via the Weyl's formula \eqref{eq:weyl} that the two measures are proportional:
\begin{align}
    \mu_{\U(d)}(\bVar) = \Gamma_{\mathrm{pas}}^{d(d-1)/2} \mu_{\U(d)}^{\theta}(\bVar).
\end{align}
The proportionality constant $\Gamma_{\mathrm{pas}}$ attains the maximum value when $e^{i\varphi_j} = e^{i\varphi_k} = 1$ and the minimum value when $e^{i\varphi_j} = e^{i\varphi_k} = -1$, giving the following bound
\begin{align}
   &(1-\Delta)^2  \le \Gamma_{\mathrm{pas}} \le \frac{1}{(1-\Delta)^2},
\end{align}
which leads to the bound of the TVD stated in the lemma:
\begin{align}
	\norm{\mu_{\U(d)} - \mu_{\U(d)}^{\theta}}_\mathrm{TVD}
	&\le \frac{1}{2}\abs{1 - (1-\Delta)^{d^2}} \le \frac{d^2 \Delta}{2};
\end{align}
the inequality in the last line can be proved by induction on $d^2 \ge 1$.

Turning to the case of active FLO, the change-of-variable formula \eqref{eq:change-of-variable} implies that for any $j,k \in [d]$,
\begin{align}
\begin{split}
	\cos\phi_k - \cos\phi_j &=  \frac{(1-\Delta)^2 - \tan^2(\varphi_k/2)}{(1-\Delta)^2 + \tan^2(\varphi_k/2)} - \frac{(1-\Delta)^2 - \tan^2(\varphi_j/2)}{(1-\Delta)^2 + \tan^2(\varphi_j/2)} 
\end{split}\\
	&= 
	\frac{(1-\Delta)^2(\cos \varphi_k - \cos \varphi_j)}
	{\left( 1-\Delta(1-\frac{\Delta}{2})(1+\cos \varphi_j) \right)
	\left( 1-\Delta(1-\frac{\Delta}{2})(1+\cos \varphi_k) \right)}
	 \\
	&\eqqcolon \Gamma_{\mathrm{act}}(\cos \varphi_k - \cos \varphi_j),
\end{align}
where we have used $\cos(2\theta) = (1-\tan^2 \theta)(1+\tan^2 \theta)^{-1}$ in the first line and $\tan^2 \theta = (1-\cos(2\theta))(1+\cos(2\theta))^{-1}$ in the second line.
Thus we have from the Weyl's formula \eqref{eq:weyl} that
\begin{align}
    \mu_{\SO(2d)}(\bVar) 
    = \Gamma_{\mathrm{act}}^{d(d-1)/2} \mu_{\SO(2d)}^{\theta}(\bVar).
\end{align}
The proportionality constant $\Gamma_{\mathrm{act}}$  attains the maximum value when $\cos \varphi_j = 1$ and the minimum value when $\cos \varphi_j=-1$, giving the bound
\begin{align}
(1-\Delta)^2 \le \Gamma_{\mathrm{act}}
	\le \frac{1}{(1-\Delta)^2}.
\end{align}
Therefore, the TVD is bounded in a similar manner to the passive case.
\begin{align}
	\norm{\mu_{\SO(2d)} - \mu_{\SO(2d)}^{\theta}}_{\mathrm{TVD}}
	&\le \frac{1}{2}\abs{1 - (1-\Delta)^{d^2}} \le \frac{d^2 \Delta}{2}. \label{eq:finite-N-scale}
\end{align}
\end{proof}

\section{Polynomials associated to probabilities in FLO circuits}\label{sec:degree}

In this section, we give the degrees of matrix polynomials associated to fermionic representations of $G=\U(d)$ and $G=\SO(2d)$. These polynomials, when evaluated on the Cayley path $g_\theta$ in the appropriate group (see Eq.~\eqref{eq:deformTRUE}), give rise polynomials and rational functions $\theta$ for the outcome probabilities $p_\x(\Pi(g_\theta),\inpsi)=|\bra{\x}\Pi(g) \ket{\inpsi}|^2 $ in our quantum advantage schemes.  The explicit form of these polynomials will be used in Section \ref{sec:worst-to-avg-reduction} when discussing worst-to-average-case reductions.

We start with discussing the passive FLO case and then the active FLO case. 

It will be useful to introduce the following notation. Given a $d \times d$ matrix $M$ and two subsets of indices $\X, \Y \subset [d]$ with cardinality $n$, where  $\X = \{ a_1, a_2, \ldots a_n \}$ ($a_i < a_j$ if $i<j$) and $\Y=\{b_1, b_2, \ldots, b_n \}$ ( $b_i < b_j$ if $i<j$),
we define $M_{\X,\Y}$ as the $n \times n$ matrix with entries
\begin{equation}
    (M_{\X, \Y} )_{k,\ell} = M_{a_k, b_\ell} \, , \; \; \;  \; k,\ell=1, \ldots n.
\end{equation}

\begin{lem} \label{lem:Udet}
Given two Fock basis states $\ket{\X}, \ket{\Y}  \in \bigwedge^n(\C^d)$ and a $U \in \U(d)$, the the amplitude between $\ket{\X}$ and $\Pipas(U)\ket{\Y}$
is provided by the expression  
\begin{equation}
    \bra{\X} \Pipas(U)\ket{\Y} = \det (U_{\X, \Y}).
\end{equation}
\end{lem}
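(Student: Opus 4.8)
The plan is to unfold both sides of the claimed identity $\bra{\X}\Pipas(U)\ket{\Y}=\det(U_{\X,\Y})$ by working in the first-quantization picture, where $\bigwedge^n(\C^d)$ sits inside $(\C^d)^{\otimes n}$ and $\Pipas(U)=U^{\otimes n}$ restricted to the antisymmetric subspace. Recall from the excerpt (Eq. for $|\X\rangle$ as a wedge of single-particle vectors) that if $\X=\{a_1,\dots,a_n\}$ with $a_1<\dots<a_n$ then $\ket{\X}=|a_1\rangle\wedge\cdots\wedge|a_n\rangle = \frac{1}{\sqrt{n!}}\sum_{\sigma\in S_n}\epsilon_{\sigma}|a_{\sigma(1)}\rangle\otimes\cdots\otimes|a_{\sigma(n)}\rangle$, and similarly for $\ket{\Y}$ with $\Y=\{b_1,\dots,b_n\}$, $b_1<\dots<b_n$. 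The strategy is simply to compute $\bra{\X}U^{\otimes n}\ket{\Y}$ by expanding both Slater determinant states into their tensor-product sums and using the fact that $\langle a_i | U | b_j\rangle = U_{a_i,b_j} = (U_{\X,\Y})_{i,j}$.

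Concretely, first I would write
\begin{align}
\bra{\X}\Pipas(U)\ket{\Y}
&= \frac{1}{n!}\sum_{\sigma,\tau\in S_n}\epsilon_\sigma\,\epsilon_\tau
\prod_{k=1}^n \langle a_{\sigma(k)}|U|b_{\tau(k)}\rangle \nonumber\\
&= \frac{1}{n!}\sum_{\sigma,\tau\in S_n}\epsilon_\sigma\,\epsilon_\tau
\prod_{k=1}^n (U_{\X,\Y})_{\sigma(k),\tau(k)}.
\end{align}
Then I would reindex: replacing $k$ by $\sigma^{-1}(k)$ in the product turns $\prod_k (U_{\X,\Y})_{\sigma(k),\tau(k)}$ into $\prod_k (U_{\X,\Y})_{k,\tau(\sigma^{-1}(k))}$, and setting $\pi=\tau\sigma^{-1}$ with $\epsilon_\sigma\epsilon_\tau=\epsilon_\pi$ (since $\epsilon$ is multiplicative under composition and $\epsilon_{\sigma}=\epsilon_{\sigma^{-1}}$), the double sum collapses: for each fixed $\pi$, the $\sigma$-sum contributes a factor $n!$. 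This yields $\bra{\X}\Pipas(U)\ket{\Y}=\sum_{\pi\in S_n}\epsilon_\pi\prod_{k=1}^n (U_{\X,\Y})_{k,\pi(k)} = \det(U_{\X,\Y})$, which is precisely the Leibniz formula for the determinant of the $n\times n$ submatrix.

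Alternatively — and this is a cleaner route I might prefer — I would invoke the inner product formula for Slater determinants already stated in the excerpt, namely $\langle\Psi|\Phi\rangle=\det C$ with $C_{ij}=\langle\xi_i|\phi_j\rangle$ for $|\Psi\rangle=|\xi_1\rangle\wedge\cdots\wedge|\xi_n\rangle$ and $|\Phi\rangle=|\phi_1\rangle\wedge\cdots\wedge|\phi_n\rangle$. Since $\Pipas(U)\ket{\Y}=\Pipas(U)(|b_1\rangle\wedge\cdots\wedge|b_n\rangle)=(U|b_1\rangle)\wedge\cdots\wedge(U|b_n\rangle)$ is again a Slater determinant state built from the orthonormal family $\{U|b_j\rangle\}$, I can apply that formula directly with $\xi_i=|a_i\rangle$ and $\phi_j=U|b_j\rangle$, obtaining $\bra{\X}\Pipas(U)\ket{\Y}=\det C$ where $C_{ij}=\langle a_i|U|b_j\rangle=U_{a_i,b_j}=(U_{\X,\Y})_{ij}$, hence $=\det(U_{\X,\Y})$.

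I do not expect any serious obstacle here: the only point requiring mild care is the bookkeeping of Levi-Civita signs and the reindexing of permutation sums in the first approach (making sure the ordering conventions $a_1<\dots<a_n$, $b_1<\dots<b_n$ are consistently used so that no extra sign appears), and — if taking the second approach — checking that $\Pipas(U)$ indeed acts on a wedge product componentwise, which is immediate from $\Pipas(U)=U^{\otimes n}|_{\wedge^n}$ and the fact that $U^{\otimes n}$ commutes with antisymmetrization. Either way the proof is short.
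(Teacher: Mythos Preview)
Your proposal is correct, and your second approach is exactly the one the paper takes: observe that $\Pipas(U)\ket{\Y}=(U\ket{b_1})\wedge\cdots\wedge(U\ket{b_n})$ is itself a Slater determinant state and apply the inner-product formula $\langle\Psi|\Phi\rangle=\det C$ with $C_{k\ell}=\langle a_k|U|b_\ell\rangle=(U_{\X,\Y})_{k\ell}$. Your first, direct Leibniz-expansion route is equally valid but a bit longer.
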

\begin{proof}
Let  $\X = \{ a_1, a_2, \ldots a_n \}$ ($a_i < a_j$ if $i<j$) and $\Y=\{b_1, b_2, \ldots, b_n \}$ ($b_i < b_j$ if $i<j$). By definition we have that 
\begin{equation}
\begin{split}
    \Pipas(U) \ket {\Y} &= U^{\otimes n} \ket{b_1} \wedge \ket{b_2} \wedge \cdots \wedge \ket{b_n}\\
    &= \ket{\xi_1} \wedge \ket{\xi_2} \wedge \cdots \wedge \ket{\xi_n}
\end{split}
\end{equation}
where
\begin{equation}
    \ket{\xi_\ell}= U\ket{b_\ell} = \sum_{j=1}^d U_{j, b_\ell} \ket{j} \, , \; \; \; \ell=1, \ldots, n.
\end{equation}
Using the last two equations and  Eq.~\eqref{eq:det-inner}, we can deduce that
\begin{equation}
\begin{split}
    \bra{\X} \Pipas(U)\ket{\Y} &= \det(C) \, , 
    \\
    C_{k,\ell} = \braket{a_k}{\xi_\ell} = \bra{a_k}  \sum_{j=1}^d U_{j, b_\ell} \ket{j} &= U_{a_k, b_\ell}= (U_{\X, \Y})_{k, \ell}
\end{split}
\end{equation}
which proves the statement.
\end{proof}
This lemma allows us to directly obtain the following result:

\begin{prop} [Degrees of polynomials describing probabilities associated to  passive FLO circuits.]\label{prop:passive_degree}
Consider a state $ \ket{\Psi} \in \bigwedge^n(\C^d)$. For an arbitrary $U \in \U(d)$ the outcome probability
 $ p_{\x}( \Pipas(U),\Psi) =|\bra{\x}\Pipas(U)\ket{\Psi}|^2$ is a  degree $2n$ homogeneous polynomial in the entries of $U$ and $U^{\dagger}$.
\end{prop}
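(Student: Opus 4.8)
The plan is to reduce Proposition~\ref{prop:passive_degree} to Lemma~\ref{lem:Udet} by expanding the state $\ket{\Psi}$ in the Fock basis. First I would write $\ket{\Psi} = \sum_{\Y \in \binom{[d]}{n}} c_\Y \ket{\Y}$, where the sum runs over all size-$n$ subsets $\Y \subset [d]$ and $c_\Y = \braket{\Y}{\Psi}$ are fixed complex coefficients (independent of $U$). By linearity and Lemma~\ref{lem:Udet}, for any fixed Fock outcome $\ket{\x}$ with $|\x| = n$ (which we write as $\ket{\X}$ for the corresponding subset $\X \in \binom{[d]}{n}$) we have
\begin{equation}
  \bra{\X}\Pipas(U)\ket{\Psi} = \sum_{\Y \in \binom{[d]}{n}} c_\Y \det(U_{\X,\Y})\ .
\end{equation}
Now $\det(U_{\X,\Y})$ is, by the Leibniz formula, a homogeneous polynomial of degree exactly $n$ in the entries $\{U_{ij}\}$ of $U$: each term is a product of $n$ matrix entries, one from each of the $n$ rows indexed by $\X$. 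Hence the amplitude $\bra{\X}\Pipas(U)\ket{\Psi}$ is a homogeneous polynomial of degree $n$ in the entries of $U$ (a linear combination of such determinants with coefficients $c_\Y$ that do not depend on $U$).

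Next I would take the modulus squared. Writing $p_{\x}(\Pipas(U),\Psi) = \bra{\X}\Pipas(U)\ket{\Psi}\,\overline{\bra{\X}\Pipas(U)\ket{\Psi}}$, the first factor is degree-$n$ homogeneous in the entries of $U$ and the second factor is its complex conjugate, hence degree-$n$ homogeneous in the entries of $\overline{U}$, equivalently in the entries of $U^\dagger$ (since $(U^\dagger)_{ij} = \overline{U_{ji}}$, passing from $\overline{U}$ to $U^\dagger$ is just a relabelling of indices and does not change the degree). Therefore the product is a polynomial that is homogeneous of degree $n$ in the entries of $U$ and simultaneously homogeneous of degree $n$ in the entries of $U^\dagger$, i.e.\ a degree-$2n$ polynomial in the combined set of entries of $U$ and $U^\dagger$, as claimed. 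For the particular case relevant to the paper, $n = 2N$ and $d = 4N$, this gives degree $2n = 4N$, consistent with the rational-function degree bookkeeping used later in Section~\ref{sec:worst-to-avg-reduction}.

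There is essentially no obstacle here: the only point requiring a word of care is the bookkeeping of what "homogeneous of degree $2n$ in the entries of $U$ and $U^\dagger$" means, namely that we treat the entries of $U$ and of $U^\dagger$ as formally independent variables (which is the convention needed downstream when substituting the Cayley path and reading off the degree of the resulting rational function in $\theta$). I would state this convention explicitly. One could alternatively phrase the proof entirely in terms of $\overline{U}$ instead of $U^\dagger$; I would mention $U^\dagger$ since that is the notation used in Eq.~\eqref{eq:momentHAAR} and in the statement, but either is fine. No positivity or genericity assumptions on $U$ are needed — the statement is a purely algebraic identity valid for all $U \in \U(d)$ (indeed for all $U \in \C^{d\times d}$).
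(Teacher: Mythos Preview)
Your proof is correct and follows essentially the same route as the paper: expand $\ket{\Psi}$ in the Fock basis, invoke Lemma~\ref{lem:Udet} to express the amplitude as a linear combination of $n\times n$ subdeterminants of $U$, and then take the modulus squared. Your additional remarks on the formal independence of the entries of $U$ and $U^\dagger$ are a helpful clarification but not a departure from the paper's argument.
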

\begin{proof}
One can expand the vector  $ \ket{\Psi}  $ in terms of the Fock basis states belonging to $\bigwedge^n(\C^d)$ as
\begin{equation}
    \ket{\Psi} = \sum_{\substack{\Y \subset [d] \\ |\Y|=n}} c_{\Y} \ket{\Y}.
\end{equation}
Let $\X \subset [d]$ denote the set of indices corresponding to $\x$ as an indicator function (i.e., $\ket{\x} = \ket{\X}$). Using Lemma~\ref{lem:Udet}, we can write the relevant amplitude as
\begin{equation}
\begin{split}
   \bra{\x}\Pipas(U)\ket{\Psi} &= \sum_{\substack{\Y \subset [d] \\ |\Y|=n}} c_{\Y}  \bra{\X} \Pipas(U)\ket{\Y} \\
   &=
   \sum_{\substack{\Y \subset [d] \\ |\Y|=n}} c_{\Y} \det (U_{\X, \Y})
   .
\end{split}
\end{equation}
As each term in the sum is a determinant of a $n \times n$ submatrix of $U$, this expression gives a homogeneous polynomial of the entries of $U$ of order $n$. This in turn directly implies that $ p_{\x}( \Pipas(U),\Psi) =|\bra{\x}\Pipas(U)\ket{\Psi}|^2$ is a degree $2n$ polynomial in the entries of $U$ and $U^{\dagger}$.
\end{proof}

\begin{lem}[Polynomial for output amplitude of passive FLO \cite{Ivanov2017}]\label{lem:ivanov_amplitude}
Consider the input state $\ket{\inpsi}=\psiQUAD^{\ot N}\in \bigwedge^{2N}(\C^{4N})$. For an arbitrary $U \in \U(4N)$ the outcome amplitude is given by 

\begin{equation}\label{eq:probamplitudePAS}
\begin{split}
     \bra{\x}&\Pipas(U)\ket{\Psi} =\frac{1}{\sqrt{2^N}} \sum_{(y_1,\cdots,y_N)\in\lbrace0,1 \rbrace^N} \\
     & \times \det (U^T_{\lbrace 2y_1+1,2y_1+2,\cdots,2y_N+4N-3,2y_N+4N-2 \rbrace,\X})\ ,
\end{split}
\end{equation}
  where $U^T_{\lbrace 2y_1+1,2y_1+2,\cdots,2y_N+4N-3,2y_N+4N-2 \rbrace,\X}$ indicates the transpose of $U$ with the rows not indexed by $\lbrace 2y_1+1,2y_1+2,\cdots,2y_N+4N-3,2y_N+4N-2 \rbrace$ and columns not indexed by $\X$. Note that this is a degree $N$ polynomial in the entries of $U$.
\end{lem}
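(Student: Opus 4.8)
The plan is to rederive the amplitude formula of Ivanov~\cite{Ivanov2017} directly from Lemma~\ref{lem:Udet} by expanding the magic input state into Fock basis states. First I would unpack $\ket{\inpsi}$: writing $d=4N$ and grouping the modes into $N$ consecutive blocks $B_k=\{4k-3,4k-2,4k-1,4k\}$, the factor $\psiQUAD=\tfrac{1}{\sqrt2}(\ket{1100}+\ket{0011})$ acting on block $B_k$ is, in the occupation-number convention of Section~\ref{sec:notations}, the equal superposition of the Fock basis states supported on $\{4k-3,4k-2\}$ (the ``$1100$'' term) and on $\{4k-1,4k\}$ (the ``$0011$'' term). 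Since the blocks are disjoint, increasingly ordered, and the occupied modes in each term form a contiguous increasing run, expanding the tensor product over the $N$ blocks yields
\begin{equation}
  \ket{\inpsi}=\frac{1}{\sqrt{2^N}}\sum_{\y\in\{0,1\}^N}\ket{Z(\y)},\qquad Z(\y)=\bigcup_{k=1}^N\{\,2y_k+4k-3,\;2y_k+4k-2\,\},
\end{equation}
with \emph{every} coefficient equal to $+2^{-N/2}$ and no relative fermionic signs. Verifying this sign-freeness is the one genuinely delicate step: it amounts to checking that the natural embedding $\Hfock(\C^4)^{\ot N}\hookrightarrow\Hfock(\C^{4N})$ associated with this block decomposition is sign-preserving on products of the relevant Fock basis states, which follows from the ordering of the creation operators in $(f_1^{\dagger})^{x_1}\cdots(f_{4N}^{\dagger})^{x_{4N}}|0_F\rangle$ together with the contiguity of each block's occupied modes. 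I expect this to be the main (indeed the only) obstacle; everything that follows is routine.

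Next I would apply Lemma~\ref{lem:Udet} term by term. Let $\X\subset[4N]$ be the support of $\ket{\x}$; for the amplitude to be nonzero one needs $|\X|=2N$ because $\ket{\x},\ket{\inpsi}\in\bigwedge^{2N}(\C^{4N})$. Then $\bra{\X}\Pipas(U)\ket{Z(\y)}=\det(U_{\X,Z(\y)})$, and by linearity
\begin{equation}
  \bra{\x}\Pipas(U)\ket{\inpsi}=\frac{1}{\sqrt{2^N}}\sum_{\y\in\{0,1\}^N}\det\!\bigl(U_{\X,Z(\y)}\bigr).
\end{equation}
Using $\det M=\det M^{T}$ and the identity $(U_{\X,Z(\y)})^{T}=(U^{T})_{Z(\y),\X}$, each summand becomes $\det\!\bigl(U^{T}_{Z(\y),\X}\bigr)$, i.e.\ the determinant of the submatrix of $U^{T}$ obtained by deleting the rows not indexed by $Z(\y)$ and the columns not indexed by $\X$. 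Recognising $Z(\y)=\{2y_1+1,2y_1+2,\dots,2y_N+4N-3,2y_N+4N-2\}$ then reproduces exactly Eq.~\eqref{eq:probamplitudePAS}.

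Finally, for the degree count I would note that each summand is the determinant of a $2N\times2N$ submatrix of $U$, hence a homogeneous polynomial of degree $2N$ in the entries of $U$ (consistent with Proposition~\ref{prop:passive_degree}, whose degree-$4N$ claim for the probability is the modulus square of this amplitude). Since moreover $Z(\y)$ selects exactly two column indices from each of the $N$ blocks $B_k$, the determinant is multilinear of degree one in each block of columns, which is the block-multilinear ``degree $N$'' structure underlying Ivanov's reformulation of this amplitude as a mixed discriminant of $N$ matrices; summing over the $2^N$ strings $\y$ preserves this structure. This completes the plan: only the first paragraph requires care, the rest being immediate consequences of Lemma~\ref{lem:Udet} and elementary determinant identities.
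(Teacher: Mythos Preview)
Your proposal is correct and follows essentially the same route as the paper: expand $\ket{\inpsi}$ into the $2^N$ Fock basis states $\ket{\Y_\y}$ (the paper's Eq.~\eqref{eq:pureSTATEdecomp}), apply the submatrix-determinant formula of Lemma~\ref{lem:Udet} term by term, and use $\det M=\det M^T$ to rewrite each summand as $\det(U^T_{Z(\y),\X})$. Your explicit verification of the sign-freeness of the Fock expansion and your interpretation of the ``degree $N$'' claim as block-multilinearity in the $N$ mode-quadruples are more careful than the paper's treatment, which simply cites the prior expansion and does not comment on the degree discrepancy (each determinant is homogeneous of degree $2N$ in the entries of $U$).
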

\begin{proof}

To derive this polynomial, we rewrite the input fermionic magic state $\ket{\inpsi}$ as in Eq.~\eqref{eq:pureSTATEdecomp}.

\begin{align}
    \ket{\inpsi}&=\frac{1}{\sqrt{2^N}}\sum_{\Y\in\Cstate} \ket{\Y}\\ 
    &=\frac{1}{\sqrt{2^N}} \sum_{\y \in \lbrace0,1 \rbrace^N} \ket{\Y_\y},
\end{align}
 where $\Cstate$ consists of subsets labelled by bitstrings (for more detail, see paragraph after Eq.~\eqref{eq:pureSTATEdecomp}). Using the expression for the output amplitude in Proposition \ref{prop:passive_degree} we write

\begin{align}
     \bra{\x}&\Pipas(U)\ket{\Psi} = \frac{1}{\sqrt{2^N}} \sum_{\y \in \lbrace0,1 \rbrace^N} \det (U_{\X, \Y_\y})\label{eq:rederived-discriminant1}\\
\begin{split}
     &=\frac{1}{\sqrt{2^N}} \sum_{(y_1,\cdots,y_N)\in\lbrace0,1 \rbrace^N} \\
     & \times \det (U^T_{\lbrace 2y_1+1,2y_1+2,\cdots,2y_N+4N-3,2y_N+4N-2 \rbrace,\X})\ ,\label{eq:rederived-discriminant2}
\end{split}
\end{align}
where in the last line we have replaced the definition of $\Y_y$ and also used the fact that the determinant is invariant under the transpose.
\end{proof}

The expression in Eq.~\eqref{eq:probamplitudePAS} can be rewritten as a mixed discriminant 
\begin{equation}
D_{2,2}(v_1, \ldots, v_{4N})=\frac{1}{\sqrt{2^N}}\sum_{i_k=0,1, \atop k=1,\ldots, N}
\det\begin{bmatrix} 
v_{2i_1+1} \\ v_{2i_1+2} \\ v_{2i_2+5} \\ v_{2i_2+6} \\ \vdots \\
v_{2i_N+4N-3} \\ v_{2i_N+4N-2}
\end{bmatrix}\, ,
\label{eq:mixed-discriminant}
\end{equation}

here $v_k$ correspond to the rows of the matrix $U^T$ in Eq.~\eqref{eq:rederived-discriminant2} with the columns not indexed by $\x$ removed. This polynomial over entries of matrices of size $2N\times N$ was found to be $\sharP$-hard in the general case \cite{Ivanov2017}. This was proven by reducing the computation of the permanent of a weighted adjacency matrix to these polynomials of a transformed adjacency matrix with polynomial overhead.
\begin{rem}
For the hardness of sampling what is actually required is the $\sharP$-hardness of computing the square of the amplitude. In \cite{Ivanov2017} the permanents used only involved positive numbers and thus there is no issue in establishing $\sharP$-hardness for the probabilities.
\end{rem}

Next we turn to studying the output probabilities after an active FLO evolution. It will be useful to introduce the following notation: given a set of (majorana) indices $\A = \{a_1, a_2, \ldots a_k \} \subset [2d]$ (with $a_i < a_j$ if $i<j$), we define
\begin{equation} \label{eq:major-mon}
    m_{\A} = 
    m_{a_1}m_{a_2} \cdots m_{a_k}.
\end{equation}
These majorana monomials define an orthogonal (but not orthonormal) basis in the space of operators with respect to the Hilbert-Schmidt scalar product
\begin{equation} \label{eq:major_ort}
\tr(m_{\A} m_{\B}^\dagger)= (-1)^{f(|\B|)} \tr(m_{\A} m_{\B}) = \delta_{\A, \B}   \, \frac{1}{2^d},    
\end{equation}
where $f(n)=1$ if $(n \mod 4) \in \{ 2,3 \} $ and $f(n)=0$ otherwise.  

Consider a subset $\A = {a_1, a_2 \ldots a_k} \subset [2d]$ (with $a_i < a_j$ if $i<j$), then from Eq.~\eqref{eq:mode_transform}   and the majorana anticommutation relations it follows for $O \in \SO(2d)$ that   
\begin{equation} \label{eq:monomial_evol}
\begin{split}
    &\Piact(O) m_{\A} \Piact(O)^\dagger \\
    &= \sum_{b_1, \ldots b_k =1 }^{d} \epsilon_{b_1, b_2, \ldots, b_k } O_{a_1, b_1} O_{a_2, b_2} \cdots O_{a_k, b_k}  m_{\{b_1, \ldots b_k\}} .
\end{split}
\end{equation}

\begin{prop} [Degrees of polynomials describing probabilities associated to  active FLO circuits.]\label{prop:active_degree}
Consider a state $ \ket{\Psi} \in $. For an arbitrary $O \in \SO(2d)$ the outcome probability
 $ p_{\x}( \Piact(O),\Psi) =|\bra{\x}\Piact(O)\ket{\Psi}|^2$ is a degree $d$ polynomial in the entries of $O$. 
\end{prop}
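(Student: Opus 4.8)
The plan is to mimic the passive‑FLO argument (Proposition \ref{prop:passive_degree}) but working in the Majorana picture rather than with creation operators. First I would fix a Fock basis state $\ket{\x}=\ket{\X}$ and expand the density matrix $\ketbra{\x}{\x}$ of the target outcome, together with the state $\ket{\Psi}$ (or rather $\ketbra{\Psi}{\Psi}$), in the Majorana monomial basis $\{m_{\A}\}_{\A\subset[2d]}$ introduced in Eq.~\eqref{eq:major-mon}. Writing
\[
  \ketbra{\x}{\x} = \sum_{\A\subset[2d]} c_{\A}\, m_{\A}, \qquad
  \ketbra{\Psi}{\Psi} = \sum_{\B\subset[2d]} d_{\B}\, m_{\B},
\]
I would express the outcome probability via the Hilbert–Schmidt inner product,
\[
  p_{\x}(\Piact(O),\Psi)
  = \tr\!\bigl(\ketbra{\x}{\x}\,\Piact(O)\ketbra{\Psi}{\Psi}\Piact(O)^{\dagger}\bigr)
  = \sum_{\A,\B} c_{\A}\, d_{\B}\,
    \tr\!\bigl(m_{\A}\,\Piact(O)\, m_{\B}\,\Piact(O)^{\dagger}\bigr).
\]
The next step is to insert the transformation rule \eqref{eq:monomial_evol} for $\Piact(O) m_{\B} \Piact(O)^{\dagger}$: this rewrites $\Piact(O) m_{\B} \Piact(O)^{\dagger}$ as a sum of Majorana monomials $m_{\{b_1,\dots,b_{|\B|}\}}$ with coefficients that are a product of exactly $|\B|$ entries of $O$. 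Then the orthogonality relation \eqref{eq:major_ort} collapses the trace $\tr(m_{\A} m_{\{b_1,\dots\}})$ to a Kronecker delta, forcing $|\A|=|\B|$ and picking out the term in which the index set equals $\A$. Hence each surviving summand is (a constant times) a product of $|\B|$ matrix entries of $O$, i.e.\ a homogeneous polynomial of degree $|\B|$ in the entries of $O$.

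The remaining point is the bound on the degree: since $\ketbra{\Psi}{\Psi}$ is an operator on the $d$‑mode Fock space, its Majorana expansion only involves monomials $m_{\B}$ with $|\B|\le 2d$, so a priori one would only get degree $\le 2d$. To sharpen this to degree $d$ I would use the structure of a pure state density matrix of a system of $d$ modes — more precisely, the fact that $\ketbra{\Psi}{\Psi}$, being a rank‑one projector, has Majorana support only on monomials of degree at most $d$ (equivalently, its "Majorana rank''/correlation order is bounded by $d$; this is the fermionic analogue of the statement that a pure $n$‑qubit state is fixed by its $\le n$‑body correlators, and can be seen by writing $\ketbra{\Psi}{\Psi}$ as a product of $d$ commuting projectors built from linear combinations of Majoranas, as in the explicit formulas for $\rho_{2p-1},\rho_{2p}$ in Section~\ref{sec:cert}). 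Then every nonvanishing $d_{\B}$ has $|\B|\le d$, and by the argument above every term in $p_{\x}(\Piact(O),\Psi)$ is a polynomial of degree at most $d$ in the entries of $O$; collecting terms, $p_{\x}(\Piact(O),\Psi)$ is a polynomial of degree at most $d$. (If one only wants the weaker statement sufficient for Section~\ref{sec:worst-to-avg-reduction}, the bound $2d$ is immediate and the same worst‑to‑average machinery applies.)

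The main obstacle I anticipate is precisely this degree‑$d$ (rather than degree‑$2d$) sharpening: it requires the structural fact that a pure fermionic state has bounded Majorana correlation order, which is not literally quoted in the excerpt, though it follows from the generalized‑coherent‑state / quadratic‑ideal characterization cited in Remark~\ref{rem:coherentSTATESsemisimple} and from the explicit product‑of‑projectors form used later for the tomography states. A secondary, purely bookkeeping, obstacle is keeping track of the generalized Levi‑Civita signs $\epsilon_{b_1,\dots,b_k}$ and the sign $(-1)^{f(|\B|)}$ in \eqref{eq:major_ort} when combining the two expansions, but these do not affect the degree count and so can be handled routinely.
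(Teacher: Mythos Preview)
Your overall strategy---expand both $\ketbra{\x}{\x}$ and $\ketbra{\Psi}{\Psi}$ in Majorana monomials, conjugate via Eq.~\eqref{eq:monomial_evol}, and use the orthogonality \eqref{eq:major_ort}---is exactly what the paper does, and your observation that this immediately gives degree $\le 2d$ is correct. The problem is your sharpening step.

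The claim that a pure-state projector $\ketbra{\Psi}{\Psi}$ has Majorana support only on monomials of degree $\le d$ is \emph{false}. Already the Fock vacuum is a counterexample: $\ketbra{0_F}{0_F}=2^{-d}\prod_{j=1}^d(\id+i\,m_{2j-1}m_{2j})$, and the top term of this product is (up to a phase) $m_1m_2\cdots m_{2d}$, a monomial of degree $2d$. The ``product of $d$ commuting projectors'' picture you invoke does not help, because each projector carries a degree-$2$ Majorana term, so the product generically reaches degree $2d$. The analogy with ``a pure $n$-qubit state is fixed by its $\le n$-body correlators'' is also misleading: an $n$-qubit system only has $n$ bodies, so that statement is vacuous and does not translate into any bound on the Majorana-monomial support.

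The missing idea, which the paper uses, is the parity operator $Q=i^d\prod_{i=1}^{2d}m_i$. Every Majorana monomial $m_{\B}$ with $|\B|>d$ equals (up to a sign) $Q\,m_{[2d]\setminus\B}$ with $|[2d]\setminus\B|<d$. Crucially, $Q$ is $\SO(2d)$-invariant: $\Piact(O)\,Q\,\Piact(O)^{\dagger}=Q$. Hence if one expands both operators in the redundant basis $\{m_{\A},\,Q\,m_{\A}:|\A|\le d\}$, then conjugating $Q\,m_{\B}$ by $\Piact(O)$ produces $Q$ times a sum of Majorana monomials with coefficients that are products of only $|\B|\le d$ entries of $O$; the $Q$ passes through the trace harmlessly (via $Q^2=\id$ and the orthogonality relations). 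This is why every surviving term has degree at most $d$, not $2d$. Once you insert this $Q$-trick, the rest of your argument goes through as written.
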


\begin{proof}
Let us consider the expansion of $\ketbra{\x}{\x}$ and $\Psi$ in terms of majorana monomials
\begin{equation} 
\begin{split}
    \ketbra{\x}{\x} &= \sum_{\A \subset [d]} (a_{\A} \, m_{\A} +  b_{\A} \, Q \, m_{\A}) \ , \\
    \Psi&=\sum_{\B \subset [d]} (c_{\B} \, m_{\B} + d_{\B} \, Q \, m_{\B}) .
\end{split}
\end{equation}
Using this and Eqs.~\eqref{eq:major_ort} and \eqref{eq:monomial_evol} can now write the outcome probability as

\begin{align}
    p_{\x}( \Piact(O),\Psi) &= \tr \left(\ketbra{\x}{\x}  \Piact(O) \Psi \Piact(O)^\dagger \right) \nonumber \\
    &= \sum_{\A, \B \subset [d]} \left( a_{\A} c_{\B} \tr \left(m_{\A} \Piact(O) m_{\B} \Piact(O)^\dagger \right) + b_{\A} d_{\B} \tr \left(Qm_{\A} \Piact(O) Qm_{\B} \Piact(O)^\dagger \right) \right) \nonumber \\
    &= \sum_{k=0}^d \sum_{\substack{\A, \B \subset [d]\\ |\A|=|\B|=k}}  w_{\A, \B} \sum_{\ell_1, \ldots \ell_k =1 }^{d} \epsilon_{\ell_1, \ell_2, \ldots, \ell_k } \delta_{\A, \{\ell_1, \ldots , \ell_k \}} \,  O_{b_1, \ell_1} O_{b_2, \ell_2} \cdots O_{b_k, \ell_k} ,
\end{align}

where $ w_{\A, \B} = \frac{(-1)^{f(|A|)}}{2^{d}}(a_{\A} c_{\B}  + (-1)^k b_{\A} d_{\B})$.
Since each term in the sum is a degree $d$ or less polynomial in the entries of $O$ the theorem is proved.
\end{proof}

\begin{Definition}[Degree of rational functions] 
Let $P(\theta),Q(\theta)$ be polynomials of degree $d_1$ and $d_2$ respectively. Let  $R(\theta)=\frac{P(\theta)}{Q(\theta)}$ be the corresponding rational function. Assume that that $P$ and $Q$ do not have non-constant polynomial divisors.  Then, we define rational degree of $R$ as the pair  $\deg(R)=(d_1,d_2)$
\end{Definition}

The following results states that FLO circuit representations of elements of the appropriate symmetry group $G$, when evaluated on Cayley paths, give rise to outcome probabilities that are rational functions of low degree (in number of modes $d$ and number of particles $n$).

\begin{lem}[Degrees of rational functions describing probabilities associated to interpolation of FLO circuits]
\label{lem:degreesOFRAT}
Let $G$ be equal to $\U(d)$ or  $\SO(2d)$. Let $g_0 , g\in G$ be a fixed elements of the group $G$. Consider a rational path in the group defined by interpolation via Cayley path 
\begin{equation}
    g_\theta=g_0 F_\theta (g)\ ,\ \theta\in[0,1]\ .
\end{equation}
 Let now $\Pi:G\rightarrow\U(\H)$  be the appropriate representation of $G$ describing appropriate class of FLO circuits ($G=\U(d)$, $\Pi=\Pipas$,  $\H=\bigwedge^n(\C^d)$ for passive FLO and $G=\SO(2d)$, $\Pi=\Piact$, $\H=\Hfock^+(\C^d)$ for active FLO). Let us fix $\ket{\Psi} \in \H$ and a Fock state $\ket{\x}\in \H$. Then the outcome probability
\begin{equation}
     R_{g_0,g}(\theta) =\tr(\kb{\x}{\x} \Pi(g_\theta) \rho \Pi(g_\theta)^\dag)
\end{equation}
viewed as a function of parameter $\theta$ is a rational function of degrees 
\begin{equation}
\begin{split}
\text{Passive FLO: }&\ \    \deg (R_{g_0,g}) =  (2d n, 2d n)\ , \\
\text{Active FLO: }&\ \    \deg (R_{g_0,g}) =  (2d^2,2d^2)\ 
\end{split}
\end{equation}
Moreover the denominator of the rational functions are given by 
\begin{equation}\label{eq:denominatorsCIRCUITS}
\begin{split}
\text{Passive FLO: }&\ \     Q_g(\theta) = \prod_{j=1}^d (1+\theta^2\tan^2(\phi_j/2))^{n}  \ , \\
\text{Active FLO: }&\  Q_g(\theta)= \prod_{j=1}^d (1+\theta^2\tan^2(\phi_j/2))^{d}\ ,
\end{split}
\end{equation}
where $\phi_j$ , $j\in[d]$ are phases of generalized eigenvalues of matrix  $g$ belonging to the suitable group $G$ and thus $Q_g(\theta)$ can be efficiently computed (see Section \ref{sec:cayley}).

\end{lem}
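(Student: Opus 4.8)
The plan is to reduce everything to the explicit rational description of the Cayley path \emph{at the group level} obtained in Section~\ref{sec:cayley}, combined with the polynomial degree bounds of Propositions~\ref{prop:passive_degree} and~\ref{prop:active_degree}. Recall from \eqref{eq:RATpoly_passive}--\eqref{eq:P-U(d)} that, after diagonalizing $g$ by some $h\in\U(d)$, every matrix entry of $g_\theta = g_0 F_\theta(g)$ is a ratio of polynomials in $\theta$ with common denominator $\Q_g(\theta)=\prod_{j=1}^d(1-i\theta\tan(\phi_j/2))$ of degree exactly $d$ and numerator of degree at most $d$; likewise, from \eqref{eq:RATpoly_active}--\eqref{eq:Q-SO(2d)}, every entry of $g_\theta$ for $G=\SO(2d)$ is a ratio of a \emph{real} polynomial of degree at most $2d$ over the \emph{real} polynomial $\Q_g(\theta)=\prod_{j=1}^d(1+\theta^2\tan^2(\phi_j/2))$ of degree exactly $2d$. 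The plan is to substitute these entry-wise expressions into the polynomial formulas for $p_{\x}$ and clear denominators.

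For the passive case, Proposition~\ref{prop:passive_degree} gives that $p_{\x}(\Pipas(g_\theta),\Psi)=|\bra{\x}\Pipas(g_\theta)\ket{\Psi}|^2$ is a homogeneous polynomial of degree $2n$ in the entries of $g_\theta$ and $g_\theta^{\dagger}$, with $n$ of the entries coming from $g_\theta$ and $n$ from $g_\theta^{\dagger}$. Evaluated at real $\theta$, each entry of $g_\theta^{\dagger}$ is $\overline{(g_\theta)_{lk}}$, hence a ratio of the conjugate polynomials over $\overline{\Q_g(\theta)}$. Substituting and putting everything over a common denominator yields
\begin{equation}
R_{g_0,g}(\theta)=\frac{A(\theta)\,\overline{A}(\theta)}{\bigl(\Q_g(\theta)\,\overline{\Q_g(\theta)}\bigr)^{n}},\qquad \deg A\le dn .
\end{equation}
Since the eigenphases $\phi_j$ of a unitary are real, $\Q_g(\theta)\,\overline{\Q_g(\theta)}=\prod_{j=1}^d(1-i\theta\tan(\phi_j/2))(1+i\theta\tan(\phi_j/2))=\prod_{j=1}^d(1+\theta^2\tan^2(\phi_j/2))$ is a real polynomial of degree $2d$; its $n$-th power is exactly the denominator $Q_g(\theta)=\prod_{j=1}^d(1+\theta^2\tan^2(\phi_j/2))^{n}$ claimed in \eqref{eq:denominatorsCIRCUITS}, of degree $2dn$, while the numerator $A\overline{A}$ has degree at most $2dn$. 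After cancelling common factors this gives $\deg(R_{g_0,g})=(2dn,2dn)$.

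For the active case, Proposition~\ref{prop:active_degree} gives that $p_{\x}(\Piact(g_\theta),\Psi)$ is a polynomial of degree $d$ in the entries of $g_\theta$ only (no $g_\theta^{\dagger}$, since $g_\theta\in\SO(2d)$ is real and $g_\theta^{T}$ reuses the same entries). Each such entry has denominator $\Q_g(\theta)$ of degree $2d$, so clearing denominators yields $R_{g_0,g}(\theta)=B(\theta)/\Q_g(\theta)^{d}$ with $B$ real of degree at most $2d^2$ and $\Q_g(\theta)^{d}=\prod_{j=1}^d(1+\theta^2\tan^2(\phi_j/2))^{d}$ of degree $2d^2$, matching \eqref{eq:denominatorsCIRCUITS} and giving $\deg(R_{g_0,g})=(2d^2,2d^2)$. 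For the efficiency claim, the coefficients of $Q_g(\theta)$ depend only on the numbers $\tan(\phi_j/2)$, i.e.\ on the generalized eigenphases of $g$, which are obtained by diagonalizing $g$ in time $\poly{d}$; expanding the product then yields the $\poly{d}$ coefficients of $Q_g$ in polynomial time.

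The only genuinely non-routine point is the bookkeeping in the two substitution steps: one must verify that ``degree $n$ (resp.\ $d$) in the entries of $g_\theta$'' together with ``each entry has denominator $\Q_g(\theta)$'' produces precisely the denominator $\Q_g(\theta)^{n}$ (resp.\ $\Q_g(\theta)^{d}$) and no worse, and---in the passive case only---that the two complex-conjugate denominators coming from $\Pipas(g_\theta)$ and $\Pipas(g_\theta)^{\dagger}$ telescope into the single real polynomial $\prod_j(1+\theta^2\tan^2(\phi_j/2))$, which uses the reality of the eigenphases of a unitary. Neither step is deep, but this conjugation is the place where one must be careful.
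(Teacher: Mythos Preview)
Your proof is correct and follows essentially the same route as the paper: substitute the entry-wise rational form of $g_\theta$ from Section~\ref{sec:cayley} into the degree bounds of Propositions~\ref{prop:passive_degree} and~\ref{prop:active_degree}, then read off the common denominator. You are in fact slightly more explicit than the paper about why, in the passive case, the denominators coming from $g_\theta$ and $g_\theta^{\dagger}$ combine into the real polynomial $\prod_j(1+\theta^2\tan^2(\phi_j/2))^n$ via $\Q_g(\theta)\overline{\Q_g(\theta)}$, which the paper's proof compresses into the single expression $\abs{\Q_g(\theta)^n}^2$.
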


\begin{proof}
We begin by proving the passive FLO case. Recall from Eq.~\eqref{eq:RATpoly_passive} that $g_\theta$ was expressed as a matrix with entries of degree $(d,d)$ on $\theta$. By virtue of Proposition \ref{prop:passive_degree}, we know that  $p_{\x}( \Pipas(g_\theta),\Psi)= R_{g_0,g}(\theta)$ is a polynomial of degree $2n$ on the entries of $g_\theta$ which immediately implies the degree on $\theta$ is $\deg (R_{g_0,g}) =  (2d n, 2d n)$. The denominator of the rational functions in $g_\theta$ is given by Eq.~\eqref{eq:Q-U(d)}, from the expression for the amplitude in Proposition \ref{prop:passive_degree}, we know that the denominator in $R_{g_0,g}$ must be of the form  $\abs{\Q_g(\theta)^n}^2$ which gives the result form $\prod_{j=1}^d (1+\theta^2\tan^2(\phi_j/2))^{n}$ .

For the active case, we obtain from Eq.~\eqref{eq:RATpoly_active} that $g_\theta$ is a matrix with entries that are polynomials of degree $(2d,2d)$. Then by Proposition \ref{prop:active_degree}, $p_{\x}( \Piact(g_\theta),\Psi)$ is of degree $d$ on the entries of $g_\theta$ implying $\deg (R_{g_0,g}) =  (2d^2,2d^2)$. The denominator $ Q_g(\theta)$ is obtained by noting that the expression in Eq.~\eqref{eq:Q-SO(2d)} for $\Q_g(\theta)$ appears as the denominator in each entry of $g_\theta$ and by Proposition \ref{prop:active_degree} the degree on this denominator is $d$, thus proving the result.
\end{proof}

\section{Details of computations for anticoncentration} \label{sec:Projectors}

\subsection{Passive FLO} 

In this part we give detailed computations related to establishing upper bound in Eq. \eqref{eq:EPupperbound}
for the case of passive FLO:
\begin{equation}\label{eq:EPupperboundPASAPP}
 \tr(\Pfer \inpsi\ot \inpsi ) \leq \frac{\Cpas}{N}\  ,\ \text{for } \Cpas = \Cpasval
\end{equation}
The prove of the above inequality is split into three parts. First, in Lemma \ref{lem:PROJpassFERMpurities} we give an explicit form of $\Pfer$. Second, in Lemma  \ref{lem:passPROJfinal} we find an upper bound on $\tr(\Pfer \inpsi\ot \inpsi)$ via combinatorial expression that can be efficiently computed for any fixed value of $N$. Finally, in Lemma \ref{lem:pasAnti-computation} given in Part \ref{app:comp} of the Appendix we prove an upper bound to the said combinatorial expression which yields Eq. \eqref{eq:EPupperboundPASAPP}.

\begin{lem}[Projector for passive fermionic linear optics]\label{lem:PROJpassFERMpurities}
Let $\bigwedge^n \left(\C^d \right)  $ be a fermionic $n$-particle representation of $\U(d)$ ($d \geq n$). 
Let $\Pfer$ be the projector onto a unique irreducible representation $\tilde{\Hfer}\subset \bigwedge^n \left(\C^d \right) \otimes \bigwedge^n \left(\C^d \right)$  of $\U(d)$ such that $\ket{\n} \ot \ket{\n}\in \tilde{\Hfer}$, where $\ket{\n}$ is a $n$-particle Fock state.
Then, for any $\rho\in\D\left(\bigwedge^n(\C^d) \right)$ we have 
\begin{equation}
\tr(\Pfer \rho\otimes\rho ) = \frac{1}{n+1}\sum_{k=0}^{n} {n \choose k} \tr(\rho_k^2)\ ,
\end{equation}
where $\rho_k = \tr_{n-k}(\rho)$ is a $k$-particle reduction of $\rho$.  Moreover, the dimension of $\tilde{\Hfer}$ equals
\begin{equation}\label{eq:passiveDIM}
    |\tilde{\Hfer}|=\binom{d}{n}^2 \frac{d+1}{(d-n+1)(n+1)}\ .
\end{equation}
\end{lem}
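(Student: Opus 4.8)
## Proof strategy for Lemma~\ref{lem:PROJpassFERMpurities}

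The plan is to identify the irreducible representation $\tilde{\Hfer}$ explicitly and then compute the trace against $\rho\ot\rho$ by decomposing the projector into a sum over particle-number sectors. First I would recall the classical branching rule for $\SU(d)$ (equivalently $\U(d)$): the tensor square $\bigwedge^n(\C^d)\ot\bigwedge^n(\C^d)$ decomposes into a multiplicity-free sum of irreducibles labelled by Young diagrams with at most two columns, i.e., $\bigwedge^n\ot\bigwedge^n\cong\bigoplus_{k=0}^n V_{(2^{k},1^{2(n-k)})}$ in the appropriate normalization, where the top component (the Cartan square) is the one containing the highest-weight vector $\ket{\n}\ot\ket{\n}$ for a Fock state $\ket{\n}$. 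This top component is $\tilde{\Hfer}$. The first real step is then to write down $\Pfer$ concretely: since $\ket{\n}\ot\ket{\n}$ is a generalized coherent state vector of the Cartan-square representation (cf.\ Remark~\ref{rem:coherentSTATESsemisimple}), $\Pfer$ is proportional to the group average $\int_{\U(d)}d\mu(U)\,\Pipas(U)^{\ot2}\ketbra{\n}{\n}^{\ot 2}(\Pipas(U)^\dagger)^{\ot 2}$, which is exactly the operator $A_{\Pi,G}$ from Eq.~\eqref{eq:almostPROJECTOR} up to the normalization $1/|\tilde\Hfer|$.

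Next I would exploit the natural identification $\bigwedge^n(\C^d)\ot\bigwedge^n(\C^d)\hookrightarrow (\C^d)^{\ot n}\ot(\C^d)^{\ot n}$ and use the observation that, on the antisymmetrized spaces, the operator swapping the two copies acts within each $\binom{n}{k}$-dimensional block of ``$k$ crossed wires'' with a definite sign. Concretely, the standard fact (used e.g.\ in the analysis of $2$-particle reductions of fermionic states) is that for $\rho\in\D(\bigwedge^n(\C^d))$,
\begin{equation}
\tr\!\left(\Pfer\,\rho\ot\rho\right)=\frac{1}{|\tilde\Hfer|}\int_{\U(d)}d\mu(U)\,\left|\bra{\n}^{\ot 2}\Pipas(U)^{\ot 2}\ket{\Psi}^{\ot 2}\right|^2\quad(\text{for }\rho=\Psi),
\end{equation}
and by Schur--Weyl-type manipulations this collapses to a sum of purities of the particle reductions. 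The cleanest route is: expand $\Pfer$ as a linear combination $\sum_{k=0}^n c_k\,\Pi_k$ where $\Pi_k$ is the (suitably normalized) operator implementing ``antisymmetrized partial swap of $k$ of the $n$ fermionic slots''; use $\tr(\Pi_k\,\rho\ot\rho)=\binom{n}{k}\tr(\rho_k^2)$ which follows from writing $\rho_k=\tr_{n-k}\rho$ and recognizing the partial-swap trace; and determine the coefficients $c_k$ by matching against the known character/dimension data of the Cartan square, which forces all $c_k$ equal and fixes the overall constant to $1/(n+1)$ once the normalization $\tr\Pfer=|\tilde\Hfer|$ is imposed. The dimension formula \eqref{eq:passiveDIM} I would get directly from the Weyl dimension formula applied to the Young diagram of the Cartan square $2\lambda$ where $\lambda=(1^n)$, i.e., $|\tilde\Hfer|=\dim V_{2(1^n)}$; this is a finite product that simplifies to $\binom{d}{n}^2\frac{d+1}{(d-n+1)(n+1)}$ after cancellation (alternatively, deduce it from $\sum_k\dim V_{(2^k,1^{2(n-k)})}=\binom{d}{n}^2$ together with knowing the other summands, or verify the claimed closed form directly against the hook-content formula).

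The main obstacle I anticipate is pinning down the exact combinatorial identity $\tr(\Pi_k\,\rho\ot\rho)=\binom{n}{k}\tr(\rho_k^2)$ with the correct normalization of $\Pi_k$, and showing rigorously that the projector onto the Cartan square is the \emph{uniform} average $\frac{1}{n+1}\sum_k \widehat{\Pi}_k$ of normalized partial-swap operators rather than some non-uniform combination. The uniformity is really the crux: it is equivalent to the statement that the Cartan square vector $\ket{\n}^{\ot 2}$ has equal overlap (in an appropriate sense) with each swap-sector, which in turn follows from the fact that $\ket{\n}^{\ot 2}$ is the highest-weight vector and the partial-swap operators mutually commute and are simultaneously diagonalized by the isotypic decomposition. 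I would make this precise either by a direct computation of $\langle \n|^{\ot2}\Pi_k|\n\rangle^{\ot2}$ (which should be independent of $k$ up to the $\binom{n}{k}$ factor, since for a Fock state all $k$-particle reductions are themselves pure Fock states with purity $1$), or by invoking that $\Pfer$, being $\U(d)$-invariant and supported on the coherent-state orbit, is uniquely determined and then checking the candidate $\frac{1}{n+1}\sum_k\widehat{\Pi}_k$ satisfies $\tr(\Pfer\,\ketbra{\n}{\n}^{\ot2})=\tr(\Pfer)/|\tilde\Hfer|\cdot|\tilde\Hfer|$-consistency; the Fock-state check gives $\frac{1}{n+1}\sum_k\binom{n}{k}\cdot 1 = \frac{2^n}{n+1}$, which must match $|\tilde\Hfer|\cdot(\text{coherent-state overlap})$, providing the needed sanity constraint.
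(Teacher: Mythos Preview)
Your strategy is on the right track and essentially matches the paper's approach: both decompose $\Pfer$ as a combination of partial-swap operators $\sum_{|X|=k}\prod_{i\in X}\SS^{i,i'}$ on $(\C^d)^{\ot n}\ot(\C^d)^{\ot n}$ and use $\tr\big(\prod_{i\in X}\SS^{i,i'}\,\rho\ot\rho\big)=\tr(\rho_X^2)=\tr(\rho_k^2)$. The difference is how the crucial uniformity of the coefficients $c_k$ is obtained. The paper does not argue indirectly; it invokes the explicit formula
\[
\Pfer=\frac{2^n}{n+1}\,\P_{\mathrm{asym}}^{\{1,\ldots,n\}}\P_{\mathrm{asym}}^{\{1',\ldots,n'\}}\Big(\prod_{k=1}^n \P_{\mathrm{sym}}^{k,k'}\Big)\P_{\mathrm{asym}}^{\{1,\ldots,n\}}\P_{\mathrm{asym}}^{\{1',\ldots,n'\}},
\]
(proved in \cite{MOuniversalFRAME}), whose expansion $\prod_k\tfrac{1}{2}(\I+\SS^{k,k'})=2^{-n}\sum_X\prod_{i\in X}\SS^{i,i'}$ gives all coefficients equal to $1/(n+1)$ automatically. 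Without this formula, your proposed route of ``matching against character/dimension data'' is vague: a single dimension constraint does not pin down $n+1$ coefficients, and the antisymmetrized partial-swap operators are not projectors onto $\U(d)$-irreducibles, so character theory does not apply directly.

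Your Fock-state sanity check also contains an error. For a Slater determinant $\ket{\n}$ the $k$-particle reduction is \emph{not} pure: one has $\rho_k=\binom{n}{k}^{-1}\sum_{T\in\binom{\n}{k}}\ketbra{T}{T}$, hence $\tr(\rho_k^2)=1/\binom{n}{k}$, not $1$. With the correct purities the check reads $\tfrac{1}{n+1}\sum_k\binom{n}{k}\cdot\binom{n}{k}^{-1}=1$, consistent with $\ket{\n}^{\ot2}\in\tilde\Hfer$; your computed value $2^n/(n+1)$ is wrong and would have flagged a spurious inconsistency. This does not sink the overall plan, but it means the one concrete verification you propose fails as stated; to close the gap you should either derive the product-of-symmetrizers formula for $\Pfer$ directly (this is the real content) or cite it.
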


\begin{proof}
We  consider $\bigwedge^n(\C^d)$ as anti-symmetric subspace of the Hilbert space of $n$ distinguishable partices: $\bigwedge^n(\C^d)\subset (\C^d)^{\ot n}$ with $d$ dimensional single particle Hilbert spaces. Therefore also  $\bigwedge^n(\C^d) \otimes \bigwedge^n(\C^d)$ can be considered as a subspace of $2n$ distinguishable particles:
\begin{equation}
\bigwedge^n(\C^d) \otimes \bigwedge^n(\C^d) \subset (\C^d)^{\ot n} \ot (\C^d)^{\ot n}\ .
\end{equation}
Let us now label particles entering the first factor of the latter tensor product by $1,\ldots,n$ and by $1',\ldots,n' $ particles entering the second factor. In \cite{MOuniversalFRAME} it was proven that
\begin{equation}
\Pfer = \frac{2^n}{n+1}  \P^{\lbrace 1,\ldots,n \rbrace}_\mathrm{asym} \P^{\lbrace 1',\ldots,n'\rbrace}_\mathrm{asym} \left(\prod^n_{k=1}  \P^{k,k'}_{\mathrm{sym}}\right) \P^{\lbrace 1,\ldots,n \rbrace}_\mathrm{asym} \P^{\lbrace 1',\ldots,n'\rbrace}_\mathrm{asym} \ . 
\end{equation} 
In the above $\P^{k,k'}_{\mathrm{sym}}=\frac{1}{2}(\I\ot\I + \SS^{k,k'})$ is the projector onto a subspace of $(\C^d)^{\ot n} \ot (\C^d)^{\ot n}$ which is symmetric upon interchange of particles $k$ and $k'$ (by $\SS^{k,k'}$ we denote the unitary operator that swaps particles $k$ and $k'$).
Moreover,  $\P^{\mathcal{A}}_\mathrm{asym}$ denotes the projector onto a subspace which is anti-symmetric under exchange of particles in a subset $\mathcal{A}$. Now for $\rho\in\D\left(\bigwedge^n(\C^d) \right)$ we have
\begin{equation}
 \P^{\lbrace 1,\ldots,n \rbrace}_\mathrm{asym} \P^{\lbrace 1',\ldots,n'\rbrace}_\mathrm{asym} \rho \otimes \rho =\rho \otimes \rho
\end{equation}
and therefore 
\begin{equation}\label{eq:beforePROD}
\tr(\Pfer \rho\otimes\rho )= \frac{2^n}{n+1} \tr\left[\left(\prod^n_{k=1}  \P^{k,k'}_{sym}\right) \rho \ot \rho  \right]\ . 
\end{equation}
Using the definition of $\P^{k,k'}_{sym}$ we get the expansion
\begin{equation}\label{eq:prodEXPAND}
\prod^n_{k=1}  \P^{k,k'}_{sym}= \frac{1}{2^n} \sum_{\X \subset [d]} \prod_{i\in \X} \SS^{i,i'} \ ,
\end{equation}
where the summation is over subsets $X$ of $[d]=\lbrace1,\ldots,d\rbrace$. Using a well-known connection between partial swaps and purites of reduced density matrices (see for example \cite{Reins2000}):  
\begin{equation}
\tr\left(\prod_{i\in \X} \SS^{i,i'} \rho\ot \rho\right)= \tr(\rho_{X}^2)\ , 
\end{equation}
where $\rho_X = \tr_{[d]\setminus \X}(\rho)$ is the reduction of $\rho$ to particles in $X$. From the symmetry of $\rho$ we have $\tr(\rho_{X}^2) = \tr(\rho_{k}^2)$, where $k=|X|$ (size of the set $X$). Inserting this into \eqref{eq:prodEXPAND} and \eqref{eq:beforePROD} we finally obtain
\begin{equation}
\tr(\Pfer \rho\otimes\rho )=\frac{1}{n+1}\sum_{k=0}^{n} {n \choose k} \tr(\rho_k^2)\ .
\end{equation}
The formula for the dimension \eqref{eq:passiveDIM} follows from the fact that the Hilbert space $\tilde{\Hfer}$ is a carrier space of an irreducible representation of $\U(d)$ labelled by a Young diagram having two columns each of which has $n$ rows. Formulas for dimensions of such irreducible representations are known (see for example \cite{DiagYoung2005}) and were used previously in the context of detection of mixed states that cannot be decomposed as a convex combination of Slater determinants \cite{TypicalityMO2014}.
\end{proof}

Note that for all $n$-particle pure states we have $\tr(\rho_k^2)=\tr(\rho_{n-k}^2)$. This observation gives us the following    
\begin{corr}\label{corr:simplifiedPASSIVEflo}
Let $\Psi\in\D\left(\bigwedge^{2m}(\C^d)\right)$ be a pure state. Let $\Pfer$ be defined as in Lemma \ref{lem:PROJpassFERMpurities}. We then have
\begin{equation}
\tr(\Pfer \Psi\otimes\Psi ) = \frac{1}{2m+1}\left[ 2 \sum_{k=0}^{m-1} {2m \choose k} \tr(\rho_{k}^2) +{2m \choose m} \tr(\rho_m^2) \right]\ .
\end{equation}  
\end{corr}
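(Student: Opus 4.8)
The plan is to derive the Corollary directly from Lemma~\ref{lem:PROJpassFERMpurities} together with the standard fact that complementary reductions of a pure state are isospectral. First I would set $n=2m$ and apply Lemma~\ref{lem:PROJpassFERMpurities} to the pure state $\Psi$, obtaining
\[
\tr(\Pfer \Psi\otimes\Psi) = \frac{1}{2m+1}\sum_{k=0}^{2m}\binom{2m}{k}\tr(\rho_k^2),
\qquad \rho_k = \tr_{2m-k}(\Psi).
\]

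The key input is the symmetry $\tr(\rho_k^2)=\tr(\rho_{2m-k}^2)$ for $0\le k\le 2m$, which I would justify as follows. Viewing $\bigwedge^{2m}(\C^d)$ as the totally antisymmetric subspace of $(\C^d)^{\otimes 2m}$, the state $\Psi$ is a pure state of $2m$ distinguishable particles. For any subset $S\subset[2m]$ of size $k$, the Schmidt decomposition across the bipartition $S\,|\,\bar S$ shows that $\tr_{\bar S}(\Psi)$ and $\tr_S(\Psi)$ have the same nonzero spectrum, hence the same purity; and by the permutation symmetry of antisymmetric states the purity of a $k$-particle reduction does not depend on which $k$ particles are traced out. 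This gives $\tr(\rho_k^2)=\tr(\rho_{2m-k}^2)$.

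Finally I would split the sum and reindex: write $\sum_{k=0}^{2m}=\sum_{k=0}^{m-1}+[k{=}m]+\sum_{k=m+1}^{2m}$, substitute $j=2m-k$ in the last sum, and use $\binom{2m}{k}=\binom{2m}{2m-k}$ together with the symmetry just proved to identify it with $\sum_{j=0}^{m-1}\binom{2m}{j}\tr(\rho_j^2)$. Collecting terms and dividing by $2m+1$ yields the stated formula. I do not expect a genuine obstacle here; the only point deserving a sentence of care is the isospectrality of complementary reductions, which is elementary.
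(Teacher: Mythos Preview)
Your proposal is correct and matches the paper's approach exactly: the paper simply notes that for pure $n$-particle states $\tr(\rho_k^2)=\tr(\rho_{n-k}^2)$ and states the corollary as an immediate consequence, which is precisely the symmetry-plus-reindexing argument you outline.
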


We now proceed with some further technical results which will allow us to compute $\tr(\Pfer \inrho \ot \inrho)$. 

For a set of indices $\X =\{x_1, x_2, \ldots, x_n \} \subset [d]$ where $x_i < x_j$ if $i < j$, and a subset of it $\S=\{ x_{\ell_1}, x_{\ell_2}, \ldots , x_{\ell_k} \} \subset \X$  it will be useful to introduce the following sign
\begin{equation}
    (-1)^{J(\X, \S)} \ , \; \textrm{where} \; \;  J(\X, \S) =  \ell_1 + \ell_2 + \ldots + \ell_k + \frac{k(k-1)}{2} \, .
\end{equation}
This notation allows us to express in a compact way the following matrix element: for any two Fock basis states $\ket{\X}, \ket{\Y} \in \bigwedge^{n}\C^d$ belonging to the index sets $\X, \Y \subset [d]$, we have that 
\begin{equation} \label{eq:matrix_element}
\bra{\X}  f_{s_1}^{\dagger}  f_{s_2}^{\dagger}  \cdots  f_{s_k}^{\dagger} f_{q_k}  \cdots  f_{q_2}f_{q_1}  \ket{\Y} = 
\begin{cases}
\delta_{ \X \setminus \S,  \Y \setminus \Q} \,  \epsilon_{s_1, ... , s_k} \epsilon_{q_1,..., q_k} (-1)^{J(\X, \S) + J(\Y, \Q)}\; \; \textrm{if} \; \; \S \subset \X  \, \textrm{and}  \; \Q \subset \Y, 
\\
0 \; \;  \textrm{else},
\end{cases}
\end{equation}
where $\S= \{s_1, s_2\ldots s_k \}$ and $\Q=\{q_1,q_2, \ldots q_k \}$.

\begin{prop}\label{prop:PARTtraceSLATER}
Let  $\ket{\X},\ket{\Y} \in \bigwedge^n(\C^d)$ be a fermionic $n$-particle Fock states corresponding to $n$-element subsets $\X,\Y\subset [d]$ (cf. notation introduced in Section~\ref{sec:notations}), then for any $k=0,\ldots,n$ we have
\begin{equation}\label{eq:partTRACEmatrixELEM}
\tr_k(\ketbra{\X}{Y})=\frac{1}{{n \choose k}} \sum_{\S\in {\X\cap \Y \choose k}} (-1)^{J(\X, \S) + J(\Y, \S)} \ketbra{\X\setminus \S}{\Y\setminus \S}\ .
\end{equation} 
Note that the notation used in the  above expression implies $\tr_k(\ketbra{\X}{\Y})=0$ if $|\X\cap\Y|<k$.
\end{prop}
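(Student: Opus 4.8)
\textbf{Proof strategy for Proposition~\ref{prop:PARTtraceSLATER}.} The plan is to work in the distinguishable-particle picture $\bigwedge^n(\C^d)\subset(\C^d)^{\otimes n}$ of Section~\ref{sec:notations} and to verify \eqref{eq:partTRACEmatrixELEM} by comparing matrix elements. Since $\ket{\X}$ and $\ket{\Y}$ are totally antisymmetric, the operator $\ketbra{\X}{\Y}$ is invariant under any simultaneous permutation of the $n$ tensor factors, so $\tr_k$ may be taken to trace out any fixed $k$ of them; I would trace out the last $k$. The resulting operator still has antisymmetric ket- and bra-parts on the remaining $n-k$ factors, so it is determined by the numbers $\bra{\X'}\tr_k(\ketbra{\X}{\Y})\ket{\Y'}$ with $\X',\Y'\in\binom{[d]}{n-k}$. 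The right-hand side of \eqref{eq:partTRACEmatrixELEM}, paired with such $\X',\Y'$, is nonzero only when $\X'\subseteq\X$, $\Y'\subseteq\Y$ and $\X\setminus\X'=\Y\setminus\Y'=:\S$ with $|\S|=k$, in which case it equals $\binom{n}{k}^{-1}(-1)^{J(\X,\S)+J(\Y,\S)}$, so this is the target value; the cases $k=0$ and $k=n$ are immediate and serve as sanity checks.

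First I would expand the overlap $\bra{\X'\otimes e_1\otimes\cdots\otimes e_k}\X\rangle$ for indices $e_1,\dots,e_k\in[d]$, using $\ket{\X}=\tfrac{1}{\sqrt{n!}}\sum_{\pi\in S_n}\sgn(\pi)\ket{a_{\pi(1)}}\otimes\cdots\otimes\ket{a_{\pi(n)}}$ together with the analogous formula for $\ket{\X'}$ on the first $n-k$ factors. A nonzero contribution forces $\X'\subseteq\X$ and $(e_1,\dots,e_k)$ to be an ordering of $\S:=\X\setminus\X'$; the underlying $\pi\in S_n$ is then fixed by the internal ordering $\rho\in S_{n-k}$ of the first block, and a short computation shows $\sgn(\pi)\sgn(\rho)$ does not depend on $\rho$. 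Summing the $(n-k)!$ equal terms gives $\bra{\X'\otimes e_1\otimes\cdots\otimes e_k}\X\rangle=\pm\bigl(k!\binom{n}{k}\bigr)^{-1/2}$, and, choosing $(e_1,\dots,e_k)$ to list $\S$ in increasing order, the sign is $(-1)^{\Sigma_\X-k(k+1)/2}$, where $\Sigma_\X$ is the sum of the positions occupied by the elements of $\S$ in the increasing listing of $\X$. The same magnitude and an analogous sign $(-1)^{\Sigma_\Y-k(k+1)/2}$ hold for the corresponding overlap against $\ket{\Y}$.

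Then I would assemble $\bra{\X'}\tr_k(\ketbra{\X}{\Y})\ket{\Y'}=\sum_{e_1,\dots,e_k}\bra{\X'\otimes\vec e}\X\rangle\,\overline{\bra{\Y'\otimes\vec e}\Y\rangle}$: this vanishes unless $\X\setminus\X'=\Y\setminus\Y'=\S$, and then the sum runs over the $k!$ orderings $\vec e$ of $\S$. Because transposing two entries of $\vec e$ flips the sign of \emph{both} overlaps, all $k!$ products share one sign, so the sum equals $\frac{k!}{k!\binom{n}{k}}(-1)^{\Sigma_\X+\Sigma_\Y}=\binom{n}{k}^{-1}(-1)^{\Sigma_\X+\Sigma_\Y}$; finally, since $J(\X,\S)+J(\Y,\S)=\Sigma_\X+\Sigma_\Y+k(k-1)$ and $k(k-1)$ is even, this equals $\binom{n}{k}^{-1}(-1)^{J(\X,\S)+J(\Y,\S)}$, matching the target. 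The only real obstacle I foresee is the sign bookkeeping — expressing $\sgn(\pi)$ through the positions of $\S$ inside $\X$ and $\Y$ and checking that the $k(k\pm1)/2$ contributions cancel modulo $2$. This is exactly the bookkeeping already packaged in the matrix-element formula \eqref{eq:matrix_element}, so one could alternatively phrase the whole argument in second-quantized language, writing $\tr_k(\ketbra{\X}{\Y})$ as a sum over $k$-element index sets of matrix elements of products of $k$ creation and $k$ annihilation operators and quoting \eqref{eq:matrix_element} directly.
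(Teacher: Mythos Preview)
Your proposal is correct, and the overall strategy---compare matrix elements of both sides against Fock states $\ket{\X'},\ket{\Y'}$---is exactly the one the paper uses. The difference is only in implementation: the paper takes the second-quantized route you mention at the very end, invoking a known formula (from \cite{coleman2000reduced}) that expresses the matrix elements of $\tr_k(\ketbra{\Psi}{\Phi})$ as $\binom{n}{k}^{-1}\bra{\Phi}f^\dagger\cdots f^\dagger f\cdots f\ket{\Psi}$ and then applying \eqref{eq:matrix_element} directly, while your primary approach stays in first quantization and does the antisymmetrization combinatorics by hand. Your route is more self-contained but, as you anticipate, places all the weight on the sign bookkeeping; note in particular that your individual overlap sign $(-1)^{\Sigma_\X-k(k+1)/2}$ is off by a global factor $(-1)^{k(n-k)}$ (coming from the permutation $\tau$ that separates $\X'$ from $\S$), but since this factor is identical on the $\X$- and $\Y$-sides it indeed cancels in the product and your final answer is right. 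The paper's approach buys cleaner bookkeeping by outsourcing the signs to \eqref{eq:matrix_element}; yours buys independence from the cited reduced-density-matrix formula.
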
 

\begin{proof}
For any two states $\ket{\Psi}, \ket{\Phi} \in \bigwedge^n \C^d \subset (\C^d)^{\otimes n}$, the $k$-fold partial trace (wrt the tensor product structure) results in an operator 
$O=\tr_k(\ketbra{\Psi}{\Phi}) \in \mathcal{B}(\bigwedge^{\ell} \C^d) \subset \mathcal{B}((\C^d)^{\otimes \ell})$ (with $\ell=n-k$) that 
has the following matrix elements \cite{coleman2000reduced}:
\begin{align}
    \bra{v_1} \otimes \bra{v_2} \otimes \cdots \bra{v_{\ell}} \, O \, \ket{w_1} \otimes  \ket{w_2} \otimes \cdots \otimes \ket{w_{\ell}} 
    = \frac{1}{{n \choose k}} \bra{\Phi}  (f_1^{\dagger})^{v_1} (f_2^{\dagger})^{v_2} \cdots  (f_{\ell}^{\dagger})^{v_{\ell}} (f_{\ell}^{\phantom{\dagger}})^{w_\ell} \cdots (f_2^{\phantom{\dagger}})^{w_2} (f_1^{\phantom{\dagger}})^{w_1}  \ket{\Psi}\,.
\end{align}
Inserting in this equation the $\ket{\Psi}=\ket{\X}$ and $\ket{\Phi}= \ket{\Y}$ and using Eq.~\eqref{eq:matrix_element}, we get that 
\begin{align}
 & \bra{v_1} \otimes \bra{v_2} \otimes \cdots \bra{v_{\ell}} \, O \, \ket{w_1} \otimes  \ket{w_2} \otimes \cdots \otimes \ket{w_{\ell}} = \nonumber \\
& \qquad \qquad \qquad \begin{cases}
{n \choose k}^{-1} \, \delta_{ \Y \setminus \A,  \X \setminus \B} \,  \epsilon_{v_1, ... , v_k} \epsilon_{w_1,..., w_k} (-1)^{J(\Y, \A) + J(\X, \B)}\; \; \textrm{if} \; \; \A \subset \Y  \,  \textrm{and } \; \B \subset \X, 
\\
0 \; \;  \textrm{else},
\end{cases} \label{eq:matrix_el_1}
\end{align}
where $\v = (v_1, \ldots, v_\ell) $ and $\w=(w_1, \ldots, w_\ell)$ are the indicator bit strings of the sets $\A$ and $\B$, respectively.
Now considering also the following matrix entries
\begin{align}
   & \bra{v_1} \otimes \bra{v_2} \otimes \cdots \bra{v_{\ell}} \, \Big( \frac{1}{{n \choose k}} \sum_{\S\in {\X\cap \Y \choose k}} (-1)^{J(\X, \S) + J(\Y, \S)} \ketbra{\X\setminus \S}{\Y\setminus \S}  \Big) \, \ket{w_1} \otimes  \ket{w_2} \otimes \cdots \otimes \ket{w_{\ell}} = \nonumber \\
   & \tr \Big( \frac{1}{{n \choose k}} \sum_{\S\in {\X\cap \Y \choose k}} (-1)^{J(\X, \S) + J(\Y, \S)} \ketbra{\X\setminus \S}{\Y\setminus \S}  (f_1^{\dagger})^{v_1}  \cdots  (f_{\ell}^{\dagger})^{v_{\ell}} (f_{\ell}^{\phantom{\dagger}})^{w_\ell} \cdots  (f_1^{\phantom{\dagger}})^{w_1} \Big) = \nonumber \\
  &  \frac{1}{{n \choose k}} \sum_{\S\in {\X\cap \Y \choose k}} (-1)^{J(\X, \S) + J(\Y, \S)} \bra{\Y\setminus \S}  (f_1^{\dagger})^{v_1}  \cdots  (f_{\ell}^{\dagger})^{v_{\ell}} (f_{\ell}^{\phantom{\dagger}})^{w_\ell} \cdots  (f_1^{\phantom{\dagger}})^{w_1}  \ket{\X\setminus \S}= \nonumber \\
   &  \frac{1}{{n \choose k}} \sum_{\S\in {\X\cap \Y \choose k}} (-1)^{J(\X, \S) + J(\Y, \S)}  \,    \delta_{\Y \setminus \S, \A} \delta_{\X \setminus \S, \B} \, \epsilon_{v_1, ... , v_k} \epsilon_{w_1,..., w_k}= \nonumber \\
&    
\frac{1}{{n \choose k}} \sum_{\S\in {\X\cap \Y \choose k}} (-1)^{J(\X, \S) + J(\Y, \S)}  \,    \delta_{\Y \setminus \A, \S} \delta_{\Y \setminus \B, \S} \, \epsilon_{v_1, ... , v_k} \epsilon_{w_1,..., w_k}= \nonumber \\
&
\qquad \qquad \qquad \qquad \qquad 
\begin{cases}
{n \choose k}^{-1} \, \delta_{ \Y \setminus \A,  \X \setminus \B} \,  \epsilon_{v_1, ... , v_k} \epsilon_{w_1,..., w_k} (-1)^{J(\Y, \A) + J(\X, \B)}\; \; \textrm{if} \; \; \A \subset \Y  \,  \textrm{and } \; \B \subset \X, \\
0, \; \; \; \textrm{else},
\end{cases} \label{eq:matrix_el_2}
\end{align}
where we have used that $(-1)^{J(\Y, \A) + J(\X, \B)} = (-1)^{J(\Y, \S) + J(\X, \S)} $, which follows from the fact that $(-1)^{J(\X, \S)}= (-1)^{J(\X, \X \setminus \S) + |\X|\cdot |\S| }$.
Thus, the matrix elements of Eq.~\eqref{eq:matrix_el_1} and Eq.~\eqref{eq:matrix_el_2} coincide, which proves the propositions.
\end{proof}

We introduce the convenient notation for $\ket{\inpsi}$: 
\begin{equation}\label{eq:pureSTATEdecomp}
    \ket{\inpsi}=\frac{1}{\sqrt{2^N}}\sum_{\X\in\Cstate} \ket{\X}\ ,
\end{equation}
where $\Cstate$ is a collection of subsets of $[4N]$ that appear in the decomposition of $\ket{\inpsi}$. Note that from the definition of $\ket{\inpsi}$ 
it follows that subsets are labelled by bitstrings $\x=(x_1,\ldots, x_N)$, where $x_i\in\lbrace 0,1\rbrace$ labels which pair of the neighbouring physical modes are occupied in a given quadropule of modes. For $N=2$ we have four possible subsets belonging to $\Cstate$
\begin{equation}\label{eq:expQUADpassive}
    \X_{00}=\lbrace 1,2,5,6\rbrace \ ,\   \X_{01}=\lbrace 1,2,7,8\rbrace \ ,\ \X_{10}=\lbrace 3,4,5,6\rbrace ,\ \X_{11}=\lbrace 3,4,7,8\rbrace\ .
\end{equation}
For general $N$ the collection $\Cstate$ consists of the following subsets labelled by bitstrings $\x$
\begin{equation}
    \X_\x =\lbrace 1+2x_1,2+2x_1,5+2x_2,6+2x_2,\ldots, 4i-3+2x_i,4i-2+2x_i,\ldots,4N-3+2x_N,4N-2+2x_N \rbrace\ .   
\end{equation}

The formula from Lemma \ref{prop:PARTtraceSLATER} allows us to obtain bounds for the purities of reduced density matrices of $\inpsi$.

\begin{prop}[Bounds on purites of reduced density matrices of $\inrho$]\label{prop:puritiesINP}
Consider the setting of this paper, i.e., $d=4N$ and $n=2N$, where $N$ is the number of quadruples used in our quantum advantage proposal. Let $\inpsi\in \D(\Hfer)$  be the input state. Then, for $k=0,\ldots,N$ we have
\begin{equation}
\tr\left[\tr_{k} (\inpsi)^2   \right]\leq \frac{1}{{2N \choose k}^2} \sum_{l=0}^{\lfloor k/2 \rfloor} \frac{N!}{l!(k-2l)!(N-k+l)!} 
\end{equation}  
\end{prop}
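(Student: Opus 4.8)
The plan is to expand $\inpsi$ in the Fock basis, apply the partial-trace identity of Proposition~\ref{prop:PARTtraceSLATER}, and reduce $\tr[\tr_k(\inpsi)^2]$ to a combinatorial count that factorizes over the $N$ quadruples of modes.

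First I would use \eqref{eq:pureSTATEdecomp} to write $\inpsi=\frac{1}{2^N}\sum_{\X,\Y\in\Cstate}\ketbra{\X}{\Y}$, so that by Proposition~\ref{prop:PARTtraceSLATER} (applied with $n=2N$),
\[
\tr_k(\inpsi)=\frac{1}{2^N\binom{2N}{k}}\sum_{\X,\Y\in\Cstate}\ \sum_{\S\in\binom{\X\cap\Y}{k}}(-1)^{J(\X,\S)+J(\Y,\S)}\,\ketbra{\X\setminus\S}{\Y\setminus\S}.
\]
Squaring, taking the trace, and using orthonormality of the Fock states $\{\ket{\Z}:|\Z|=2N-k\}$, every cross term $\tr\big[\ketbra{\X\setminus\S}{\Y\setminus\S}\,\ketbra{\X'\setminus\S'}{\Y'\setminus\S'}\big]$ vanishes unless $\Y\setminus\S=\X'\setminus\S'$ and $\Y'\setminus\S'=\X\setminus\S$, in which case it equals $1$. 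Since $\tr[\tr_k(\inpsi)^2]$ is a nonnegative real number, bounding the signs by $1$ gives $\tr[\tr_k(\inpsi)^2]\le \frac{1}{2^{2N}\binom{2N}{k}^2}\,\mathcal{N}_k$, where $\mathcal{N}_k$ counts the tuples $(\X,\Y,\X',\Y',\S,\S')$ with $\X,\Y,\X',\Y'\in\Cstate$, $\S\in\binom{\X\cap\Y}{k}$, $\S'\in\binom{\X'\cap\Y'}{k}$ obeying the two matching conditions.

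The crux is to evaluate $\mathcal{N}_k$. Each set in $\Cstate$ is determined quadruple-by-quadruple by its labelling bitstring: in quadruple $i$ (modes $\{4i-3,\dots,4i\}$) the set occupies $P_i^0=\{4i-3,4i-2\}$ or $P_i^1=\{4i-1,4i\}$. Hence the membership constraints $\S\subseteq\X\cap\Y$, $\S'\subseteq\X'\cap\Y'$ and the two matching conditions all localize to individual quadruples, so $\mathcal{N}_k=[z^k]\,f(z)^N$, where $f(z)=\sum_m c_m z^m$ and $c_m$ is the number of admissible local configurations (a choice of four bits together with the restrictions $S_i,S_i'$ of $\S,\S'$ to quadruple $i$) with $|S_i|=m$. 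I would carry out the short finite case analysis here: since $\S\subseteq\X\cap\Y$ forces $S_i=\emptyset$ unless the two relevant bits coincide, the matching conditions pin the admissible local patterns to exactly three families — all four bits equal (with $S_i=S_i'$ an arbitrary subset of the common pair), $a_i=d_i\ne b_i=c_i$ (with $S_i=S_i'=\emptyset$), and $a_i=b_i\ne c_i=d_i$ (with $S_i,S_i'$ the two full pairs). In every admissible configuration $|S_i|=|S_i'|$, so the constraint $|\S'|=k$ is automatic once $|\S|=k$; and counting gives $c_0=c_1=c_2=4$, hence $f(z)=4(1+z+z^2)$. Therefore $\mathcal{N}_k=4^N[z^k](1+z+z^2)^N$, the factor $4^N=2^{2N}$ cancels the prefactor, and reading the coefficient of $z^k$ as the trinomial coefficient counting the ways to let $l$ factors contribute $z^2$, $k-2l$ contribute $z$, and $N-k+l$ contribute $1$ yields exactly
\[
\tr[\tr_k(\inpsi)^2]\ \le\ \frac{1}{\binom{2N}{k}^2}\,[z^k](1+z+z^2)^N\ =\ \frac{1}{\binom{2N}{k}^2}\sum_{l=0}^{\lfloor k/2\rfloor}\frac{N!}{l!\,(k-2l)!\,(N-k+l)!}.
\]

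The main obstacle is purely the bookkeeping in the per-quadruple case analysis: verifying that the partially-agreeing bit-patterns are indeed either impossible or forced, and that the three surviving families have the claimed counts $c_0=c_1=c_2=4$. Everything else — the Fock-basis expansion, the orthogonality collapse, the generating-function factorization, and the trinomial identity — is routine, and the small cases $N=1$ confirm the bound is in fact tight there.
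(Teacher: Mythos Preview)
Your proof is correct and follows the same core strategy as the paper: apply Proposition~\ref{prop:PARTtraceSLATER}, drop the signs $(-1)^{J(\cdot,\cdot)}$ via the triangle inequality, and exploit the factorization of the resulting count over the $N$ quadruples. The paper reindexes by $(\X',\Y')=(\X\setminus\S,\Y\setminus\S)$, introduces a NULL/BINARY/FIXED ``type'' for each quadruple of $\X'$ (recording whether $0$, $2$, or $1$ modes were removed there), and tallies $\sum_{\X',\Y'}\N(\X',\Y')^2$ type by type; your direct count of six-tuples $(\X,\Y,\X',\Y',\S,\S')$ via the generating function $f(z)^N$ with $f(z)=4(1+z+z^2)$ produces the same number, since $\mathcal N_k=\sum_{\alpha,\beta}\N(\alpha,\beta)\N(\beta,\alpha)=\sum_{\alpha,\beta}\N(\alpha,\beta)^2$ by the symmetry $\N(\alpha,\beta)=\N(\beta,\alpha)$. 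Your packaging is arguably cleaner --- the trinomial coefficient falls out of $[z^k](1+z+z^2)^N$ immediately and the per-quadruple case analysis is short --- whereas the paper's type language makes the geometric content of each summand (how many quadruples lose $0$, $1$, or $2$ particles) more visible.
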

\begin{proof}
We  use the decomposition of the state vector $\ket{\inpsi}$ given in Eq. \eqref{eq:pureSTATEdecomp} and obtain
\begin{equation}
\inpsi = \frac{1}{2^N} \sum_{\X,\Y\in \Cstate} \ketbra{\X}{\Y}\ .
\end{equation}
Employing \eqref{eq:partTRACEmatrixELEM} and denoting ${J(\X, \S) + J(\Y, \S)}=K(\X,\Y,\S)$ we obtain (remember that $n=2N$)
\begin{equation}\label{eq:partialTRACEexpansion}
\tr_{k} (\inpsi) = \frac{1}{2^N {2N \choose k} } \sum_{\X,\Y\in \Cstate} \sum_{\S\in {\X\cap\Y \choose k}} (-1)^{K(\X,\Y,\S)} \ketbra{\X\setminus \S}{\Y\setminus \S}\ .
\end{equation}
By reordering the sum we obtain 
\begin{equation}\label{eq:semiNightmare}
  \tr_{k} (\inpsi) = \frac{1}{2^N {2N \choose k} }   \sum_{\X',\Y'\in \binom{[4N]}{2N-k}} \ketbra{\X'}{\Y'} \sum_{ \substack{\S \in \binom{[4N]}{k}\ ,\ \X,\Y\in\Cstate \\ \text{s.t. } \X\setminus\S =\X',   \Y\setminus\S=\X'     }  }  (-1)^{K(\X,\Y,\S)}\ ,
\end{equation}
where the second sum is a combinatorial term that gives a coefficient which which particular operator $\ketbra{\X'}{\Y'}$ appears. Crucially, operators $\ketbra{\X'}{\Y'}$, $|\X'|=|\Y'|=2N-k$ are orthonormal with respect to the Hilbert-Schmidt inner product in $\mathcal{B}(\bigwedge^{2N-k}(\C^{4N}))$. Therefore in order to bound purity of $\tr_{k} (\inrho) $ it suffices to count the number of terms in the second sum in \eqref{eq:semiNightmare}: 
\begin{equation}\label{eq:purityUP1}
   \tr\left[\tr_{k} (\inpsi)^2\right] \leq \frac{1}{2^{2N} {2N \choose k}^2 }  \sum_{\X',\Y'\in \binom{[4N]}{2N-k}} \N(\X',\Y')^2\ ,
\end{equation}
where
\begin{equation}\label{eq:normCONSTpass}
  \N(\X',\Y')=  \left|\SET{(\S,\X,\Y)}{\S \in \binom{[4N]}{k}\ ,\ \X,\Y\in\Cstate ,\  \X\setminus\S =\X',\   \Y\setminus\S=\X' }\right|\ .
\end{equation}
In what follows, in order to make our considerations less abstract, we shall refer to lements of subsets involved as "particles". To compute $N(\X',\Y')$ we note that $\X',\Y'$ for which $\N(\X',\Y')\neq0$ must arise from substracting from $\X\in \Cstate$ particles occupying subset $\S$. Since particles corresponding to $\X\in\Cstate$ occupy only two out of four possible modes in every quadropule of modes in a "binary fashion" (See Eq. \eqref{eq:expQUADpassive}), This imposes constraints on the possible configurations of particles from $\X'$ in every quadropole. Specifically, consider the quadropule of physical modes $\A=\{1,2,3,4\}$. Let $\X'_\A=\X'\cap\A$. We have seven possibilities for the set $\X'_\A$:
\begin{equation}\label{eq:occCONS1}
    \X'^{\text{N1}}_\A = \{1,2\}\ ,\  \X'^{\text{N2}}_\A = \{3,4\}\ ,\ \X'^{\text{B}}_\A = \emptyset\ ,
\end{equation}
\begin{equation}\label{eq:occCONS2}
      \X'^{\text{F1}}_\A =  \{1\}\ ,\  \X'^{\text{F2}}_\A =  \{2\}\ ,\   \X'^{\text{F3}}_\A =  \{3\}\ ,\  \X'^{\text{F4}}_\A =  \{4\}\ . 
\end{equation}
All other forms of $\X\cap\A$ yield $\N(\X',\Y')=0$. Under the condition that $\X'$ originates from $\X\in\Cstate$ these configurations impose conditions on possible arrangement of lost particles in quadruple $\A$, denoted by $\S_\A=\S\cap\A $:
\begin{equation}
    \S_\A(N1)=\S_\A(N2)=\emptyset\  ,\  \S_\A(B)=  \{1,2\}\ \text{or}\  \S_\A(B)=  \{3,4\}\ , 
\end{equation}
\begin{equation}
     \S_\A(F1) =  \{2\}\ ,\  \S_\A(F2) =  \{1\}\ ,\   \S_\A(F3) =  \{4\}\ ,\  \S_\A(F4) =  \{3\}\ . 
\end{equation}
This motivates us to introduce \emph{type} of quadruples of $\X'$ whose names are motivated by types of constraints the impose on $\S\cap\A$: 
\begin{equation}
    T_\A(\X')=\begin{cases}
       \mathrm{NULL}\  &\text{iff} \ \X'_\A=\{1,2\}$ or $\X'_\A=\{3,4\} \\
       \mathrm{BINARY}\  &\text{iff}\  \X'_\A=\emptyset  \\
       \mathrm{FIXED}\  &\text{iff}\   \X'_\A\in\{\{1\},\{2\},\{3\},\{4\}  \}
    \end{cases}\ .
\end{equation}
We repeat the same procedure for other quadruples $\{5,6,7,8\}$, $\{9,10,11,12\}$,  etc. To a given $\X'$ we then associate "pattern of types": 
\begin{equation}\label{eq:patTYPES}
    \X' \longmapsto \L(\X') = \left(\lnull[\X']\ ,\ \lbin[\X']\ ,\ \lfix[\X']\right)\ ,
\end{equation}
that lists the number of quadruples of different types in $\X'$. This pattern gives us the number of $k$-element subsets  $\S\in\binom{[4N]}{k}$ contributing to $\N(\X',\Y')$ (cf. \eqref{eq:normCONSTpass}). From the considerations given previously $\N_\S(\X')=2^{\lbin[\X']}$ different $\S$ that contribute. Let us chose $\Y'$ that is compatible with the pattern of lost particles in $\X'$ is the sense that $\X'\cap \Y'=\emptyset$ and $\Y'$ follows the general constrains of occupations in each quadruples  described previously (like the ones stated in Eq.\eqref{eq:occCONS1} and Eq.\eqref{eq:occCONS2}). Since for fixed $\X',\Y'$ subset $\S$ uniquely specifies $\X,\Y\in\Cstate$, we finally get 
\begin{equation}
    \N(\X',\Y') = 2^{\lbin[\X']}\ .
\end{equation}
It is now easy to see that, for $\X'$ characterized by particular $\L(\X')$, there are exactly
\begin{equation}
    \N_{\mathrm{comp}}(\X')=2^{\lnull[\X']}
\end{equation}
different compatible sets $\Y'$. In fact, compatible $\Y'$ necessarily satisfy $\L(\Y')=\L(\X')$. Finally,  simple counting argument shows that there are
\begin{equation}
    \N(\L)=2^{2\lfix+\lnull}\frac{N!}{\lnull!\lbin!\lfix!} 
\end{equation}
different subsets $\X'$ that have the "pattern type" $\L=(\lnull,\lbin,\lfix)$. Hence the contribution in  from subsets $\X',\Y'$ of "pattern type" $\L=(\lnull,\lbin,\lfix)$ to the sum in Eq.\eqref{eq:purityUP1} is equals
\begin{equation}
   \N(\X',\Y')^2 \N_{\mathrm{comp}}(\X')  \N(\L) =  4^{\lnull+\lbin+\lfix}   \frac{N!}{\lnull!\lbin!\lfix!} =2^{2N} \frac{N!}{\lnull!\lbin!\lfix!}\ .
\end{equation}
Parameters $\lnull,\lbin,\lfix$ are not independent because of identities: $N=\lnull+\lbin+\lfix$ (this one we already used implicitly) and $2\lbin+\lfix=2N-k$. Choosing $\lbin$ as an independent parameter applying the above considerations to \eqref{eq:purityUP1} we finally obtain
\begin{equation}\label{eq:purityUP2}
   \tr\left[\tr_{k} (\inpsi)^2\right] \leq \frac{1}{{2N \choose k}^2 }  \sum_{\lbin=0}^{\lfloor\frac{k}{2}\rfloor} \frac{N!}{(N-k+\lbin)!(k-2\lbin)!\lbin!}\ ,
\end{equation}
where summation range for $\lbin$ comes from its definition as the number of quadropules in $\X'$ that are are left without particles.
\end{proof}

Combining Corollary \ref{corr:simplifiedPASSIVEflo} and Proposition \ref{prop:puritiesINP} we obtain explicit upper bound for the expectation value of the projector $\Pfer$

\begin{lem}\label{lem:passPROJfinal}
Consider the setting of our quantum advantage proposal, i.e., $d=4N$ and $n=2N$.  Let $\inpsi\in \D\left( \bigwedge^{2N}(\C^{4N})\right)$. Let $\Pfer$ be defined as in Lemma \ref{lem:PROJpassFERMpurities}. 
We then have 
\begin{equation}\label{eq:passFINALexpression}
    \tr\left(\Pfer \inpsi\ot\inpsi  \right)\leq \frac{1}{2N+1}\left[ 2 \sum_{k=0}^{N-1}  {2N \choose k} \tr(\rho_{k}^2) +{2N \choose N} \tr(\rho_N^2) \right]\ , 
\end{equation}
where 
\begin{align}\label{eq:passive-purity}
    \tr(\rho_k^2) &= \frac{1}{{2N \choose k}^2} \sum_{l=0}^{\lfloor k/2 \rfloor} \frac{N!}{l!(k-2l)!(N-k+l)!}\ . 
\end{align}
\end{lem}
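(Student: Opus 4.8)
The plan is to combine the two results established just above Lemma~\ref{lem:passPROJfinal}, namely Corollary~\ref{corr:simplifiedPASSIVEflo} and Proposition~\ref{prop:puritiesINP}. First I would instantiate Corollary~\ref{corr:simplifiedPASSIVEflo} with $m=N$ and $d=4N$, so that $\Psi=\inpsi$ is the pure state $\psiQUAD^{\ot N}\in\bigwedge^{2N}(\C^{4N})$. This immediately yields the identity
\begin{equation}
\tr\left(\Pfer \inpsi\ot\inpsi\right) = \frac{1}{2N+1}\left[ 2 \sum_{k=0}^{N-1}  {2N \choose k} \tr(\rho_{k}^2) +{2N \choose N} \tr(\rho_N^2) \right],
\end{equation}
where $\rho_k=\tr_{2N-k}(\inpsi)$ is the $k$-particle reduced density matrix. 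Note that the ranges match because the purities satisfy $\tr(\rho_k^2)=\tr(\rho_{2N-k}^2)$ for a pure $2N$-particle state, so it suffices to sum $k$ from $0$ to $N$ with the $k=N$ term appearing with weight one.

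Next I would invoke Proposition~\ref{prop:puritiesINP}, again with $d=4N$, $n=2N$, which gives the upper bound
\begin{equation}
\tr(\rho_k^2) \le \frac{1}{{2N \choose k}^2} \sum_{l=0}^{\lfloor k/2 \rfloor} \frac{N!}{l!(k-2l)!(N-k+l)!}
\end{equation}
for each $k=0,\dots,N$. Substituting this bound term-by-term into the exact expression from Corollary~\ref{corr:simplifiedPASSIVEflo} — and using that the coefficients $\frac{2}{2N+1}{2N\choose k}$ and $\frac{1}{2N+1}{2N\choose N}$ are nonnegative, so the inequality is preserved under the sum — produces exactly the claimed bound \eqref{eq:passFINALexpression} together with the stated formula \eqref{eq:passive-purity} for $\tr(\rho_k^2)$. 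Strictly speaking the statement of the lemma writes $\tr(\rho_k^2)$ for the right-hand side of \eqref{eq:passive-purity}, so I would phrase it as: the true purity is bounded above by that combinatorial quantity, and replacing the purities by these upper bounds in the (equality) expansion yields \eqref{eq:passFINALexpression}.

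There is essentially no genuine obstacle left at this stage — the lemma is a direct corollary of the two preceding results, and the only care needed is bookkeeping: making sure the index ranges in Corollary~\ref{corr:simplifiedPASSIVEflo} (which is stated for a $2m$-particle state) are correctly matched to $2N$ particles, checking that $N-k+l\ge 0$ throughout the summation range $0\le l\le\lfloor k/2\rfloor$, $0\le k\le N$ so that the factorials in \eqref{eq:passive-purity} make sense, and verifying the symmetry $\tr(\rho_k^2)=\tr(\rho_{2N-k}^2)$ used to collapse the sum. The real content — the explicit projector formula, the partial-trace-to-purity reduction, and the combinatorial counting of the ``pattern types'' of quadruples — has already been carried out in Lemmas~\ref{lem:PROJpassFERMpurities} and Propositions~\ref{prop:PARTtraceSLATER} and \ref{prop:puritiesINP}, so the proof here is short. (The subsequent, genuinely nontrivial step of showing this combinatorial expression is at most $\Cpas/N$ is deferred to Lemma~\ref{lem:pasAnti-computation} and is not part of this statement.)
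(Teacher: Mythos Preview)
Your proposal is correct and matches the paper's own approach exactly: the paper presents Lemma~\ref{lem:passPROJfinal} as a direct consequence of Corollary~\ref{corr:simplifiedPASSIVEflo} (instantiated at $m=N$) together with the purity bounds from Proposition~\ref{prop:puritiesINP}, with no additional argument. Your observation about the slight notational abuse (the lemma writes ``$\tr(\rho_k^2)=$'' for what is really the upper bound coming from Proposition~\ref{prop:puritiesINP}) is also accurate.
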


\subsection{Active FLO}

We give here computations related to establishing upper bound in Eq. \eqref{eq:EPupperbound}
for the case of active FLO:
\begin{equation}\label{eq:upperboundACTapp}
 \tr(\Pflo \inpsi\ot \inpsi ) \leq \frac{\Cact}{\sqrt{\pi N}}\  ,\ \text{for } \Cact = \Cactval
\end{equation}
Similarly to the case of passive FLO the proof  divided into three parts. First,  in Lemma \ref{lem:PROJactive} we give an explicit form of $\Pflo$. Second, in Lemma  \ref{lem:actPROJfinal} we find an upper bound on $\tr(\Pflo \inpsi\ot \inpsi)$ via combinatorial expression that can be efficiently computed for any fixed value of $N$. Finally, in Lemma \ref{lem:actAnti-computation} given in Part \ref{app:comp} of the Appendix we prove an upper bound to the said  expression which yields Eq. \eqref{eq:upperboundACTapp}.

Recall that by $m_i$, $i=1,\ldots,2d$ we denoted the standard majorana operators in the $d$ mode Fermionic Fock space $\Hfock(\C^d)$ (cf. Section \ref{sec:notations}). The fermionic parity operator is given by $Q=i^d \prod_{i=1}^{2d} m_i$.

\begin{lem}[Projector for active fermionic linear optics]\label{lem:PROJactive}
Let $\Hact=\Hfock^+\left(\C^d \right)$ be the positive parity subspace of Fock space corresponding to $d$ fermionic modes. Let $\Pflo$ be the projector onto a unique irreducible representation $\tHact\subset \Hact \ot \Hact$  of $\SO(2d)$ such that $\ket{\Phi} \ot \ket{\Phi}\in \tHact$, where $\Phi$ are arbitrary pure positive parity Gaussian states. We then have
\begin{equation}\label{eq:PactProj}
    \Pflo= \P_+ \ot \P_+ \P_0 \P_+ \ot \P_+ \ , 
\end{equation}
where $\P_+= \frac{1}{2}(\I+Q)$ is the orthogonal projector onto $\Hfock^+(\C^d)\subset\Hfock(\C^d) $ and
\begin{equation}\label{eq:explicitFLO}
     \P_0 = \frac{1}{2^{2d}} \sum_{p=0}^{d} C_p \sum_{\X\in\binom{[2d]}{2p}}  \prod_{i\in\X} m_i \ot m_i\ .
\end{equation}
The numbers  $C_p$  satisfy $C_p=(-1)^d C_{d-p}$ and for $p\leq \lfloor d/2 \rfloor$ we have
\begin{equation}\label{eq:activeC_k}
		C_p    =  (-1)^{p}  \frac{(2p)!(2d-2p)!}{(d!)^2} 
		{d \choose p} 
\end{equation}
Moreover, the dimension $\tHact$ equals
\begin{equation}\label{eq:dimHact}
    |\tHact|=\frac{1}{2}\binom{2d}{d}\ .
\end{equation}
\end{lem}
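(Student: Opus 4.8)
The plan is to establish Lemma~\ref{lem:PROJactive} by the same strategy used in the passive case: identify the target irreducible subspace $\tHact \subset \Hact \otimes \Hact$ via the coherent-state characterization of pure positive-parity fermionic Gaussian states, write down the $\SO(2d)$-invariant projector onto it, and compute its explicit form in the Majorana-monomial basis. The key input is Remark~\ref{rem:coherentSTATESsemisimple}: positive-parity Gaussian states are the generalized coherent states of $\mathrm{Spin}(2d)$ irreducibly represented on $\Hfock^+(\C^d)$, so there is a unique irreducible component $\tHact$ of $\Hact^{\otimes 2}$ containing all vectors $\ket{\Phi}^{\otimes 2}$ with $\Phi$ Gaussian, and $\Pflo$ is the orthogonal projector onto it. First I would use the quadratic (Pl\"ucker-type) characterization $A\ket{\Psi}^{\otimes 2}=0 \iff \ket{\Psi}$ Gaussian to pin down $\tHact$, exactly as in the passive-FLO Lemma~\ref{lem:PROJpassFERMpurities}, and then, following \cite{bravyi_lagrangian_2004,MOuniversalFRAME}, express $\Pflo$ as a Haar average $\int_{\SO(2d)} d\mu(O)\, \Piact(O)^{\otimes 2}\ketbra{\Phi_0}{\Phi_0}^{\otimes 2}(\Piact(O)^\dagger)^{\otimes 2}$ for a fixed reference Gaussian state $\Phi_0$ (e.g. the vacuum).

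The main computational step is to evaluate this average in the operator basis $\{\prod_{i\in\X} m_i \otimes m_i\}$. The natural route: the operator $\sum_{i=1}^{2d} m_i \otimes m_i$ generates the diagonal $\SO(2d)$ action and commutes with $\Piact(O)^{\otimes 2}$; more precisely, the collection of operators $\prod_{i\in\X}(m_i\otimes m_i)$ for $|\X|$ even spans the commutant relevant here, since under $\Piact(O)^{\otimes 2}$ each $m_i\otimes m_i$ transforms by the (real) defining representation acting diagonally, and invariant combinations are exactly the ``diagonal'' antisymmetrized products. So $\Pflo$ must be a linear combination $\frac{1}{2^{2d}}\sum_{p=0}^d C_p \sum_{|\X|=2p}\prod_{i\in\X} m_i\otimes m_i$ (the $2^{2d}$ is the Hilbert-Schmidt normalization of each Majorana monomial, cf. Eq.~\eqref{eq:major_ort}), and it remains to fix the coefficients $C_p$. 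I would determine them by imposing (i) the projector condition $\Pflo^2 = \Pflo$, or more cheaply (ii) the two conditions that $\Pflo$ annihilates one fixed non-Gaussian positive-parity state tensored with itself and fixes $\ketbra{0_F}{0_F}^{\otimes 2}$, together with the $G$-invariance already built in, plus the trace normalization $\tr\Pflo = |\tHact|$. The symmetry $C_p = (-1)^d C_{d-p}$ should drop out of the action of the total parity $Q^{\otimes 2}$ (since $Q = i^d\prod_i m_i$, conjugating $\prod_{i\in\X}m_i\otimes m_i$ by $Q\otimes Q$ sends $\X \mapsto [2d]\setminus\X$ up to the stated sign), and the explicit formula \eqref{eq:activeC_k} I would obtain by matching against the known contraction identities for Majorana operators on the vacuum, i.e. by computing $\tr(\P_0 \cdot m_\X \otimes m_\X)$ in closed form using the pairing structure of $\ketbra{0_F}{0_F}$. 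The $\P_+\otimes\P_+$ factors in \eqref{eq:PactProj} simply restrict everything to the physical positive-parity sector and must be included because $\tr(Q\,m_\X)\ne 0$ for $|\X|=2d$; inserting $\P_+ = \frac12(\I+Q)$ and using $Q m_\X Q = \pm m_{[2d]\setminus\X}$ reproduces precisely the $C_p\leftrightarrow C_{d-p}$ folding.

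For the dimension \eqref{eq:dimHact}, I would identify the highest weight of $\tHact$: since $\tHact$ is the irreducible component of $\Hact^{\otimes 2}$ generated by $\ket{\Phi}^{\otimes 2}$ with $\Phi$ of positive-parity Gaussian, and $\Hact$ itself carries one of the two half-spin (spinor) representations of $\Spin(2d)$ with highest weight $\tfrac12(e_1+\dots+e_d)$, the Cartan square of that vector lands in the irreducible with highest weight $e_1+\dots+e_d$, which is the top exterior power $\bigwedge^d(\C^{2d})$ viewed as an $\SO(2d)$-module; but that module is reducible (it splits as self-dual $\oplus$ anti-self-dual $d$-forms), and the relevant self-dual/anti-self-dual piece has dimension $\tfrac12\binom{2d}{d}$, giving \eqref{eq:dimHact}. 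Alternatively, and more in the spirit of the paper, I would just take the trace of the explicit $\P_0$ expression: $\tr(\P_+\otimes\P_+\,\P_0\,\P_+\otimes\P_+)$ collapses to the $p=0$ and $p=d$ terms (all Majorana monomials with $0<|\X|<2d$ are traceless even after conjugation by $Q$), and with $C_0 = 1$ normalization and $C_d = (-1)^d$ one reads off $\tr\Pflo = \frac12\binom{2d}{d}$ after accounting for the dimension $2^{d-1}$ of $\Hact$.

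The hard part will be controlling signs and normalization factors throughout: the Majorana reordering signs in products $m_\X$, the $(-1)^{f(n)}$ in the Hilbert--Schmidt pairing \eqref{eq:major_ort}, and the precise combinatorial count of perfect-matching contractions that produces the binomial $\binom{d}{p}$ and the factorial prefactor in \eqref{eq:activeC_k}. I expect the cleanest derivation of \eqref{eq:activeC_k} to come from writing $\P_0$ as the image under $\Piact^{\otimes 2}$-averaging of $\ketbra{0_F}{0_F}^{\otimes 2}$, expanding $\ketbra{0_F}{0_F} = \frac{1}{2^d}\prod_{j=1}^{d}(\I - i m_{2j-1}m_{2j})$, and then using the fact that the Haar average over $\SO(2d)$ of $O_{\X,\X'}$-type tensors is a Weingarten-type sum that, for the defining representation of the orthogonal group acting diagonally on two tensor factors, reduces to counting matchings; the coefficient $C_p$ is then the Weingarten coefficient for the trivial permutation in degree $2p$, which evaluates to the stated closed form. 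I would present this computation compactly, relegating the matching-count bookkeeping to a short appendix-style verification, and checking the final formula against small cases $d=1,2$ (where $\Hact$ has dimension $1,2$) as a sanity check.
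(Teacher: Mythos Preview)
Your proposal would work, but it takes a more circuitous route than the paper's proof, and the detour is unnecessary. You correctly identify the quadratic characterization $\Lambda\ket{\Psi}^{\otimes 2}=0 \iff \ket{\Psi}$ Gaussian (with $\Lambda = \sum_{i=1}^{2d} m_i\otimes m_i$, as in \cite{bravyi_lagrangian_2004}) as the relevant structure, but then you set it aside and propose to compute $\Pflo$ as the Haar average of $\ketbra{0_F}{0_F}^{\otimes 2}$, extracting the $C_p$ via orthogonal Weingarten calculus. The paper instead stays with $\Lambda$ and computes $\P_0$ \emph{directly} as the projector onto $\ker\Lambda$. The key observation you do not exploit is that the $2d$ operators $m_i\otimes m_i$ mutually commute (each pair anticommutes in both tensor slots, so the signs cancel) and each squares to $\id\otimes\id$. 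Hence they are simultaneously diagonalizable with joint eigenvalues $\muu\in\{-1,1\}^{2d}$, the eigenvalue of $\Lambda$ on the $\muu$-eigenspace is $\sum_i\mu_i$, and $\ker\Lambda$ is the sum over balanced strings. The projector onto a single eigenspace is $\P_{\muu} = 2^{-2d}\prod_i(\id\otimes\id+\mu_i\, m_i\otimes m_i)$; summing over balanced $\muu$ and expanding the product gives the coefficient of $\prod_{i\in\X}m_i\otimes m_i$ as the elementary sum $A_\X = \sum_{|\Y|=d}(-1)^{|\X\cap\Y|}$. Conditioning on $l=|\X\cap\Y|$ and applying the signed Vandermonde identity $\sum_l(-1)^l\binom{d}{l}\binom{d}{k-l}=(-1)^{k/2}\binom{d}{k/2}$ (for $k$ even, zero otherwise) yields \eqref{eq:activeC_k} with no Haar integrals, no Weingarten functions, and essentially no sign bookkeeping.

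Your Weingarten route is not wrong, but it reconstructs this answer through heavier machinery, and your description contains imprecisions: the $C_p$ are not single Weingarten values (one would have to sum over all pair-matchings in degree $2p$ weighted by the orthogonal Weingarten function and match against the pairing structure of $\ketbra{0_F}{0_F}$), so the ``hard part'' you anticipate is genuinely harder than necessary. Also a minor slip: you write $C_0=1$, but \eqref{eq:activeC_k} gives $C_0=\binom{2d}{d}$; this is what makes the trace come out to $\tfrac12\binom{2d}{d}$ after inserting $\P_+\otimes\P_+$ (the $p=0$ and $p=d$ terms contribute equally through the $\id\otimes\id$ and $Q\otimes Q$ pieces respectively). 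Your highest-weight argument for $|\tHact|$ is a valid alternative; the paper simply takes the trace of the explicit $\Pflo$.
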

\begin{proof}
The result follows from the characterization of pure fermionic Gaussian states given in Corollary 1 in: \cite{melo_power_2013} which states that a pure state $\Psi$ is a pure Fermionic Gaussian state if and only if $\Lambda \ket{\Psi} \ot \ket{\Psi}=0$, where $\Lambda$ is the operator acting on $\Hfree(\C^d) \ot \Hfree(\C^d)$ introduced previously by Bravyi in \cite{bravyi_lagrangian_2004}
\begin{equation}\label{eq:LAMBDAdef}
    \Lambda=\sum_{i=1}^{2d} m_i \ot m_i\ .
\end{equation}
Equivalently,  $\Psi$ is  pure Fermionic Gaussian state iff $\P^\Lambda_0 \ket{\Psi}\ot \ket{\Psi}=0$,  where $\P^\Lambda_0$ is the projector onto zero eigenspace of $\Lambda$. Since we are interested in Gaussian states having positive parity ($Q\ket{\Psi}=\ket{\Psi}$), and operators $Q\ot\I $, $\I\ot Q$ commute with $\Lambda$, we get the following equivalence
\begin{equation}
    \Psi\ \text{is pure positive parity fermionic Gaussian state} \Longleftrightarrow \P_+ \ot \P_+ \P^\Lambda_0 \P_+ \ot \P_+ \ket{\Psi}\ot \ket{\Psi} =0\ .  
\end{equation}
We now show that $\P^\Lambda_0= \P_0$. The operator $\Lambda$ from \eqref{eq:LAMBDAdef} is a sum of $2d$ commuting hermitian operators $m_i\ot m_i$ which satisfy $(m_i\ot m_i)^2=\id\ot \id$. A one dimensional projector onto a joint eigenspace of $M_i$ corresponding to eigenvalues $\mu_i$, $i\in[2d]$ reads is given by
\begin{equation}
    \P_{\muu} = \frac{1}{2^{2d}} \prod_{i=1}^{2d}\left(\id\ot\id + \mu_i m_i \ot m_i \right)\ ,
\end{equation}
where $\muu= (\mu_1,\mu_2,\ldots,\mu_{2d})\in\{-1,1\}^{2d}$. Any arrangement of eigenvalues $\muu$ corresponds to eigenvalue $\lambda=\sum_{i=1}^{2d} \mu_i$. Consequent, projector onto eigenspace zero of $\Lambda$ reads
\begin{equation}\label{eq:PlambdaSUM1}
    \P^{\Lambda}_0 = \sum_{\substack{\muu\in\{-1,1\}^{2d}  \\ \sum_{i=1}^{2d} \mu_i =0  }} \P_{\muu}  \ . 
\end{equation}
Expanding each of the projectors $ \P_{\muu}$ into sum of products of Majorana monomials gives
\begin{equation}
    \P_{\muu} = \frac{1}{2^{2d}} \sum_{k=0}^{2d} \sum_{\X\in \binom{[2d]}{k}} {\muu}^{\X} \prod_{i\in \X} m_i \ot m_i \ ,    
\end{equation}
where we have defined ${\muu}^{\X}= \prod_{i\in \X}\mu_i$. Inserting this expression to \eqref{eq:PlambdaSUM1} gives
\begin{equation}\label{eq:PlambdaSUM2}
     \P^{\Lambda}_0 = \frac{1}{2^{2d}} \sum_{k=0}^{2d} \sum_{\X\in \binom{[2d]}{k}} A_\X \prod_{i\in \X} m_i \ot m_i \ ,
\end{equation}
with
\begin{equation}
    A_\X = \sum_{\substack{\muu\in\{-1,1\}^{2d}  \\ \sum_{i=1}^{2d} \mu_i =0  }} {\muu}^{\X}\ .
\end{equation}
Every $\muu\in\{-1,1\}^{2d}$ can be identified with a subset $\Y_{\muu}\subset [2d]$ defined by $\Y_{\muu}=\SET{i}{\mu_i=-1}$. Under this identification $\muu^{\X}=(-1)^{|\X\cap \Y_{\muu}|} $. Consequently we obtain
\begin{equation}
     A_\X = \sum_{\Y\in\binom{[2d]}{d}} (-1)^{|\X\cap \Y|} \ . 
\end{equation}
Let us first observe that because $ (-1)^{|\X\cap \Y|}=(-1)^d (-1)^{|\bar{\X}\cap \Y|}$, for $\bar{\X}=[2d]\setminus \X$ and $|Y|=d$, we have $C_\X=(-1)^d C_{\bar{\X}}$.  Assuming $|X|=k\leq d$ we get
\begin{equation}
     A_\X  = \sum_{l=0}^k (-1)^l \sum_{\substack{\Y\in\binom{[2d]}{d} \\ |\X\cap \Y|=l  }}  =  \sum_{l=0}^k (-l)^l {k \choose l}{2d-k \choose d-l}\ ,
\end{equation}
where in to get the second equality we counted the number of sets $\Y\in\binom{[2d]}{d}$ satisfying $|\X\cap \Y|=l$, where $|X|=k\leq d$. Since we $A_\X$ depends only on $|\X|$ we will sue the notation denoting $A_\X=A_{|\X|}$. USing simple algebra we obtain  
\begin{align}
	A_k = \sum_{l=0}^k (-1)^l {k \choose l}{2d-k \choose d-l} 
	&= \frac{k!(2d-k)!}{(d!)^2} \sum_{l=0}^k (-1)^l
	{d \choose l}{d \choose k-l}  \ .
\end{align}
This can be further simplified using the identity 
\begin{equation}
    \sum_{l=0}^k (-1)^l {d \choose l}{d \choose k-l} =  \begin{cases} (-1)^{k/2}
		{d \choose k/2} &\mbox{if } k\ \text{is even}  \\
0 & \mbox{if } k\ \text{is odd}  \end{cases} \ .
\end{equation}
Denoting $A_{2p}= C_p$ and using $C_\X=(-1)^d C_{\bar{\X}}$ we observe that $C_{d-k}=C_k$. Inserting the expression for $A_\X$ to \eqref{eq:PlambdaSUM2} we finally obtain the desired result:
\begin{equation}
     \P^\Lambda_0 = \frac{1}{2^{2d}} \sum_{p=0}^{d} C_p \sum_{\X\in\binom{[2d]}{2p}}  \prod_{i\in\X} m_i \ot m_i\ .
\end{equation}
where  $C_p$  satisfy $C_p=(-1)^d C_{d-p}$ and for $p\leq \lfloor d/2 \rfloor$ 
\begin{equation}
		C_p    =  (-1)^{p}  \frac{(2p)!(2d-2p)!}{(d!)^2} 
		{d \choose p} \ . 
\end{equation}
We thus established $ \Pflo= \P_+ \ot \P_+ \P_0 \P_+ \ot \P_+ $. The dimension of the subspace on which $\Pflo$ projects, $|\tHact|$,  can be now computed as $\tr(\Pfer)$ by using standard algebraic properties of Majorana operators.

\end{proof}

In order to proof the following lemma we use explicit form of $\Pfer$ to compute $\tr(\Pfer \inpsi \ot \inpsi)$. 

\begin{lem}\label{lem:actPROJfinal}
Consider the setting of our quantum advantage proposal, i.e., $d=4N$ and $n=2N$.  Let $\inpsi \in \D\left( \Hfock^+(\C^{4N})\right)$. Let $\Pflo$ be a projector specified in Lemma \ref{lem:PROJactive}. 
We then have 
\begin{equation}\label{eq:actFINALexpression}
    \tr(\Pflo \inpsi \ot \inpsi) = \frac{1}{2^{8N}} \left[
    2 \sum_q^{N-1} C_{2q}
    \sum_{l=0}^{\lfloor \frac{q}{2} \rfloor} \frac{N!}{l!(q-2l)!(N-q+l)!} 14^{q-2l} + C_{2N} \sum_{l=0}^{N} \frac{N!}{(l!)^2(4N-2l)!} 14^{N-2l}
    \right].
\end{equation}
where 
\begin{equation}
    C_{2q}= \frac{(4q)!(8N-4q)!}{((4N)!)^2}  {4N \choose 2q} \ .
\end{equation}
\end{lem}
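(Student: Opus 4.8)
The plan is to expand $\Pflo$ from Lemma~\ref{lem:PROJactive} into Majorana monomials and exploit the block product structure of $\ket{\inpsi}=\ket{\Psi_4}^{\ot N}$, $\ket{\Psi_4}=\tfrac{1}{\sqrt2}(\ket{1100}+\ket{0011})$. Since $\inpsi$ has positive parity, $\P_+\ot\P_+$ fixes $\inpsi\ot\inpsi$, so by \eqref{eq:PactProj}--\eqref{eq:explicitFLO}, using $\prod_{i\in\X}(m_i\ot m_i)=m_\X\ot m_\X$ and $\tr((m_\X\ot m_\X)(\inpsi\ot\inpsi))=(\bra{\inpsi}m_\X\ket{\inpsi})^2$ (here $d=4N$, $2d=8N$),
\begin{equation}
\tr(\Pflo\,\inpsi\ot\inpsi)=\tr(\P_0\,\inpsi\ot\inpsi)=\frac{1}{2^{8N}}\sum_{p=0}^{4N}C_p\sum_{\X\in\binom{[8N]}{2p}}\big(\bra{\inpsi}m_\X\ket{\inpsi}\big)^2 .
\end{equation}
Grouping the Majorana indices into the $N$ quadruple blocks (block $j$ carries indices $8(j-1)+1,\dots,8j$), the global increasing order of $\X$ is the concatenation of the within-block orders, so $m_\X=\prod_{j=1}^N m_{\X_j}$ with $\X_j\coloneqq\X\cap\{8(j-1)+1,\dots,8j\}$, and an even-length Majorana product supported in one block acts as an operator on that block only (its Jordan--Wigner string cancels outside it). Hence $\big(\bra{\inpsi}m_\X\ket{\inpsi}\big)^2=\prod_{j=1}^N\big(\bra{\Psi_4}m_{\X_j}\ket{\Psi_4}\big)^2$, where the factors on the right are matrix elements of Majorana monomials on a single four-mode block (both sides vanish if some $|\X_j|$ is odd).

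The heart of the argument is the single-block computation of $w_S\coloneqq\bra{\Psi_4}m_S\ket{\Psi_4}$ for $S\subseteq[8]$. Write $S$ as the disjoint union of $r$ full pairs $\{2j{-}1,2j\}$ and $s$ lone Majoranas: each pair contributes a factor proportional to $1-2n_j$ (which leaves occupations unchanged) while each lone Majorana flips the occupation of its mode, so $m_S$ sends a Fock state to the one with exactly the $s$ singly-touched occupations flipped. As $\ket{1100}$ and $\ket{0011}$ differ in all four occupations, $w_S\neq0$ forces $s\in\{0,4\}$. For $s=0$ a direct evaluation gives $w_S=\pm1$ when $|S|\in\{0,8\}$ and for all $\binom{4}{2}=6$ choices of the paired modes when $|S|=4$, while $w_S=0$ when $|S|\in\{2,6\}$. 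For $s=4$ (one Majorana in each mode, $|S|=4$) the monomial maps $\ket{1100}$ to a phase times $\ket{0011}$; writing $m_S=(-i)^kM$ with $M$ real in the Fock basis and $k$ the number of even-indexed Majoranas in $S$, Hermiticity of $m_S$ gives $w_S=\mathrm{Re}\big((-i)^k\bra{0011}M\ket{1100}\big)$, which vanishes for $k$ odd and equals $\pm1$ for $k$ even, and the number of such $S$ with $k$ even is $\binom40+\binom42+\binom44=8$. In total, $w_S^2=1$ for exactly $14$ subsets of size $4$ and for $S=\emptyset$ and $S=[8]$, and $w_S^2=0$ otherwise; in particular $w_{\X_j}^2\neq0$ only when $|\X_j|\in\{0,4,8\}$. (The same count can be carried out on $\ket{a_8}=\tfrac1{\sqrt2}(\ket{0000}+\ket{1111})$, which is related to $\ket{\Psi_4}$ by an active FLO.)

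It remains to assemble the sum. The last observation forces $|\X|=\sum_j|\X_j|\equiv0\pmod4$, so only $p=2q$ contributes. For such an $\X$ put $l\coloneqq\#\{j:|\X_j|=8\}$; then $\#\{j:|\X_j|=4\}=q-2l$ and $\#\{j:|\X_j|=0\}=N-q+l$, the number of block-type assignments is $\frac{N!}{l!(q-2l)!(N-q+l)!}$, and, the sum over choices factorizing across blocks, each size-$4$ block contributes $14$ and each size-$0$ or size-$8$ block contributes $1$. Thus $\sum_{\X\in\binom{[8N]}{4q}}(\bra{\inpsi}m_\X\ket{\inpsi})^2=\tilde S_q$ with $\tilde S_q\coloneqq\sum_{l=0}^{\lfloor q/2\rfloor}\frac{N!}{l!(q-2l)!(N-q+l)!}14^{q-2l}$, so $\tr(\Pflo\,\inpsi\ot\inpsi)=2^{-8N}\sum_{q=0}^{2N}C_{2q}\tilde S_q$. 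Finally fold this sum: by Lemma~\ref{lem:PROJactive}, $C_p=(-1)^dC_{d-p}=C_{4N-p}$ since $d=4N$ is even, while the blockwise complementation $\X_j\mapsto[8]\setminus\X_j$ preserves $w_{\X_j}^2$ and exchanges the sizes $0$ and $8$, hence maps the $q$-sector bijectively onto the $(2N-q)$-sector and yields $\tilde S_q=\tilde S_{2N-q}$. Therefore $\sum_{q=0}^{2N}C_{2q}\tilde S_q=2\sum_{q=0}^{N-1}C_{2q}\tilde S_q+C_{2N}\tilde S_N$; inserting the explicit value $C_{2q}=\frac{(4q)!(8N-4q)!}{((4N)!)^2}\binom{4N}{2q}$ (valid for $q\le N$, as then $2q\le d/2$) and the value of $\tilde S_N$ gives \eqref{eq:actFINALexpression}.

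The main obstacle is the single-block case analysis: establishing that exactly $14$ of the $\binom{8}{4}$ weight-four Majorana monomials — and none of weight two or six — have nonzero expectation in $\ket{\Psi_4}$, with all signs and $(-i)^k$ phases correctly tracked. The pairs-plus-singles decomposition reduces this to a finite check, but the bookkeeping must be done with care. The only other delicate point is the folding of the $q$-sum, where one has to match the symmetry $C_p=C_{d-p}$ of Lemma~\ref{lem:PROJactive} with the complementation bijection and treat the central term $q=N$ (i.e.\ $p=d/2$) separately; this is the origin of the asymmetric-looking last summand of \eqref{eq:actFINALexpression}.
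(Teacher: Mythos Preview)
Your proof is correct and follows essentially the same approach as the paper's: reduce $\tr(\Pflo\,\inpsi\ot\inpsi)$ to $\tr(\P_0\,\inpsi\ot\inpsi)$, exploit the block product structure, identify the $1+14+1$ nonzero Majorana monomials per block, and fold the resulting sum via the symmetries $C_{2q}=C_{2(2N-q)}$ and $\tilde S_q=\tilde S_{2N-q}$. The only difference is cosmetic: the paper first applies an active FLO to pass from $\ket{\Psi_4}$ to $\ket{a_8}=\tfrac{1}{\sqrt2}(\ket{0000}+\ket{1111})$ and then quotes the known Majorana expansion of $\ket{a_8}\!\bra{a_8}$ (identity, parity, and fourteen quartic monomials) from \cite{bravyi_universal_2006,melo_power_2013}, whereas you compute the single-block expectation values $w_S=\bra{\Psi_4}m_S\ket{\Psi_4}$ directly---a route you yourself note is equivalent.
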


\begin{proof}

We start be observing that due to FLO invariance of $\Pflo$ we have $\tr(\Pflo \inpsi \ot \inpsi)=\tr(\Pflo \inpsi' \ot \inpsi')$, where $\inpsi'=V\inpsi V^\dag$, for $V\in\Gact$. Note that by applying $V=\prod_{1=1}^{N}m_{3i-1} m_{4i-i}$ we can transform $\ket{\inpsi}=\psiQUAD^{\ot N}$ (recall that $\psiQUAD=\frac{1}{\sqrt{2}}(\ket{0011}+\ket{1100})$ into $\ket{\inpsi'}=\ket{a_8}^{\ot N}$, where 
$\ket{a_8}=\frac{1}{\sqrt{2}}(\ket{0000}+\ket{1111})$ is the state that was used considered previously by Bravyi in the context of magic state injection for model of computation based on Ising anyons \cite{bravyi_universal_2006} (see also \cite{melo_power_2013,oszmaniec_classical_2014}). The state $\ketbra{a^\B_8}{a^\B_8}$ on  octet  of normally-ordered Majorana modes denoted by $\B\subset[2d]$  can be decomposed using Majorana monomials
\begin{equation}\label{eq:A8inSUBSET}
    \ketbra{a^\B_8}{a^\B_8}=\frac{1}{2^4} \left(\mathbb{I}+Q^\B + A^{\B}_1 + A^{\B}_2 +\ldots + A^{\B}_{14} \right)\ , 
\end{equation}
where $Q^{\B}=\prod_{i\in B} m_i$ and operators $ A^{\B}_i$, $i=1,\ldots,14$ are quartic (i.e. fourth other)  Majorana monomials  supported on modes belonging to $\B$ ant satisfying $(A^\B_i)^2=\id$. We will not need explicit form of $\ketbra{a_8}{a_8}$ but it can be found in the works cited above). The algebraic framework of Majorana fermion operators allows us to write the equivalent input state $\ketbra{\inpsi'}{\inpsi'}$ as a product (in a standard operator sense) of states $\ketbra{a_8}{a_8}$ supported on disjoint octets of modes
\begin{equation}\label{eq:prodA8}
    \ketbra{\inpsi'}{\inpsi'}= \prod_{i=1}^{N}  \ketbra{a^{\B_i}_8}{a^{\B_i}_8}\ ,
\end{equation}
where $\B_1=\{1,2,\ldots,8\}$, $\B_1=\{1,2,\ldots,8\}$,$\B_2=\{9,10,\ldots,16\}$, etc. We proceed similarly as in the proof of Proposition \ref{prop:puritiesINP} and expand the above expressions into product of majorana monomials and obtain 
\begin{equation}\label{eq:PrimeState}
    \ketbra{\inpsi'}{\inpsi'} =\frac{1}{2^{4N}} \sum_{\X\in\Ca} (-1)^{F(\X)}  \prod_{i\in \X} m_i\ ,
\end{equation}
where $\Ca$ is a collection of subsets of $8N$ Majorana modes  modes that appear in the product expansion of $\ketbra{\inpsi'}{\inpsi'}$ and $(-1)^{F(\X)}$ a sign possibly depending on a subset $\X$. Because $\ket{\inpsi'}\in\Hfock^+(\C^d)$ and the form projector $\Pflo$ (cf. Eq. \eqref{eq:PactProj}) we have $\tr\left( \ketbra{\inpsi'}{\inpsi'}^{\ot 2} \Pflo\right)= \tr\left(\ketbra{\inpsi'}{\inpsi'}^{\ot 2} \P_0\right)$, where $\P_0$ is given in \eqref{eq:explicitFLO}. Combining \eqref{eq:PrimeState} with  \eqref{eq:explicitFLO} gives
\begin{align}
\tr\left(\ketbra{\inpsi'}{\inpsi'}^{\ot 2} \P_0\right) = \frac{1}{2^{16N}} \sum_{p=0}^{4N} C_p  \sum_{\Z\in\binom{[8N]}{2p}}   \sum_{\X,\Y\in\Ca} (-1)^{F(\X)+F(\Y)} \tr\left[  \prod_{i\in \X} m_j \ot \prod_{k\in \Y} m_i  \prod_{k\in\Z} m_k \ot m_k \right]\ .
\end{align}
 Using 
 \begin{equation}
 (-1)^{F(\X)+F(\Y)} \tr\left[  \prod_{i\in \X} m_j \ot \prod_{k\in \Y} m_i  \prod_{k\in\Z} m_k \ot m_k \right] = 2^{8 N} \delta_{\X,\Y} \delta_{\Z,\X}   
 \end{equation}
 we obtain 
 \begin{equation}
 \tr\left(\ketbra{\inpsi'}{\inpsi'}^{\ot 2} \P_0\right)  = \frac{1}{2^{8N}} \sum_{p=0}^{4N} C_p  \sum_{\X\in\binom{[8N]}{2p}\cap \Ca }\ . 
 \end{equation}
Recall that from the definition of $\Ca$, this collection of subsets of $[8N]$ consists only on subsets that are have cardinality divisible by $4$. Therefore the above can be described equivalently by  
  \begin{equation}
 \tr\left(\ketbra{\inpsi'}{\inpsi'}^{\ot 2} \P_0\right)  = \frac{1}{2^{8N}} \sum_{q=0}^{2N} C_{2q}  \sum_{\X\in\binom{[8N]}{4q}\cap \Ca }\ . 
 \end{equation}
Moreover, form $Q \ketbra{\inpsi'}{\inpsi'} =\ketbra{\inpsi'}{\inpsi'}$ we get  $\bar{\X}\in\Ca$ if and only if $\X\in \Ca$ and consequently
\begin{equation}
\sum_{\X\in\binom{[8N]}{4q}\cap \Ca } = \sum_{\X\in\binom{[8N]}{8N-4q}\cap \Ca }\ .
\end{equation} 
Using this and the property $C_{2q}= C_{8N-2q}$ we finally get
\begin{equation}\label{eq:simplifiedACTOVE}
 \tr\left(\ketbra{\inpsi'}{\inpsi'}^{\ot 2} \P_0\right)  = \frac{1}{2^{8N}} \sum_{q=0}^{N-1} 2\left( C_{2q} \sum_{\X\in\binom{[8N]}{4q}\cap \Ca }\right) +  
     \left(C_{2N} \frac{1}{2^{8N}} \sum_{\X\in\binom{[8N]}{4N}\cap \Ca } \right)   \ .
\end{equation}
Therefore, we have reduced the problem of computing $ \tr\left(\ketbra{\inpsi'}{\inpsi'}^{\ot 2} \P_0\right)$ (equal to $ \tr\left(\ketbra{\inpsi}{\inpsi}^{\ot 2} \P_0\right)$ ) to the problem of counting different sets of cardinality $4q$ ($q=0,1,\ldots,N$) one can find in $\Ca$. This problem can be tackled using similar technique to the one used in the proof of Proposition \ref{prop:puritiesINP} i.e. by introducing \emph{pattern of types} of subsets in $\Ca$. We have $8N$ Majorana modes in total. In what follows we shall refer to "standard octets" as $N$ disjoint octets on which states $\ket{a_8^{\B_i}}$ are supported in Eq. \eqref{eq:prodA8}.   A subset $\X\in\Ca$ satisfying $|\X|=4q$ ($q\leq N$) can be characterized, in analogy to  \eqref{eq:patTYPES}, by pattern of types, i.e. a triple  
 \begin{equation}
    \X \longmapsto \L(\X) = \left(\loct[\X]\ ,\ \lempty[\X]\ ,\ \lquad[\X]\right)\ ,
\end{equation}
where $\loct[\X]$ counts the number of standard octets contained in $\X$,  $\lempty[\X]$ counts how many standard octets are not populated by elements of  $\X$, and finally $\lquad[\X]$ is the number of octets in which $\X$ intersects only in four elements (note that from the construction of $\Ca$ and due to specific form of the state $\ketbra{a^\B_8}{a^\B_8}$ in Eq. \eqref{eq:A8inSUBSET} these are the only possibilities). With these concepts counting of sets $\X\in\Ca$ of carnality $4q$ can be done analogously as in Proposition \ref{prop:puritiesINP} i.e by counting how many sets of different "pattern of types" $\L(\X)$ of given carnality are there. The final results reads

\begin{equation}
\left|\SET{\X}{\X\in\binom{[8N]}{4q}\ ,\ \X\in \Ca  }\right| =\sum_{l=0}^{\lfloor \frac{q}{2} \rfloor} \frac{N!}{l!(q-2l)!(N-q+l)!} 14^{q-2l}\ , 
\end{equation}
where $l$ labels the number of possible "fully occupied" standard octets in $\X$ of carnality $4q$. The term $14^{q-2l}$ appears because for the said value of fully occupied octets there are necessarily $q-2l$ octets of quartic type, and every such octet there is exactly $14$ possibilities.   We conclude the proof by using the above identity in Eq. \eqref{eq:simplifiedACTOVE} and employing the explicit formula for $C_{2q}$ from \eqref{eq:activeC_k}.

\end{proof}

\subsection{Computation of the sums} \label{app:comp}

In this part we prove the bounds on the combinatorial sums appearing in Lemma \ref{lem:passPROJfinal}, Lemma \ref{lem:actPROJfinal}. This ultimately proves anticoncentration bounds for passive and active FLO circuits initialized in magic input states $\inpsi$ in Theorem \ref{thm:anticoncentrationFLO}.

Our general strategy for the analytical part will be based on the following tight inequalities satisfied by binomial and trinomial coefficients. 

\begin{lem}[Bounds for binomial and trinomial coefficients]\label{lem:binBounds}
Let $n,k$ be a natural numbers such that $k\in\{1,\ldots,n-1\}$. Let $x=\frac{k}{n}$. Then we have
\begin{equation}\label{eq:binBOUND}
    c\cdot \sqrt{\frac{n}{k(n-k)}} \exp\left(n\, h(x)\right) \leq \binom{n}{k} \leq C\cdot \sqrt{\frac{n}{k(n-k)}} \exp\left(n\, h(x)\right) \ ,
\end{equation}
where $c=\frac{1}{2\sqrt{2}}$, $C=\frac{1}{\sqrt{2\pi}}$, and $h(x)=-x\log(x) -(1-x)\log(1-x)$ is the binary entropy. 

Moreover, let $k,l,m$ be nonzero natural numbers such that $k+l+m=n$. Let $x=\frac{k}{n}, y=\frac{l}{n},z=\frac{m}{n}$. Then we have
\begin{equation}\label{eq:triBOUND}
     a\, \sqrt{\frac{n}{k\cdot l \cdot m}} \exp\left(n\, h(x,y,z)\right) \leq \binom{n}{k,l,m} \leq A\,  \sqrt{\frac{n}{k\cdot l \cdot m}} \exp\left(n\, h(x,y,z)\right)\ ,
\end{equation}
where $a=\frac{1}{8}$, $A=\frac{1}{2\pi}$ and $h(x,y,z)=-x \log (x) -y \log(y)-z \log(z) $ is the entropy of three-outcome probability distribution.
\end{lem}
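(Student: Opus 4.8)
The plan is to prove both inequalities by the same route: convert factorials to exact quantities via Stirling's formula with explicit error bounds, and then bound the correction factors crudely to extract the clean constants $c,C,a,A$. I would start from the Robbins form of Stirling's formula,
\begin{equation}
\sqrt{2\pi n}\,\Big(\frac{n}{\mathrm{e}}\Big)^n \mathrm{e}^{1/(12n+1)} \le n! \le \sqrt{2\pi n}\,\Big(\frac{n}{\mathrm{e}}\Big)^n \mathrm{e}^{1/(12n)}\ ,
\end{equation}
valid for all $n\ge 1$. For the binomial coefficient $\binom{n}{k}$ with $k\in\{1,\dots,n-1\}$, writing $x=k/n$ and substituting the three Stirling estimates (for $n!$, $k!$, $(n-k)!$), the powers of $\mathrm{e}$ and the $\sqrt{\cdot}$ prefactors combine so that
\begin{equation}
\binom{n}{k} = \frac{1}{\sqrt{2\pi}}\sqrt{\frac{n}{k(n-k)}}\,\exp\big(n\,h(x)\big)\cdot \Theta(n,k)\ ,
\end{equation}
where $\Theta(n,k)$ is a product of exponentials of the Stirling remainders. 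The key point is then to bound $\Theta(n,k)$ from above and below by absolute constants: the upper bound is at most $\mathrm{e}^{1/(12n)}\le \mathrm{e}^{1/12}$ (since the subtractive remainders only help), giving $C=\frac{1}{\sqrt{2\pi}}\cdot\mathrm{e}^{1/12}$—here one may simply absorb $\mathrm{e}^{1/12}<\sqrt{2\pi}/\sqrt{2\pi}\cdot(\dots)$; more carefully one checks $\frac{1}{\sqrt{2\pi}}\,\mathrm{e}^{1/12}\le \frac{1}{\sqrt{2\pi}}$ is false, so the stated $C=\frac{1}{\sqrt{2\pi}}$ must come from a sharper argument—and the lower bound uses $\mathrm{e}^{1/(12n+1)-1/(12k)-1/(12(n-k))}\ge \mathrm{e}^{-1/6}$ or so, which after combining with $\frac{1}{\sqrt{2\pi}}$ yields something at least $\frac{1}{2\sqrt 2}$. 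I would verify the boundary cases $k=1$ and $k=n-1$ separately by direct computation since Stirling is weakest there, and confirm the claimed $c=\frac{1}{2\sqrt2}$, $C=\frac{1}{\sqrt{2\pi}}$ hold by a short monotonicity check in $n$.

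For the trinomial coefficient $\binom{n}{k,l,m}=\frac{n!}{k!\,l!\,m!}$ with $k+l+m=n$, $k,l,m\ge 1$, I would proceed identically: substitute Robbins' bounds for all four factorials, collect the prefactor $\frac{\sqrt{2\pi n}}{\sqrt{2\pi k}\sqrt{2\pi l}\sqrt{2\pi m}} = \frac{1}{2\pi}\sqrt{\frac{n}{klm}}$, collect the exponential term $\exp(n\,h(x,y,z))$ with $x=k/n$ etc., and bound the residual product of Stirling-remainder exponentials. The upper remainder is $\le \mathrm{e}^{1/(12n)}\le \mathrm{e}^{1/12}$ so the upper constant is $A=\frac{1}{2\pi}$ after a similar sharpening; the lower remainder is $\ge \exp\big(\tfrac{1}{12n+1}-\tfrac{1}{12k}-\tfrac{1}{12l}-\tfrac{1}{12m}\big)\ge \mathrm{e}^{-1/4}$ (worst case $k=l=m=1$, $n=3$, handled by hand), and $\frac{1}{2\pi}\mathrm{e}^{-1/4}\ge \frac18$ needs checking: $\frac{1}{2\pi}\approx 0.159$, $\mathrm{e}^{-1/4}\approx 0.779$, product $\approx 0.124 < 0.125$, so again small cases must be checked directly and the generic Stirling bound applied only for, say, $n\ge 6$ or once $\min(k,l,m)\ge 2$, where the remainder is much closer to $1$.

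The main obstacle will be precisely this tension at the boundary of the parameter range: the clean constants $\frac{1}{2\sqrt2}$, $\frac{1}{\sqrt{2\pi}}$, $\frac18$, $\frac{1}{2\pi}$ are \emph{not} immediate consequences of a blunt application of Robbins' inequalities, because the Stirling remainders for small $k$ (or small $l,m$) are $O(1)$ rather than $o(1)$, and they degrade the naive constant. The fix is the standard one: split into a ``bulk'' regime where all of $k,l,m$ (resp. $k,n-k$) are bounded below by a small absolute threshold—there Robbins gives the constants with room to spare—and a ``boundary'' regime of finitely many residual cases (or a one-parameter family like $k=1$), which are dispatched by an explicit elementary estimate, e.g. $\binom{n}{1}=n$ versus $\frac{1}{\sqrt{2\pi}}\sqrt{\frac{n}{n-1}}\mathrm{e}^{n h(1/n)}$ and a direct inequality $\mathrm{e}^{nh(1/n)}\le n\cdot\mathrm{e}$, which one proves from $h(1/n)\le \frac{\log n + 1}{n}$. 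I would organize the proof so that the bulk case is a two-line Stirling substitution and the boundary cases occupy a short explicit lemma; I expect no conceptual difficulty, only the bookkeeping needed to certify the exact numerical constants claimed.
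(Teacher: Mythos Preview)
Your Stirling-based approach is workable but the paper takes a much shorter route. For the binomial bound \eqref{eq:binBOUND} the paper does not prove it at all: it simply cites Lemma~7 of Chapter~10 in MacWilliams--Sloane. For the trinomial bound \eqref{eq:triBOUND} the paper reduces to the binomial case via the factorization
\[
\binom{n}{k,l,m}=\binom{n}{k}\binom{l+m}{l}\,,
\]
applying \eqref{eq:binBOUND} once to each factor. Since $n-k=l+m$, the square-root prefactors multiply to $\sqrt{n/(klm)}$ and the entropies add as $n\,h(k/n)+(l+m)\,h(l/(l+m))=n\,h(x,y,z)$, so the constants emerge automatically as $A=C^2=\tfrac{1}{2\pi}$ and $a=c^2=\tfrac{1}{8}$. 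This explains why those particular numbers appear, and it completely sidesteps the boundary-case difficulty you ran into (your observation that $\tfrac{1}{2\pi}e^{-1/4}\approx 0.124<\tfrac18$ is exactly the symptom of attacking the trinomial directly).

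A small correction to your analysis of the binomial upper bound: the constant $C=\tfrac{1}{\sqrt{2\pi}}$ \emph{does} follow directly from Robbins without any sharpening. When bounding $\binom{n}{k}=n!/(k!(n-k)!)$ from above you use the upper Stirling bound on $n!$ and the \emph{lower} bounds on $k!$ and $(n-k)!$, so the correction factor is $\exp\big(\tfrac{1}{12n}-\tfrac{1}{12k+1}-\tfrac{1}{12(n-k)+1}\big)$, and this exponent is negative for all $1\le k\le n-1$ (each denominator term alone already dominates $\tfrac{1}{12n}$). So no ``sharper argument'' is needed there; your hesitation came from momentarily putting the remainders on the wrong side.
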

The inequality \eqref{eq:binBOUND} can be found in Lemma 7 in Chapter 10 of \cite{MacWilliams} while \eqref{eq:triBOUND} follows from it due to identity $\binom{n}{k,l,m}=\binom{n}{k}\binom{l+m}{m}$.

We first consider the case of passive FLO. We observe that from Eq. \eqref{eq:passFINALexpression} it follows that 
\begin{equation}\label{eq:pasFINcombinatorics}
     \tr\left(\Pfer \inpsi\ot\inpsi  \right) \leq
     \frac{2}{2N+1}\sum_{k=0}^N \sum_{l=0}^{\lfloor k/2 \rfloor} \frac{\binom{N}{l,k-2l,N-k+l}}{\binom{2N}{k}}\ . 
\end{equation}

\begin{lem}\label{lem:pasAnti-computation}
Consider the setting of our quantum advantage proposal, i.e., $d=4N$ and $n=2N$.  Let $\inpsi\in \D\left( \bigwedge^{2N}(\C^{4N})\right)$.  Let $\Pfer$ be defined as in Lemma \ref{lem:PROJpassFERMpurities}. We  then have
\begin{align}
    \tr\left(\Pfer \inpsi\ot\inpsi  \right) \leq \frac{\Cpas}{N},\ \text{for }  \Cpas= \Cpasval\ .
\end{align}
\end{lem}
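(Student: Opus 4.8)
I want to bound the triple sum in \eqref{eq:pasFINcombinatorics},
\[
 \tr\left(\Pfer \inpsi\ot\inpsi  \right) \leq
 \frac{2}{2N+1}\sum_{k=0}^N \sum_{l=0}^{\lfloor k/2 \rfloor} \frac{\binom{N}{l,k-2l,N-k+l}}{\binom{2N}{k}}\ ,
\]
by $\Cpas/N$. Since the prefactor $\tfrac{2}{2N+1}\le \tfrac{1}{N}$, it suffices to show the inner double sum over $k$ and $l$ is bounded by a constant (roughly $\Cpasval/2$). First I would substitute $x=k/N$ and $y=l/N$ (with $z = 1-x+y-x = $ the third block-count fraction, i.e. $N-k+l$ over $N$, equals $1-x+y$) and apply the bounds of Lemma~\ref{lem:binBounds}: the trinomial is $\le A\sqrt{N/(l(k-2l)(N-k+l))}\exp(N h(x-2y',\,y',\,1-x+y'))$ with the appropriate normalized variables, and the binomial in the denominator is $\ge c\sqrt{N/(k(N-k))}\exp(N h(x/2)\cdot 2)$ — note $\binom{2N}{k}$ has parameter $k/(2N)=x/2$, so its exponent is $2N\, h(x/2)$. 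Taking the ratio, the dominant behaviour is $\exp\big(N[\,h_{\mathrm{tri}} - 2h(x/2)\,]\big)$ times algebraic (i.e. $\poly{1/N}$) factors.

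**Key step: the exponent is nonpositive.** The crux is to verify that the exponential rate
\[
 \Phi(x,y) \;\defeq\; h\!\big(\tfrac{k-2l}{N},\tfrac{l}{N},\tfrac{N-k+l}{N}\big) - 2\,h\!\big(\tfrac{k}{2N}\big)
\]
(written in the normalized variables) satisfies $\Phi(x,y)\le 0$ for all feasible $(x,y)$ with $0\le 2y\le x\le 1$, with equality only on a lower-dimensional set. I expect this to follow from a convexity/grouping argument: $\binom{2N}{k}$ counts ways to place $k$ occupied pairs among $2N$ slots, while the trinomial counts a restricted configuration of the $N$ quadruple-blocks; the worst-to-best ratio should be maximized near the ``typical'' value $x=1/2$. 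Concretely I would fix $x$, optimize $\Phi$ over $y\in[0,x/2]$ by setting $\partial_y\Phi=0$ (a transcendental equation in $\log$'s that should reduce to a polynomial relation among $\tfrac{k-2l}{N},\tfrac{l}{N},\tfrac{N-k+l}{N}$), substitute the optimizer back, and then check the resulting function of $x$ alone is $\le 0$ on $[0,1]$ — likely by showing it is concave with maximum value $0$ at an interior or boundary point. This is the main obstacle: the calculus is routine in principle but one must be careful that $\Phi$ does not become positive, since a positive exponent would make the sum grow exponentially and destroy the bound.

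**Assembling the bound.** Once $\Phi\le 0$ pointwise, the summand is at most $\mathrm{const}\cdot N^{1/2}$ times a product of inverse-square-root factors of the block counts, i.e.\ $O(1)$ to $O(\sqrt N)$ individually, but the number of nonzero terms is $O(N^2)$, so a naive term count is too lossy. Instead I would convert the double sum to a Riemann sum: with $x,y$ ranging over an $N^{-1}\times N^{-1}$ grid, $\frac{1}{N^2}\sum_{k,l}(\cdots)\approx \int\!\!\int (\cdots)\,dx\,dy$, so I need the continuous integrand $\sqrt{\tfrac{N}{l(k-2l)(N-k+l)}}\big/\sqrt{\tfrac{N}{k(N-k)}}\cdot e^{N\Phi}$ to be integrable after rescaling. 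Near the maximizer of $\Phi$ (say interior), Laplace's method gives a contribution $\sim N\cdot N^{-1}=O(1)$ from the Gaussian width $\sim N^{-1/2}$ in each variable times the algebraic prefactor; near the boundaries where some block count is $O(1)$ the inverse-square-root singularities are integrable. Collecting all contributions yields a constant, and tracking the numerical constants $A=\tfrac{1}{2\pi}$, $c=\tfrac{1}{2\sqrt2}$ through Laplace's method (plus the edge terms computed directly for small $k$ or $l$) should give a value at most $\Cpasval/2$, so that $\tr(\Pfer\inpsi\ot\inpsi)\le\Cpasval/N$. For rigor rather than asymptotics one can avoid Laplace's method entirely: bound $e^{N\Phi}\le 1$, split the sum into the ``bulk'' region where all three block counts exceed $\epsilon N$ (finitely many terms each $O(1/\sqrt N)$ after the algebraic factors, total $O(\sqrt N\cdot N^2\cdot N^{-3/2})=O(N)$ — too weak, so one does need the decay of $e^{N\Phi}$ away from its max) — hence the sharper route through $\Phi$'s strict negativity off the critical manifold is essential, and that strict negativity, made quantitative as $\Phi(x,y)\le -c_0\,\mathrm{dist}((x,y),\text{crit})^2$, is what I would prove and feed into the Gaussian estimate.
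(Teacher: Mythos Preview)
Your overall strategy---apply the entropic bounds of Lemma~\ref{lem:binBounds} and show the resulting exponential rate is nonpositive---is the paper's. The gap is in the quantitative form you propose. The maximum of $\Phi$ is not interior (nor near a ``typical'' value $x=1/2$) but at the corner $k=l=0$, and the decay away from it is \emph{linear} in $k/N$, not quadratic. Your bound $\Phi\le -c_0\,\mathrm{dist}^2$ yields only $e^{N\Phi}\le e^{-c_0 k^2/N}$; after the $l$-sum (which contributes a factor $\lesssim k$, cf.\ \eqref{eq:interPAS}) one is left with $\sum_k k\,e^{-c_0 k^2/N}\sim N$, so the bracketed quantity in \eqref{eq:pasFIRSTbound} is $\Theta(N)$ rather than $O(1)$ and the conclusion $\tr\le \Cpas/N$ does not follow. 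The paper proves instead the linear bound $F_{opt}(y)\coloneqq\max_{x} F(x,y)\le -y/4$ on $[0,1]$ (and the sharper $\le -y/2$ on $[0,1/3]$), obtained by solving a quadratic for $x_{opt}(y)$ and then verifying the resulting one-variable inequality directly (Fig.~\ref{fig:passive-gen-bound}); this turns the $k$-sum into a genuine geometric series $\sum_k k\,e^{-k/4}=O(1)$.

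Two smaller points. You anticipate edge terms, but note that the trinomial bound in Lemma~\ref{lem:binBounds} requires all three block counts to be nonzero, so the strata $l=0$ and $k=2l$ must be split off entirely; the paper bounds $\A_{l=0}$ and $\A_{k=2l}$ via their own one-variable entropy inequalities ($h(x)-2h(x/2)\le -\tfrac{2}{3}x$ and $h(x/2)\ge \tfrac{\log 2}{2}x$), again obtaining geometric series. Finally, the analytical constants only close for $N\ge 130$; the proof is completed by numerically evaluating \eqref{eq:pasFINcombinatorics} against $\Cpas/N$ for all smaller $N$ (Fig.~\ref{fig:pasFINbound}).
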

\begin{proof}
Let us denote 
\begin{equation}
    f_N(k,l)\coloneqq \frac{\binom{N}{l,k-2l,N-k+l}}{\binom{2N}{k}}\ .
\end{equation}
From \eqref{eq:pasFINcombinatorics} it follows that 
\begin{align}\label{eq:pasFIRSTbound}
     \tr\left(\Pfer \inpsi\ot\inpsi  \right) \leq
     \frac{2}{2N+1}\sum_{k=0}^N \sum_{l=0}^{\lfloor k/2 \rfloor} f_N(k,l) \leq \frac{1}{N}\left(\A_{k=0} + \A_{l=0} +\A_{k=2l}+\A_{gen}  \right)\ , 
\end{align}
where 
\begin{align}
    \A_{k=0} &= f_N(0,0)=1\ , \\
    \A_{l=0} &=\sum_{k=1}^{N} \frac{\binom{N}{k}}{\binom{2N}{k}}\ , \\
    \A_{k=2l} &= \sum_{\substack {k>1 \\ k \text{ even}}}^N \frac{\binom{N}{k/2}}{\binom{2N}{k}}\ , \\
    \A_{gen} &= \sum_{k=1}^N \sum_{l=1}^{l<k/2}  f_N(k,l)\ .
\end{align}
We upper bound each term above separately (except for the trivial case of $\A_{k=0}$). 
The following analytical proof for the bound requires $N \ge 130$. In particular, the bound for \eqref{eq:ALfinbound} $\A_{l=0}$ is valid for $N \ge 40$, and the bound \eqref{eq:AgenBound} for $\A_{gen}$ is valid for $N \ge 130$. At the end of the proof, we show in Fig.~\ref{fig:pasFINbound} that the bound also holds for all smaller values of $N$.

\textbf{Upper bound on } $\A_{l=0}$.  
In this case, we derive a bound valid for $N > 40$.
The bounds from Lemma \ref{lem:binBounds} gives
\begin{equation}
    \A_{l=0} \leq \frac{1}{\binom{2N}{N}} + \frac{C}{c} \sum_{k=1}^{N-1}  \sqrt{\frac{2N-k}{2(N-k)}}  \exp\left[N \left\{h(k/N)- 2 h(k/2N)\right\} \right]\ . 
\end{equation}
We use now the inequality $h(x)-2h(x/2) \leq -\frac{2}{3} x$, valid for $x\in[0,1]$ to obtain
\begin{equation}
    \A_{l=0} \leq \frac{1}{\binom{2N}{N}}  +\frac{C}{c} \sum_{k=1}^{N-1} \sqrt{\frac{2N-k}{2(N-k)}}  
    \exp\left(-\frac{2k}{3} \right)\ .
\end{equation}
We then apply the bound $\binom{2N}{N}\geq c 2^{2N}\sqrt{2/N}$ and divide the sum over $k$ into two parts 
\begin{equation}
    \A_{l=0} \leq \frac{\sqrt{N}}{\sqrt{2} c} 2^{-2N} + \frac{C}{c}\left( \sum_{k=1}^{k\leq1/2 N }  \sqrt{\frac{2N-k}{2(N-k)}}   \exp\left(-\frac{2k}{3} \right) +   \sum_{k>1/2 N}^{N-1 }  \sqrt{\frac{2N-k}{2(N-k)}}  \exp\left(-\frac{2k}{3} \right) \right)
\end{equation}
For $k\leq N/2$ we have $\sqrt{\frac{2N-k}{2(N-k)}} \leq \sqrt{3/2}$ and therefore 
\begin{equation}\label{eq:Al}
    \A_{l=0} \leq \frac{\sqrt{N}}{\sqrt{2} c}2^{-2N} + \frac{C}{c} \left(\sqrt{\frac{3}{2}}\frac{1}{e^{2/3}-1} + \frac{N^{3/2}}{2} \exp\left(-\frac{N}{3}\right)  \right)\ ,
\end{equation}
where we have utilized the expression for the sum of geometric progression and the upper bound $\sqrt{\frac{2N-k}{2(N-k)}}\leq \sqrt{N}$, valid for $k\leq N-1$ . Using  expression \eqref{eq:Al} it is easy to verify that for $N>40$ we have
\begin{equation}\label{eq:ALfinbound}
    \A_{l=0} \leq \frac{3}{2}\ .
\end{equation}

\textbf{Upper bound on } $\A_{k=2l}$. Estimates for binomials from Lemma \ref{lem:binBounds} yield
\begin{equation}
    \A_{k=2l} \leq \frac{C\sqrt{2}}{c} \sum_{\substack {k>1 \\ k \text{ even}}}^N  \exp\left[-N h(k/2N)\right]\ .
\end{equation}
Concavity of binary entropy $h(\cdot)$ implies that for $x\in[0,1]$ we have $\frac{\log(2)}{2} x \leq h(x/2) $ and consequently
\begin{equation}
    \A_{k=2l} \leq \frac{C\sqrt{2}}{c} \sum_{\substack {k>1 \\ k \text{ even}}}^N  \exp\left(- \frac{k\log(2)}{2}\right)=\frac{C\sqrt{2}}{c} \sum_{p=1}^{\lfloor N/2 \rfloor}  2^{-p}\ .
\end{equation}
The sum of the geometric series in the above expression is upper bounded by $1$ and therefore
\begin{equation}\label{eq:Akfinbound}
     \A_{k=2l} \leq \frac{C\sqrt{2}}{c} \leq \frac{8}{5}\ .
\end{equation}

\textbf{Upper bound on } $\A_{gen}$. 
In the following proof, we require that $N \ge 130$.
For the generic points in the sum \eqref{eq:pasFINcombinatorics} inequalities from Lemma \ref{lem:binBounds} give
\begin{equation}\label{eq:AgenFirst}
    \A_{gen} \leq \frac{A}{\sqrt{2}c}  \sum_{k=1}^N \sum_{l=1}^{l<k/2}  \sqrt{\frac{k(2N-k)}{l(k-2l)(N-k+l)}}\, 
    \exp\left( N \left\{h\left[x_l,y_k-2x_l,1-y_k+x_l\right] -2 h\left[y_k /2\right]\right\} \right)\ ,
\end{equation}
where $x_l=l/N$, $y_k=k/N$. Note that $k=1$ and $k=2$ are implicitly excluded from the above sum because of the constraints on $l$ and hence
\begin{equation}\label{eq:AgenSecond}
    \A_{gen} \leq \frac{A}{\sqrt{2}c}  \sum_{k=3}^N \sum_{l=1}^{l<k/2}  \sqrt{\frac{k(2N-k)}{l(k-2l)(N-k+l)}}\, 
    \exp\left( N \left\{h\left[x_l,y_k-2x_l,1-y_k+x_l\right] -2 h\left[y_k /2\right]\right\} \right)\ .
\end{equation}
In order to upper bound the expression we maximize the function 
\begin{equation}\label{eq:F-passive}
    F(x,y)= h\left(x,y -2x,1-y+x\right) -2 h\left(y /2\right)
\end{equation}
over $x\in[0,y/2]$, for fixed value of $y\in[0,1]$. Looking for critical points reduces the problem to solving quadratic equation which gives a unique solution in the interval $[0,y/2]$:
\begin{equation}\label{eq:xopt-passive}
    x_{opt}(y)= \frac{1}{6}\left( 1+3y -\sqrt{1+6y-3y^2}\right)\ .
\end{equation}

\begin{figure}
    \centering
    \includegraphics[scale=0.45]{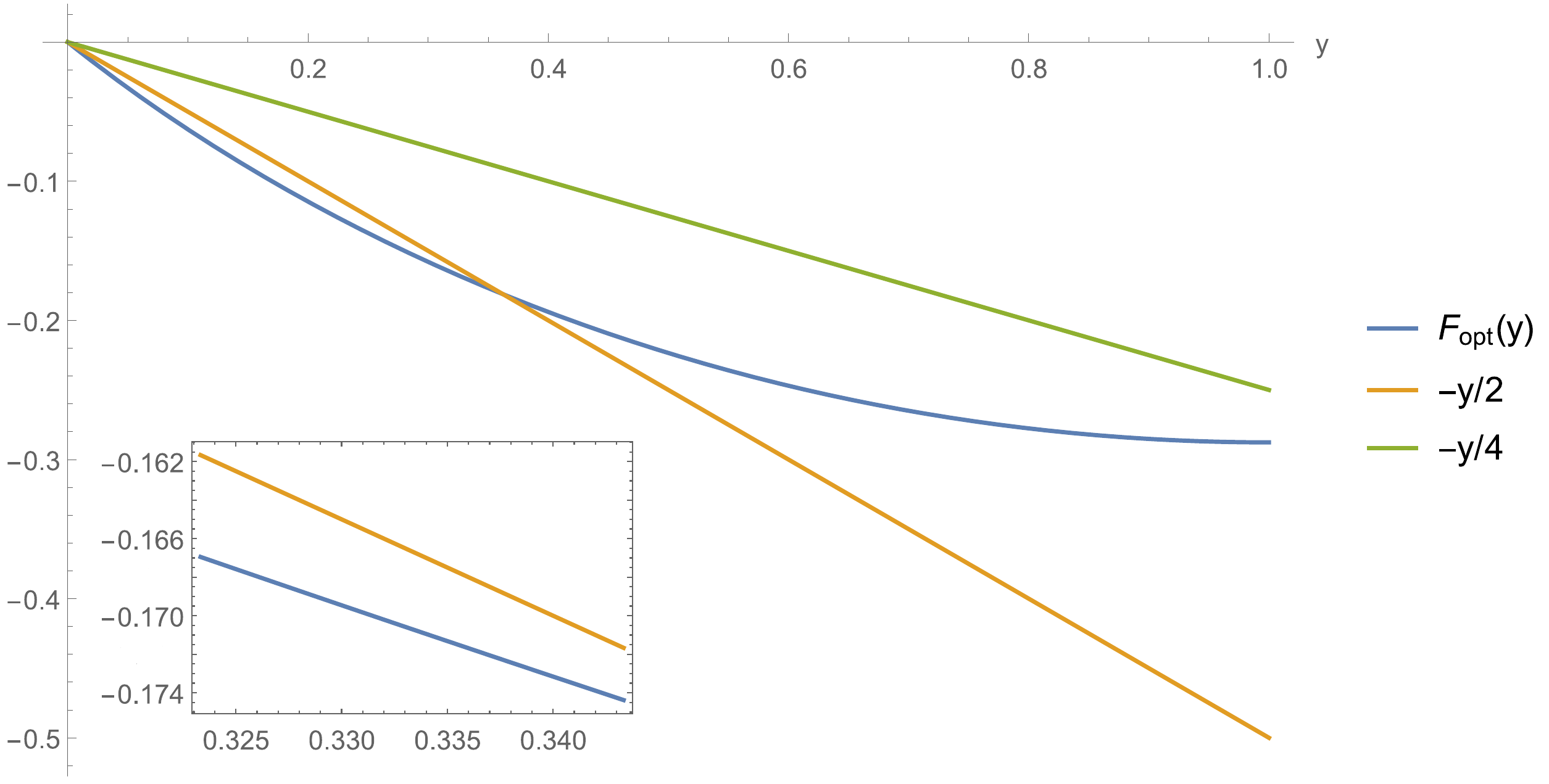}
    \caption{Function $F_{opt}(y)\coloneqq F(x_{opt}(y),y)$ where $F$ is defined in Eq.~\eqref{eq:F-passive} and $x_{opt}(y)$ is given in Eq.~\eqref{eq:xopt-passive}. The function is bounded by $-y/3$ in the interval $[0,1/3]$ and by $-y/4$ in the interval $[1/3,1]$. The inset plot shows that the inequality is also valid near $y=1/3$.}
    \label{fig:passive-gen-bound}
\end{figure}

Crucially, the function $F_{opt}(y)\coloneqq F(x_{opt}(y),y)$ is a continuous function of parameter $y$, which is also analytic in the interior the interval $(0,1)$. Moreover, $F_{opt}(y)$ satisfies (see Fig.~\ref{fig:passive-gen-bound}):
\begin{equation}
F_{opt}(y)\leq - \frac{1}{2}y\ \ \text{for}\ y\in\left[0,1/3\right]\ ,\     F_{opt}(y) \leq - \frac{1}{4}y \ \ \text{for}\ y\in\left[0,1\right]\ .
\end{equation}
 It follows that
\begin{align}\label{eq:casesPAS}
    N \left(h\left[x_l,y_k-2x_l,1-y_k+x_l\right] -2 h\left[y_k /2\right]\right) &\leq 
    - \frac{1}{2} k\enspace \text{for}\ \  1\leq k\leq N/3\ , \\
    N \left(h\left[x_l,y_k-2x_l,1-y_k+x_l\right] -2 h\left[y_k /2\right]\right) &\leq  - \frac{1}{4} k\enspace \text{for}\ \ 1\leq k\leq N. \
\end{align}
Moreover, for integer $l$ satisfying $1\leq l < k/2$ we have $l(k-2l)\geq (k-2)/2$ and consequently for $k\geq 3$ we have $\frac{k}{l(k-2l)}\leq \frac{2 k}{k-2} \leq 6$. As a result we have
\begin{equation}\label{eq:interPAS}
    \sum_{l=1}^{l<k/2}  \sqrt{\frac{k(2N-k)}{l(k-2l)(N-k+l)}} \leq 
    \frac{\sqrt{6}k}{2} \sqrt{\frac{2N-k}{N-k+1}}\ .
\end{equation}
Inserting \eqref{eq:casesPAS}  and  \eqref{eq:interPAS} into Eq. \eqref{eq:AgenSecond} gives
\begin{equation}
   \A_{gen} \leq \frac{\sqrt{3}A}{2 c} \left( \sum_{k=3}^{k\leq N/3}  \sqrt{\frac{2N-k}{N-k+1}}\, k\, \exp\left(-\frac{k}{2}\right) + \sum_{k>N/3 }^{N}  \sqrt{\frac{2N-k}{N-k+1}}\, k\, \exp\left(-\frac{k}{4}\right) \right) 
\end{equation}
Observing that for $k\leq N/3$ we have $\sqrt{\frac{2N-k}{N-k+1}}\leq\sqrt{\frac{5}{2}}$, while and for general $k\leq N$ $\sqrt{\frac{2N-k}{N-k+1}}\leq \sqrt{N}$, we obtain
\begin{equation}
   \A_{gen} \leq \frac{\sqrt{3}A}{2 c} \left( \sqrt{\frac{5}{2}}  \sum_{k=3}^{k\leq N/3}   k\, \exp\left(-\frac{k}{2}\right) + \frac{2N^{\frac{3}{2}}}{3} \exp\left(-\frac{N}{12}\right) \right)\ . 
\end{equation}
We bound the first summand as follows
\begin{equation}
    \sum_{k=3}^{k\leq N/3}   k\, \exp\left(-\frac{k}{2}\right) \leq \sum_{k=3}^{\infty}   k\, \exp\left(-\frac{k}{2}\right) = \frac{3\sqrt{e} -2}{(\sqrt{e}-1)^2 e}\ .
\end{equation}
This finally gives us
\begin{equation}
    \A_{gen} \leq \frac{\sqrt{15}A}{2\sqrt{2} c}  \frac{3\sqrt{e} -2}{(\sqrt{e}-1)^2 e}+ \frac{A}{\sqrt{3}c}N^{\frac{3}{2}} \exp\left(-\frac{N}{12}\right)\ .
\end{equation}
Using the above expression we get that for $N\geq 130$ we have
\begin{equation}\label{eq:AgenBound}
    \A_{gen} \leq \frac{8}{5}\ .
\end{equation}
Finally, combining bounds  \eqref{eq:ALfinbound}, \eqref{eq:Akfinbound} and \eqref{eq:AgenBound} together with $\A_{k=0}=1$  we see that for $N\geq 130$,
\begin{equation}
    \A_{k=0} + \A_{l=0} +\A_{k=2l}+\A_{gen} \leq 5.7\ .
\end{equation}
Inserting this into the bound \eqref{eq:pasFIRSTbound} proves the lemma for $N \ge 130$. For $N \le 130$, the validity of the bound can be verified numerically as shown in Fig.~\ref{fig:pasFINbound}, which completes the proof.

\begin{figure}
    \centering
    \includegraphics[scale=0.75]{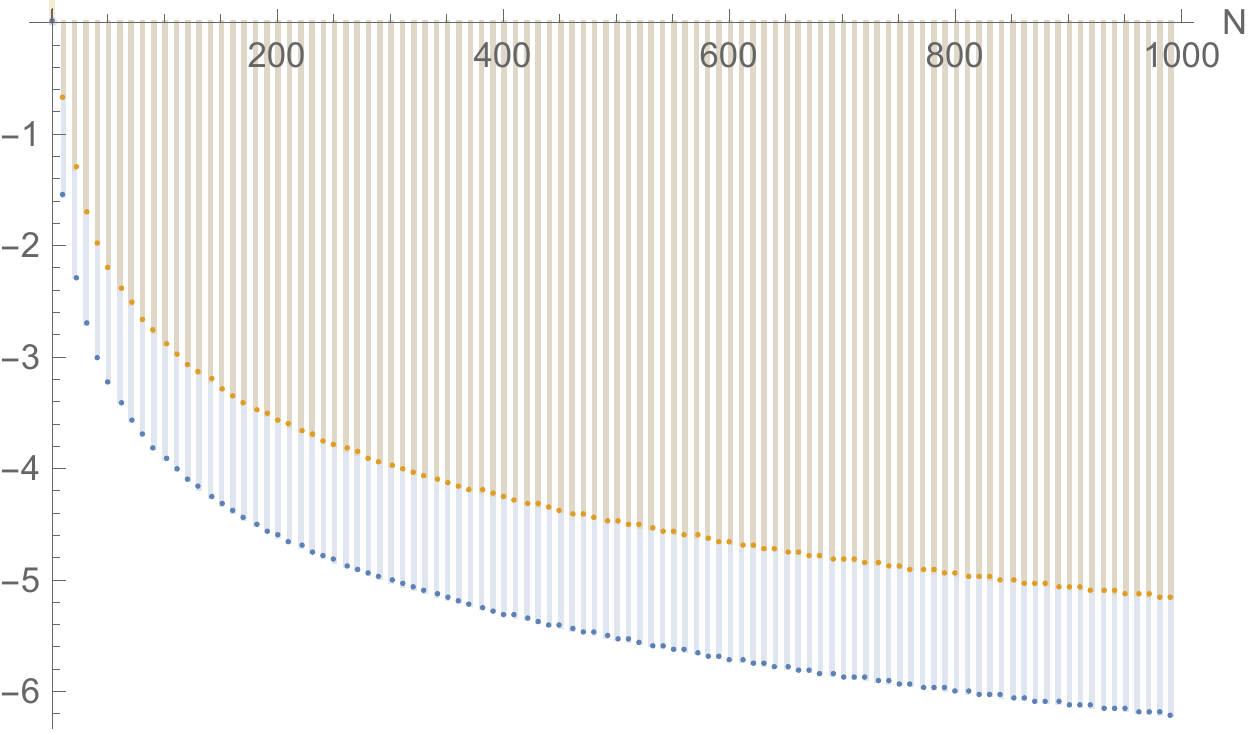}
    \caption{Plots of the logarithm of the expression \eqref{eq:pasFINcombinatorics} (blue) and $\log(\Cpas/N) = \log(\Cpasval/N)$ (orange), which constitutes a valid upper bound for all $N \le 1000$.} 
    \label{fig:pasFINbound}
\end{figure}

\end{proof}

Analogously for the active FLO case,
Eq. \eqref{eq:actFINALexpression} implies that
\begin{equation}\label{eq:ACTcombinatorics}
    \tr(\Pflo \inpsi \ot \inpsi) 
    \le \frac{\binom{8N}{4N}}{2^{8N-1}}
    \sum_{q=0}^{N}
    \sum_{l=0}^{\lfloor \frac{q}{2} \rfloor} 
    \frac{\binom{4N}{2q} \binom{N}{l,q-2l,N-q+l}}{\binom{8N}{4q}}   14^{q-2l}.
\end{equation}
\begin{lem}\label{lem:actAnti-computation}
Consider the setting of our quantum advantage proposal, i.e., $d=4N$ and $n=2N$.  Let $\inpsi\in \D\left( \bigwedge^{2N}(\C^{4N})\right)$.  Let $\Pflo$ be defined as in Lemma \ref{lem:PROJactive}. We  then have
\begin{align}\label{eq:actMAINBOUND}
    \tr\left(\Pflo \inpsi\ot\inpsi  \right) \leq \frac{\Cact}{\sqrt{\pi N}},\ \text{for }  \Cact= \Cactval\ .
\end{align}
\end{lem}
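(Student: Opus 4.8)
The plan is to follow the blueprint of the passive-FLO computation in Lemma~\ref{lem:pasAnti-computation}, adapted to the combinatorial bound \eqref{eq:ACTcombinatorics}. First I would dispose of the global prefactor: applying the binomial estimate \eqref{eq:binBOUND} with $n=8N$, $k=4N$ gives $\binom{8N}{4N}\le 2^{8N}/(2\sqrt{\pi N})$, so that $\binom{8N}{4N}/2^{8N-1}\le 1/\sqrt{\pi N}$. Hence it suffices to show that
\[
S_N\;\coloneqq\;\sum_{q=0}^{N}\sum_{l=0}^{\lfloor q/2\rfloor}\frac{\binom{4N}{2q}\,\binom{N}{l,\,q-2l,\,N-q+l}}{\binom{8N}{4q}}\,14^{\,q-2l}\;\le\;16.2 .
\]
As in the passive case I would split $S_N$ into the trivial term $q=0$ (which equals $1$), the two ``edge'' families $l=0$ with $q\ge 1$ and $q=2l$ with $l\ge 1$ (these correspond, respectively, to the $14^{q}$ and $14^{0}$ weights of \eqref{eq:actFINALexpression}), and the generic family $q\ge 3$, $1\le l<q/2$, bounding each piece separately.

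For every family I would replace each binomial and trinomial coefficient by the two-sided estimates of Lemma~\ref{lem:binBounds}. The key structural observation is that the two binomials produced by the parity projector carry the same density argument, $2q/4N=4q/8N=y/2$ with $y\coloneqq q/N$, so their ratio contributes $\exp(-4Nh(y/2))$, the trinomial contributes $\exp(Nh(x,\,y-2x,\,1-y+x))$ with $x\coloneqq l/N$, and $14^{q-2l}=\exp(N(y-2x)\ln14)$. Thus every summand is controlled, up to polynomial factors, by $\exp(NF(x,y))$ with
\[
F(x,y)\;=\;-4h(y/2)+h(x,\,y-2x,\,1-y+x)+(y-2x)\ln 14 .
\]
I would then maximise $F(\cdot,y)$ over $x\in[0,y/2]$ at fixed $y$: the stationarity condition $(y-2x)^2=196\,x(1-y+x)$ is a quadratic with a unique admissible root $x_{\mathrm{opt}}(y)$, so $F_{\mathrm{opt}}(y)\coloneqq F(x_{\mathrm{opt}}(y),y)$ is explicit and analytic on $(0,1)$. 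A short computation shows $F_{\mathrm{opt}}(0)=F_{\mathrm{opt}}(1)=0$ --- indeed at $y=1$ one finds $x_{\mathrm{opt}}=\tfrac1{16}$ and $F_{\mathrm{opt}}(1)=-\tfrac72\ln 2+\tfrac78\ln 16=0$, with $F_{\mathrm{opt}}'(1)=0$ as well --- while $F_{\mathrm{opt}}(y)<0$ strictly on $(0,1)$; quantitatively $F_{\mathrm{opt}}(y)=O(y\ln y)$ as $y\to0$ and $F_{\mathrm{opt}}(y)=O((1-y)^2)$ as $y\to1$, and I would establish an explicit negative upper envelope $F_{\mathrm{opt}}(y)\le -c_1 y$ on $[0,1/2]$ and $F_{\mathrm{opt}}(y)\le-c_2(1-y)^2$ on $[1/2,1]$ by elementary calculus, verifying the inequalities near the endpoints graphically exactly as for Fig.~\ref{fig:passive-gen-bound}.

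Feeding these envelope bounds back, I would first carry out the inner $l$-sum --- using $l(q-2l)\ge q-2$ for $1\le l<q/2$, $q\ge3$, to extract a harmless polynomial prefactor just as in the passive proof --- and then perform the $q$-sum, which now separates into a geometrically convergent contribution from $q$ near $0$ and an $O(1)$ Gaussian-type contribution from the boundary layer $q\approx N$ (whose width is $\sqrt N$ because $F_{\mathrm{opt}}$ vanishes quadratically at $y=1$, and whose polynomial weight $\sim1/N$ from the trinomial makes the whole layer sum to a constant). The edge families $l=0$ and $q=2l$ are handled the same way and turn out to be the dominant ones. Tracking the explicit numerical constants coming from \eqref{eq:binBOUND}--\eqref{eq:triBOUND}, the geometric series, and the Gaussian sums yields $S_N\le 16.2$ for all $N$ beyond some threshold $N_0$; for $N\le N_0$ the bound \eqref{eq:actMAINBOUND} is confirmed by directly evaluating \eqref{eq:ACTcombinatorics} (see Fig.~\ref{fig:actFINbound}).

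The main obstacle, and the essential departure from the passive case, is precisely that $F_{\mathrm{opt}}$ vanishes at \emph{both} endpoints rather than only at $y=0$: the exponentially growing factor $14^{q-2l}$ makes the large-$q$ region genuinely competitive, so one must (i) verify the non-obvious identity $F_{\mathrm{opt}}(1)=0$, which at the stationary point $\tfrac{1-2x}{x}=14$ reduces to $-\tfrac72\ln2+\tfrac78\ln16=0$, together with $F_{\mathrm{opt}}'(1)=0$, and (ii) control the $q\approx N$ boundary layer with a quadratic rather than linear exponent estimate. Pinning the constant down to the stated value $16.2$ (rather than merely ``some constant'') is the remaining bookkeeping task and may require a slightly finer subdivision of the generic family near $q=N$.
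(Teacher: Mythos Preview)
Your overall architecture matches the paper's: the same prefactor estimate, the same four-way split $\B_{q=0}+\B_{l=0}+\B_{q=2l}+\B_{gen}$, the same exponent function $G(x,y)=-4h(y/2)+h(x,y-2x,1-y+x)+(y-2x)\ln 14$, and the correct identification that $G_{\mathrm{opt}}$ vanishes at \emph{both} endpoints, quadratically at $y=1$. Your endpoint computation $x_{\mathrm{opt}}(1)=\tfrac1{16}$, $G_{\mathrm{opt}}(1)=0$ is right and is indeed the key structural novelty over the passive case.

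The gap is in the boundary layer $q\approx N$. You propose to carry out the inner $l$-sum ``just as in the passive proof'', i.e.\ bound every term by its value at the peak $x_{\mathrm{opt}}$ and multiply by the number of terms. That is too crude here. Near $y=1$ the $l$-range has $\Theta(N)$ terms; the trinomial prefactor $\sqrt{N/(l(q-2l)(N-q+l))}$ is \emph{not} uniformly $\sim 1/N$ (for $l$ near the edges, $N-q+l$ or $l$ is $O(1)$ and the prefactor is $O(1)$), while $\exp(NG(x_l,y_q))$ is only near $\exp(NG_{\mathrm{opt}}(y_q))$ for $l$ close to $l_{\mathrm{opt}}$. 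If you replace $G$ by $G_{\mathrm{opt}}$ uniformly and use the passive-style bound on the prefactor, the per-$q$ contribution is $O(1)$ (or worse), and the Gaussian layer in $q$ of width $\sqrt N$ then gives $O(\sqrt N)$, not $O(1)$. Your heuristic ``polynomial weight $\sim 1/N$ makes the whole layer sum to a constant'' tacitly drops a factor of $\sqrt N$. Relatedly, your prediction that the edge families dominate is off: in the paper $\B_{l=0}\lesssim 0.33$, $\B_{q=2l}\lesssim 0.13$, while $\B_{\mathrm{gen}}$ contributes the bulk ($\lesssim 14.7$).

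What the paper does instead for the boundary layer is a second-order Taylor expansion of $G(\cdot,y)$ around $x_{\mathrm{opt}}(y)$: from $\partial_x^2 G(x,y)=-(x^{-1}+(1-y+x)^{-1}+4(y-2x)^{-1})$ one gets $G(x,y)\le G_{\mathrm{opt}}(y)-\tfrac{1}{3x_{\mathrm{opt}}}(x-x_{\mathrm{opt}})^2$ on a window $l\in[l_{\mathrm{opt}}/2,3l_{\mathrm{opt}}/2]$, making the $l$-sum a Gaussian of width $\sqrt{l_{\mathrm{opt}}}\sim\sqrt N$. Combined with the $\sim 1/N$ trinomial prefactor valid on that window, the per-$q$ contribution becomes $O(1/\sqrt N)$, and the width-$\sqrt N$ Gaussian layer in $q$ then sums to $O(1)$. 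The tails $l\notin[l_{\mathrm{opt}}/2,3l_{\mathrm{opt}}/2]$ are dispatched separately via the cruder bound $G\le G_{\mathrm{opt}}-2x_{\mathrm{opt}}^2$, which for $y\ge 0.925$ gives an extra factor $\exp(-\Theta(N))$. This Gaussian-in-$l$ refinement is the missing ingredient in your plan.
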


\begin{proof}
Our proof strategy is analogous to the one used in the case of passive FLO. Let us denote
\begin{align}
    g_N(q,l) \coloneqq \frac{\binom{4N}{2q} \binom{N}{l,q-2l,N-q+l}}{\binom{8N}{4q}}   14^{q-2l}\ .
\end{align}
It follows from \eqref{eq:ACTcombinatorics} and the entropic bound for binomial coefficients in Lemma \ref{lem:binBounds}, 
\begin{align}
     \frac{\binom{8N}{4N}}{2^{8N-1}} \le \frac{1}{\sqrt{\pi N}}\ ,
\end{align}
that
\begin{align}\label{eq:actFIRSTbound}
    \tr(\Pflo \inpsi \ot \inpsi) 
    \le \frac{1}{\sqrt{\pi N}}
    \sum_{q=0}^{N}
    \sum_{l=0}^{\lfloor \frac{q}{2} \rfloor} 
    g_N(q,l)
    \le \frac{1}{\sqrt{\pi N}}
    (\B_{q=0} + \B_{l=0} + \B_{q=2l} + \B_{gen})\ ,
\end{align}
where
\begin{align}
    \B_{q=0} &= g_{N}(0,0) = 1,\  \\
    \B_{l=0} &= \sum_{q=1}^N \frac{\binom{4N}{2q}\binom{N}{q}}{\binom{8N}{4q}} 14^q,\ \\
    \B_{q=2l} &= \sum_{\substack{q>1 \\ q\, \mathrm{even}}}^N \frac{\binom{4N}{2q}\binom{N}{q/2}}{\binom{8N}{4q}},\ \\ 
    \B_{gen} &= \sum_{q=1}^N \sum_{l=1}^{l<q/2} g_N(q,l)\ .
\end{align}
We upper bound each term above separately (except for the trivial case of $\B_{q=0}$). 
The following analytical proof for the bound requires $N \ge 7000$. In particular, the bound for \eqref{eq:ALfinbound} $\B_{l=0}$ is valid for $N \ge 1000$, and the bound \eqref{eq:AgenBound} for $\B_{gen}$ is valid for $N \ge 7000$.   At the end of the proof, we show in Fig.~\ref{fig:actFINbound} that the bound \eqref{eq:actMAINBOUND} also holds for all smaller values of $N\leq 7000$ by numerically evaluating right-hand side of  \eqref{eq:ACTcombinatorics}.

\textbf{Upper bound on } $\B_{l=0}$. 
For this term, we require that $N \ge 1000$.
The entropic bound in Lemma \ref{lem:binBounds} implies that
\begin{align}
    \B_{l=0} \le \frac{\binom{4N}{2N}}{\binom{8N}{4N}} 14^N
    + \frac{C^2\sqrt{2}}{c} \sum_{q=1}^{N-1} \sqrt{\frac{N}{ q(N-q)}} \exp[N\{ h(q/N) - 4h(q/2N) + \log(14)q/N \}]\ .
\end{align}

To upper bound the sum, we split the sum into two sums: one from $q=1$ to $q \le N/5$ and another from $q > N/5$ to $q=N-1$, and
upper bound the function
\begin{align}
    H(x) \coloneqq h(x) - 4h(x/2) + x\log(14),
\end{align}
$x \in [0,1]$ in the intervals $[0,1/5]$ and $(1/5,1]$ separately.
In particular, we have that (See also Fig.~\ref{fig:active_l=0})
\begin{figure}
    \centering
    \includegraphics[scale=0.75]{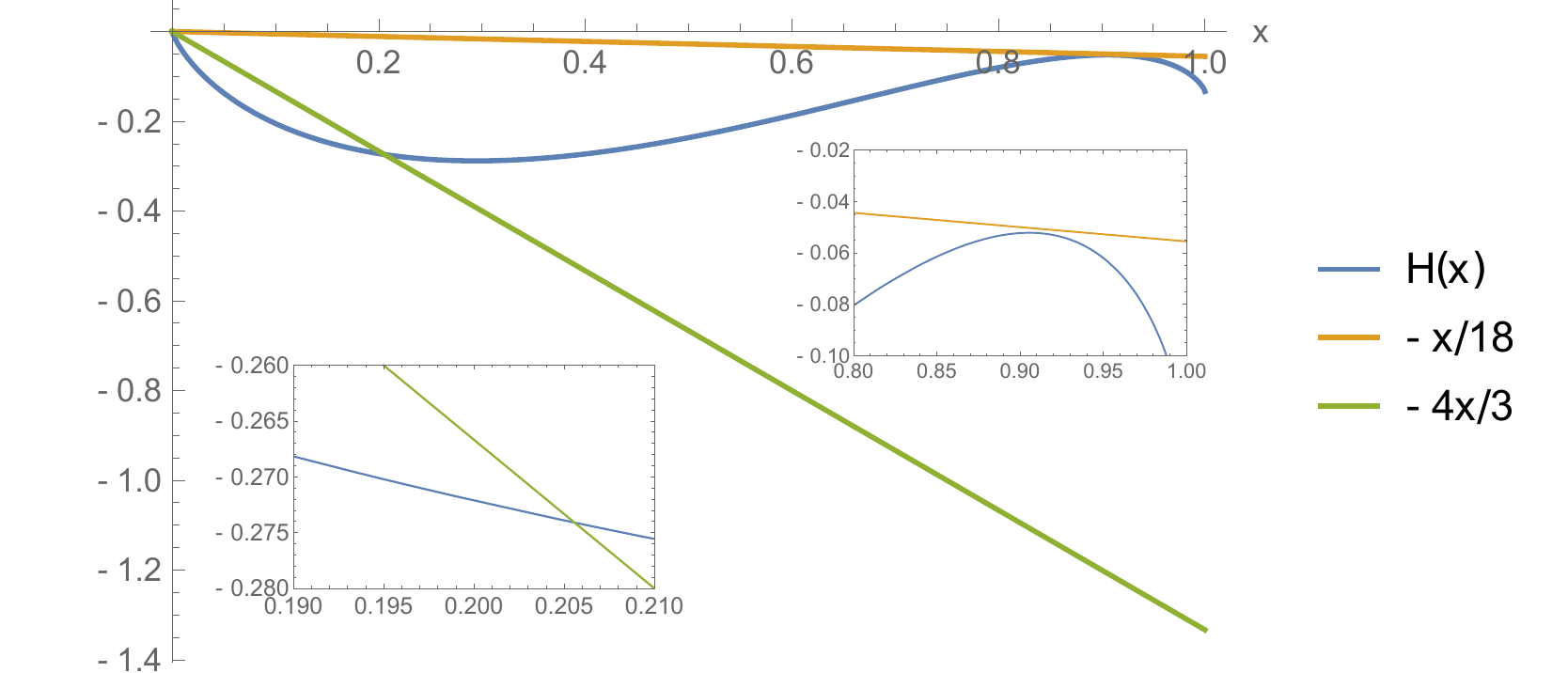}
    \caption{Function $H(x)$ defined in \eqref{eq:active_H}. The function is bounded above by $-4x/3$ in the interval $[0,1/5]$ and by $-x/18$ in the interval $[0,1]$. The inset plot shows the validity of the upper bound in each interval.}
    \label{fig:active_l=0}
\end{figure}
\begin{align}\label{eq:active_H}
    H(x) \le -\frac{4}{3}x\enspace \text{for}\; x\in[0,2/5], && H(x) \le -\frac{1}{18}x \enspace \text{for}\; x\in[0,1] 
\end{align}
Together with the bound $\binom{4N}{2N}/\binom{8N}{4N} \le \frac{C\sqrt{2}}{c} 2^{-4N}$
and $\sqrt{N/(q(N-q))} \le \sqrt{2}$ valid for $N \ge 2$ ( this is  because  $\sqrt{\frac{N}{q(N-q)}}$ is convex for $q\in[1,N-1]$  and thus the expression takes the maximum values at the end points),
we obtain
\begin{align}
    \B_{l=0} &\le \frac{C\sqrt{2}}{c} \left(\frac{14}{16}\right)^N +
    \frac{2C^2}{c}
    \left(
    \sum_{q=1}^{q \le N/5}
    \exp(-4q/3)
    + \sum_{q > N/5}^{N-1}
    \exp(-q/18)
    \right) \\
    &\le \frac{C\sqrt{2}}{c} \left(\frac{14}{16}\right)^N +
    \frac{2C^2}{c}\left(
    \frac{1}{e^{4/3}-1} + \frac{4N}{5}
    \exp \left[-\frac{N}{18\cdot 5}\right]
    \right), \label{eq:B_l=0}
\end{align}
where we have used the sum of the geometric series to arrive at the final expression.
Using the expression \eqref{eq:B_l=0}, it can be verified that 
\begin{align}\label{eq:Bl0bound}
    \B_{l=0} \le \frac{1}{3}
\end{align}
holds for $N \ge 1000$.

\textbf{Upper bound on } $\B_{q=2l}$.
From Lemma \ref{lem:binBounds} we see that
\begin{align}
    \B_{q=2l}  &\le \frac{C^2\sqrt{2}}{c}
    \sum_{\substack{q>1 \\ q\, \mathrm{even}}}^N
     \sqrt{\frac{N}{\frac{q}{2}(N-\frac{q}{2})}} \exp[-3Nh(q/2N)]
\end{align}


Now by concavity of $h(x)$ for $x\in[0,\frac{1}{2}]$ we have $\log(2)x/2 \leq h(x/2)$ for $x\in [0,1]$. Then

\begin{align}
    \B_{q=2l}  &\le \frac{C^2\sqrt{2}}{c}
    \sum_{\substack{q>1 \\ q\, \mathrm{even}}}^N
     \sqrt{\frac{N}{\frac{q}{2}(N-\frac{q}{2})}} \exp[-3q\log(2)/2]\\
     &=\frac{C^2\sqrt{2}}{c} \sum_{p=1}^{\lfloor N/2 \rfloor} \sqrt{\frac{N}{p(N-p)}}2^{-3p}
\end{align}


We can bound $\sqrt{\frac{N}{p(N-p)}}\leq \sqrt{2}$ the same way as in the passive case. Then we obtain
\begin{equation}\label{eq:BlqFIN}
    \B_{q=2l} \leq \frac{2C^2}{7c} \leq 0.13
\end{equation}
where we used that the geometric sum of $2^{-3p}$ is bounded by $1/7$.

\textbf{Upper bound on } $\B_{gen}$.
Following bounds form Lemma \ref{lem:binBounds} and defining $x_l = \frac{l}{N}$ and $y_q=\frac{q}{N}$ we obtain

\begin{align}\label{eq:Bgen}
    \B_{gen} &\leq  \frac{\sqrt{2} C A}{c} \sum_{q=1}^N \sum_{l=1}^{l<q/2} \sqrt{\frac{N}{l(q-2l)(N-q+l)}}\exp[N G(x_l,y_q)]\ ,
\end{align}
where, following the analogous construction in Lemma \ref{lem:pasAnti-computation}, we introduced
\begin{equation} \label{eq:F-active}
    G(x,y)\coloneqq -4 h(y/2) + h(x,y-2x,1-y+x)+ (y-2x)\log(14)\ .
\end{equation}
As in the case of passive FLO, our strategy is to upper bound $G(x,y)$ by a function that allows for analytical treatment. To this end, we first optimize $G(x,y)$ over $x\in[0,y/2]$ for fixed $y\in[0,1]$. Solving for the critical points gives the following optimal solution $x_{opt}\in[0,y/2]$ (at the extremal  points of this interval function $G(x,y)$, treated as a function of $x$ for fixed $y$, takes smaller values)  
\begin{equation}\label{eq:xopt-active}
    x_{opt}(y)= \frac{1}{96}\left( -49+48y +7\sqrt{40-96y+48y^2}\right)\ .
\end{equation}

\begin{figure}
    \centering
    \includegraphics[scale=0.5]{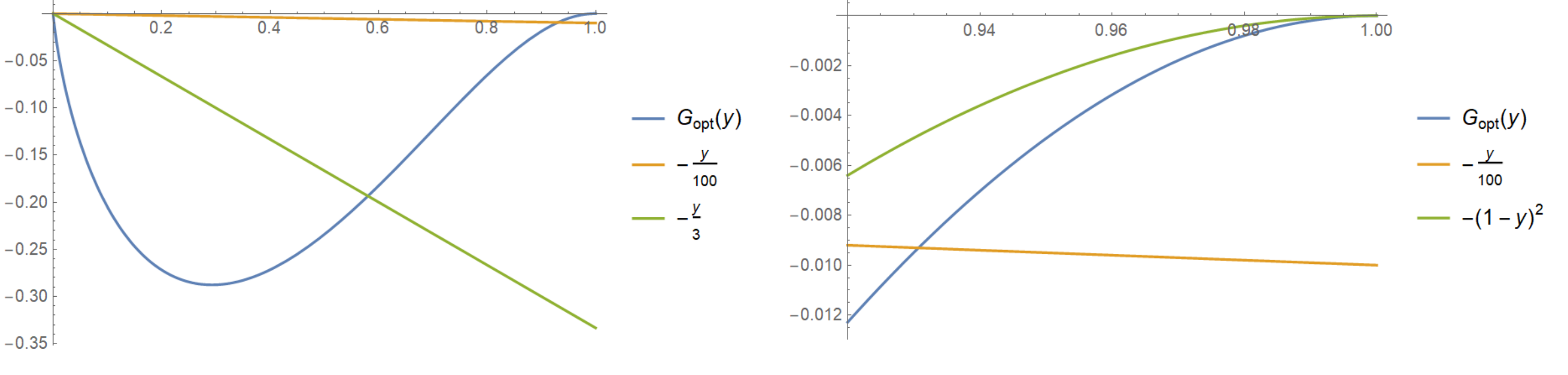}
    \caption{Function $G_{opt}(y)= G(x_{opt}(y),y)$ where $G$ is defined in Eq.~\eqref{eq:F-active} and $x_{opt}(y)$ is defined in Eq.~\eqref{eq:xopt-active}. The function is presented alongside simple analitical lower bounds are valid in specific intervals formulated in Eq. \eqref{eq:intervalsACTIVE}. }
    \label{fig:active-gen-bound}
\end{figure}
The maximum of $G(x,y)$ over $x\in[0,y/2]$, $G_{opt}(y)\coloneqq G(x_{opt}(y),y)$ is a continuous function of $y\in[0,1]$ and also analytic for $y\in(0,1)$. We can bound  $G_{opt}(y)$ in the following way (see Fig.~\ref{fig:active-gen-bound})
\begin{equation}\label{eq:intervalsACTIVE}
G_{opt}(y)\leq -y/3\ \ \text{for}\ y\in\left[0,1/2\right]\ ,\
G_{opt}(y)\leq -y/100\ \ \text{for}\ y\in\left[1/5,0.925\right]\ ,\
G_{opt}(y) \leq -(1-y)^2 \ \ \text{for}\ y\in\left[0.925,1\right]\ .
\end{equation}
We shall need much more refined information about $G(x,y)$ than in the case of analogous considerations for passive FLO. Namely, we will need to control how fast $G(x,y)$ decays as a function of $x-x_{opt}(y)$, for fixed $y$. To this end we compute for $x\in(0,y/2)$, $y\in(0,1)$
\begin{equation}
    \partial^2_x G(x,y) =- \left( \frac{1}{x} +\frac{1}{1-y+x} + \frac{4}{y-2x}\right)\ .
\end{equation}
From the above expression we get\footnote{It is easy to check that $3/2 x_{opt}(y)\leq y/2$. } 
\begin{equation}\label{eq:secondDERbound}
    \partial^2_x G(x,y) \leq - 16\ \text{for } x\in(0,y/2)\ \ \text{and}\ \  \partial^2_x G(x,y) \leq -\frac{2}{3 x_{opt}(y)}\  \text{for } x\in\left[\frac{x_{opt}(y)}{2},\frac{3 x_{opt}(y)}{2}\right]\ .
\end{equation}
Using the analyticity of $G(x,y)$ as a function of $x$ inside the interval $(0,y/2)$, we can Taylor expand it around $x_{opt}(y)$ (for fixed value of $y$):
\begin{equation}
    G(x,y) = G_{opt}(y) +(\partial_x G(x_{opt}(y),y))(x-x_{opt}(y)) +\int_{x_{opt}(y)}^{x}d\tau \partial_{\tau} G(\tau,y) \ .   
\end{equation}
Using the fact that $x_{opt}(y)$ is a critical point and bounds, identity
\begin{equation}
    \partial_{\tau} G(\tau,y) = \int_{x_{opt}(y)}^{\tau}dx \partial^2_x G(x,y) 
\end{equation}
and bounds from Eq.~\eqref{eq:secondDERbound} we get finally get 
\begin{align}
    G(x,y) &\leq G_{opt}(y) - 8 (x-x_{opt}(y))^2 \ &\text{for } & x\in[0,y/2]\ ,\ y\in[0,1]\ , \label{eq:genBOUND} \\
    G(x,y) &\leq G_{opt}(y) - \frac{1}{3x_{opt}(y)} (x-x_{opt}(y))^2 \ & \text{for }& x\in\left[\frac{x_{opt}(y)}{2},\frac{3 x_{opt}(y)}{2}\right]\ ,\ y\in[0,1]\ \label{eq:fineBOUND} . 
\end{align}

Coming back to the bound on $\B_{gen}$ from \eqref{eq:Bgen}, similarly to the case of passive FLO, due to constrains on $l$, the sum appearing in  \eqref{eq:Bgen} effectively starts  from $q=3$. Moreover, we also note that $l(q-2l)\geq(q-2)/2$ and therefore 
\begin{equation}\label{eq:actFRACbound}
    \sqrt{\frac{N}{l(q-2l)(N-q+l)}} \leq \sqrt{\frac{2N}{(q-2)(N-q+l)}} \leq \sqrt{\frac{2N}{N-2}}\ , 
\end{equation}
where in the second inequality we used the fact that $q\in[3,N]$ and $l\geq 1$. 
Using the above and expanding the expression in \eqref{eq:Bgen} in the different intervals defined in \eqref{eq:intervalsACTIVE} we obtain
\begin{align}
    \B_{gen} &\leq \frac{2C A}{c}\sqrt{\frac{N}{N-2}}\left(\sum_{q=3}^{q \leq N/2 } \sum_{l=1}^{l<q/2} \exp[-q/2] + \sum_{q>N/2}^{q<0.925N} \sum_{l=1}^{l<q/2} \exp[-q/100] \right) \label{eq:easySUMS}\\
    &+\frac{\sqrt{2} C A}{c} \sum_{q> 0.925N}^{N} \sum_{l=1}^{l<q/2}  \sqrt{\frac{N}{l(q-2l)(N-q+l)}} \exp[N G(x_l,y_q)]\ \label{eq:diffSUM}.
\end{align}
Two sums from Eq. \eqref{eq:easySUMS} can be handled analogously as in the case of passive FLO:
\begin{align}
    &\frac{2C A}{c}\sqrt{\frac{N}{N-2}}\left(\sum_{q=3}^{q \leq N/5 } \sum_{l=1}^{l<q/2} \exp[-q/3] + \sum_{q>N/2}^{q<0.925N} \sum_{l=1}^{l<q/2} \exp[-q/100] \right)\\
    &\leq \frac{2C A}{c}\sqrt{\frac{N}{N-2}} \left( \sum_{q=3}^{\infty} (q/2) \exp(-q/3) + +(N^{3/2}/2) \exp[-\frac{N}{200}]\right) . \\
    & =  \frac{2C A}{c}\sqrt{\frac{N}{N-2}} \left( \frac{3 e^{1/3} -2}{2 e^{2/3} (e^{1/3}-1)} +(N^{3/2}/4) \exp[-\frac{N}{200}] \right)\ \leq 2\ \label{eq:simACTupp} ,
\end{align}
where the last inequality is valid for $N\geq 1800$. The sum in \eqref{eq:diffSUM} will be analyzed  using inequalities \eqref{eq:genBOUND} and \eqref{eq:fineBOUND}. For fixed $y_q$ (Which corresponds to $q=y_q N$)  we set $l_{opt}(y_q)=x_{opt}(y_q) N$ and divide the range of summation over $l$ in \eqref{eq:diffSUM}  into two parts that corresponds to intervals in bounds \eqref{eq:genBOUND} and \eqref{eq:fineBOUND} respectively :
\begin{align}
    \Lmax_q &=\left\{l\  \left|\ \frac{1}{2}l_{opt} (y_q) \leq  l \leq \frac{3}{2}l_{opt} (y_q)  \right.   \right\}\ , \\
    \Lgen_q &=\left\{l\  \left|\ 1 \leq l < \frac{1}{2}l_{opt} (y_q) \text{ or } \frac{3}{2}l_{opt} (y_q) < l < q/2 \right.   \right\}\ . \\
\end{align}
It is now straightforward to verify that:
\begin{equation}
 \sqrt{\frac{N}{l(q-2l)(N-q+l)}}  \leq      \sqrt{\frac{4N}{l_{opt}(q-3l_{opt})l_{op}}}\ \text{for } l\in\Lmax_q\ ,
\end{equation}
where for clarity we surpassed the dependence of $l_{opt}$ on $q$. Moreover from \eqref{eq:genBOUND} and \eqref{eq:fineBOUND} we get
\begin{align}
   N G(x_l,y_q) & \leq N G_{opt}(y_q)- \frac{(l-l_{opt})^2}{3l_{opt}} \ \text{for } l\in \Lmax_q\ , \\
   N G(x_l,y_q) & \leq N G_{opt}(y_q) -2 x^2_{opt} N \ \text{for } l\in \Lgen_q\ .
\end{align}
Finally, we arrive at the following bound
\begin{align}
    &\frac{\sqrt{2} C A}{c} \sum_{q> 0.925N}^{N}\sum_{l=1}^{l<q/2}  \sqrt{\frac{N}{l(q-2l)(N-q+l)}} \exp[N G(x_l,y_q)] \\ &\leq 
    \frac{\sqrt{2} C A}{c} \sum_{q> 0.925N}^{N}  \exp[N G_{opt}(y_q)] \sqrt{\frac{4N}{l_{opt}(q-3l_{opt})l_{opt}}}   \sum_{l\in\Lmax_q} \exp\left(- \frac{(l-l_{opt})^2}{3l_{opt}}    \right)  \label{eq:maxcontr} \\
    & + \sqrt{\frac{N}{N-2}}\frac{C A}{c} \sum_{q> 0.925N}^{N} q \exp[N G_{opt}(y_q)]  \exp\left(- 2 x^2_{opt} N \right) \ \label{eq:genrest} \ , 
\end{align}
where we used \eqref{eq:actFRACbound} to get \eqref{eq:genrest}. We first analyze the second sum. We using \eqref{eq:intervalsACTIVE} we obtain 
\begin{equation}\label{eq:GaussSUMbound}
    \sum_{q> 0.925N}^{N} q \exp[N G_{opt}(y_q)] \leq N \sum_{q> 0.925N}^{N}  \exp[\frac{(N-q)^2}{N}]\ \leq N (1 +\frac{\sqrt{\pi N}}{2}) \leq  N^{\frac{3}{2}}\ .
\end{equation}
where we used
\begin{equation}
    \sum_{x=0}^\infty \exp(-\frac{x^2}{a}) \leq 1 + \int_0^\infty dx \exp(-\frac{x^2}{a}) = 1+\frac{\sqrt{\pi a}}{2}\ , 
\end{equation}
valid for all $a>0$, and $N\geq 100$. Importantly, for $q>0.925 N$ (which corresponds to $y\geq 0.925$), we have $x_{opt}\geq 0.03$. Using this and assuming $N\geq 7000$, we finally obtain
\begin{equation}\label{eq:nongenSUMfin}
    \sqrt{\frac{N}{N-2}}\frac{C A}{c} \sum_{q> 0.925N}^{N} q \exp[N G_{opt}(y_q)]  \exp\left(- 2 x^2_{opt} N \right) \leq \sqrt{\frac{N}{N-2}}\frac{C A}{c} N^{\frac{3}{2}} \exp(-\frac{9}{5000} N)\leq 1\ .
\end{equation}
We use similar methods to bound \eqref{eq:maxcontr}. First, we upper bound the exponential sum
\begin{equation}
    \sum_{l\in\Lmax_q} \exp\left(- \frac{(l-l_{opt})^2}{3l_{opt}}\right) \leq 1 + \sqrt{\pi 3 l_{opt}} \leq \frac{10}{3} \sqrt{l_{opt}}\ ,
\end{equation}
which allows estimate
\begin{equation}
    \sqrt{\frac{4N}{l_{opt}(q-3l_{opt})l_{opt}}}   \sum_{l\in\Lmax_q} \exp\left(- \frac{(l-l_{opt})^2}{3l_{opt}}\right)\leq  \frac{10}{3} \sqrt{\frac{4N}{l_{opt}(q-3l_{opt})}} \leq  \frac{10}{3} \sqrt{\frac{4}{(0.03)(0.7 N)}}= \frac{20}{3} \sqrt{\frac{1000}{21 N}}\ ,
\end{equation}
where in the second inequality we used that for $q\geq 0.925 N$ we have $l_{opt}(y_q)\geq 0.03 N$ and $q - 3l_{opt}(y_q) \geq 0.7 N$ .
Inserting thin inequality to \eqref{eq:maxcontr} and again using \eqref{eq:GaussSUMbound} gives that for $N\geq 7000$
\begin{equation}
    \frac{\sqrt{2} C A}{c} \sum_{q> 0.925N}^{N}\sum_{l=1}^{l<q/2}  \sqrt{\frac{N}{l(q-2l)(N-q+l)}} \exp[N G(x_l,y_q)] \leq 1+  \frac{\sqrt{2} C A}{c} \sqrt{\frac{1000}{21}} \leq 12.7\ .
\end{equation}
Combining this estimate with the bound \eqref{eq:simACTupp} and using \eqref{eq:easySUMS}, we finally obtain that for $N\geq 7000$ 
\begin{equation}\label{eq:BgenFINBound}
    \B_{gen} \leq 14.7\ .
\end{equation}

Finally, combining bounds  \eqref{eq:Bl0bound}, \eqref{eq:BlqFIN} and \eqref{eq:BgenFINBound}  together with $\B_{k=0}=1$ in inequality \eqref{eq:actFIRSTbound}  we see that for $N\geq 7000$, 
\begin{equation}
       \tr(\Pflo \inpsi \ot \inpsi) 
    \le \frac{1}{\sqrt{\pi N}}
    (\B_{q=0} + \B_{l=0} + \B_{q=2l} + \B_{gen}) \leq \frac{16.2}{\sqrt{\pi N}}\ .
\end{equation}

For $N \le 7000$, the validity of the bound can be verified numerically as shown in Fig.~\ref{fig:actFINbound}, which completes the proof.

\begin{figure}
    \centering
    \includegraphics[scale=0.75]{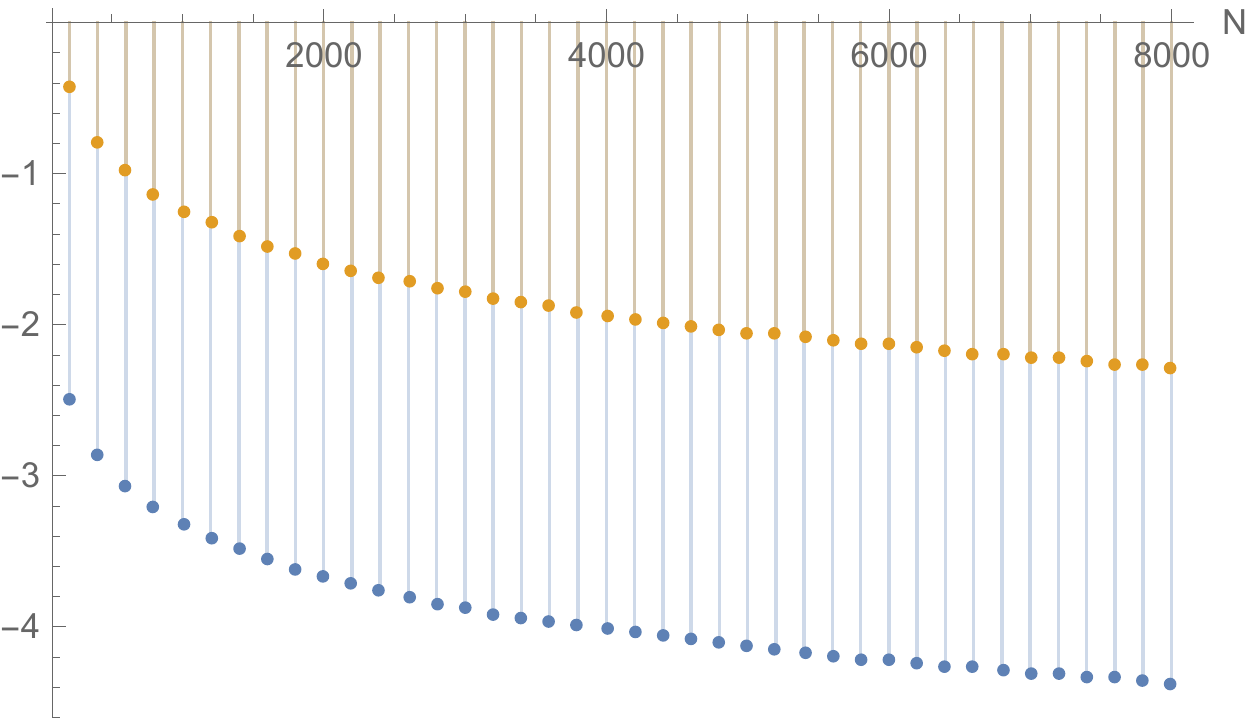}
    \caption{Plots of the logarithm of the the expression \eqref{eq:ACTcombinatorics} (blue) and $\log(\Cact/\sqrt{N}) = \log(\Cactval/\sqrt{\pi N})$ (orange), which is a valid upper bound for all $N \le 7000$.} 
    \label{fig:actFINbound}
\end{figure}

\end{proof}

\section{Efficient tomography of FLO unitaries}\label{app:effTOM}

Here we prove Lemma~\ref{lem:stabAFLO}, which establishes a bound concerning the stability of the active FLO representation which is needed in the efficient tomographic scheme of Section~\ref{sec:cert}. 

{\bf Lemma} (Stability of active FLO representation)
{\it Consider two elements of the  orthogonal group, $O, O' \in \SO(2d)$, and let $V$ and $V'$ be the corresponding active FLO unitaries, i.e., $V = \Piact(O)$ and $V' = \Piact(O')$. Let $\Phi_{V}$ and $\Phi_{V'}$ be the unitary channels defined by $V$ and $V'$ respectively. Then, the following inequality is satisfied}  
\begin{equation}
   \| \Phi_{V} - \Phi_{V'} \|_{\Diamond} 
   \le  2 d \| O - O' \|.
\end{equation}

\begin{proof}
The proof will rely on representation theoretic methods, however, as we have noted, $\Pi_{\mathrm{act}}$ is a projective representation of $\SO(2d)$ and not a proper representation. Instead, we will use  $ \Piact \otimes \Piact$, which is already a proper representation of $\SO(2d)$. Thus, we will bound the diamond norm difference between the unitary channels $\phi_{V \otimes V}$ and $\phi_{V' \otimes V'}$ corresponding to the unitaries  $ V \otimes V =  \Piact \otimes \Piact (O)$ and   $V' \otimes V' =  \Piact \otimes \Piact (O')$, respectively, and then use the inequalities
\begin{equation} \label{eq:appcert}
 \| \Phi_{V} - \Phi_{V'} \|_{\Diamond} \le 
 \| \Phi_{V \otimes V } - \Phi_{V' \otimes V'} \|_{\Diamond} \le 2 \|V \otimes V - V' \otimes V' \|.
\end{equation}
Here the first inequality follows directly from the definition of the diamond norm, while the second is a standard inequality relating the diamond norm distance of unitary channels to the operator norm distance of unitaries (see, e.g., \cite{oszmaniec2020epsilon}).

Thus, our proof strategy will be to upper bound $\|V \otimes V - V' \otimes V' \|=\| \Piact \otimes \Piact (O) -\Piact \otimes \Piact (O')  \| $. For this we use the decomposition of the $\Piact \otimes \Piact$ into subrepresentations in the following way \cite{fuchs2003}:
\begin{equation}
\Piact \otimes \Piact \cong \bigoplus_{s=0}^{d-1} \left(\bigwedge^s\Pi \right)^{\oplus 2} \oplus \bigwedge^{d} \Pi ,
\end{equation}
where $\Pi$ denotes the defining representation of $\SO(2d)$ and  its $\ell$th antisymmetric tensor power $\bigwedge^{\ell} \Pi$  is given by
\begin{align}\label{eq:U}
     \bigwedge^{\ell} \Pi : \SO(2d) &\longrightarrow \U \Big( \bigwedge^{\ell}(\C^{2d}) \Big), \\
    O &\longmapsto  \left. O^{\ot n}\right|_{\bigwedge^{\ell}(\C^d)}.
\end{align}
This decomposition immediately implies that 
\begin{equation}
  \| V\otimes V - V' \otimes V'  \| = \| \Piact \otimes \Piact (O) -\Piact \otimes \Piact (O')  \|  \le \max_{\ell \in [d]} \| \bigwedge^{\ell} \Pi (O) - \bigwedge^{\ell} \Pi (O') \| \le  \max_{\ell \in [d]} \| O^{\otimes \ell} - O'^{\otimes \ell} \|.
\end{equation}
Inserting the above inequality into Eq.~\eqref{eq:appcert} and using that  $\| O^{\otimes \ell} - O'^{\ell} \| \le \ell \| O - O' \| $ (and $\ell \le d$), we obtain  
\begin{equation}
    \| \Phi_{V} - \Phi_{V'} \|_{\Diamond} \le 2d \|  O - O' \|.
\end{equation}
\end{proof}

\section{$\sharP$-Hardness of probabilities in shallow depth active FLO circuits}\label{sec:sharP_shallow}

We argued in section \ref{sec:hardness-sampling} that amplitudes of active FLO circuits are $\sharP$-hard to compute. Here we show that similarly strong simulation (i.e., computing output probabilities) of constant-depth active FLO circuits is hard. It has been proven in previous work \cite{Bremner2011} that under certain conditions, non-universal circuit families of shallow depth are hard to simulate under plausible conjectures which in addition implies that the output probabilities are $\sharP$-hard. In concrete, it is required that the postselected version of the circuit family is universal for quantum computation. This method is not robust as it only shows that exactly computing the output probabilities are hard, nonetheless it may be of interest that such hardness results can be obtained for constant-depth active FLO circuits. The required theorem is as follows


\begin{thm}
    Let $\mathcal{F}$ be a restricted family of quantum circuits. If circuits from $\mathcal{F}$ with the added power of postselection can simulate the output probability distributions of universal quantum circuits with postselection (i.e., $\mathcal{F}$ is universal with postselection) then computing the output probabilities (strong simulation) of circuits in $\mathcal{F}$ is $\sharP$-hard. 
\end{thm}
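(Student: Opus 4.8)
The plan is to reduce a problem that is \emph{already} known to be $\sharP$-hard --- exact computation of output amplitudes of \emph{universal} quantum circuits --- to the problem of computing output probabilities of circuits in $\mathcal{F}$, using the postselection hypothesis as the bridge. This is the argument of \cite{Bremner2011}, which rests on Aaronson's identity $\mathrm{PostBQP}=\mathrm{PP}$. Concretely, recall \cite{terhal2004adptive,fenner1999determining} that one can, in classical polynomial time, map an instance of a $\sharP$-complete counting problem to a description of a universal circuit $C$ on $m$ qubits such that $2^{m}\,|\bra{0^{m}}C\ket{0^{m}}|^{2}$ equals a simple, efficiently computable function of the desired count; hence exact computation of the probability $|\bra{0^{m}}C\ket{0^{m}}|^{2}$ is $\sharP$-hard. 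The goal is then to compute this quantity using an oracle that evaluates output probabilities $p_{\x}(\tilde{C})$ of circuits $\tilde{C}\in\mathcal{F}$.

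First I would unwind the hypothesis that $\mathcal{F}$ is universal with postselection: there is a polynomial-time procedure that, given $C$ as above, outputs a circuit $C'\in\mathcal{F}$ on $m'=\poly(m)$ qubits together with a distinguished postselection register and output bit, such that the acceptance probability of the postselected $C'$ reproduces that of the postselected universal computation. The point to stress is that such a simulation is witnessed by a \emph{gadget} construction (magic-state injection / forced-measurement gadgets replacing each non-$\mathcal{F}$ gate, cf.\ \cite{Bremner2011,bravyi_universal_2006,hebenstreit2019}) in which every gadget uses postselection to return its ancillas \emph{cleanly} to fixed computational-basis values, leaving no residual garbage register. Consequently there is an efficiently computable constant $\kappa>0$ (a product of the gadget success amplitudes) and a fully determined output string $\x_{0}\in\{0,1\}^{m'}$ with
\begin{equation}
  \bra{\x_{0}}C'\ket{0^{m'}} = \kappa\,\bra{0^{m}}C\ket{0^{m}},
\end{equation}
so that $|\bra{0^{m}}C\ket{0^{m}}|^{2}=\kappa^{-2}\,p_{\x_{0}}(C')$.

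Putting these together, a single oracle call returning $p_{\x_{0}}(C')$ plus polynomial classical post-processing computes the $\sharP$-hard quantity $|\bra{0^{m}}C\ket{0^{m}}|^{2}$; equivalently, one may phrase the reduction through the conditional acceptance probability of the postselected circuit, a ratio of a bounded number (two) of output-string probabilities of $C'$, which decides an arbitrary language in $\mathrm{PP}=\mathrm{PostBQP}$. Since every $\sharP$ function is computable in polynomial time with a $\mathrm{PP}$ (equivalently, $\sharP$-) oracle by binary search, this yields $\sharP\subseteq\mathrm{FP}^{\mathcal{O}}$, where $\mathcal{O}$ is any oracle computing output probabilities of $\mathcal{F}$ circuits. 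Hence computing output probabilities --- and a fortiori strong (marginal) simulation --- of circuits in $\mathcal{F}$ is $\sharP$-hard.

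The main obstacle is precisely the ``no garbage register'' point flagged above: a priori the acceptance probability of a postselected circuit is a conditional probability whose numerator and denominator are \emph{marginals}, i.e.\ exponentially large sums of individual output-string probabilities, which would not collapse to $O(1)$ oracle calls. Overcoming it requires checking that the witness of postselection-universality for $\mathcal{F}$ can be taken to disentangle all ancillary qubits into definite computational-basis values via postselection, so that the relevant quantities are individual output probabilities $p_{\x}(C')$. This holds for all standard gadget constructions, and in particular for active FLO circuits supplied with the magic-state ingredients used in this paper (as invoked elsewhere in this appendix). A minor secondary point is ensuring $\kappa\neq 0$ and that the postselected events have positive probability, which is automatic since the gadget success amplitudes are nonzero by construction.
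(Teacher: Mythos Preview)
Your proposal is correct and follows essentially the same route as the paper: reduce the $\sharP$-hard problem of computing a universal-circuit output probability $P_C(\y)$ to computing output probabilities of an $\mathcal{F}$-circuit $F$ via the postselection-simulation hypothesis, the paper writing this as the ratio $P_C(\y)=P_F(\y_{*}\,00\cdots0)/P_F(00\cdots0)$. You are more explicit than the paper about the ``no garbage register'' subtlety (which the paper sidesteps because its notion of strong simulation already includes marginal probabilities), but the core argument is identical.
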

\begin{proof}
Similar results have been proven in \cite{Aaronson2013,Bremner2011} and later in other works related to active FLO \cite{hebenstreit_computational_2020}. Let $C$ be some circuit with gates from a universal gate set and let $P_C(\y)$ be the output probability of result $\y$. By hypothesis, with the power of postselection we can use a circuit $F$ from $\mathcal{F}$ to simulate $C$ and thus $P_C(\y)=P_F(\y_{*} | 00\cdots 0)=\frac{P_F(\y_{*}00\cdots 0)}{P_F(00\cdots 0)}$, where $\y_{*}$ is potentially a bitstring encoding $\y$ (which will be our case below). This directly implies that if we could compute the output probabilities of $F$ then this would allow for computing the output probabilities of $C$. Since universal circuits are known to include $\sharP$-hard instances, the result follows.
\end{proof}


In what follows, we will always assume that the active FLO circuits are supplied with auxiliary states $\psiQUAD$. Throughout this section we will consider the encoding $\ket{0_L}=\ket{00}$ and $\ket{1_L}=\ket{11}$. To prove that computing the probabilities of shallow depth active FLO circuits is $\sharP$-hard, we prove now Lemma \ref{lem:post-universal}.

\begin{lem}\label{lem:post-universal}
     Constant-depth active FLO circuits supplied with auxiliary states $\psiQUAD$ with the added power of postselection are universal.
\end{lem}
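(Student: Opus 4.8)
The plan is to reduce to the known universality of active FLO circuits supplied with copies of $\psiQUAD$ (equivalently $\ket{a_8}$) established in \cite{bravyi_universal_2006,hebenstreit2019}, and then upgrade that construction to constant depth by gate teleportation, trading classical feed-forward for postselection. Concretely, fix a $\poly{n}$-size universal circuit $C$ on $n$ qubits built from a fixed gate set $\{G_1,\dots,G_k\}$, where each $G_j$ is either an active FLO gate or the single non-Gaussian gate $W$ produced by the $\psiQUAD$-injection gadget of \cite{bravyi_universal_2006}. In the universal simulation each logical qubit is encoded in a constant number of modes (we use the encoding $\ket{0_L}=\ket{00}$, $\ket{1_L}=\ket{11}$ already fixed above); logical gates are applied either directly (for FLO $G_j$) or by state injection (for $W$); and every adaptive Pauli/FLO correction is replaced by postselecting the triggering measurement on the outcome that needs no correction. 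This already gives a non-adaptive, postselected active FLO computation simulating $C$, but of depth $\poly{n}$.

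To collapse the depth, I would recast $C$ as a measurement-based computation using one-bit teleportation along each logical worldline. The essential point is that the logical teleportation resource $\ket{+_L}=(\ket{0_L}+\ket{1_L})/\sqrt2$ is, on its four physical modes, FLO-equivalent to $\ket{a_8}$ and hence to $\psiQUAD$ (cf. the conversion used in the proof of Lemma~\ref{lem:actPROJfinal}); "gate-loaded" resources $(G_j\otimes\id)\ket{\Phi_L}$ for FLO $G_j$ are then obtained from these by a depth-$O(1)$ local FLO circuit, while the one loaded resource carrying $W$ is produced once, in depth $O(1)$, by the injection gadget acting on an extra copy of $\psiQUAD$ with its correction postselected. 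All $\poly{n}$ resource blocks are prepared simultaneously, and the teleportation step is a single layer of local Bell-measurement gadgets bridging consecutive blocks along each worldline: each gadget is a constant-depth active FLO entangler followed by a computational-basis measurement, with the outcome postselected to its trivial value so that the teleportation byproduct is the identity. Conditioned on all postselections, the output register carries $C\ket{0}^{\otimes n}$, and the whole object is a constant-depth active FLO circuit on $\poly{n}$ modes acting on a product input of $\ket{0}$'s and $\psiQUAD$'s, followed by one layer of computational-basis measurements and postselection. Since $C$ was arbitrary, this establishes universality with postselection; combined with the theorem preceding the lemma, the $\sharP$-hardness of output probabilities of constant-depth active FLO circuits follows.

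The main obstacle, and the part needing the most care, is the explicit construction of the FLO-compatible teleportation and injection gadgets together with the verification that the induced byproduct operators stay inside the FLO (more precisely, matchgate-Clifford) group, so that postselecting them to the identity is legitimate and depth-preserving. The difficulty is intrinsic: because $\psiQUAD$ is non-Gaussian one cannot merely apply an FLO unitary rotating $\ket{+_L}$ to a computational basis state and then measure; the gadget must realize the required projection $\propto\bra{\Phi_L}$ through an FLO entangler plus postselected computational-basis measurement while keeping the residual correction FLO. I expect this to be handled exactly as in \cite{bravyi_universal_2006}, by tracking Pauli/FLO frames through the matchgate formalism and exploiting the logical Bell structure $\psiQUAD=(\ket{0_L1_L}+\ket{1_L0_L})/\sqrt2$; in particular one must check that the restricted set of matchgate logical Cliffords available for corrections suffices, with the injected non-Gaussian gate $W$ supplying what is missing for full BQP-universality. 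A cleaner alternative I would pursue in parallel is to invoke the result of \cite{hebenstreit2019} that copies of any non-Gaussian state promote FLO to universality, and port that construction to constant depth by the same teleportation device.
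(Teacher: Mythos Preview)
Your approach differs from the paper's, which follows \cite{brod_complexity_2015} essentially verbatim: postselected MBQC on the \emph{brickwork} graph state is universal and is realized by a constant-depth circuit of logical single-qubit rotations and logical $\mathrm{CZ}$ gates, so the whole lemma reduces to implementing those two primitives in the encoding $\ket{0_L}=\ket{00},\ket{1_L}=\ket{11}$. Logical single-qubit gates are already active FLO on two modes (this is the content of \cite{bravyi_fermionic_2002}, since $\mathrm{Spin}(4)$ acts as the full $\SU(2)$ on each parity sector), and logical $\mathrm{CZ}$ is obtained by converting one copy of $\psiQUAD$ to $\ket{a_8}$ by FLO and running the injection gadget of \cite[Lemma~1]{bravyi_universal_2006} with its adaptive correction replaced by postselection. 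No teleportation chain is built; the constant-depth parallelism comes entirely from the brickwork layout.

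Your teleportation-chain route can be made to work, but two points need repair. First, $\ket{+_L}=(\ket{00}+\ket{11})/\sqrt2$ lives on \emph{two} physical modes and is a Gaussian state (every fixed-parity state on fewer than four modes is), so it is not FLO-equivalent to the four-mode non-Gaussian $\ket{a_8}$; the four-mode resource you actually need for gate teleportation is the logical Bell pair $\ket{\Phi^+_L}=(\ket{0_L0_L}+\ket{1_L1_L})/\sqrt2=\ket{a_8}$. Second, and this is where the real work hides, your assertion that each Bell-measurement gadget is ``a constant-depth active FLO entangler followed by a computational-basis measurement'' is not obviously correct: FLO followed by postselection on a single bitstring projects onto a \emph{Gaussian} state, never onto a non-Gaussian logical Bell vector. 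One must either verify that the coarser projection onto a joint $(Z_L^A Z_L^B,\,X_L^A X_L^B)$-eigenspace (both operators are quartic Majorana monomials that \emph{can} be simultaneously FLO-rotated into products of physical $Z$'s, via the pairing $\{1,5\},\{2,6\},\{3,7\},\{4,8\}$ of Majorana indices) steers the remaining register correctly, or---more safely---realize each logical Bell measurement as logical $\mathrm{CZ}$ (consuming one further $\psiQUAD$ via the Bravyi gadget) followed by logical Hadamards and computational-basis readout. Once you do the latter you have essentially reproduced the paper's $\mathrm{CZ}$ gadget, just embedded in a different global layout.
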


 To prove this, we follow Ref.~\cite{brod_complexity_2015}, which showed similar results in the context of Boson Sampling. The starting point is the brickwork graph state which allows for universal computation on the measurement based quantum computation (MBQC) scheme. We can write the preparation of the brickwork graph state plus measurements on the state as a single circuit with adaptive measurements. If we are given the power to postselect measurements, then the preparation of the graph state requires a constant depth circuit with single qubit gates and $\mathrm{CZ}$ gates. If we can simulate these gates with constant-depth active FLO circuits and postselection, then this would imply Lemma \ref{lem:post-universal}. Using the encoding defined above, we show Theorem \ref{thm:matchgate-universal} which directly implies Lemma \ref{lem:post-universal}.

\begin{thm}\label{thm:matchgate-universal}
Active FLO acting on an initial state consisting of tensor products of $\ket{\Psi_4}$ with the added power of postselection can simulate single qubit gates and $\mathrm{CZ}$ with constant-depth circuits. These simulations are at the logical level using the encoding above. 
\end{thm}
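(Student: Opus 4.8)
The plan is to realize the standard ``matchgate $+$ magic state'' universality construction of Bravyi \cite{bravyi_universal_2006} and Brod--Childs \cite{brod_complexity_2015}, but to engineer each gadget so that its non-adaptive part has constant depth and all adaptivity is replaced by postselection. First I would fix the dual-rail encoding $\ket{0_L}=\ket{00}$, $\ket{1_L}=\ket{11}$ (so that each logical qubit lives in the even-parity subspace of two physical qubits, compatibly with FLO), and record the FLO operations that are ``native'' in this encoding: nearest-neighbour number-conserving rotations $D_{\mathrm{pas}}$, parity-preserving quadratic Majorana exponentials $D_{\mathrm{act}}$, and fermionic swaps/quasi-braidings $U^{[i,j]},U^{(p,q)}$ (all used already in Lemma~\ref{lem:hidingPROP}). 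The point is that a single logical $\mathrm{CZ}$ or single-qubit gate only needs to couple $O(1)$ physical modes, so each gadget is a depth-$O(1)$ active FLO circuit acting on a bounded block, and the full brickwork-graph-state preparation (which is itself constant depth in terms of single-qubit gates and $\mathrm{CZ}$) then lifts to constant depth.

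The key steps, in order: (i) show that by consuming one copy of $\psiQUAD$ (equivalently $\ket{a_8}$, via the local FLO conversion noted in the proof of Lemma~\ref{lem:actPROJfinal}) together with a Majorana-bilinear measurement, a non-Gaussian gate — e.g.\ a $\pi/8$-type logical phase gate, or directly a logical $\mathrm{CZ}$ — can be injected by a constant-size FLO circuit followed by postselection on a measurement outcome; this is precisely the state-injection scheme of \cite{bravyi_universal_2006}, and the only new observation is that the ``correction'' operations conditioned on the measurement are FLO, hence can be \emph{deferred} and absorbed into the postselection rather than applied adaptively. (ii) Show that arbitrary single-qubit Clifford-type rotations on a logical dual-rail qubit are already exactly realizable by Gaussian (active FLO) operations on the two underlying physical modes in depth $O(1)$ — this is the standard fact that on a fixed-parity two-qubit subspace the Gaussian unitaries act as $\mathrm{SU}(2)$. (iii) Combine (i) and (ii): a Gaussian-complete single-qubit set plus one non-Gaussian injected gate plus $\mathrm{CZ}$ is universal, so postselected constant-depth active FLO with $\psiQUAD$ inputs reproduces the output probabilities of the postselected brickwork MBQC, which is universal; invoking the cited Theorem (postselected universality $\Rightarrow$ $\sharP$-hardness of output probabilities) gives the conclusion. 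Writing it as: (a) Theorem~\ref{thm:matchgate-universal} by explicit gadget construction, (b) Lemma~\ref{lem:post-universal} as the immediate corollary via the brickwork-state argument of \cite{brod_complexity_2015}, (c) the final $\sharP$-hardness statement by the postselection theorem.

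The main obstacle I expect is step (i): one must verify that the state-injection gadget for the non-Gaussian gate can be made genuinely \emph{non-adaptive modulo postselection}. In the usual magic-state injection the byproduct operator depends on the measurement result, and if that byproduct were non-Gaussian we could not push it past later FLO gates. So the careful point is to choose the injected gate (and the measured Majorana operator) so that every possible byproduct is a Gaussian/FLO unitary on the relevant block — then postselecting on the ``trivial byproduct'' branch costs only a constant success probability per gadget, which is harmless under postselection, and no adaptivity is needed. A secondary technical nuisance is bookkeeping the Jordan--Wigner strings: because the dual-rail logical qubits are interleaved, a logical $\mathrm{CZ}$ between non-adjacent logical qubits formally involves a long $Z$-string, but this is resolved exactly as in Lemma~\ref{lem:hidingPROP} by inserting fermionic swaps (quasi-braidings) to bring the relevant modes adjacent, which adds only constant depth per gate in a 1D layout and does not affect the counting. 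Everything else — that the gadgets compose, that the brickwork state gives universality — is routine given \cite{brod_complexity_2015,bravyi_universal_2006}.
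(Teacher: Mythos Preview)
Your proposal is correct and follows essentially the same route as the paper: both use the dual-rail encoding, implement single-qubit logical gates directly as Gaussian unitaries on the two-mode block (citing \cite{bravyi_fermionic_2002}), convert $\psiQUAD$ to $\ket{a_8}$ by the local FLO move from Lemma~\ref{lem:actPROJfinal}, and then invoke Bravyi's state-injection gadget (Lemma~1 of \cite{bravyi_universal_2006}) to obtain the logical $\mathrm{CZ}$, with fermionic swaps handling locality. The paper's proof is terser and simply cites these ingredients; your extra care in step~(i) about the measurement byproducts being Gaussian (so that postselection on a single outcome suffices and no adaptivity is needed) is exactly the point that makes the constant-depth claim go through, and is implicit in the paper's appeal to Bravyi's lemma.
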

\begin{proof}
As explained before, the circuit induced by the brickwork state with post selection is universal and of constant depth, consisting of single qubit gates and $\mathrm{CZ}$ gates. Using the encoding above we can simulate single qubit gates and $\mathrm{CZ}$ gates in constant depth, then we can simulate the whole universal constant-depth circuit with a circuit from $\mathcal{C}_{act}$ and postselection.

That single qubit gates at the logical level can be implemented with this encoding is already known \cite{bravyi_fermionic_2002}. Implementing $\mathrm{CZ}$ at the logical level will require the use of post selection and the auxiliary states $\psiQUAD$. First, we note that the state $\psiQUAD$ can be transformed into the state $\ket{a_8}=\frac{1}{\sqrt{2}}(\ket{0000}+\ket{1111})$ using only active FLO operations. This was shown previously in the proof of Lemma \ref{lem:actPROJfinal}. Second, in Lemma 1 of \cite{bravyi_universal_2006} it is shown that using a single copy of $\ket{a_8}$ and particle number measurements it is possible to implement a $\mathrm{CZ}$ at the logical level using the same encoding we use here. This two facts together imply that $\mathrm{CZ}$ can be implemented with active FLO circuits supplied by $\psiQUAD$ states and postselection. The auxiliary states can be swapped to the desired position when implementing a gate without incurring on extra negative signs with our encoding since the auxiliary states used are fermionic as for example argued in \cite{hebenstreit2019}.
\end{proof}

\end{document}